\newtheorem{obs}{Observation}
\newtheorem{cor}{Corollary}
\newcommand{\myparagraph}[1]{\par\smallskip\par\noindent{\bf{}#1:~}}
\newcommand{\comment}[1]{}
\newcommand{\alg}[1]{\mbox{\sf #1}}  
\DeclareMathOperator{\image}{image}
\begin{document}

\mainmatter

\title{Mixing Color Coding-Related Techniques (Extended Abstract)}

\author{Meirav Zehavi}

\institute{Department of Computer Science, Technion IIT,
 Haifa 32000, Israel\\
\mails}

\maketitle

\begin{abstract}
Narrow sieves, representative sets and divide-and-color are three breakthrough color coding-related techniques, which led to the design of extremely fast parameterized algorithms. We present a novel family of strategies for applying mixtures of them. This includes: (a) a mix of representative sets and narrow sieves; (b) a faster computation of representative sets under certain separateness conditions, mixed with divide-and-color and a new technique, ``balanced cutting''; (c) two mixtures of representative sets, iterative compression and a new technique, ``unbalanced cutting''. We demonstrate our strategies by obtaining, among other results, significantly faster algorithms for {\sc $k$-Internal Out-Branching} and {\sc Weighted 3-Set $k$-Packing}, and a framework for speeding-up the previous best deterministic algorithms for {\sc $k$-Path}, {\sc $k$-Tree}, {\sc $r$-Dimensional $k$-Matching}, {\sc Graph Motif} and {\sc Partial~Cover}.
\end{abstract}

\section{Introduction}

A problem is {\em fixed-parameter tractable (FPT)} with respect to a parameter $k$ if it can be solved in time $O^*(f(k))$ for some function $f$, where $O^*$ hides factors polynomial in the input size. The color coding technique, introduced by Alon et al.~\cite{colorcoding}, led to the discovery of the first single exponential time FPT algorithms for many subcases of {\sc Subgraph Isomorphism}. In the past decade, three breakthrough techniques improved upon it, and led to the development of extremely fast FPT algorithms for many fundamental problems. This includes the combinatorial divide-and-color technique \cite{divandcol}, the algebraic multilinear detection technique \cite{multilineardetection,appmultilinear,williamskpath} (which was later improved to the more powerful narrow sieves technique \cite{integerweights,bjo10}), and the combinatorial representative sets technique \cite{representative}.

Divide-and-color was the first technique that resulted in (both randomized and deterministic) FPT algorithms for weighted problems that are faster than those relying on color coding. Later, representative sets led to the design of deterministic FPT algorithms for weighted problems that are faster than the randomized ones based on divide-and-color. The fastest FPT algorithms, however, rely on narrow sieves. Unfortunately, narrow sieves is only known to be relevant to the design of {\em randomized} algorithms for {\em unweighted} problems.\footnote{More precisely, when used to solve weighted problems, the running times of the resulting algorithms have {\em exponential} dependencies on the length of the input~weights.}

We present novel strategies for applying these techniques, combining the following elements (see Section \ref{section:strategies}).

\renewcommand{\labelitemi}{$\bullet$}
\renewcommand{\labelitemii}{$-$}

\begin{itemize}
\item Mixing narrow sieves and representative sets, previously considered to be two {\em independent} color coding-related techniques.

\item Under certain ``separateness conditions'', speeding-up the best known computation of representative sets.

\item Mixing divide-and-color-based preprocessing with the computation in the previous item, speeding-up any standard representative sets-based algorithm.

\item Cutting the universe into small pieces in two special manners, one used in the mix in the previous item, and the other mixed with a non-standard representative sets-based algorithm to improve its running time (by decreasing the size of the partial solutions it computes).
\end{itemize}

{\noindent To demonstrate our strategies, we consider the following well-studied problems.}

\myparagraph{$k$-Internal Out-Branching ($k$-IOB)}Given a directed graph $G=(V,E)$ and a parameter $k\in\mathbb{N}$, decide if $G$ has an out-branching (i.e., a spanning tree with exactly one node of in-degree 0) with at least $k$ internal nodes.

\myparagraph{Weighted $k$-Path}Given a directed graph $G=(V,E)$, a weight function $w: E\rightarrow\mathbb{R}$, $W\!\in\mathbb{R}$ and a parameter $k\in\mathbb{N}$, decide if $G$ has a simple directed path on exactly $k$ nodes and of weight at most $W$.

\myparagraph{Weighted $3$-Set $k$-Packing ($(3,k)$-WSP)}Given a universe $U$, a family $\cal S$ of subsets of size 3 of $U$, a weight function $w\!: {\cal S}\!\rightarrow\!\mathbb{R}$, $W\!\in\!\mathbb{R}$ and a parameter $k\!\in\!\mathbb{N}$, decide if there is a subfamily ${\cal S}'\!\subseteq\! {\cal S}$ of $k$ disjoint sets and weight at~least~$W$.

\myparagraph{$P_2$-Packing}Given an undirected graph $G=(V,E)$ and a parameter $k\in\mathbb{N}$, decide if $G$ has $k$ (node-)disjoint simple paths, each on 3 nodes.

\medskip

The {\sc $k$-IOB} problem is NP-hard since it generalizes {\sc Hamiltonian Path}. It is of interest, for example, in database systems \cite{outbranchpatent}, and for connecting cities with water pipes \cite{kISPbounddeg}. Many FPT algorithms were developed for {\sc $k$-IOB} and related variants (see, e.g., \cite{kIOB49k,thesis11,kISP8k,kIOB16k,kIOB2klogk,4kist,kISP24klogk,esarepresentative,ipec13}). We solve it in deterministic time $O^*(5.139^k)$ and randomized time $O^*(3.617^k)$, improving upon the previous best deterministic time $O^*(6.855^k)$ \cite{esarepresentative} and randomized time $O^*(4^k)$ \cite{thesis11,ipec13}. To this end, we establish a relation between out-trees (directed trees with exactly one node, the root, of in-degree 0) that have many leaves and paths on 2 nodes. This shows how certain partial solutions to {\sc $k$-IOB} can be completed efficiently via a computation of a maximum matching in the underlying undirected graph.

We also present a unified approach for speeding-up standard representative sets-based algorithms; thus, our approach can be used to improve upon the best known deterministic algorithms for many problems (since numerous problems were solved using color coding). In particular, our approach can be used to modify the previous best deterministic algorithms (that already rely on the best known computation of representative sets) for the classic {\sc $k$-Path}, {\sc $k$-Tree}, {\sc $r$-Dimensional $k$-Matching ($(r,k)$-DM)}, {\sc Graph Motif with Deletions (GM$_\mathrm{D}$)} and {\sc Partial Cover (PC)} problems, including their weighted variants, which run in times 
$O^*(2.6181^k)$ \cite{productFam,esarepresentative}, $O^*(2.6181^k)$ \cite{productFam,esarepresentative}, $O^*(2.6181^{(r-1)k})$~\cite{fsttcs13}, $O^*(2.6181^{2k})$ \cite{mfcs14} and $O^*(2.6181^k)$~\cite{esarepresentative}, to run in times $O^*(2.5961^k)$, $O^*(2.5961^k)$, $O^*(2.5961^{(r-1)k})$, $O^*(2.5961^{2k})$ and  $O^*(2.5961^k)$, respectively.
We demonstrate our approach using {\sc Weighted $k$-Path}, since obtaining an $O^*(2^k)$ time deterministic algorithm for it is a major open problem.

In the past decade, {\sc $(3,k)$-WSP} and {\sc $(3,k)$-SP} enjoyed a race towards obtaining the fastest FPT algorithms that solve them (see \cite{bjo10,impdetmatpac,chenalgorithmica2004,divandcol,fellowsbook,fellowsalgorithmica2008,koutis2005,liutamc2007,wangcocoon2008,chenipec,multilineardetection,wangtamc2008,corrmatchpack}). We solve {\sc $(3,k)$-WSP} in deterministic time $O^*(8.097^k)$, significantly improving upon $O^*(12.155^k)$, which is both the previous best running time of an algorithm for {\sc $(3,k)$-WSP} and the previous best running time of a deterministic algorithm for {\sc $(3,k)$-SP} \cite{corrmatchpack}. The {\sc $P_2$-Packing} problem is a subcase of {\sc $(3,k)$-SP}, for which specialized FPT algorithms are given in \cite{p2packdet,p2packrand,p2packraible,p2pack2006}. Feng et al.~\cite{p2packrand} give a randomized $O^*(6.75^k)$ time algorithm for {\sc $P_2$-Packing},\footnote{{\sc $(3,k)$-SP}, generalizing {\sc $P_2$-Packing}, is solvable in randomized time $O^*(3.327^k)$~\cite{bjo10}.} for which Feng et al.~\cite{p2packdet} give a deterministic version that runs in time $O^*(8^{k+o(k)})$. We observe that the algorithms of \cite{p2packdet,p2packrand} can be modified to solve {\sc $P_2$-Packing} in deterministic time $O^*(6.75^{k+o(k)})$. Then, we give an alternative algorithm that solves {\sc $P_2$-Packing} in deterministic time $O^*(6.777^k)$.

\myparagraph{Organization} Section \ref{section:techniques} contains information on color coding-related techniques. Then, Section \ref{section:strategies} presents an overview of our strategies.
Finally, Section~\ref{section:disrep} contains the proof a central component of our second strategy (that is, a computation of {\em generalized} representative sets), which might be of independent interest; due to lack of space, other technical details are deferred to the appendix (Section \ref{section:strategies} contains the relevant pointers).\footnote{The appendix is also available on \alg{http://arxiv.org/abs/1410.5062}.}


\section{Color Coding-Related Techniques}\label{section:techniques}

In this paper, we use a known algorithm based on narrow sieves as a black box; thus, we avoid describing this technique. We proceed by giving a brief description of divide-and-color, followed by a more detailed one of representative sets.

\myparagraph{Divide-and-Color} Divide-and-color is based on recursion; at each step, we color elements randomly or deterministically. In our strategies, we are interested in applying only one step, which can be viewed as using color coding with only two colors. In such a step, we have a set $A$ of $n$ elements, and we seek a certain subset $A^*$ of $k$ elements in $A$. We partition $A$ into two (disjoint) sets, $B$ and $C$, by coloring its elements. Thus, we get the problem of finding a subset $B^*\subseteq A^*$ in $B$, and another problem of finding the subset $C^*\!=\!A^*\setminus B^*$ in $C$. The partition should be done in a manner that is both efficient and results in an easier problem, which does not necessarily mean that we get two independent problems (of finding $B^*$ in $B$ and $C^*$ in $C$). Deterministic applications of divide-and-color often use a tool called an $(n,k)$-universal set \cite{splitter}. We need its following generalization:

\begin{definition}\label{dfn:universal_set}
Let ${\cal F}$ be a set of functions $f:\{1,2,\ldots,n\}\rightarrow \{0,1\}$. We say that ${\cal F}$ is an $(n,k,p)$-universal set if for every subset
$I\subseteq \{1,2,\ldots,n\}$ of size $k$ and a function $f':I\rightarrow\{0,1\}$ that assigns '1' to exactly $p$ indices, there is a function $f\in{\cal F}$ such that for all $i\in I$, $f(i)=f'(i)$.
\end{definition}

{\noindent The next result (of \cite{representative}) asserts that small universal sets can be computed fast.}
\begin{theorem}\label{theorem:splitter}
There is an algorithm that, given integers $n,k$ and $p$, computes an $(n,k,p)$-universal set ${\cal F}$ of size $O^*({k\choose p}2^{o(k)})$ in deterministic time $O^*({k\choose p}2^{o(k)})$.
\end{theorem}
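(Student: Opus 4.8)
The plan is to follow the splitter-based construction of Naor, Schulman and Srinivasan \cite{splitter} for $(n,k)$-universal sets, refining the accounting so that only assignments with exactly $p$ ones are charged; the coefficient ${k\choose p}$ will come out of a Chu--Vandermonde identity applied to a product construction over blocks.

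First I would shrink the ground set. Using an $(n,k)$-perfect hash family ${\cal H}$ of functions $h\colon\{1,\dots,n\}\to\{1,\dots,k^2\}$ of size $\mathrm{poly}(k)\cdot O(\log n)$ (a standard consequence of \cite{splitter}), it suffices to build a $(k^2,k,p)$-universal set ${\cal G}$ and output ${\cal F}=\{\,g\circ h : g\in{\cal G},\ h\in{\cal H}\,\}$: for a $k$-subset $I$ and an assignment $f'$ with $p$ ones, choosing $h$ injective on $I$ turns $(h(I),f'\circ h^{-1})$ into an instance on $\{1,\dots,k^2\}$ with again $p$ ones, and any $g\in{\cal G}$ matching it makes $g\circ h$ match $f'$ on $I$. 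This costs only the factor $|{\cal H}|=\mathrm{poly}(k)\log n = O^*(1)\cdot 2^{o(k)}$.

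Next I would treat the polynomial-size ground set recursively. Fix a slowly growing block size $s$ (say $s\approx\log^2 k$), put $t=\lceil k/s\rceil$, and take a near-balanced splitter $\Psi$: a family of functions $\psi\colon\{1,\dots,k^2\}\to\{1,\dots,t\}$ such that every $k$-subset is cut into parts of size at most $s'=s(1+o(1))$ with total excess $o(k)$, every part having size at least $s'$ (again from \cite{splitter}, with $|\Psi|=2^{o(k)}$). Define
\[
{\cal G}=\bigcup_{\psi\in\Psi}\ \bigcup_{b_1+\cdots+b_t=p}\ \bigl\{\, g : g|_{\psi^{-1}(j)}\in{\cal G}_j^{(b_j)}\ \mbox{for every }j \,\bigr\},
\]
where ${\cal G}_j^{(b_j)}$ is a $(|\psi^{-1}(j)|,s',b_j)$-universal set built by recursion (on a block the ``special'' parameter has shrunk from $k$ to $s'$). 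Correctness: given $(I,f')$, pick $\psi$ balanced on $I$, let $b_j$ be the number of ones $f'$ puts in block $j$ (so $\sum_j b_j=p$), pad $I\cap\psi^{-1}(j)$ with zero-assigned elements to size exactly $s'$ inside that block, choose $g_j\in{\cal G}_j^{(b_j)}$ matching the padded instance, and glue the $g_j$. For the size, recursion yields $|{\cal G}_j^{(b_j)}|\le{s'\choose b_j}\cdot 2^{o(s')}$, hence $\prod_{j}|{\cal G}_j^{(b_j)}|\le 2^{o(k)}\prod_{j}{s'\choose b_j}$ since $t\cdot o(s')=o(k)$, and the Chu--Vandermonde identity gives $\sum_{b_1+\cdots+b_t=p}\prod_{j=1}^{t}{s'\choose b_j}={s't\choose p}$; as $s't=k+o(k)$ and ${k+o(k)\choose p}\le{k\choose p}2^{o(k)}$,
\[
|{\cal G}|\ \le\ |\Psi|\cdot 2^{o(k)}\cdot{s't\choose p}\ \le\ {k\choose p}\cdot 2^{o(k)}.
\]
The recursion has depth $O(\log^*k)$ because the special parameter iterates $k\mapsto\Theta(\log^2 k)\mapsto\cdots$; the base case (constant special parameter, hence constant ground set after the hashing step) is the trivial universal set of all indicators of $b$-subsets. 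One checks that the $2^{o(\cdot)}$ overheads collected over all levels still multiply out to $2^{o(k)}$, and that the only genuine $\mathrm{poly}(n)$ factor is the $\log n$ of the top-level ${\cal H}$, so $|{\cal F}|=O^*(1)\cdot 2^{o(k)}\cdot{k\choose p}$. The construction is explicit at each step, so the running time obeys the same bound.

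I expect the principal difficulty to be the quantitative sharpness of the balanced splitter: the per-block overshoot must sum to only $o(k)$, so that $s't=k+o(k)$ and ${s't\choose p}$ does not exceed ${k\choose p}2^{o(k)}$. A crude random partition, whose imbalance is $\Theta\bigl(\sqrt{(k/t)\log t}\bigr)$ per block, is not accurate enough; this is exactly why one needs the near-perfectly balanced splitters of \cite{splitter} together with a suitably small number $t$ of blocks. The secondary, more mechanical difficulty is verifying that the level-by-level overhead factors, each $2^{o(\cdot)}$ at its own scale, telescope correctly through the $O(\log^* k)$ levels when raised to the relevant powers (such as the exponent $t$).
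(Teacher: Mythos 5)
The paper does not prove this statement at all --- it imports it as a black box from the cited work of Fomin et al.~\cite{representative} --- so your proposal has to be judged against the known construction rather than against an in-paper argument. Your skeleton (hash the universe down to $k^2$ via a perfect-hash/splitter family, cut into few blocks by a near-balanced splitter of size $2^{o(k)}$, build lopsided universal sets per block, and recover the coefficient ${k\choose p}$ from the Chu--Vandermonde identity over the compositions $b_1+\cdots+b_t=p$) is exactly the standard route, and those steps are fine, including the padding argument and the estimate ${k+o(k)\choose p}\le{k\choose p}2^{o(k)}$.

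The genuine gap is at the bottom of your recursion. You descend through $O(\log^* k)$ levels to blocks of \emph{constant} special parameter and there take ``the trivial universal set of all indicators of $b$-subsets'' of the hashed block universe. That base family lives on a universe of size $m=\mathrm{poly}(s)\gg s$, so its size is ${m\choose b}\ge (s-1)^{b}\,{s\choose b}$, i.e.\ the per-block overhead is $2^{\Theta(b\log s)}$ rather than the $2^{o(s)}$ your telescoping needs. Multiplying over the $\Theta(k/s)$ blocks (equivalently, applying Chu--Vandermonde with the leaf universe sizes) you get $\prod_{j}(s-1)^{b_j}=(s-1)^{p}$, i.e.\ a total of roughly ${\Theta(s k)\choose p}$ instead of ${k+o(k)\choose p}$; for $p=\Theta(k)$ this is ${k\choose p}\,2^{\Theta(k)}$, which misses the claimed bound. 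The same problem already appears at any level whose parameter does not tend to infinity with $k$: per-instance factors such as $\mathrm{poly}(s)\log m$ are only harmless when raised to the power $k/s$ if $s\to\infty$. The cited construction avoids this by stopping the recursion at blocks of polylogarithmic size and, at that base, computing a \emph{near-optimal} $(m,s,b)$-universal set of size ${s\choose b}\,\mathrm{poly}(s)\log m$ by derandomizing the probabilistic existence proof (greedy/method of conditional expectations, or exhaustive search over a small sample space), which is affordable in time $2^{\mathrm{poly}\log k}=2^{o(k)}$ because $m=\mathrm{polylog}(k)$. Without such a non-trivial base-case construction --- which your proposal does not contain --- the size bound $O^*\!\left({k\choose p}2^{o(k)}\right)$ is not achieved, so as written the argument does not establish the theorem.
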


\myparagraph{Representative Sets} We first give the  definition of a representative family, and then discuss its relevance to the design of FPT algorithms. We note that a more general definition, not relevant to this paper, is given in \cite{representative,marx09}.

\begin{definition}\label{def:repfam}
Given a universe $E$, a family ${\cal S}$ of subsets of size $p$ of $E$, a function $w:{\cal S}\rightarrow\mathbb{R}$ and $k\in\mathbb{N}$, we say that a subfamily $\widehat{\cal S}\subseteq{\cal S}$ {\em max (min) $(k-p)$-represents} $\cal S$ if for any pair $X\in{\cal S}$ and $Y\subseteq E\setminus X$ such that $|Y|\leq k-p$, there is $\widehat{X}\in\widehat{\cal S}$ disjoint from $Y$ such that $w(\widehat{X})\geq w(X)$ $(w(\widehat{X})\leq w(X))$.
\end{definition}

Roughly speaking, Definition~\ref{def:repfam} implies that if a set $Y$ can be extended to a set of size at most $k$ by adding a set $X\in \cal S$, then it can also be extended to a set of the same size by adding a set $\widehat{X}\in\widehat{\cal S}$ that is at least as good as $X$. The special case where $w(S)=0$, for all $S\in{\cal S}$, is the unweighted version of the definition.

Plenty FPT algorithms are based on dynamic programming, where after each stage, the algorithm computes a family $\cal S$ of sets that are partial solutions. At this point, we can compute a subfamily $\widehat{\cal S}\!\subseteq\!{\cal S}$ that represents $\cal S$. Then, each reference to ${\cal S}$ can be replaced by a reference to $\widehat{\cal S}$. The representative family $\widehat{\cal S}$ contains ``enough'' sets from ${\cal S}$; therefore, such replacement preserves the correctness of the algorithm. Thus, if we can efficiently compute representative families that are small enough, we can substantially improve the running time of~the~algorithm.

The {\em Two Families Theorem} of Bollob$\acute{\mathrm{a}}$s \cite{bollobas65} implies that for any universe $E$, a family ${\cal S}$ of subsets of size $p$ of $E$ and a parameter $k$ ($\geq p$), there is a subfamily $\widehat{\cal S}\subseteq {\cal S}$ of size ${k \choose p}$ that $(k-p)$-represents ${\cal S}$. Monien \cite{monien85} computed representative families of size $\sum_{i=0}^{k-p}p^i$ in time $O(|{\cal S}|p(k-p)\sum_{i=0}^{k-p}p^i)$, and Marx \cite{marx06} computed such families of size ${k \choose p}$ in time $O(|{\cal S}|^2p^{k-p})$. Recently, Fomin et al.~\cite{representative} introduced a powerful technique that enables to compute representative families of size ${k \choose p}2^{o(k)}\log\! |E|$ in time $O(|{\cal S}|(k/(k-p))^{k-p}2^{o(k)}\log\! |E|)$. We~need~the following tradeoff-based generalization of their computation, given in \cite{productFam,esarepresentative}:

\begin{theorem}\label{theorem:esarep}
Given a fixed $c\!\geq\! 1$, a universe $E$, a family ${\cal S}$ of subsets of size $p$~of~$E$, a function $w\!:{\cal S}\rightarrow\mathbb{R}$ and a parameter $k\in\mathbb{N}$, a subfamily $\widehat{\cal S}\subseteq{\cal S}$ of size $\frac{(ck)^{k}}{p^p(ck-p)^{k-p}}2^{o(k)}\log |E|$ that max (min) $(k-p)$-represents $\cal S$ can be found in time $\displaystyle{O(|{\cal S}|(ck/(ck\!-\!p))^{k-p}2^{o(k)}\log |E| + |{\cal S}|\log|{\cal S}|)}$.
\end{theorem}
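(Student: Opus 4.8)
\noindent\emph{Proof plan.}
Set $q=k-p$. The plan is to reduce the computation of a representative family to the construction of a suitable \emph{separating collection}, and then to build such a collection recursively out of the universal sets of Theorem~\ref{theorem:splitter}, with the fixed constant $c$ playing the role of a size-versus-time tuning knob.

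First I would isolate the combinatorial core. Say that a family $\mathcal{F}$ of subsets of $E$ is a \emph{$q$-separating collection} (for $p$) if for every $X\subseteq E$ with $|X|=p$ and every $Y\subseteq E\setminus X$ with $|Y|=q$ there is an $F\in\mathcal{F}$ with $X\subseteq F$ and $F\cap Y=\emptyset$. Given such an $\mathcal{F}$, define $\widehat{\mathcal{S}}$ by scanning $\mathcal{F}$ and keeping, for each $F\in\mathcal{F}$ with $\{S\in\mathcal{S}:S\subseteq F\}\neq\emptyset$, one member of this set of maximum (resp.\ minimum) weight. Then $|\widehat{\mathcal{S}}|\le|\mathcal{F}|$, and $\widehat{\mathcal{S}}$ max (min) $(k-p)$-represents $\mathcal{S}$: given $X\in\mathcal{S}$ and $Y\subseteq E\setminus X$ with $|Y|\le q$, enlarge $Y$ to a set $Y'\subseteq E\setminus X$ with $|Y'|=q$ (possible whenever $|E|\ge k$; if $|E|<k$ one simply returns $\mathcal{S}$, whose size then already respects the claimed bound), choose $F\in\mathcal{F}$ with $X\subseteq F$ and $F\cap Y'=\emptyset$, and let $\widehat{X}$ be the set kept for $F$. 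Then $\widehat{X}\subseteq F$ is disjoint from $Y'\supseteq Y$, and $X$ itself lies in $\{S\in\mathcal{S}:S\subseteq F\}$, so $w(\widehat{X})\ge w(X)$ (resp.\ $w(\widehat{X})\le w(X)$). Hence it suffices to produce a $q$-separating collection of size $\frac{(ck)^{k}}{p^p(ck-p)^{k-p}}2^{o(k)}\log|E|$, together with an algorithm that carries out the ``keeping'' step within the stated running time.

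To build the collection I would start from Theorem~\ref{theorem:splitter}. Read as the family $\{f^{-1}(1):f\in\mathcal{F}\}$, an $(n,k,p)$-universal set is already a $q$-separating collection over $\{1,\ldots,n\}$ of size ${k\choose p}2^{o(k)}$; since ${k\choose p}$ equals $\frac{k^{k}}{p^p(k-p)^{k-p}}$ up to a subexponential factor, this is the $c=1$ size bound. However, feeding such a collection verbatim into the ``keeping'' step costs $|\mathcal{S}|\cdot{k\choose p}2^{o(k)}$, which is slower than the $|\mathcal{S}|\cdot(k/(k-p))^{k-p}2^{o(k)}$ we are after. So instead I would assemble $\mathcal{F}$ recursively: using small splitters (again from Theorem~\ref{theorem:splitter}) to cut $E$ into two parts in such a way that \emph{some} splitter function partitions any prospective pair $(X,Y)$ in any prescribed near-balanced ratio, recurse on each part over all near-balanced ways of distributing the $p$ required inclusions and the $q$ required exclusions, and take the pairwise unions of the sub-collections returned; the recursion bottoms out on pieces of polylogarithmic size, handled by the universal sets above. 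The constant $c$ is the slack parameter of this recursion --- informally, one runs it as though the total budget were $ck$ rather than $k$, so that each combination step absorbs a larger chunk; this is exactly what pushes the computation factor down to $(ck/(ck-p))^{k-p}$, at the cost of inflating the collection to size $\frac{(ck)^{k}}{p^p(ck-p)^{k-p}}2^{o(k)}\log|E|$, which collapses to ${k\choose p}2^{o(k)}\log|E|$ when $c=1$. Because $c$ is fixed, $2^{o(ck)}=2^{o(k)}$ throughout, and the single $\log|E|$ comes from the splitters.

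The ``keeping'' step rides on the very same recursion: I would first sort $\mathcal{S}$ by weight (this is the additive $|\mathcal{S}|\log|\mathcal{S}|$ term), then handle each $X\in\mathcal{S}$ by splitting it along each splitter function, recursing to obtain the best partial representative inside each of the two sub-collections, and combining; summing the work over the recursion tree as a geometric series yields total work $O(|\mathcal{S}|\,(ck/(ck-p))^{k-p}2^{o(k)}\log|E|)$ beyond the sort. I expect the exponent accounting inside the recursive construction to be the main obstacle: one has to check that the size and time recurrences solve \emph{exactly} to $\frac{(ck)^{k}}{p^p(ck-p)^{k-p}}$ and $(ck/(ck-p))^{k-p}$, that the $2^{o(k)}$ and the $\log|E|$ factors do not compound over the $\Theta(\log k)$ recursion levels into something larger, and that the $c$-dependent near-balanced splits are realizable by splitters small enough not to spoil these bounds; the degenerate cases $p\in\{0,k\}$ and $|E|<k$ are dealt with by hand.
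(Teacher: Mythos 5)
Your reduction step is fine, and it is essentially the argument the paper itself uses: given a separating collection ${\cal F}$ (what the paper calls an $(E,k,p)$-good family inside a separator data structure), keeping for each $F\in{\cal F}$ a best-weight member of $\{S\in{\cal S}:S\subseteq F\}$ — equivalently, the weight-sorted greedy marking of \alg{GenRepAlg} — yields $|\widehat{\cal S}|\le|{\cal F}|$ and correctness exactly as you argue, including the padding of $Y$ to size $k-p$ and the trivial $|E|<k$ case. Note, however, that the paper never proves Theorem~\ref{theorem:esarep} itself: it imports it from \cite{productFam,esarepresentative}, and its own proof of the generalization (Theorem~\ref{theorem:disrep}) takes the single-universe separators, with their bounds $\zeta,\tau_I,\tau_Q$, entirely as a black box.

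The genuine gap is that everything nontrivial in Theorem~\ref{theorem:esarep} lives precisely in that black box, and your proposal only gestures at it. You must construct ${\cal F}$ achieving \emph{simultaneously} size $\frac{(ck)^{k}}{p^p(ck-p)^{k-p}}2^{o(k)}\log|E|$, comparable construction time, and per-query output $|\chi(S)|\le(ck/(ck-p))^{k-p}2^{o(k)}\log|E|$; you explicitly defer the recurrence and exponent accounting that would establish this, so the proposal in effect reduces the theorem to the statement it was meant to prove. Moreover, the mechanism you suggest for the tradeoff (``run the recursion as though the budget were $ck$'') is not visibly the right one: in the known constructions the constant $c$ enters as the \emph{density} of the sets $F$ (each element retained with probability $p/(ck)$, derandomized after shrinking the universe with splitters), which is exactly what inflates $|{\cal F}|$ to $(ck)^k/(p^p(ck-p)^{k-p})$ while shrinking the number of $F$ containing a fixed $p$-set to $(ck/(ck-p))^{k-p}$; nothing in your sketch shows your near-balanced two-part recursion solves to these two expressions, nor that the $2^{o(k)}$ and $\log|E|$ factors survive $\Theta(\log k)$ levels. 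Finally, your query step as described — ``recursing to obtain the best partial representative inside each of the two sub-collections and combining'' — is not well-defined: the weight is attached to whole sets $S\in{\cal S}$, which do not decompose across the two parts of $E$, so best halves cannot be chosen independently and unioned. What does decompose is $\chi(S)$ (as a product over the parts, as in Steps~\ref{steprs:chi1} and \ref{steprs:chi2} of \alg{GenRepAlg}), and the correct time analysis charges each $S$ the cost $|\chi(S)|$ before the weight-sorted marking; your write-up would need to be recast in those terms.
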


\section{Our Mixing Strategies}\label{section:strategies}

In this section, we give an overview of our mixing strategies in the context of the problems mentioned in the introduction, including references to relevant appendices. The first strategy builds upon the approach used in \cite{representative} to solve {\sc Long Directed Cycle}, and it does not involve a mixture of techniques, but it is relevant to this paper since the second strategy builds upon it; the other strategies are novel and involve mixtures of techniques.


\bigskip
\myparagraph{Strategy I} Our deterministic {\sc $k$-IOB} algorithm follows the strategy in~Fig.~\ref{fig:strategies}(I). The first reduction (of \cite{kIOB49k}) allows to focus on finding a small out-tree rather than an out-branching, while the second allows to focus on finding an even smaller out-tree, but we need to find it along with a set of paths on 2 nodes (Appendix~\ref{section:kiobred}). The second reduction might be of independent interest (indeed, the paper \cite{spotting} applies our reduction). We then use a representative sets-based procedure in a manner that does not directly solve the problem, but returns a family of partial solutions that are trees. We try to extend each partial solution to a solution in polynomial time via a computation of a maximum matching (Appendix~\ref{section:kiobdet}). We thus obtain the following theorem.

\begin{theorem}\label{theorem:kiobDet}
Relying on Strategy I, {\sc $k$-IOB} can be solved in deterministic time $O^*(5.139^k)$.
\end{theorem}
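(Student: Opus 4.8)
The plan is to follow the pipeline sketched in Strategy I, turning each informal step into a concrete reduction and bounding the running time of the representative-sets stage carefully. First I would invoke the reduction of~\cite{kIOB49k} to reduce {\sc $k$-IOB} to the problem of deciding whether $G$ contains an out-tree with at least $k$ internal nodes (equivalently, an out-tree with a prescribed number of leaves and internal nodes); this costs only a polynomial overhead and lets us forget about spanning the whole graph. Then I would apply the second reduction, whose job is to further shrink the out-tree we must search for: instead of looking for an out-tree $T$ with $k$ internal nodes directly, we look for a smaller out-tree together with a collection of vertex-disjoint $P_2$'s (paths on two nodes) that can be ``glued'' onto its leaves, where the disjointness is from the vertex set of the small tree. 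The point of this second reduction is purely quantitative — it trades part of the structure we must guess for a matching-type object — so the bulk of the work is to verify that a solution to {\sc $k$-IOB} exists if and only if such a (small out-tree, $P_2$-packing) pair exists, and to track exactly how the parameter $k$ splits between the two parts.

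Next I would set up the dynamic program that grows out-trees from each possible root, maintaining at each stage a family $\mathcal S$ of partial solutions represented as vertex sets of bounded size, and after each stage replace $\mathcal S$ by a subfamily $\widehat{\mathcal S}$ that $(k'-p)$-represents it, using Theorem~\ref{theorem:esarep} with a suitable choice of the tradeoff constant $c$. Because the second reduction has already reduced the size $p$ of the partial solutions we need to carry, the dominant term $|\mathcal S|(ck/(ck-p))^{k-p}2^{o(k)}\log|E|$ in Theorem~\ref{theorem:esarep} becomes smaller; optimizing the constant $c$ against the size of $\widehat{\mathcal S}$ that feeds into the next stage is what produces the base $5.139$. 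Crucially, the procedure is not run to completion as a decision procedure — it halts with a family of partial solutions that are trees, and for each such tree we attempt to complete it to a genuine out-branching with enough internal nodes in polynomial time by computing a maximum matching in the underlying undirected graph restricted to the vertices not already used (this is where the ``relation between out-trees with many leaves and $P_2$'s'' from the introduction is cashed in).

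I would then assemble the running time: the number of roots is $O(|V|)$, for each root the DP runs in $\mathrm{poly}(|V|)$ stages, each stage calls Theorem~\ref{theorem:esarep} on a family of size bounded by the previous $|\widehat{\mathcal S}|$, and finally a polynomial matching computation is performed per surviving partial solution. Multiplying the per-root bound by $|V|$ keeps us in $O^*(\cdot)$, and the exponential factor is governed entirely by $\max$ over stages of $|\mathcal S|\cdot(ck/(ck-p))^{k-p}$ together with the size of the final family; a routine but careful optimization of $c$ and of the split of $k$ induced by the second reduction yields $O^*(5.139^k)$.

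The main obstacle I expect is the second reduction together with the matching-based completion step: one has to prove, in both directions, that searching for a full out-branching with $k$ internal nodes is equivalent to searching for a strictly smaller out-tree whose leaves can be extended by a disjoint $P_2$-packing, and that a maximum matching in the appropriate auxiliary undirected graph correctly detects whether such an extension exists. Getting the counting exactly right here — how many internal nodes the small tree must already contain versus how many are contributed by the matched $P_2$'s, and that no vertex is double-counted — is delicate, and it is precisely this bookkeeping that determines the value of $p$ fed into Theorem~\ref{theorem:esarep} and hence the final base of the exponent. Everything else (the first reduction, the generic DP skeleton, the invocation of the representative-sets theorem, and the final time accounting) is standard once that equivalence is in place.
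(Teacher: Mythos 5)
Your plan matches the paper's proof of this theorem essentially step for step: the reduction of~\cite{kIOB49k} to finding a rooted out-tree, a second reduction to a small out-tree plus $q$ vertex-disjoint paths on two nodes (the paper's {\sc $(k,\ell,q)$-T\&P} problem, with an iteration over the number of leaves $\ell$ and the number of paths $q$), a representative-sets procedure that outputs a family of candidate trees rather than a yes/no answer, a polynomial-time maximum-matching test in the underlying undirected graph of $G$ minus each candidate tree, and an optimization over the tradeoff constant $c$ (the paper uses $c=1.497$) yielding $O^*(5.139^k)$. The only slight imprecision is that in the paper the $P_2$'s are not glued onto the leaves of the small tree itself but are absorbed via an exchange argument on an out-branching extending it; this is exactly the bookkeeping you flag as the delicate part, and your overall approach is the same as the paper's.
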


{\noindent Generally, this strategy is relevant to the design of deterministic algorithms as follows. We are given a problem which can be reduced to two suproblems whose solutions should be disjoint. If we could disregard the disjointness condition, the first subproblem is solvable using representative sets, and the second subproblem is solvable in polynomial time. Then, we modify the representative sets-procedure to compute a family of solutions (rather than one solution), such that for any set that might be a solution to the second subproblem, there is a solution in the family that is disjoint from this set. Thus, we can iterate over every solution $A$ in the family, remove its elements from the input, and attempt to find a solution $B$ to the second subproblem in the remaining part of the input---if we succeed, the combination of $A$ and $B$ should solve the original problem.}

\begin{figure}[ht!]
\centering
\includegraphics[scale=0.7]{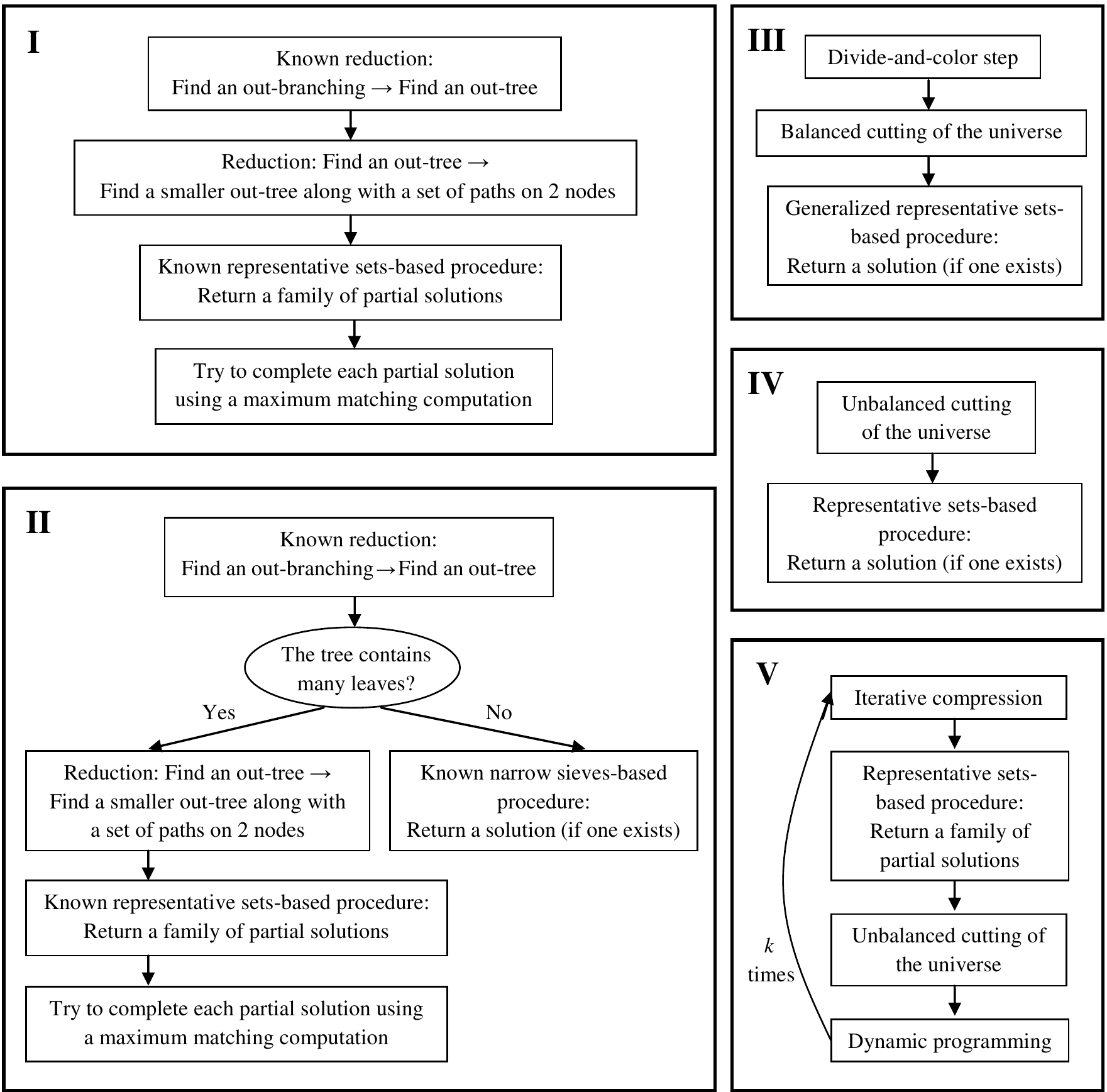}
\caption{Strategies for mixing color coding-related techniques, described in Section~\ref{section:strategies}. We use these strategies to develop a deterministic algorithm for {\sc $k$-IOB} (I), a randomized algorithm for {\sc $k$-IOB} (II), deterministic algorithms for {\sc $k$-Path}, {\sc $k$-Tree}, {\sc $(r,k)$-DM}, {\sc GM$_\mathrm{D}$} and {\sc PC}, including their weighted versions (III), a deterministic algorithm for {\sc $(3,k)$-WSP} (IV), and a deterministic algorithm for {\sc $P_2$-Packing}~(V).}
\label{fig:strategies}
\end{figure}

\bigskip
\myparagraph{Strategy II} Our second result, a randomized FPT algorithm for {\sc $k$-IOB}, builds upon our first algorithm and follows the strategy in Fig.~\ref{fig:strategies}(II). This strategy shows the usefulness of mixing narrow sieves and representative sets (Appendix \ref{section:kiobrand}), previously considered to be two independent tools for developing FPT algorithms. This strategy indicates that the representative sets technique is relevant to the design of fast randomized FPT algorithms, even for unweighted problems. We thus obtain the following theorem, which breaks, for the first time, the $O^*(4^k)$-time barrier for {\sc $k$-IOB}.\footnote{Previous algorithms obtained solutions in time $O^*(2^{k})$ to a problem that is {\em more general} than {\sc Directed $k$-Path}, and could thus solve {\sc $k$-IOB} in time $O^*(2^{2k})$ (relying solely on the first component in Strategies I and II). Thus, it seemed that an $O^*((4-\epsilon)^k)$ time algorithm for {\sc $k$-IOB} would imply an $O^*((2-\epsilon)^k)$ time algorithm for {\sc Directed $k$-Path}, which is a problem that is open for several decades.}

\begin{theorem}\label{theorem:kiobRand}
Relying on Strategy II, {\sc $k$-IOB} can be solved in randomized time $O^*(3.617^k)$.
\end{theorem}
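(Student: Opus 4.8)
The plan is to run the strategy of Fig.~\ref{fig:strategies}(II), i.e.\ to take the algorithm behind Theorem~\ref{theorem:kiobDet} and, wherever it invokes the deterministic representative-sets engine to \emph{detect} the out-tree, to hand that job (partly) to the narrow sieves black box, keeping only a cheaper representative-sets computation to mediate between the detected out-tree and the matching that completes it. Concretely, I would first recall the two reductions underlying Strategy~I: the reduction of \cite{kIOB49k} replaces ``an out-branching with $\geq k$ internal nodes'' by ``an out-tree $T$ with exactly $k$ internal nodes that extends to an out-branching of $G$'', and pruning each internal node's private leaf-children lets us assume $T$ has $O(k)$ nodes; the second reduction rewrites the target as a pair $(T',M)$, where $T'$ is an out-tree on (at most) $k$ nodes and $M$ is a set of vertex-disjoint paths on $2$ nodes, each hanging a fresh vertex of $V\setminus V(T')$ off a leaf of $T'$, with all $\ell$ leaves of $T'$ saturated by $M$ and $T'\cup M$ extending to an out-branching. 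For a \emph{fixed} $T'$ the completion $M$, if it exists, is obtained by one maximum-matching computation in the bipartite graph between the leaves of $T'$ and $V\setminus V(T')$, in polynomial time.

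The core of the argument is how to produce the out-tree part $T'$. Detecting all of $T'\cup M$ in one shot by narrow sieves costs $O^*(2^{|V(T')|+|M|})\le O^*(2^{2k})=O^*(4^k)$ (this is essentially the bound behind the previous $O^*(4^k)$ algorithm \cite{thesis11,ipec13}); to beat it we offload a part of the structure to the matching side and cover the rest by a small family. I would split $T'$ along a threshold $t$: its internal nodes induce a smaller out-tree $T''$ on $\approx t$ nodes, and $T'$ is recovered from $T''$ by re-attaching the $\ell\approx k-t$ leaves, each as a pendant child of a node of $T''$ (so that $T'\cup M$ is $T''$ with $\ell$ vertex-disjoint pendant paths on three nodes). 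Then (i) over the regime where $t$ is large (few leaves to re-attach), detect the corresponding structure directly by narrow sieves at cost $O^*(2^{\,|V(T'')|+2\ell})$; and (ii) over the complementary regime where $\ell$ is a constant fraction of $k$ (so $T''$ is small), use the modified representative-sets procedure of Strategy~I: via Theorem~\ref{theorem:esarep} (or the universal-set tool of Theorem~\ref{theorem:splitter}) compute a family of candidate cores $T''$ that, for every way the $\ell$ pendant paths and the matching could occupy vertices outside the core, contains a core disjoint from those vertices, and for each core in the family solve the residual pendant-paths-plus-matching problem in polynomial time. Balancing the narrow-sieves cost against the representative-family cost as the threshold $t/k$ varies is what pins the base of the exponential; the optimum is reached when both sides equal $2^{(\log_2 3.617)\,k}$, giving the claimed $O^*(3.617^k)$. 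Since narrow sieves is one-sided-error Monte Carlo, so is the overall algorithm, and its error probability can be driven down by repetition.

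The main obstacle is making the two engines cooperate. Narrow sieves certifies only the \emph{existence} of the detected substructure somewhere in $G$, whereas the matching-completion step needs the ``interface'' of the substructure (the core $T''$, and which of its nodes must receive pendant paths) to be pinned down so that the residual assignment/matching problem is well defined; this is exactly what the representative family must supply, so one has to set it up so that it covers all choices of the $O(\ell)$ ``forbidden'' vertices used by the pendants and $M$, and to check that the (generalized) notion of $(k-p)$-representation — for unweighted families — is strong enough for the completion to go through while the family stays small enough. A second, more routine, difficulty is the bookkeeping for the optimization: choosing $|V(T'')|$, $\ell$ and the parameter fed to Theorem~\ref{theorem:esarep}, verifying the $O(k)$ (or $O(k^2)$) outer loop over admissible sizes, and confirming that the worst case over the threshold is indeed governed by $O^*(3.617^k)$.
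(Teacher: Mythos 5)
Your high-level plan is the same as the paper's Strategy~II: iterate over the number of leaves $\ell$ of the witness out-tree with $k$ internal nodes, hand the small-$\ell$ regime to the narrow-sieves black box at cost $O^*(2^{k+\ell})$ (this matches Lemma~\ref{lemma:randkiobknown}), hand the large-$\ell$ regime to the Strategy-I machinery (representative family of candidate trees, each completed in polynomial time by a matching), and balance the two at a threshold $\ell^*\approx 0.8545k$ to get $O^*(3.617^k)$, which is exactly how Lemmas~\ref{lemma:reductionproof}, \ref{lemma:tapc} (with $c=1.765$) and \ref{lemma:randkiobknown} are combined in Appendix~\ref{section:kiobrand}. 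The gap is in your large-$\ell$ regime. You replace the paper's decomposition by ``internal core $T''$ plus anchored pendant tails'': the representative family consists of cores, and for each core the residual task is to attach $\ell$ vertex-disjoint two-vertex tails, using only vertices outside the core, so that every leaf of $T''$ receives at least one tail. You assert this residual is solvable in polynomial time, but it is not a maximum-matching computation: it is an anchored, covering-constrained packing of vertex-disjoint paths on three nodes (a constrained relative of {\sc $P_2$-Packing}), and nothing in your argument establishes its tractability. The paper is set up precisely to avoid this: in {\sc $(k,\ell,q)$-T\&P} the retained tree keeps $(\ell-q)$ of its leaves and the $q$ two-node paths are required to be entirely disjoint from it, and the second direction of Lemma~\ref{lemma:reductionproof} proves, via an exchange argument on out-branchings extending the candidate tree, that \emph{any} $q$ disjoint two-node paths in $G-V(T')$ can be absorbed; hence the completion is one plain maximum matching in the complement and $(2q)$-representation is the right requirement on the family. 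Without this exchange argument (or a proof that your residual problem is polynomial), your regime (ii) does not go through.

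Note also that you cannot simply retreat to your own first formulation (representative families of full $k$-node trees $T'$, completed by the bipartite pendant matching, which \emph{is} polynomial): then the dynamic program must build trees containing $k$ of the $k+\ell$ solution vertices with only $\ell$-representation, and its intermediate representative-set computations cost far more than $3.617^k$ when $\ell\approx 0.85k$; shrinking the tree the DP constructs to $k+\ell-2q\le 3(k-\ell)$ nodes is the whole point of splitting off the two-node paths. Finally, the claim that the two regimes balance exactly at $O^*(3.617^k)$ is not automatic from your sketch: it requires the explicit bound of Lemma~\ref{lemma:tapc} and the numerical optimization over the trade-off parameter $c$ and the threshold carried out in Appendix~\ref{section:kiobdettime}; this part is bookkeeping, as you say, but it is tied to the paper's specific tree-plus-disjoint-paths formulation, not to yours.
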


{\noindent Generally, this strategy is relevant to the scenario of Strategy I (in the context of {\em randomized} algorithms), where we also rely on a condition described below. The difference is in the manner in which we handle this scenario---we ``guess'' what fraction of the solution solves the harder subproblem (i.e., the one that we need to solve using a color coding-related technique). If this fraction is small, it is advised to follow Strategy I (since representative sets are very efficient in finding a family of partial solutions that are significantly smaller than the size of the entire solution); otherwise, we need to rely on a condition stating that if this fraction is large, then the size of the entire solution is actually small---then, we can efficiently find it by using a randomized narrow sieves-based procedure.}

\bigskip
\myparagraph{Strategy III} Our third result, a deterministic FPT algorithm for {\sc Weighted $k$-Path}, follows the strategy in Fig.~\ref{fig:strategies}(III). This strategy can be used to speed-up algorithms for other problems based on a standard application of representative sets, such as the {\sc Weighted $k$-Tree}, {\sc $(r,k)$-DM}, {\sc GM$_\mathrm{D}$} and {\sc PC} algorithms of \cite{productFam,fsttcs13,mfcs14,esarepresentative}, where: (1) elements are never deleted from partial solutions (this is not the case in our {\sc $(3,k)$-WSP} algorithm); (2) the solution is one of the computed representative sets (this is not the case in our {\sc $k$-IOB} and {\sc $P_2$-Packing} algorithms). We rely on a generalization of Definition \ref{def:repfam} and~Theorem~\ref{theorem:esarep}:

\begin{definition}\label{def:disrep}
Let $E_1,E_2,\ldots,E_t$ be disjoint universes, $p_1,p_2,\ldots,p_t\in\mathbb{N}$, and ${\cal S}$ be a family of subsets of $(\bigcup_{i=1}^tE_i)$ such that $[\forall S\in{\cal S},i\in\{1,2,\ldots,t\}$: $|S\cap E_i|=p_i]$. Given a function $w:{\cal S}\rightarrow\mathbb{R}$ and parameters $k_1,k_2,\ldots,k_t\in\mathbb{N}$, we say that a subfamily $\widehat{\cal S}\subseteq{\cal S}$ {\em max (min) $(k_1-p_1,k_2-p_2,\ldots,k_t-p_t)$-represents} $\cal S$ if for any pair $X\in{\cal S}$ and $Y\subseteq (\bigcup_{i=1}^tE_i)\setminus X$ such that $[\forall i\in\{1,2,\ldots,t\}: |Y\cap E_i|\leq k_i-p_i]$, there is $\widehat{X}\in\widehat{\cal S}$ disjoint from $Y$ such that $w(\widehat{X})\geq w(X)$ $(w(\widehat{X})\leq w(X))$.
\end{definition}

\begin{theorem}\label{theorem:disrep}
Given fixed $c_1,c_2,\ldots,c_t\geq 1$, and $E_1,E_2,\ldots,E_t$, $p_1,p_2,\ldots,p_t$,~${\cal S}$,~$w$ and $k_1,k_2,\ldots,k_t$ as in Definition \ref{def:disrep}, a subfamily $\widehat{\cal S}\!\subseteq\!{\cal S}$ of size $\prod_{i=1}^t(\! \frac{(c_ik_i)^{k_i}}{{p_i}^{p_i}(ck_i-p_i)^{k_i-p_i}}$ $2^{o(k_i)}\log |E_i|)$ that max (min) $(k_1-p_1,k_2-p_2,\ldots,k_t-p_t)$-represents $\cal S$~can~be~found in time $O(|{\cal S}|\prod_{i=1}^t\left((c_ik_i/(c_ik_i-p_i))^{k_i-p_i}2^{o(k_i)}\log |E_i|\right) + |{\cal S}|\log|{\cal S}|)$.
\end{theorem}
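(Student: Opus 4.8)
The plan is to mirror the proof of Theorem~\ref{theorem:esarep}, the single‑universe case, and to build the multi‑universe computation as a product over the $t$ universes. Recall that the proof of Theorem~\ref{theorem:esarep} (following \cite{representative,productFam,esarepresentative}, and ultimately resting on the universal sets of Theorem~\ref{theorem:splitter}) constructs a structured collection $\mathcal{F}$ of subsets of the universe $E$ such that: (i)~$|\mathcal{F}|=\frac{(ck)^{k}}{p^{p}(ck-p)^{k-p}}2^{o(k)}\log|E|$; (ii)~every $p$‑set $A\subseteq E$ and every disjoint set $B$ with $|B|\leq k-p$ are \emph{separated} by some $F\in\mathcal{F}$, i.e.\ $A\subseteq F$ and $B\cap F=\emptyset$; and (iii)~for each $p$‑set $A$ the list of all $F\in\mathcal{F}$ with $A\subseteq F$ can be produced in time $(ck/(ck-p))^{k-p}2^{o(k)}\log|E|$; the output $\widehat{\mathcal{S}}$ is then obtained by keeping, for each $F\in\mathcal{F}$, the max‑ (resp.\ min‑) weight member of $\mathcal{S}$ contained in $F$. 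First I would isolate this interface, obtaining for each universe $E_i$ (with its parameters $c_i,k_i,p_i$) such a collection $\mathcal{F}_i$. I would then form the product collection $\mathcal{F}^{\times}=\{F_1\cup\cdots\cup F_t:F_i\in\mathcal{F}_i\}$ of subsets of $\bigcup_iE_i$ --- since the $E_i$ are pairwise disjoint, a member of $\mathcal{F}^{\times}$ determines the tuple that forms it, so $|\mathcal{F}^{\times}|=\prod_i|\mathcal{F}_i|$ --- and define $\widehat{\mathcal{S}}$ to consist, for each $F\in\mathcal{F}^{\times}$, of the max‑ (resp.\ min‑) weight $S\in\mathcal{S}$ with $S\subseteq F$; as $|S\cap E_i|=p_i$ for all $i$, the condition $S\subseteq F$ simply means $S\cap E_i\subseteq F_i$ for every $i$.

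Correctness is coordinatewise. Given $X\in\mathcal{S}$ and $Y$ disjoint from $X$ with $|Y\cap E_i|\leq k_i-p_i$ for all $i$, applying property~(ii) in each $E_i$ to the $p_i$‑set $X\cap E_i$ and the disjoint set $Y\cap E_i$ yields $F_i\in\mathcal{F}_i$ with $X\cap E_i\subseteq F_i$ and $(Y\cap E_i)\cap F_i=\emptyset$; then $F=\bigcup_iF_i\in\mathcal{F}^{\times}$ contains $X$ and is disjoint from $Y$, so the set kept for $F$ is a valid witness $\widehat{X}$ (its weight is at least, resp.\ at most, $w(X)$, since $X\subseteq F$). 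The size bound follows immediately: $|\widehat{\mathcal{S}}|\leq|\mathcal{F}^{\times}|=\prod_i|\mathcal{F}_i|=\prod_i\left(\frac{(c_ik_i)^{k_i}}{p_i^{p_i}(c_ik_i-p_i)^{k_i-p_i}}2^{o(k_i)}\log|E_i|\right)$. For the running time I would process each $S\in\mathcal{S}$ once: use property~(iii) in each $E_i$ to list the $F_i$ with $S\cap E_i\subseteq F_i$, take the Cartesian product of these $t$ (short) lists, and for each resulting $F$ update, in a dictionary, the current best set; summing over $S$ gives the $|\mathcal{S}|\prod_i\left((c_ik_i/(c_ik_i-p_i))^{k_i-p_i}2^{o(k_i)}\log|E_i|\right)$ term, the dictionary never holds more than $\prod_i|\mathcal{F}_i|$ entries, and a final sort yields the $|\mathcal{S}|\log|\mathcal{S}|$ term.

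I expect the crux to be that this really must go through the separating‑collection machinery rather than use Theorem~\ref{theorem:esarep} as a black box. The naive black‑box route --- first reduce the $E_t$‑coordinate by applying Theorem~\ref{theorem:esarep} inside each profile over $E_1,\dots,E_{t-1}$, then reduce the $E_{t-1}$‑coordinate, and so on --- blows up: after one coordinate is reduced the family can have size $|\mathcal{S}|$ times the single‑universe \emph{output} size, which inserts a spurious $|\mathcal{S}|$ factor into the final size and, since that output size can be exponentially larger than the single‑universe running‑time factor, makes the remaining reductions too slow. One therefore has to verify that the collections underlying Theorem~\ref{theorem:esarep} compose under disjoint unions with multiplicative behaviour in \emph{both} the size parameter and the enumeration‑time parameter; the delicate point is the bookkeeping that, for a fixed $S$, enumerates exactly the tuples $(F_1,\dots,F_t)$ jointly containing it within the product of the per‑universe enumeration costs, rather than a sum of them or a product carrying extra slack.
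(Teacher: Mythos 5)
Your proposal is correct and follows essentially the same route as the paper: the paper's \alg{GenRepAlg} likewise constructs, for each $E_i$, the separator of \cite{productFam,esarepresentative} (exactly the interface you isolate, with its size, construction-time and query-time bounds), takes the product family $\{F_1\cup\cdots\cup F_t\}$ over the disjoint universes, answers the joint query for each $S$ as the product of the per-universe query outputs, and keeps at most one (best-weight) set per product set $F$, yielding the same size and time bounds and the same coordinatewise correctness argument. The only cosmetic difference is the selection step: the paper sorts $\mathcal{S}$ by weight and greedily keeps a set when it covers a not-yet-used $F$ (indicator variables), while you keep for each $F$ its best contained set — an equivalent bookkeeping choice.
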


{\noindent Section \ref{section:disrep} proves this theorem, and Appendices \ref{section:disrepkpath} and \ref{section:proofkcwp} use it to solve a subcase of {\sc Weighted $k$-Path}, that we call {\sc Cut Weighted $k$-Path ($k$-CWP)}. Then, Appendix~\ref{section:kpathdet} translates the problem to this subcase via divide-and-color preprocessing, mixed with a technique that we call {\em balanced~cutting}.}

Since Appendices \ref{section:disrepkpath}--\ref{section:kpathdet} are technically involved, Appendix \ref{sec:kPathIntuition} attempts to describe, in an informal manner, the intuition that guided us. Roughly speaking, this is the main idea---we first cut a small part of the universe that will play the role of $E_1$, and let the remaining part of the universe play the role of $E_2$ (this is the first, simple phase of balanced cutting). Then, using divide-and-color, we partition $E_1$ into two sets, $L$ and $R$, such that we can consider $L$ ($R$) only in the first (second) half of an execution of an algorithm that should solve {\sc $k$-CWP}. Now, we cut the entire universe into small pieces in a special, {\em implicit} manner (this is the second phase of balanced cutting); this overall translates {\sc Weighted $k$-Path} to {\sc $k$-CWP}, where we do not seek a path, but many small paths, and we are also told in which order we should seek them. The set $L$ ($R$) is distributed in a (at worst)``balanced manner'' between the pieces of the universe that should be considered in the first (second) half of the execution of the {\sc $k$-CWP} algorithm. We show that this balanced manner actually ``distorts'' the universe, so that when we need to compute a representative family, for example, in the first half of the execution, there is a significant gap between the fractions $\frac{p_1}{k_1}$ and $\frac{p_2}{k_2}$, where $p_1$ ($p_2$) is the number of elements from $L$ ($E_2$) that our partial solutions contain, and $k_1$ ($k_2$) is the total number of elements from $L$ ($E_2$) that are contained in a solution. This gap leads to an improved running time (gained by using generalized representative sets).

Thus, we obtain the following result.

\begin{theorem}\label{theorem:kPathRand}
Relying on Strategy III, {\sc Weighted $k$-Path} is solvable in deterministic time $O^*(2.59606^k)$.
\end{theorem}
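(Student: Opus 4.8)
The plan is to translate {\sc Weighted $k$-Path} into {\sc $k$-CWP} using the two-phase balanced cutting described above, then run the standard dynamic-programming-over-paths algorithm on {\sc $k$-CWP}, maintaining generalized representative families via Theorem~\ref{theorem:disrep}, and finally optimize the running-time exponent over the free parameters (the sizes of the pieces, the colouring parameters, and the tradeoff constants $c_1,c_2$).

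\medskip

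\textbf{Step 1 (Reduction to {\sc $k$-CWP}).} First I would fix the target path on $k$ nodes. The first, simple phase of balanced cutting picks a small prefix of the universe of size roughly $\gamma k$ (for a constant $\gamma$ to be chosen) to serve as $E_1$, leaving the rest as $E_2$; by iterating over an $(n,k,\gamma k)$-universal set of Theorem~\ref{theorem:splitter} we may assume $E_1$ contains exactly the $k_1:=\gamma k$ solution-nodes that fall in $E_1$ and $E_2$ contains the remaining $k_2:=(1-\gamma)k$. Next, divide-and-color (one step, again derandomized through a universal set) splits $E_1$ into $L,R$ so that the $L$-nodes of the solution all precede its $R$-nodes along the path. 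The second phase then cuts the whole universe \emph{implicitly} into $O^*(1)$ blocks of bounded size and prescribes an order in which the sub-paths lying in consecutive blocks are to be discovered; the point of the ``balanced'' distribution is that in every block the ratio between $L$-nodes already placed and the total number of $L$-nodes, and likewise for $E_2$, is skewed away from $\tfrac{p_2}{k_2}$, creating the gap $\tfrac{p_1}{k_1}\ll\tfrac{p_2}{k_2}$ referred to in the overview. All of this costs only a sub-exponential (in fact $O^*(1)$, after amortizing the universal-set factors into the $2^{o(k)}$ terms) blow-up and reduces the problem to $O^*(1)$ instances of {\sc $k$-CWP}.

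\medskip

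\textbf{Step 2 (Solving {\sc $k$-CWP} with generalized representative sets).} On a {\sc $k$-CWP} instance I would run the familiar path DP: process nodes in the prescribed block order, and after each stage keep only a family $\widehat{\cal S}$ that max/min $(k_1-p_1,k_2-p_2)$-represents the current family $\cal S$ of partial paths, invoking Theorem~\ref{theorem:disrep} with $t=2$, universes $L$ (or $R$) and $E_2$, and suitable tradeoff constants $c_1,c_2\ge 1$. Correctness is exactly the standard representative-sets argument: Definition~\ref{def:disrep} guarantees that any partial path that can still be completed to a full $k$-path (its completion $Y$ satisfying the per-universe size bounds dictated by the block structure) has a representative of no worse weight in $\widehat{\cal S}$, so the replacement is safe. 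Because elements are never deleted from partial solutions in the $k$-Path DP, the per-universe counts $p_i$ only increase, which keeps the representation parameters consistent across stages; this is precisely condition (1) flagged before Definition~\ref{def:disrep}, and it is why this strategy applies to {\sc $k$-Path}, {\sc $k$-Tree}, {\sc $(r,k)$-DM}, etc.

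\medskip

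\textbf{Step 3 (Running-time optimization).} The running time is $O^*$ of the maximum, over blocks, of the product of two factors of the form $(c_ik_i/(c_ik_i-p_i))^{k_i-p_i}2^{o(k_i)}$ coming from Theorem~\ref{theorem:disrep}, multiplied across the blocks, with $p_1/k_1$ and $p_2/k_2$ ranging over the values that the balanced cutting forces. Taking logarithms turns this into a max--min expression in the continuous variables $\gamma$ (the $E_1$-fraction), the colouring split of $L$ versus $R$, the block sizes, and $c_1,c_2$; I would bound it by replacing the discrete block contributions with their worst-case continuous analogue and then numerically optimize. The known $O^*(2.6181^k)=O^*(\varphi^{2k})$ bound is recovered by the degenerate choice that makes $E_1$ empty; the gap between $p_1/k_1$ and $p_2/k_2$ produced by balanced cutting buys the improvement to $2.59606^k$. \textbf{The hard part will be} exactly this optimization together with verifying that the implicit, block-by-block cutting really does enforce the claimed skew uniformly over all blocks and all ways the adversarial solution can distribute itself — i.e.\ that there is no block in which the ratios equalize and wash out the gain; establishing that worst-case-balanced property (a counting/pigeonhole argument on how $L$, $R$ and $E_2$ intersect the blocks) is the technical crux, and it is the part deferred to Appendices~\ref{section:disrepkpath}--\ref{section:kpathdet}.
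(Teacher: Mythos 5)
Your overall plan follows the paper's Strategy III (threshold-style cut into $E_1,E_2$, divide-and-color split of $E_1$ into $L,R$, implicit cutting into $O^*(1)$ ordered pieces, then a path DP with generalized representative families from Theorem~\ref{theorem:disrep}, followed by numerical optimization), but there is a concrete flaw in your Step~1 that breaks the claimed bound. You propose to obtain $E_1$ itself by iterating over an $(n,k,\gamma k)$-universal set. By Theorem~\ref{theorem:splitter} this costs $O^*\bigl(\binom{k}{\gamma k}2^{o(k)}\bigr)\approx 2^{H(\gamma)k}$, which for the relevant $\gamma\approx 0.084$ is about $1.33^k$; since the entire improvement over the baseline $O^*(2.6181^k)$ is under one percent, any such factor (indeed any $2^{H(\gamma)k}$ with $\gamma$ bounded away from $0$) swamps the gain from the distortion, and no choice of $\gamma$ can recover $2.59606^k$ this way. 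The paper avoids this entirely: $E_1$ is found in \emph{polynomial} time by fixing an arbitrary order $v_1,\ldots,v_n$, guessing the $O(1/\epsilon)$ connector nodes $U$, and taking for each threshold $i$ the set of nodes ordered at least $v_i$ (outside $U$); some threshold captures exactly $\lfloor(\frac12+\delta)\gamma k\rfloor+\lfloor(\frac12-\delta)\gamma k\rfloor$ solution nodes. The only exponential overhead beyond the DP is the universal set used for the $L/R$ split inside $E_1$, of size $\binom{\gamma k}{(\frac12-\delta)\gamma k}2^{o(k)}\approx 1.06^k$, which is what the optimization in Lemma~\ref{lemma:pathalgcor} and Appendix~\ref{section:kpathdetapp} actually pays for.

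Two smaller points: the representative-family computations are not $t=2$ throughout; in the middle piece both $L$ and $R$ (and $V\setminus(L\cup R)$) are simultaneously constrained, so the paper uses $t=3$ (matrices M, N, K with budgets $(k_1-j,k_2,k_3-s)$, $(k_1-i,k_2-j,k_3-s)$, $(0,k_2-j,k_3-s)$) together with an explicit gradual transition between the tradeoff constants $c_1$ and $c_2$, and the shift $\delta$ between $|L\cap V_P|$ and $|R\cap V_P|$ is itself an optimization variable you omit. Finally, the ``no block washes out the gain'' property that you defer is not automatic: it is enforced by sorting the pieces by their number of blue nodes and encoding cumulative lower/upper bounds on $L$- and $R$-nodes into the solution conditions of {\sc $k$-CWP} (the pigeonhole step in the second direction of Lemma~\ref{lemma:pathalgcor}); as stated, your reduction does not guarantee it.
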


{\noindent Generally, this strategy can be used to speed-up any standard representative sets-related algorithm (as mentioned above), and is therefore applicable to a wide variety of problems.}

\bigskip
\myparagraph{Strategies IV and V} Our fourth and fifth results, deterministic FPT algorithms for {\sc $(3,k)$-WSP} and {\sc $P_2$-Packing}, follow the strategies in Fig.~\ref{fig:strategies}(IV) and \ref{fig:strategies}(V). Here we also cut the universe into small parts (see Appendices \ref{section:3kwsp} and \ref{section:p2pack}), though in a different manner, which allows us to delete more elements from partial solutions than \cite{corrmatchpack}. We call this technique {\em unbalanced cutting}. Roughly speaking, unbalanced cutting {\em explicitly} partitions the universe into small pieces, using which it orders the entire universe (the order is only partially arbitrary), such that at certain points during the computation, we are ``given'' an element $e$ that implies that from now on, we should not try to add (to partial solutions) elements that are ``smaller'' than $e$---thus, we can delete (from our partial solutions) all the elements that are smaller than $e$ (since we will not encounter them again). Since we are handling smaller partial solutions, we get a better running time. The number of elements that we can delete at each point where we are ``given'' an element, is computed by defining a special recursive formula. Again, as in Strategy III, some nontrivial technical details are involved, and therefore we added Appendix \ref{sec:wspIntuition}, which attempts to describe the intuition that guided us.

To solve {\sc $P_2$-Packing}, we also use iterative compression \cite{iterativecompress}, relying on a result of \cite{p2packraible}. Informally, applying iterative compression means solving a variant of the problem where, given a (partial) solution of size $t\!-\!1$, we need to find a solution of size $t$. The general problem can be solved by running the specialized algorithm $k$ times, iteratively increasing the value of $t$ from 1 to $k$. Note that iterative compression has already been combined with representative sets in \cite{fsttcs13} (to solve~{\sc $(3,k)$-DM}).

Thus, we obtain the following results.

\begin{theorem}\label{theorem:WSP}
Relying on Strategy IV, {\sc $(3,k)$-WSP} is solvable in deterministic time $O^*(8.097^k)$.
\end{theorem}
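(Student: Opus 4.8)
The plan is to realize Strategy IV (Fig.~\ref{fig:strategies}(IV)) as a dynamic-programming algorithm over the $k$ disjoint $3$-sets, combined with iterative compression and the \emph{unbalanced cutting} technique, and then to optimize the constant in the resulting recursion. First I would set up the dynamic program in the natural way: build the packing one set at a time, so that after stage $j$ a partial solution is a union of $j$ disjoint members of $\cal S$, i.e.\ a subset of $U$ of size $3j$; by Definition~\ref{def:repfam} it suffices, after each stage, to keep only a subfamily that max-$(3k-3j)$-represents the current family of partial solutions, and Theorem~\ref{theorem:esarep} (with the tradeoff parameter $c$) bounds both the size of that subfamily and the time to compute it. Run naively, this gives a running time governed by $\binom{3k}{3j}$-type factors, which is exactly the $O^*(12.155^k)$ behaviour of \cite{corrmatchpack}; the whole point is that unbalanced cutting lets us shrink the partial solutions we actually carry.

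Next I would introduce the cutting. Partition $U$ explicitly into small pieces $U_1,\dots,U_m$ and use these to impose a (partially arbitrary) linear order on $U$, so that the sets of $\cal S$ are processed roughly in the order of their ``largest'' element. The key structural observation to prove is the one sketched in the introduction: at the moment the algorithm is ``given'' an element $e$ (meaning it commits to never again adding an element smaller than $e$), every element of the current partial solution that is smaller than $e$ can be \emph{deleted} from that partial solution, because it will never again be needed for a disjointness test. Thus the representative-family computation at that stage is applied to sets of size $p<3j$ rather than $3j$, and the relevant binomial $\binom{k'}{p}$-factor shrinks. The number of elements that may be deleted at each such point is governed by a recursive formula defined over the pieces; I would state this recurrence, verify by induction on the stages that the claimed deletion count is always achievable (this is the correctness heart of unbalanced cutting), and then bound the worst case over all ways the adversary can interleave the ``given'' elements with the stages. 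For the deterministic choice of the pieces and the order we invoke Theorem~\ref{theorem:splitter} on the appropriate $(n,k,p)$-universal set, which contributes only a $2^{o(k)}$ factor. Iterative compression enters as in \cite{fsttcs13}: we solve the compression version (given a packing of size $t-1$, find one of size $t$) and iterate $t=1,\dots,k$, which costs only a polynomial overhead and lets us assume a structured ``reference'' solution against which the order is calibrated.

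Finally I would carry out the optimization. Writing the total time as a max over the stage index $j$ (or over the fraction $p/(3j)$ of undeleted elements) of a product of Theorem~\ref{theorem:esarep} bounds, one gets an expression of the form $O^*\big(\max_{\,x}\, g(x,c)^{k}\big)$ for a function $g$ built from the entropy-type quantity $\frac{(ck)^{k}}{p^{p}(ck-p)^{k-p}}$ and the piece sizes; choosing the free parameters ($c$, the piece size, and the threshold governing how much to delete) to balance the branches yields the base $8.097$. The step I expect to be the main obstacle is \emph{proving that the deletions are safe} — i.e.\ that the recursive deletion formula never discards an element that a valid completion of the partial solution would have needed — since this requires carefully coordinating the explicit partition, the induced order, the iterative-compression reference solution, and the disjointness condition of Definition~\ref{def:repfam}; the subsequent numerical optimization, while tedious, is routine once that invariant is in place.
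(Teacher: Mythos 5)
Your overall skeleton does match the paper's: a stage-by-stage dynamic program over the $3$-sets with max $3(k-j)$-representative families computed via Theorem~\ref{theorem:esarep}, deletion of the smallest element of each inserted set as in \cite{corrmatchpack}, an explicit cutting of $U$ into $\frac{1}{\epsilon}$ pieces that induces a reordering of $U$ and a threshold function $f$, a recursive formula $R(i)$ lower-bounding the extra elements that must be deletable at each stage (with partial solutions violating it discarded), and a final numerical optimization over the tradeoff parameter (the paper takes $c=1.591$, $\epsilon=10^{-5}$) giving $O^*(8.097^k)$. However, two of your concrete ingredients are wrong as stated. First, the deterministic choice of the pieces is not, and cannot straightforwardly be, made with an $(n,k,p)$-universal set: Theorem~\ref{theorem:splitter} supplies two-colorings hitting a prescribed pattern on an unknown $k$-subset, but it yields neither an ordering of $U$ nor an interval-structured cutting, and the unbalanced-cutting argument needs each piece $U^i$ to be a block of consecutive elements (in the reordered remaining universe) chosen, by pigeonhole among roughly $3(k-(i-1)\lfloor\epsilon k\rfloor)/\lfloor\epsilon k\rfloor$ candidate blocks, so as to capture many not-yet-deleted solution elements. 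Since a cutting is determined by only $\frac{2}{\epsilon}=O(1)$ boundary elements $\ell_i,r_i$, the paper simply enumerates all $|U|^{O(1/\epsilon)}=O^*(1)$ choices exhaustively (algorithm \alg{WSPAlg}) and proves that for at least one choice some ordering $S_1,\ldots,S_k$ of an arbitrary solution satisfies the ``solution conditions'' of the cut subproblem; no universal set appears in Strategy IV at all. Second, iterative compression plays no role here either: it belongs to Strategy V ({\sc $P_2$-Packing}), where Theorem~\ref{theorem:p2improve} supplies the structural guarantee, and that theorem is unweighted; for weighted {\sc $(3,k)$-WSP} there is no analogous compression step, and none is needed, since correctness is argued directly against an unknown optimal solution. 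A smaller slip: the sets are inserted in ascending order of their \emph{smallest} elements (Observation~\ref{obs:wspobs} is exactly what licenses deleting those minima), not by largest element. With the universal set replaced by plain enumeration of the $O(1/\epsilon)$ cut points and the iterative-compression layer dropped, your plan coincides with the paper's proof.
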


\begin{theorem}\label{theorem:p2Pack}
{\sc $P_2$-Packing} can be solved in deterministic time $O^*(6.75^{k+o(k)})$. Relying on Strategy V, it can be solved in deterministic time $O^*(6.777^k)$.
\end{theorem}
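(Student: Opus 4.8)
The plan is to establish the two running times separately, corresponding to the two halves of the statement. For the first bound, $O^*(6.75^{k+o(k)})$, I would start from the randomized $O^*(6.75^k)$-time algorithm of Feng et al.~\cite{p2packrand} for {\sc $P_2$-Packing}, and from the deterministic version of Feng et al.~\cite{p2packdet} that runs in $O^*(8^{k+o(k)})$. The key observation I would make is that the $8^{k+o(k)}$ overhead in \cite{p2packdet} comes from a naive derandomization step (a standard $(n,\Theta(k))$-universal set or perfect hash family construction applied to a set of $\Theta(k)$ elements with no structural refinement). The randomized algorithm of \cite{p2packrand}, however, only needs to split a set of at most $3k$ elements into a ``left'' part and a ``right'' part so that a specific solution of $3k$ elements is split in a prescribed balanced way; it does not need full universality. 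I would therefore replace the derandomization tool by the $(n,k,p)$-universal sets of Theorem~\ref{theorem:splitter}, choosing the split ratio to match exactly the $6.75 = \tfrac{27}{4} = \tfrac{3^3}{2^2}$ figure (which is the value of $2^{3k}/\binom{2k}{k}^{?}$-type expressions that governs the randomized success probability in \cite{p2packrand}); since Theorem~\ref{theorem:splitter} gives a set of size $\binom{k}{p}2^{o(k)}$, and $\binom{3k}{?}$-style binomials evaluate to $6.75^{k}$ up to subexponential factors under the right choice, the derandomized running time becomes $O^*(6.75^{k+o(k)})$. The bulk of this part is bookkeeping: verifying that the color-coding step in \cite{p2packrand} is the only source of randomness, and that it can be phrased as requiring exactly a universal set with the parameters that optimize to $27/4$.

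For the second bound, $O^*(6.777^k)$ with no $o(k)$ loss, I would follow Strategy~V as sketched in the excerpt: combine iterative compression~\cite{iterativecompress} (reducing to the problem of extending a $P_2$-packing of size $t-1$ to one of size $t$, via the result of~\cite{p2packraible}) with a representative-sets-based dynamic program over the $3t$ ``slots'' of a target packing, accelerated by \emph{unbalanced cutting}. The plan here is: (i) fix the compression setup, so that at stage $t$ we are given a packing $\mathcal{P}$ of size $t-1$ using $3(t-1)$ vertices, and we look for a packing of size $t$; using~\cite{p2packraible}, the search space for the ``new'' packing can be restricted so that only $O(k)$ candidate vertices matter; (ii) partition this universe of relevant vertices explicitly into small pieces and use them to impose a partial order, as described for unbalanced cutting; (iii) run the natural dynamic program that builds the $t$ paths one vertex at a time, maintaining after each step a family $\mathcal{S}$ of partial solutions and replacing it by a representative subfamily via Theorem~\ref{theorem:esarep} (with a suitably chosen constant $c\ge 1$); (iv) crucially, at each point where the partial order ``releases'' an element $e$, delete from every partial solution all vertices smaller than $e$, so that the sets stored in $\mathcal{S}$ have size $p$ much smaller than the full $3t$ — this shrinks $\binom{k}{p}$-type factors and is exactly where the improvement over $\cite{corrmatchpack}$-style bounds comes from. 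The number of deletable elements at each release point is governed by a recursive formula, which I would set up explicitly and solve; optimizing the piece sizes and the constant $c$ in Theorem~\ref{theorem:esarep} against this recursion yields the base $6.777$.

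The main obstacle, and the step I expect to require the most care, is item~(iv) combined with the optimization: one must show that the deletions are \emph{safe} — that once an element is smaller than the released $e$ in the partial order, no valid completion of the current partial solution will ever need it again — which requires the partial order to be compatible with the order in which the dynamic program fills the $3t$ slots, and this compatibility is exactly what unbalanced cutting is designed to guarantee but is delicate to verify. Once safety is in hand, the running time is $O^*\big(\max_{\text{profiles}} \prod (\text{repr.-set size at each stage})\big)$, and bounding this maximum reduces to a single-variable numerical optimization over the cutting parameters; I would defer the routine calculus but record that it evaluates to $6.777^k$ (slightly worse than the $6.75^{k+o(k)}$ of the first part, but with no subexponential overhead, which is the point of offering both bounds). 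Correctness of the overall algorithm follows from the correctness of iterative compression plus the defining property of representative families (Definition~\ref{def:repfam}): the representative subfamily retains ``enough'' partial solutions that some extension of a stored one reaches a full packing of size $t$ whenever any extension does.
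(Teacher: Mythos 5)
Your first part is essentially the paper's argument: the paper also observes that the only thing the deterministic partition step of \cite{p2packdet,p2packrand} needs is to capture the $k$ end-nodes of some $k$-packing in $B$ and its other $2k$ nodes in $A$, which is exactly a $(|V|,3k,k)$-universal set; Theorem~\ref{theorem:splitter} then gives ${3k\choose k}2^{o(k)}=O^*(6.75^{k+o(k)})$ partitions, so that half of your proposal is fine.

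For the $O^*(6.777^k)$ bound, however, there is a genuine gap. The paper does not run one dynamic program that mixes representative sets (Theorem~\ref{theorem:esarep}) with unbalanced cutting over all $3t$ slots; it splits the compression step into two procedures. With $X$ the $3(k-1)$ vertices of the given packing and $Y=V\setminus X$, Theorem~\ref{theorem:p2improve} bounds the number of \emph{new} vertices (those from $Y$) used by some $k$-packing by roughly $k/2$, but it does \emph{not} restrict the candidate set to $O(k)$ vertices --- $Y$ is as large as the graph, so your claim that ``only $O(k)$ candidate vertices matter'' is false and is precisely why representative sets are still needed on the $Y$-side. The first procedure (\alg{ICPPro1}) therefore uses representative families computed with respect to $Y$ only, and its output is not a packing but a family ${\cal F}$ of possible ``interfaces'' $F\subseteq X$ of size $3q-p$ used by the $q$ paths that touch $Y$; it runs in $O^*(6.75^{k+o(k)})$. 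The second procedure (\alg{ICPPro2}) must find the remaining $k-q$ paths entirely inside $X\setminus F$, i.e.\ over a universe of size only $3(k-1)$, and there the paper deliberately uses \emph{no} representative sets at all: it stores partial solutions as explicit subsets $U'$ of $X$ (boolean DP entries), and unbalanced cutting is applied to $X$ to delete elements below the released threshold $f(i)$, shrinking the binomial coefficients ${3k-j-R(i-1)\choose (3q-p)+2j-R(i-1)}$ that dominate the count of entries; optimizing the recursion at the worst case $p=q=\lfloor k/2\rfloor$ gives $6.777^k$, and $6.75^{k+o(k)}\le O(6.777^k)$ makes the total come out right. Your scheme (iii)--(iv), by contrast, is essentially Strategy~IV applied to a 3-set-packing-like DP, and that analysis yields bounds in the $8.097^k$ range; without the $Y$/$X$ decomposition --- representative sets confined to the bounded number of $Y$-vertices, explicit subset bookkeeping confined to the size-$3(k-1)$ universe $X$ --- nothing in your outline explains how the claimed base $6.777$ would emerge from the optimization, so the key idea of the paper's proof is missing rather than merely deferred.
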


{\noindent Generally, these strategies may be relevant to cases where we can isolate a certain layer of elements in a partial solution (in the case of {\sc $k$-WSP}, this layer consists of each smallest element in a set of 3 elements in the partial solution) such that as the computation progresses, we can remove (from partial solutions) elements from this layer. For a more precise explanation, in the context of {\sc $k$-WSP}, see Appendix \ref{sec:wspIntuition}. Applying Strategy IV attempts to allow us to delete elements not only from the layer that we isolated, but also from the other layers.} Strategy V can be viewed as a variant of Strategy IV.

\section{Computing Generalized Representative Sets}\label{section:disrep}

We now prove the correctness of Theorem \ref{theorem:disrep} (see Section \ref{section:strategies}). 

\begin{proof}
We first give the definition of a data structure necessary to our computation of representative sets. Let $E'$ be a universe of $n'$ elements, and suppose that $k',p'\in \mathbb{N}$ are parameters such that $p'\leq k'$. Rephrasing Definition \ref{dfn:universal_set}, we say that a family ${\cal F}\subseteq 2^{E'}$ is {\em $(E',k',p')$-good} if it satisfies the following condition: For every pair of sets $X\subseteq E'$ of size $p'$ and $Y\subseteq E'\setminus X$ of size at most $k'-p'$, there is a set $F\in{\cal F}$ such that $X\subseteq F$, and $Y\cap F=\emptyset$. An {\em $(E',k',p')$-separator} is a data structure containing such a family ${\cal F}$, which, given a set $S\subseteq E'$ of size $p'$, outputs the subfamily of sets in $\cal F$ that contain $S$, i.e., $\chi(S)=\{F\in{\cal F}: S\subseteq F\}$. The efficiency of such a data structure is measured by the parameters: $\zeta=\zeta(E',k',p')$, the number of sets in the family $\cal F$; $\tau_I=\tau_I(E',k',p')$, the time required to compute the family ${\cal F}$; and $\tau_Q=\tau_Q(E',k',p')$, an upper bound for the time required to output $\chi(X)$, for any $X\subseteq E'$ of size $p'$.

For any fixed $c'\geq 1$, the papers \cite{productFam,esarepresentative} show how to construct an $(E',k',p')$-separator,
that we call $D^{c'}_{(E',k',p')}$, for which $\displaystyle{\zeta\leq \frac{(c'k')^{k'}}{{p'}^{p'}(c'k'-p')^{k'-p'}}2^{o(k')}\log n'}$, $\displaystyle{\tau_I\leq \frac{(c'k')^{k'}}{{p'}^{p'}(c'k'-p')^{k'-p'}}}$ $\displaystyle{2^{o(k')}n'\log n'}$ and $\displaystyle{\tau_Q\leq (\frac{c'k'}{c'k'-p'})^{k'-p'}2^{o(k')}\log n'}$.

Let $E=\bigcup_{i=1}^tE_i$, $k=\sum_{i=1}^tk_i$ and $p=\sum_{i=1}^tp_i$. We assume that $|{\cal S}|$ is larger than the size of the desired representative family (else we can simply return $\cal S$). The pseudocode of our (deterministic) algorithm, \alg{GenRepAlg}, for computing representative sets is given below. The crux of its approach, which allows it, under certain ``separateness conditions'' (i.e., for certain parameters $p_1,p_2,\ldots,p_t$ and $k_1,k_2,\ldots,k_t$), to compute representative sets faster than \cite{esarepresentative} is the following: It avoids a direct construction of an $(E,k,p)$-separator, but obtains, more efficiently, all the {\em necessary} information provided by such a separator by using smaller separators of the form $D^{c_i}_{(E_i,k_i,p_i)}$. To this end, it first constructs the smaller separators (Step \ref{steprs:seperators}). It uses them to compute a family $\cal F$ (Step \ref{steprs:family}), and a corresponding function $\chi$ (Steps \ref{steprs:chi1} and \ref{steprs:chi2}). Note that $\cal F$ might {\em not} be $(E,k,p)$-good, but it will be sufficient for our purpose. Then, \alg{GenRepAlg} orders the sets in ${\cal S}$ according to their weights (Step \ref{steprs:order}). Finally, it returns all $S_i\in{\cal S}$ for which there is a set $F\in{\cal F}$ containing $S_i$ but no $S_j$, for $1\leq j<i$ (Steps \ref{steprs:Shatbegin}--\ref{steprs:Shatend}).


Observe that the size of $\cal F$ is exactly $\displaystyle{\prod_{i=1}^t|F_i|}$, which, by the properties of the separators $D^{c_i}_{(E_i,k_i,p_i)}$, is bounded by $\displaystyle{\prod_{i=1}^t \left( \frac{(c_ik_i)^{k_i}}{{p_i}^{p_i}(ck_i-p_i)^{k_i-p_i}}2^{o(k_i)}\log |E_i| \right)}$. We insert a set to $\widehat{\cal S}$ only if there exists an indicator of the form $z_F$ that is turned off, and afterwards, at least one such indicator is turned on (permanently). Therefore, the returned family $\widehat{\cal S}$ is of the desired size.

\begin{algorithm}[t!]
\caption{\alg{GenRepAlg}($c_1,\!c_2,\!...,\!c_t,\ E_1,\!E_2,\!...,\!E_t,\ p_1,\!p_2,\!...,\!p_t,\ {\cal S},\ w,\ k_1,\!k_2,\!...,\!k_t$)}
\begin{algorithmic}[1]

\FOR{$i=1,2,\ldots,t$}\label{steprs:firstloop}
	\STATE\label{steprs:seperators} construct $D^{c_i}_{(E_i,k_i,p_i)}$, and let ${\cal F}_i$ be the family it stores.
	\STATE\label{steprs:chi1} {\bf for all} $S\!\in\!{\cal S}$ {\bf do} use $D^{c_i}_{(E_i,k_i,p_i)}$to compute $\chi_i(S)\!=\!\{F_i\!\in\! {\cal F}_i\!: S\!\cap\! E_i\!\subseteq\! F_i\}$. {\bf end~for}
\ENDFOR

\STATE\label{steprs:family} compute ${\cal F}=\{F_1\cup F_2\cup\!...\cup F_t: F_1\in{\cal F}_1, F_2\in{\cal F}_2,\!..., F_t\in{\cal F}_t\}$.

\STATE\label{steprs:chi2} {\bf for all} $S\!\in\!{\cal S}$ {\bf do} compute $\chi(S)\!=\!\{F\!\in\! {\cal F}\!:(\forall i\!\in\!\{1,2,\!...,t\}\!: F\!\cap\! E_i\in\! \chi_i(S))\}$. {\bf end~for}

\STATE\label{steprs:order} order ${\cal S}=\{S_1,\!...,S_{|{\cal S}|}\}$ s.t.~$w(S_{i-1})\geq w(S_i)$ ($w(S_{i-1})\leq w(S_i)$), for all $2\leq i\leq |{\cal S}|$.

\STATE\label{steprs:Shatbegin} initialize $\widehat{\cal S}\Leftarrow\emptyset$.

\STATE {\bf for all} $F\in{\cal F}$ {\bf do} let $z_F\in\{0,1\}$ be an indicator variable for using $F$, initialized to 0. {\bf end for}

\FOR{$i=1,2,\ldots,|{\cal S}|$}
	\STATE compute $\chi_0(S_i)=\{F\in \chi(S_i): z_F=0\}$.
	\IF{$\chi_0(S_i)\neq\emptyset$}
		\STATE add $S_i$ to $\widehat{\cal S}$.
		\STATE {\bf for all} $F\in \chi_0(S_i)$ {\bf do} assign $z_F=1$. {\bf end for}
	\ENDIF
\ENDFOR

\STATE\label{steprs:Shatend} return $\widehat{\cal S}$.
\end{algorithmic}
\end{algorithm}

By the properties of the separators $D^{c_i}_{(E_i,k_i,p_i)}$, the time complexity of Step \ref{steprs:firstloop} is bounded by:\\ $\displaystyle{O(\sum_{i=1}^t\left(\!\frac{(c_ik_i)^{k_i}}{{p_i}^{p_i}(ck_i-p_i)^{k_i-p_i}}2^{o(k_i)}|E_i|\log |E_i| + |{\cal S}|(\frac{c_ik_i}{c_ik_i-p_i})^{k_i-p_i}2^{o(k_i)}\log |E_i|\!\right))}$, and of Steps \ref{steprs:family}--\ref{steprs:order} are bounded by $O(|{\cal F}|)$, $\displaystyle{O(|{\cal S}|\!\prod_{i=1}^t\!\!\left(\!\!(\frac{c_ik_i}{c_ik_i\!-\!p_i})^{k_i-p_i}2^{o(k_i)}\log |E_i|\!\!\right)\!)}$ and $O(|{\cal S}|\log|{\cal S}|)$, respectively. Moreover, the time complexity of the computation in Steps  \ref{steprs:Shatbegin}--\ref{steprs:Shatend} is bounded by $\displaystyle{O(|{\cal S}|\prod_{i=1}^t\!\!\left(\!\!(\frac{c_ik_i}{c_ik_i-p_i})^{k_i-p_i}2^{o(k_i)}\log |E_i|\!\right)\!)}$. We thus conclude that \alg{GenRepAlg} runs in the desired time.

It remains to show that $\widehat{\cal S}$ max (min) $(k_1-p_1,k_2-p_2,\ldots,k_t-p_t)$-represents $\cal S$. Consider any sets $X\in{\cal S}$ and $Y\subseteq E\setminus X$ such that $[\forall i\in\{1,2,\ldots,t\}: |Y\cap E_i|\leq k_i-p_i]$. We need to prove that there is a set $\widehat{X}\in\widehat{\cal S}$ disjoint from $Y$ such that $w(\widehat{X})\geq w(X)$ $(w(\widehat{X})\leq w(X))$. If $X\in\widehat{\cal S}$, then $X$ is the desired set, and thus we next assume that this is not the case. By the properties of the separators $D^{c_i}_{(E_i,k_i,p_i)}$, for all $i\in\{1,2,\ldots,t\}$, there is a set $F_i\in{\cal F}_i$ such that $X\cap E_i\subseteq F_i$ and $Y\cap F_i = (Y\cap E_i)\cap F_i = \emptyset$. Therefore, by our definition of ${\cal F}$ and $\chi(X)$, there is a set $F\in{\cal F}$ such that $X\cap E\subseteq F$ and $Y\cap F = \emptyset$. By the pseudocode, when we reach $X$ we do not insert it to ${\cal S}$ since we have already inserted at least one other set $X'$ that is ordered before $X$ and satisfies $F\in\chi(X')$. This set $X'$ is the desired set $\widehat{X}$.\qed
\end{proof}

\bibliographystyle{splncs03}
\bibliography{References}

\newpage

\appendix

\section{Solving the $k$-IOB Problem}\label{section:kiob}
In this section we solve {\sc $k$-IOB}, following the first two strategies in Section~\ref{section:strategies}. 

\subsection{From Out-Branchings to Small Out-Trees and Paths on 2 Nodes}\label{section:kiobred}

We first define the {\sc $(k_a,k_b,\ell_a,\ell_b)$-Internal Out-Tree ($(k_a,k_b,\ell_a,\ell_b)$-IOT)} problem, which requires finding an out-tree rather than an out-branching. Given a directed graph $G=(V,E)$, a node $r\in V$ that is the root of an out-branching of $G$, and parameters $k_a\leq k_b$ and $\ell_a\leq\ell_b$, this problem asks if $G$ has an out- tree $T$ rooted at $r$ that contains exactly $x$ internal nodes and $y$ leaves, for some $k_a\leq x\leq k_b$ and $\ell_a\leq y\leq\ell_b$. The following observations are implicit in \cite{kIOB49k}:

\begin{obs}\label{obs:toklIOT}
If {\sc $(k,k,1,k)$-IOT} can be solved in (deterministic) time $\tau(G,k)$, then {\sc $k$-IOB} can be solved in (det.) time $O(|V|(|E|+\tau(G,k)))$.
\end{obs}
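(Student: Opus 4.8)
The plan is to show that an algorithm for {\sc $(k,k,1,k)$-IOT} yields an algorithm for {\sc $k$-IOB} with only a polynomial-factor overhead, by reducing the problem of finding an out-branching with $\geq k$ internal nodes to finding an out-tree with exactly $k$ internal nodes rooted at a \emph{fixed} vertex. First I would observe that a directed graph $G$ has an out-branching at all if and only if there is some vertex $r$ from which every vertex is reachable; such an $r$ (if it exists) can be found in time $O(|V|(|V|+|E|)) = O(|V|(|E|+\tau(G,k)))$ by, for each candidate $r$, running a linear-time reachability check (BFS/DFS). So the outer loop over candidate roots $r$ contributes the factor $|V|$ in the claimed bound, and for each valid $r$ we must decide whether $G$ has an out-branching rooted at $r$ with at least $k$ internal nodes.

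The key structural step is the monotonicity/extension claim: if $G$ has an out-branching $B$ rooted at $r$ with at least $k$ internal nodes, then $G$ has an out-\emph{tree} $T$ rooted at $r$ with \emph{exactly} $k$ internal nodes (and some number $y$ of leaves with $1 \le y \le k$, since a tree on $\geq 2$ nodes with exactly $k$ internal nodes has at least one leaf and at most $k$ leaves — more precisely at most $k$ because... actually one should just bound $y \le |V|$ and note {\sc $(k,k,1,k)$-IOT} as stated caps $\ell_b = k$, so I would argue $y \le k$ holds for a \emph{minimal} such subtree, e.g. by taking $T$ to be an out-branching of the subgraph induced on a set of $k$ internal nodes together with one child of each, pruned appropriately). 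Conversely, and this is the easier direction, if $G$ has an out-tree $T$ rooted at $r$ with exactly $k$ internal nodes, then since $r$ is the root of some out-branching of $G$, we can extend $T$ to a spanning out-branching $B$ of $G$ by greedily attaching every not-yet-covered vertex via an incoming edge from an already-covered vertex (possible precisely because all vertices are reachable from $r$); attaching new vertices only turns former leaves into internal nodes or adds new leaves, so $B$ has at least $k$ internal nodes. Hence for a fixed valid root $r$, ``$G$ has an out-branching rooted at $r$ with $\geq k$ internal nodes'' is equivalent to ``the {\sc $(k,k,1,k)$-IOT} instance $(G,r,k)$ is a yes-instance,'' which by hypothesis is decidable in time $\tau(G,k)$.

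Putting it together: run the $O(|V|(|E|))$ search for valid roots; for each valid root $r$ invoke the {\sc $(k,k,1,k)$-IOT} algorithm on $(G,r,k)$ in time $\tau(G,k)$; answer \textbf{yes} iff some invocation does. The total running time is $O(|V|(|E| + \tau(G,k)))$, as claimed, and the determinism is preserved throughout. I expect the main obstacle to be making the extension argument fully rigorous — in particular justifying that one may assume the out-tree has at most $k$ leaves so that it genuinely falls under the parameter window $\ell_b = k$ of {\sc $(k,k,1,k)$-IOT}, rather than needing a wider leaf range. The cleanest route is probably: take any out-branching with $\geq k$ internal nodes, pick a set $W$ of exactly $k$ of its internal nodes that induces a connected subtree containing $r$ (the internal nodes of an out-branching, together with $r$, form a connected out-subtree, so one can peel off leaves of that subtree until $|W| = k$), then take the sub-out-tree on $W$ plus exactly one child (in $B$) of each vertex in $W$ that has no child in $W$; this has exactly $k$ internal nodes and at most $k$ leaves. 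I would present this peeling construction carefully, as it is the one place where a naive argument could overshoot the $\ell_b = k$ bound.
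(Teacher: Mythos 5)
Your proof is correct and matches the intended argument: the paper gives no explicit proof (it cites the observation as implicit in prior work), but the standard reduction is exactly what you describe — try every candidate root $r$, verify reachability, and use the equivalence between out-branchings rooted at $r$ with at least $k$ internal nodes and out-trees rooted at $r$ with exactly $k$ internal nodes (your peeling-plus-one-child construction for the forward direction and greedy extension to a spanning out-branching for the converse are both sound and keep the leaf count within $1\leq y\leq k$). The only cosmetic caveat is the harmless assumption $|E|=\Omega(|V|)$ (or $\tau(G,k)\geq|V|$) when writing the per-root reachability check as $O(|E|)$.
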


\begin{obs}\label{obs:toklIOT2}
An input $(G,r,k,k,1,k)$ is a yes-instance of {\sc $(k,k,1,k)$-IOT} {\em iff} $(G,r,k,|V|,1,|V|)$ is a yes-instance of {\sc $(k,|V|,1,|V|)$-IOT}.
\end{obs}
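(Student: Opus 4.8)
The statement to prove is Observation~\ref{obs:toklIOT2}: the instance $(G,r,k,k,1,k)$ of {\sc $(k,k,1,k)$-IOT} is a yes-instance iff $(G,r,k,|V|,1,|V|)$ of {\sc $(k,|V|,1,|V|)$-IOT} is a yes-instance. I would prove this by two implications, exploiting the fact that $r$ is promised to be the root of an out-branching of $G$, so spanning out-trees rooted at $r$ exist.

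\emph{The easy direction} is $(\Rightarrow)$: if $T$ is an out-tree rooted at $r$ with exactly $x$ internal nodes and $y$ leaves where $k\le x\le k$ and $1\le y\le k$, then in particular $x\ge k$ and $y\ge 1$, and trivially $x\le |V|$ and $y\le |V|$, so $T$ is also a witness for $(G,r,k,|V|,1,|V|)$. No work is needed here beyond observing that the parameter windows are nested.

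\emph{The substantive direction} is $(\Leftarrow)$: suppose $T$ is an out-tree rooted at $r$ with $x$ internal nodes and $y$ leaves, $k\le x\le |V|$, $1\le y\le |V|$. I need to produce an out-tree rooted at $r$ with \emph{exactly} $k$ internal nodes and between $1$ and $k$ leaves. The plan is a pruning argument: repeatedly delete a leaf of $T$. Deleting a leaf $\ell$ decreases the number of leaves by one and decreases the number of internal nodes by at most one (the parent of $\ell$ becomes a leaf only if $\ell$ was its unique child); crucially the result is still an out-tree rooted at $r$ (removing a leaf cannot disconnect the tree or affect $r$, as long as we never delete $r$ itself — and $r$ is internal as soon as the tree has $\ge 2$ nodes, hence never a deletable leaf while internal-count stays $\ge k\ge 1$... one should handle the degenerate $k\le 1$ or single-node cases, but these are trivial). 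Since the internal count starts at $x\ge k$, drops by $0$ or $1$ per step, and the process can continue until the tree is a single node (internal count $0$), by discreteness there is a moment when the internal count is exactly $k$. Stop there: the tree $T'$ at that moment has exactly $k$ internal nodes, hence at least one leaf (assuming $k\ge 1$ so $T'$ is not a single node, or $T'$ has $\ge 2$ nodes), and at most $|V(T')|\le$ (number of internal nodes) + (number of leaves); to bound the leaves by $k$ I would observe that when we first hit internal-count $=k$, every leaf of $T'$ has an internal parent and the leaves inject into... actually the clean bound: in any tree, \#leaves $\le$ \#internal nodes $+1$ is false in general (stars!), so instead I'd prune further — continue deleting leaves, which now may force internal count below $k$, \emph{unless} each such deletion creates no new leaf; more carefully, keep deleting leaves whose removal keeps internal-count at $k$ (i.e.\ leaves that are not the unique child of their parent) until no such leaf remains, at which point every internal node has exactly one... no. The cleanest fix: once internal-count is exactly $k$, the $k$ internal nodes themselves form a connected subtree $T''$ rooted at $r$ (an out-tree on $k$ nodes); then re-attach to each internal node of $T''$ exactly those of its original children that are leaves, but cap the total re-attached leaves at... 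Actually simplest of all: take $T''$ = the subtree induced by the $k$ internal nodes (this is an out-tree rooted at $r$ on exactly $k$ nodes, and it has $\le k-1$ leaves if $k\ge 2$, or is a single node if $k=1$); if $T''$ is a single node it has $0$ internal $\ne k$ when $k=1$ — so treat $k=1$ directly: {\sc $(1,1,1,1)$-IOT} just needs an out-tree with one internal node and $\ge 1$ leaf, which exists since $r$ has out-degree $\ge1$. For $k\ge 2$, $T''$ has exactly $k$ nodes of which $\ge 1$ is a leaf and $\le k-1$ are leaves, hence $k\le|V(T'')|$ and leaves $\in\{1,\dots,k-1\}\subseteq\{1,\dots,k\}$, and all $k$ original internal nodes are internal in the \emph{original} $T$ but in $T''$ some became leaves — that's fine, $T''$ has between $1$ and $k-1$ internal nodes, which is $<k$, so this doesn't work either.

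Let me state the intended argument cleanly in the plan: \emph{the main obstacle} is controlling the leaf count while forcing the internal count down to exactly $k$. I expect the right move is: take $T$ with $x\ge k$ internal nodes; pick any $k$ internal nodes that induce a subtree containing $r$ (possible: take $r$ and extend, since internal nodes near the root can be chosen to stay connected — formally, process internal nodes in BFS order from $r$ and the first $k$ of them are connected, as each has its parent among the earlier internal nodes, because a non-root internal node's parent is internal). Call this induced out-tree on $k$ internal-of-$T$ nodes $T_0$; it is rooted at $r$ and has $k$ nodes. Now for each node $v$ of $T_0$ that is a leaf of $T_0$ but internal in $T$, $v$ has a child in $T$ outside $T_0$; re-attach exactly one such child as a leaf. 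After this, every node of $T_0$ that was internal in $T$ is internal in the new tree $T^\star$, so $T^\star$ has exactly $k$ internal nodes; and the number of added leaves is at most the number of leaves of $T_0$, which is $\le k-1$ (for $k\ge 2$), plus possibly $r$ contributes... so leaves of $T^\star$ are between $1$ and $k-1<k$, and $|V(T^\star)|\ge k$. Thus $T^\star$ witnesses $(G,r,k,k,1,k)$. I would spell out the BFS-order claim (parent of an internal non-root node is internal) and the $k\ge 2$ vs $k=1$ split, and that's the whole proof; the only genuinely delicate point, and the one I'd write most carefully, is this connectivity-of-the-first-$k$-internal-nodes step together with the re-attachment keeping the internal count exactly $k$.
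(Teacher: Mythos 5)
Your final plan is correct, but note that the paper itself does not prove this observation at all: it states that Observations \ref{obs:toklIOT} and \ref{obs:toklIOT2} are implicit in the cited earlier work on {\sc $k$-IOB}, so there is no in-paper argument to match against. The argument usually left implicit there is a top-down pruning: delete leaves one at a time (each deletion decreases the internal count by at most one, so by discreteness one stops at exactly $k$ internal nodes), and then control the leaf count by repeatedly deleting a leaf whose parent has a second leaf child, so that at the end every internal node keeps at most one leaf child and the number of leaves is at most the number of internal nodes, i.e.\ at most $k$. Your final construction is a genuinely different, bottom-up route: the internal nodes of $T$ induce an out-subtree rooted at $r$ (the parent of a non-root internal node is itself internal), a BFS prefix of size $k$ of this subtree is again an out-tree rooted at $r$ closed under parents, and re-attaching to each of its leaves one of its $T$-children (these children lie outside the prefix and are pairwise distinct) makes all $k$ chosen nodes internal while the new leaves number exactly the leaves of the prefix, hence between $1$ and $k-1$ for $k\geq 2$, with $k=1$ handled directly. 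This is correct and self-contained, gives the slightly stronger bound $k-1$ on the number of leaves, and avoids the two-stage pruning; its only overhead is the BFS-closure argument, which you rightly flag as the point to write carefully. One presentational remark: the proposal as written wanders through several abandoned attempts (for instance, stopping the leaf-deletion process at internal count exactly $k$ does not by itself bound the number of leaves, as you noticed with the star example); only the last construction should survive into a final write-up.
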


{\noindent To show that we can focus on finding a {\em small} out-tree along with paths~on~2~nodes, we define the {\sc $(k,\ell,q)$-Tree\&Paths ($(k,\ell,q)$-T\&P)} problem: Given a directed graph $G=(V,E)$, a node $r\in V$ that is the root of an out-branching of $G$, and parameters $k$, $\ell\leq k$ and $q\geq \max\{0,2\ell-k\}$, it asks if $G$ has an out-tree $T$ rooted at $r$ that contains exactly $(k-q)$ internal nodes and $(\ell-q)$ leaves, along with $q$ (node-)disjoint paths, each on 2 nodes, that do not contain any node from $T$.}

Using Observations \ref{obs:toklIOT} and \ref{obs:toklIOT2}, we prove the following lemma.

\begin{lemma}\label{lemma:reductionproof}
If {\sc $(k,k,\ell,\ell)$-IOT} is solvable in (det.) time $\alpha(G,k,\ell)$ and {\sc $(k,\ell,q)$-T\&P} is solvable in (det.) time $\beta(G,k,\ell,q)$, then {\sc $k$-IOB} is solvable in (det.) time
$\displaystyle{O(|V|\left(|E|+\sum_{\ell=1}^k\min\left\{\alpha(G,k,\ell), \sum_{q=\max\{0,2\ell-k\}}^{\ell}\beta(G,k-q,\ell-q,q)\right\}\right))}$.
\end{lemma}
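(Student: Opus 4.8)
The plan is to chain together the two observations with a structural case analysis on any out-tree witnessing a yes-instance of {\sc $k$-IOB}. First I would invoke Observation~\ref{obs:toklIOT}: it suffices to solve {\sc $(k,k,1,k)$-IOT} on $G$ within the claimed per-root budget, incurring only the $O(|V|(|E|+\cdot))$ overhead of iterating over candidate roots $r$. By Observation~\ref{obs:toklIOT2}, a yes-instance of {\sc $(k,k,1,k)$-IOT} is equivalent to one of {\sc $(k,|V|,1,|V|)$-IOT}, i.e.\ we ask for an out-tree $T$ rooted at $r$ with exactly $k$ internal nodes and an arbitrary number $\ell$ of leaves, $1\le\ell\le |V|$ (note $\ell\le k$ cannot be assumed a priori, but $\ell$ is bounded by $|V|$; I will come back to the range of $\ell$). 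So I would split on the value $\ell$ of the number of leaves: for each $\ell$, deciding whether such a $T$ exists with exactly $k$ internal and exactly $\ell$ leaves is precisely {\sc $(k,k,\ell,\ell)$-IOT}, solvable in time $\alpha(G,k,\ell)$. Summing over $\ell$ gives the first alternative inside the $\min$.

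For the second alternative, the key observation is the relation the paper advertises between out-trees with many leaves and paths on two nodes. Fix $\ell$ with (the relevant case) $\ell\le k$, and suppose $T$ has exactly $k$ internal nodes and $\ell$ leaves. I would argue that for every $q$ with $\max\{0,2\ell-k\}\le q\le\ell$, one can ``shave off'' $q$ leaves together with $q$ distinct internal nodes, each such pair forming a path on two nodes disjoint from the rest: more precisely, a yes-instance with parameters $(k,\ell)$ is equivalent to the existence, for \emph{some} such $q$, of an out-tree $T'$ rooted at $r$ with $(k-q)$ internal nodes and $(\ell-q)$ leaves plus $q$ node-disjoint $2$-node paths avoiding $T'$ --- which is exactly the instance $(G,r,k-q,\ell-q,q)$ of {\sc $(k,\ell,q)$-T\&P}, solvable in $\beta(G,k-q,\ell-q,q)$. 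The forward direction is the structural content: peel a leaf $v$ together with its parent; if the parent has other children, $v$'s former parent stays a leaf-or-internal node of the smaller tree, and one must be careful that after removing $q$ such pairs the residual object is still an out-tree with the right internal/leaf counts and that the $q$ extracted edges are vertex-disjoint paths on $2$ nodes disjoint from $T'$. This is where I expect the main obstacle: making the peeling precise so the counts work out, checking that one can always extract at least $\max\{0,2\ell-k\}$ pairs (so the lower bound on $q$ is exactly what a tree with $k$ internal and $\ell$ leaves can support — a leaf's parent is internal, and $k$ internal nodes can ``host'' at most $k$ leaf-removals but each removal also consumes an internal node, giving the $2\ell-k$ floor via a counting argument on how many leaves share parents), and that up to $\ell$ pairs suffice. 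The converse is easy: given $T'$ and the $q$ disjoint $2$-paths avoiding $T'$, re-attach each $2$-path to $T'$; since $r$ is the root of an out-branching of $G$, the connectivity needed to graft these paths back is available, and the counts restore to $(k,\ell)$.

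Granting this equivalence, for each $\ell$ the instance is a yes-instance iff it is so under the first formulation (cost $\alpha(G,k,\ell)$) iff it is so under the second for some admissible $q$ (cost $\sum_{q=\max\{0,2\ell-k\}}^{\ell}\beta(G,k-q,\ell-q,q)$); since both decide the same predicate, we may run whichever is cheaper, contributing $\min\{\alpha(G,k,\ell),\ \sum_{q}\beta(G,k-q,\ell-q,q)\}$. Summing the per-$\ell$ costs over $1\le\ell\le k$ and multiplying by the $O(|V|)$ root-iteration factor, plus the $O(|V||E|)$ term from Observation~\ref{obs:toklIOT}, yields exactly the stated bound. (Here I would note that values $\ell>k$ need not be considered: an out-tree with $k$ internal nodes and more than $k$ leaves would still be handled, but the reduction to $2$-node paths is only claimed for $\ell\le k$; the cleanest route is to observe that for the purposes of {\sc $k$-IOB} we only need \emph{at least} $k$ internal nodes, and by Observation~\ref{obs:toklIOT} it is enough to detect the existence of an out-tree with exactly $k$ internal nodes, whose leaf count can be taken minimal, hence $\le$ the number of internal nodes is not automatic but the branching over $\ell$ up to $k$ captures all cases where the second alternative is used, while $\alpha(G,k,\ell)$ covers the rest.) I would streamline this last point in the write-up so the sum cleanly runs $\ell=1,\dots,k$.
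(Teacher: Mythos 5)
The central claim in your second alternative---that for each fixed $\ell$, the existence of an out-tree with exactly $k$ internal nodes and $\ell$ leaves is \emph{equivalent} to the existence of a {\sc T\&P} solution for some $q\in[\max\{0,2\ell-k\},\ell]$---is false, and the ``shaving'' argument cannot be carried out starting from an arbitrary such tree. Concretely, let $G$ be the out-tree $r\rightarrow a\rightarrow b\rightarrow c$ where $c$ has three leaf children: it has exactly $k=4$ internal nodes and $\ell=3$ leaves, so your range forces $q\geq 2$, yet $G$ has no out-tree rooted at $r$ with $4-q$ internal nodes and $3-q$ leaves together with $q\geq 2$ disjoint $2$-node paths avoiding it (at most one $2$-path can be packed outside any such tree). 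The paper avoids exactly this trap: it does not argue per-$\ell$, but fixes a solution $T$ of the {\sc $(k,k,1,k)$-IOT} instance with a \emph{minimum number of leaves} among all solutions. This extremal choice guarantees that no leaf of $T$ has a father with another child, and only then does the peeling step (repeatedly detaching an edge $(v,u)$ where $v$ has at least two children and $u$ has a single child which is a leaf) terminate with every leaf of $T'$ having father and grandfather with at most one child, which gives $\ell-q\leq (k-q)/2$, i.e.\ $q\geq 2\ell-k$, for \emph{that particular} $\ell$. Correctness of the loop over $\ell$ only needs that for this one value both branches accept; it does not need, and cannot have, a per-$\ell$ equivalence. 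Without the minimal-leaf selection your completeness argument has a genuine hole (the algorithm you describe is the paper's algorithm and is correct, but your proof of it is not).

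Your converse direction is also not ``easy'' in the way you state it. Given a {\sc T\&P} solution $T'$ plus $q$ disjoint $2$-paths, you cannot simply ``re-attach each $2$-path to $T'$'': $G$ need not contain edges from $T'$ to the first nodes of these paths, and the counts certainly do not ``restore to $(k,\ell)$''. The paper instead extends $T'$ to an out-branching rooted at $r$ (possible because $r$ roots an out-branching of $G$), chosen to maximize first the number of internal nodes and then the number of the given $2$-paths it contains as edges, and runs an exchange argument: if it had fewer than $k$ internal nodes, some path $(v\rightarrow u)$ would have both of its nodes as leaves, and replacing the edge entering $u$ by $(v,u)$ would contradict the extremal choice; hence the out-branching has at least $k$ internal nodes, and Observation~\ref{obs:toklIOT2} converts this back to the exactly-$k$ statement. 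This extremal/exchange step is the missing idea in your converse, and it is also why soundness of the {\sc T\&P} branch is nontrivial rather than a matter of grafting paths back on.
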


\begin{proof}
Let \alg{IOT1Alg} and \alg{AlgT\&P} be algorithms that solve {\sc $(k,k,\ell,\ell)$-IOT} in (rand./det.) time $\alpha(G,k,\ell)$ and {\sc $(k,\ell,q)$-T\&P} in (rand./det.) time $\beta(G,k,\ell,q)$, respectively. Then, we solve {\sc $(k,k,1,k)$-IOT} by using the algorithm \alg{IOT2Alg}, whose pseudocode is given below. \alg{IOT2Alg} considers every choice for $\ell$, the number of leaves in the tree it should attempt to find (Step 1). Then, for a given $\ell$, it decides (in Step 2) whether is it better (in terms of running time) to use \alg{IOT1Alg} (Step 3) or \alg{AlgT\&P} (Steps 5-7). In case it decides to use \alg{AlgT\&P}, it considers every choice for $q$, the number of disjoint paths that should be found (Step 5).

\begin{algorithm}[!ht]
\caption{\alg{IOT2Alg}($G=(V,E),r,k,k,1,k$)}
\begin{algorithmic}[1]
\FOR{$\ell=1,2,\ldots,k$}
	\IF{$\displaystyle{\alpha(G,k,\ell)\leq \sum_{q=\max\{0,2\ell-k\}}^{\ell}\beta(G,k,\ell,q)}$}
		\STATE\label{step:iot1} {\bf if} \alg{IOT1Alg}$(G,r,k,k,\ell,\ell)$ accepts {\bf then} accept. {\bf end if}
	\ELSE
		\FOR{$q=\max\{0,2\ell-k\},\max\{0,2\ell-k\}+1,\ldots,\ell$}
			\STATE\label{step:tp} {\bf if} \alg{T\&PAlg}$(G,r,k,\ell,q)$ accepts {\bf then} accept. {\bf end if}
		\ENDFOR
	\ENDIF
\ENDFOR
\STATE reject.
\end{algorithmic}
\end{algorithm}

{\noindent \alg{IOT2Alg} clearly runs in time $\displaystyle{O(\sum_{\ell=1}^k\min\!\left\{\alpha(G,k,\ell), \sum_{q=\max\{0,2\ell-k\}}^{\ell}\!\!\!\!\!\!\!\!\!\!\!\beta(G,k\!-\!q,\ell\!-\!q,q)\right\})}$. Thus, according to Observation \ref{obs:toklIOT}, it is enough to prove its correctness.}

\myparagraph{First Direction} First, suppose that ($G=(V,E),r,k,k,1,k$) is a yes-instance of {\sc $(k,k,1,k)$-IOT}. Let $T=(V_T,E_T)$ be an out-tree of {\em minimal number of leaves} among those that are solutions for this instance, and let $\ell$ denote the number of leaves in $T$. By our choice of $T$, it does not contain a leaf whose father has another child. Clearly, $T$ is a solution for the instance $(G,r,k,k,\ell,\ell)$ of {\sc $(k,k,\ell,\ell)$-IOT}. Therefore, if $\displaystyle{\alpha(G,k,\ell)\leq \sum_{q=\max\{0,2\ell-k\}}^{\ell}\beta(G,k-q,\ell-q,q)}$, \alg{IOT2Alg} accepts at Step \ref{step:iot1}. Next suppose that this condition is not fulfilled.

Now, as long as there is an internal node $v$ in $T$ that has at least two children, and at least one of them, $u$, has exactly one child and this child is a leaf, remove the edge $(v,u)$ from $T$. Denote the resulting out-tree by $T'$ and the resulting paths on 2 nodes by $P_1,P_2,\ldots,P_q$. An illustration is given in Fig.~\ref{fig:kIOBRed1}. Clearly, $q\leq\ell$ (since every resulting path contains a node that was a leaf in $T$, along with its father). By our construction of $T'$, it is an out-tree rooted at $r$ that contains exactly $(k-q)$ internal nodes and $(\ell-q)$ leaves. Furthermore, our choice of $T$ implies that both the father and grandfather of any leaf in $T'$ do not have more than one child. Indeed, if there is a leaf in $T'$ whose father has another child, such a situation should have existed in $T$ (which is a contradiction) since every leaf in $T'$ was also a leaf in $T$, and if there is a leaf in $T'$ whose grandfather has at least two children, the above process should not have stopped. Therefore, the number of leaves in $T'$ is at most half its number of internal nodes. We get that $(\ell-q)\leq (k-q)/2$, and thus $2\ell-k\leq q$. Therefore, \alg{IOT2Alg} accepts at Step \ref{step:tp}.

\begin{figure}[ht!]
\centering
\includegraphics[scale=0.6]{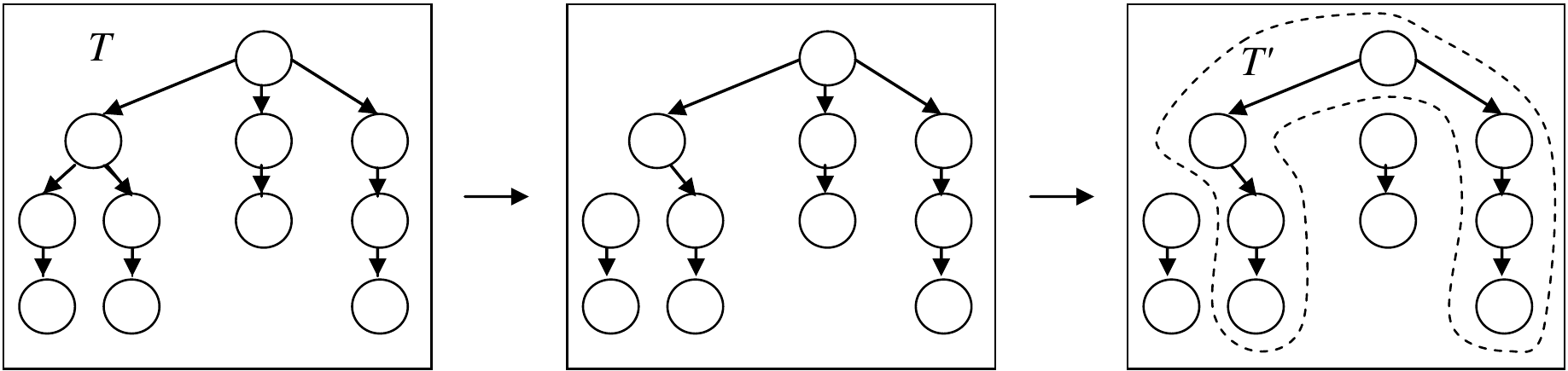}
\caption{The first direction of the proof of Lemma \ref{lemma:reductionproof}.}
\label{fig:kIOBRed1}
\end{figure}

\myparagraph{Second Direction} Now, suppose that \alg{IOT2Alg} accepts. If it accepts in Step \ref{step:iot1}, then this is clearly correct. Thus, we next suppose that it accepts in Step \ref{step:tp}, and denote by $\ell$ and $q$ the corresponding parameters. Let $T'$ and $P_1,P_2,\ldots,P_q$ be the disjoint out-tree and paths on 2 nodes, respectively, that form a solution for the instance $(G,r,k,\ell,q)$ of {\sc $(k,\ell,q)$-T\&P}. Since $T'$ is rooted at $r$ and there is an out-branching of $G$ that is rooted at $r$, we have that there is an out-branching of $G$ rooted at $r$ that {\em extends} $T'$ (i.e., the edges of $T'$ are also edges in this out-branching). Thus, we can denote by ${\cal T}$ the set of out-branchings of $G$ rooted at $r$ that extend $T'$ and have a maximum number of internal nodes, and let $T$ be an out-branching in $\cal T$ that, among the out-branchings in $\cal T$, contains a maximum number of paths from $\{P_1,P_2,\ldots,P_q\}$. Suppose, by way of contradiction, that $T$ contains less than $k$ internal nodes. Since $T'$ contains $(k-q)$ internal nodes, there is a path $(v\rightarrow u)\in\{P_1,P_2,\ldots,P_q\}$ such that both of $v$ and $u$ are leaves in $T$. By removing the edge incident to $u$ from $T$, and then inserting the edge $(v,u)$ to the result (see Fig.~\ref{fig:kIOBRed2}), we obtain an out-branching that has at least as many internal nodes as $T$, and more paths from $\{P_1,P_2,\ldots,P_q\}$ than $T$, which contradicts our choice of $T$. Therefore, $T$ is a solution to the instance $(G,r,k,|V|,1,|V|)$ of {\sc $(k,|V|,1,|V|)$-IOT}. By Observation \ref{obs:toklIOT2}, we conclude that $(G,r,k,k,1,k)$ is a yes-instance of {\sc $(k,k,1,k)$-IOT}.\qed
\end{proof}

\begin{figure}[ht!]
\centering
\includegraphics[scale=0.6]{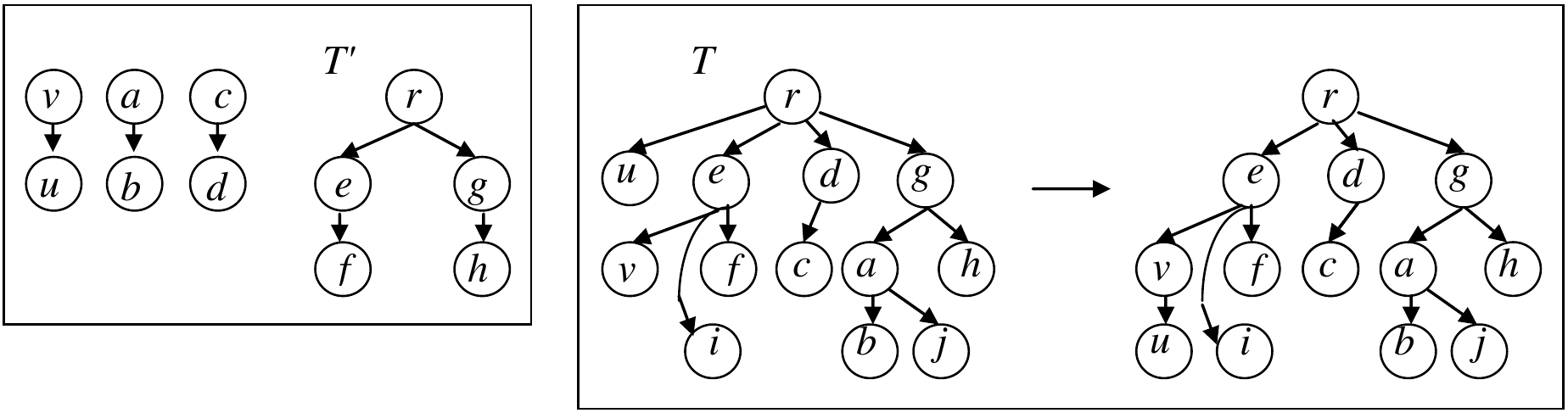}
\caption{The second direction of the proof of Lemma \ref{lemma:reductionproof}, where $k=6$, $\ell=5$ and $q=3$. It is possible that $d$ is the father of $c$ (in $T$), since $G$ may contain both $(c,d)$ and $(d,c)$. In fact, there might be no out-branching where $c$ is the father of $d$ (and thus we cannot reattach $d$ as a child of $c$). In the case of $v$ and $u$, the reattachment is possible (i.e., we obtain a {\em different out-branching}) because {\em both} $v$ and $u$ are leaves.}
\label{fig:kIOBRed2}
\end{figure}

{\noindent To obtain our deterministic algorithm for {\sc $k$-IOB}, we only need the following corollary:}

\begin{cor}\label{cor:reductionproof}
If {\sc $(k,\ell,q)$-T\&P} is solvable in (det.) time $\beta(G,k,\ell,q)$, then {\sc $k$-IOB} is solvable in (det.) time
$\displaystyle{O(|V|(|E|+\!\sum_{\ell=1}^k\sum_{q=\max\{0,2\ell-k\}}^{\ell}\!\!\!\!\!\!\!\!\beta(G,k\!-\!q,\ell\!-\!q,q)))}$.
\end{cor}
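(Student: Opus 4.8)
The plan is to obtain this as an immediate consequence of Lemma~\ref{lemma:reductionproof}. Indeed, the running time claimed in the corollary is exactly the bound of Lemma~\ref{lemma:reductionproof} after replacing each term $\min\{\alpha(G,k,\ell),\,\sum_{q}\beta(G,k-q,\ell-q,q)\}$ by its second argument. So the only thing that must be supplied before the lemma can be invoked is \emph{some} deterministic algorithm for {\sc $(k,k,\ell,\ell)$-IOT}; but {\sc $(k,k,\ell,\ell)$-IOT} is a decidable problem (for instance, it is solvable by plain color coding, or even by brute-force enumeration of the out-trees of $G$), hence it is solvable in some finite deterministic time $\alpha(G,k,\ell)$. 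Feeding this $\alpha$ together with the hypothesized $\beta$ into Lemma~\ref{lemma:reductionproof}, and using $\min\{\alpha(G,k,\ell),X\}\le X$ for every $X\ge 0$, gives precisely the stated bound.

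If a self-contained argument is preferred, I would instead re-derive the statement by trimming the algorithm \alg{IOT2Alg} from the proof of Lemma~\ref{lemma:reductionproof}: delete the condition test and Step~\ref{step:iot1}, so that for each $\ell\in\{1,\ldots,k\}$ and each $q\in\{\max\{0,2\ell-k\},\ldots,\ell\}$ the modified algorithm simply calls \alg{T\&PAlg}$(G,r,k,\ell,q)$. Soundness (the ``Second Direction'' of that proof) is untouched, since it only ever exploits an acceptance produced at Step~\ref{step:tp}. Completeness (the ``First Direction'') also goes through verbatim, as the crucial step there---taking an out-tree $T$ with the fewest leaves among the solutions of the {\sc $(k,k,1,k)$-IOT} instance, repeatedly removing suitable leaf-edges to obtain an out-tree $T'$ on $k-q$ internal nodes and $\ell-q$ leaves plus $q$ disjoint $2$-node paths, and inferring from the minimality of $T$ that $\max\{0,2\ell-k\}\le q\le\ell$---is carried out unconditionally and never relies on the branch invoking \alg{IOT1Alg}; hence \alg{T\&PAlg}$(G,r,k,\ell,q)$ accepts for this pair $(\ell,q)$. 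The running time of the trimmed algorithm is $O(\sum_{\ell=1}^{k}\sum_{q=\max\{0,2\ell-k\}}^{\ell}\beta(G,k-q,\ell-q,q))$, and composing with Observation~\ref{obs:toklIOT} (which reduces {\sc $k$-IOB} to $O(|V|)$ runs of {\sc $(k,k,1,k)$-IOT}, each with an extra $O(|E|)$ overhead) yields the claimed bound.

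I do not anticipate any genuine difficulty: the statement is a strict weakening of Lemma~\ref{lemma:reductionproof}. The one point deserving a line of care is confirming that the $\alpha$-hypothesis of Lemma~\ref{lemma:reductionproof} is non-vacuous---equivalently, in the self-contained route, that deleting Step~\ref{step:iot1} from \alg{IOT2Alg} leaves the ``First Direction'' argument intact---and this holds precisely because that argument's leaf-edge-peeling construction is performed on the minimal-leaves tree regardless of which of the two branches \alg{IOT2Alg} would otherwise have taken.
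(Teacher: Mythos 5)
Your proposal is correct and matches the paper's treatment: the paper states the corollary as an immediate weakening of Lemma~\ref{lemma:reductionproof} (replace the $\min$ by its second argument), which is exactly your first route, and your observation that a trivial deterministic algorithm supplies a finite $\alpha$ (or equivalently that dropping Step~\ref{step:iot1} leaves both directions of the lemma's proof intact) is the only detail needed.
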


\subsection{A Deterministic Algorithm for $k$-IOB}\label{section:kiobdet}

Let $(G,r,k,\ell,q)$ be an instance of {\sc $(k,\ell,q)$-T\&P}. Moreover, let ${\cal T}_{x,y}$ denote the family that includes every set of nodes that is the node-set of an out-tree in $G$ that is rooted at $r$ and contains exactly $x$ internal nodes and $y$ leaves. The paper \cite{esarepresentative} gives a (deterministic) respresentative sets-based procedure, \alg{TreeAlg1}$(G,r,k,\ell,q,c)$, which satisfies the following:

\begin{lemma}
For any fixed $c\geq 1$ and $\epsilon>0$, \alg{TreeAlg1} computes a family
that $z$-represents ${\cal T}_{x,y}$ in time
$\displaystyle{O^*(\sum_{i=0}^{(x+y)}\left(\frac{(c(x+y+z))^{(2x+2y+2z-i)}}{i^i(c(x+y+z)-i)^{(2x+2y+2z-2i)}}2^{\epsilon(x+y+z)}\right))}$.
\end{lemma}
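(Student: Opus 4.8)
The plan is to prove this as an application of the generalized representative sets machinery (Theorem~\ref{theorem:disrep}) combined with a standard dynamic-programming construction of out-trees, along the lines of the \textsc{Long Directed Cycle} algorithm of \cite{representative} and the tree procedure of \cite{esarepresentative}. First I would set up the dynamic program: an out-tree on $x$ internal nodes and $y$ leaves rooted at $r$ is built up layer by layer (or edge by edge in a BFS/DFS order), where at each stage a partial solution is represented by its node set together with a small amount of bookkeeping (the current ``active'' frontier node to which the next edge is attached). The point of representative sets is that after each stage we replace the family of node sets of partial out-trees by a representative subfamily, so that the family never grows beyond the target size while still containing, for every way of completing the tree to a full solution avoiding a given set $Y$ of forbidden nodes, some partial solution disjoint from $Y$.

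The key step is to identify the right parameters $p$ and $k$ for the representative-family computation at each stage. A partial out-tree at the moment it has $i$ edges occupies $i+1$ nodes; the remaining part of the tree to be attached contributes the rest, and a completion of the tree to a solution of \textsc{$(k,\ell,q)$-T\&P} together with the extra $z$ nodes we reserve for the ``representative slack'' occupies in total $2(x+y+z)$-ish nodes over the course of the computation (this is where the exponent $2x+2y+2z$ in the claimed running time comes from: we are representing partial solutions of size roughly $i$ against completions of size roughly $2(x+y+z)-i$). So I would invoke Theorem~\ref{theorem:disrep} (or rather its single-universe specialization Theorem~\ref{theorem:esarep}, but stated so the tradeoff constant $c$ appears) with $p = i$, $k = 2(x+y+z) - i$ (or the appropriate variant), and the tradeoff parameter $c$, which gives a representative family of size $\frac{(c(x+y+z))^{2x+2y+2z-i}}{i^i(c(x+y+z)-i)^{2x+2y+2z-2i}} 2^{o(x+y+z)} \log|V|$ computable within a comparable time bound. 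Summing over the $O(x+y)$ values of $i$ (one per stage of the DP), absorbing the $\log|V|$ and $2^{o(\cdot)}$ factors and the polynomial overhead of maintaining the DP into the $O^*$ and the $2^{\epsilon(x+y+z)}$ term, yields exactly the stated bound.

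Correctness of the $z$-representation then follows by a now-standard exchange argument: suppose $U \in {\cal T}_{x,y}$ and $Y$ is disjoint from $U$ with $|Y| \le z$; tracing the construction of the out-tree on $U$ stage by stage, the representative property guaranteed at each stage lets us replace, inductively, the partial solution by one in the current representative family that is still disjoint from $Y$ and from the not-yet-attached portion, and is no worse in weight; at the end we obtain a member of the returned family that $z$-represents $U$. This is the same telescoping argument used to prove Theorem~\ref{theorem:disrep} itself, just threaded through the layers of the DP.

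I expect the main obstacle to be bookkeeping rather than conceptual: namely, verifying that the DP really does produce \emph{all} node sets of out-trees with the prescribed internal/leaf counts (handling the internal-vs-leaf distinction correctly when a frontier node acquires its first versus a subsequent child), and checking that the parameters fed to the representative-set routine at stage $i$ are exactly those that make the per-stage size bound telescope into $\sum_{i=0}^{x+y}$ of the displayed expression rather than something off by a shift. In particular one must be careful that the ``$+z$'' slack is accounted for consistently in both the $p$ and the $k$ slots and that the exponent $2x+2y+2z-i$ (not $x+y+z-i$) genuinely arises — this reflects that the relevant universe for the exchange argument has size about $2(x+y+z)$ because both the partial solution and its completion live in it. Once those alignments are pinned down, the running-time estimate and the correctness proof are routine given Theorem~\ref{theorem:disrep}.
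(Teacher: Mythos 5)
First, a point of reference: the paper does not prove this lemma at all --- \alg{TreeAlg1} and the stated time bound are imported from \cite{esarepresentative} as a black box (``The paper \cite{esarepresentative} gives a (deterministic) representative sets-based procedure \ldots which satisfies the following''), so there is no in-paper proof to measure your argument against. Your sketch of reproving it (a dynamic program that grows partial out-trees node by node, with a representative-family computation after every stage) is indeed the architecture of the cited procedure, but as written it has two concrete problems. The main one is the parameter identification: feeding $p=i$ and $k=2(x+y+z)-i$ into Theorem~\ref{theorem:esarep} does not yield the displayed expression --- it would produce terms with base $c(2(x+y+z)-i)$ rather than $c(x+y+z)$. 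The correct instantiation is $p=i$ with budget $k'=x+y+z$; the exponent $2(x+y+z)-i$ then arises not from a universe of size $2(x+y+z)$ but as the product of two factors: the bound $\frac{(ck')^{k'}}{i^i(ck'-i)^{k'-i}}2^{o(k')}\log|V|$ on the size of the family entering stage $i$, and the per-set factor $(ck'/(ck'-i))^{k'-i}2^{o(k')}\log|V|$ in the time to compute the next representative family, giving $\frac{(ck')^{2k'-i}}{i^i(ck'-i)^{2k'-2i}}$ per stage and the claimed sum over $i=0,\ldots,x+y$. Relatedly, the quantity you describe as ``a representative family of size $\frac{(c(x+y+z))^{2x+2y+2z-i}}{i^i(c(x+y+z)-i)^{2x+2y+2z-2i}}2^{o(x+y+z)}\log|V|$'' is the per-stage running time, not the family size; the family size is the smaller first factor above.

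The second problem is that the part you defer as bookkeeping is the actual content of \alg{TreeAlg1}. Because representative families retain only node sets, a new tree node could in principle be attached to any node of a partial tree, so a DP state consisting of a single ``active frontier node'' does not by itself determine where extensions are allowed; one needs a construction order (together with counters separating internal nodes from leaves, so that exactly $x$ and $y$ of each are produced) engineered so that a constant-size interface suffices --- and this is precisely what \cite{esarepresentative} supplies. Your telescoping exchange argument for correctness of the $z$-representation is fine in outline, and for any fixed $\epsilon>0$ absorbing the $2^{o(x+y+z)}$ and $\log|V|$ factors into $2^{\epsilon(x+y+z)}$ and $O^*$ is legitimate, but without the corrected parameter choice and an explicit tree-construction order the proof is not yet complete.
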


{\noindent Denoting $x=(k-q), y=(\ell-q), z=2q$ and ${\cal T}={\cal T}_{x,y}$, we have the following:}

\begin{cor}\label{cor:treealg}
For any fixed $c\geq 1$ and $\epsilon>0$, \alg{TreeAlg1} computes a family
that $(2q)$-represents ${\cal T}$ in time
$\displaystyle{O^*(\sum_{i=0}^{(k+\ell-2q)}\left(\frac{(c(k+\ell))^{(2k+2\ell-i)}}{i^i(c(k+\ell)-i)^{2k+2\ell-2i}}2^{\epsilon(k+\ell)}\right))}$.
\end{cor}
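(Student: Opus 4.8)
The plan is to obtain Corollary~\ref{cor:treealg} as a direct substitution into the preceding lemma, since it is merely a renaming of parameters. First I would recall the statement of the lemma: for fixed $c\geq 1$ and $\epsilon>0$, the procedure \alg{TreeAlg1}$(G,r,k,\ell,q,c)$ computes a family that $z$-represents ${\cal T}_{x,y}$ in time
\[
O^*\Big(\sum_{i=0}^{(x+y)}\frac{(c(x+y+z))^{(2x+2y+2z-i)}}{i^i(c(x+y+z)-i)^{(2x+2y+2z-2i)}}2^{\epsilon(x+y+z)}\Big).
\]
Now I would simply plug in $x=k-q$, $y=\ell-q$ and $z=2q$, together with ${\cal T}={\cal T}_{x,y}$. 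Under this substitution: $x+y+z=(k-q)+(\ell-q)+2q=k+\ell$; the sum's upper limit becomes $x+y=(k-q)+(\ell-q)=k+\ell-2q$; the exponent $2x+2y+2z-i$ becomes $2(k+\ell)-i=2k+2\ell-i$; the exponent $2x+2y+2z-2i$ becomes $2(k+\ell)-2i=2k+2\ell-2i$; and the exponent $\epsilon(x+y+z)$ becomes $\epsilon(k+\ell)$. The quantity $z$ that is represented becomes $2q$. Collecting these substitutions verbatim yields exactly the stated running time $O^*\big(\sum_{i=0}^{(k+\ell-2q)}\frac{(c(k+\ell))^{(2k+2\ell-i)}}{i^i(c(k+\ell)-i)^{2k+2\ell-2i}}2^{\epsilon(k+\ell)}\big)$ and the claim that \alg{TreeAlg1} $(2q)$-represents ${\cal T}$.

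One small point I would address is well-definedness of the input to \alg{TreeAlg1}: the lemma is stated for parameters $(k,\ell,q)$ in the sense of the {\sc T\&P} problem, and here the outer context (Corollary~\ref{cor:reductionproof}) calls \alg{TreeAlg1} on an instance with parameters $(k-q,\ell-q,q)$, so the "$k$" and "$\ell$" appearing inside the lemma are instantiated to $k-q$ and $\ell-q$, which is consistent with setting $x=k-q$, $y=\ell-q$. I would note that the constraints $\ell\leq k$ and $q\geq\max\{0,2\ell-k\}$ guarantee $x,y,z$ are nonnegative integers and $x+y\geq 0$, so the sum is over a valid range. Beyond this bookkeeping there is nothing to prove.

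I do not anticipate a genuine obstacle here: this corollary is a pure change of variables, and the only thing that could go wrong is an arithmetic slip in the exponents. The one spot warranting a second look is the exponent $2x+2y+2z-2i$ in the denominator, to make sure it is indeed $2(k+\ell)-2i$ and not off by the $i^i$ factor's contribution; a quick re-derivation confirms $2x+2y+2z=2k+2\ell$ independently of $q$, so the corollary's denominator exponent $2k+2\ell-2i$ is correct. Hence the proof is a one-line appeal to the lemma with the indicated substitution.

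\begin{proof}
Apply the preceding lemma with $x=(k-q)$, $y=(\ell-q)$, $z=2q$ and ${\cal T}={\cal T}_{x,y}$. Then $x+y+z=k+\ell$, $x+y=k+\ell-2q$, $2x+2y+2z-i=2k+2\ell-i$, $2x+2y+2z-2i=2k+2\ell-2i$ and $\epsilon(x+y+z)=\epsilon(k+\ell)$, and $z=2q$. Substituting these into the bound of the lemma gives that \alg{TreeAlg1} computes a family that $(2q)$-represents ${\cal T}$ in time
\[
O^*\Big(\sum_{i=0}^{(k+\ell-2q)}\frac{(c(k+\ell))^{(2k+2\ell-i)}}{i^i(c(k+\ell)-i)^{2k+2\ell-2i}}2^{\epsilon(k+\ell)}\Big),
\]
as claimed. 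The hypotheses $\ell\leq k$ and $q\geq\max\{0,2\ell-k\}$ ensure $x,y,z\in\mathbb{N}$ and $x+y\geq 0$, so the substitution is legitimate.\qed
\end{proof}
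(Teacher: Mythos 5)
Your proof is correct and matches the paper's own derivation: the corollary is obtained exactly as you do it, by substituting $x=k-q$, $y=\ell-q$, $z=2q$ (so that $x+y+z=k+\ell$ and $x+y=k+\ell-2q$) into the preceding lemma. The arithmetic checks you perform are exactly the bookkeeping the paper leaves implicit.
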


{\noindent The algorithm \alg{T\&PAlg}, whose pseudocode is given below, uses \alg{TreeAlg1} to compte a family $\widehat{\cal T}$ that represents ${\cal T}$ (Step 1). Then, it attempts to find a set of $q$ paths on 2 nodes that are disjoint from some tree in $\widehat{\cal T}$ via computations of maximum matchings (Steps 2 and 3); if it succeeds, it accepts (Step 4). For \alg{T\&PAlg}, we show:}

\begin{lemma}\label{lemma:tapc}
\alg{T\&PAlg} solves {\sc $(k,\!\ell,\!q)$-T\&P} in time $\displaystyle{O^*\!(\!\!\!\sum_{i=0}^{k\!+\!\ell\!-\!2q}\!\!\left(\!\!\frac{(c(k\!+\!\ell))^{2k\!+\!2\ell\!-\!i} \!\cdot\!2^{\frac{(k\!+\!\ell)}{10^{10}}}}{i^i(c(k\!+\!\ell)\!-\!i)^{2k\!+\!2\ell\!-\!2i}}\!\!\right)\!)}$.
\end{lemma}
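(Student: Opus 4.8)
The plan is to establish correctness of \alg{T\&PAlg} and then account for its running time. For correctness, recall that \alg{T\&PAlg} first invokes \alg{TreeAlg1}$(G,r,k,\ell,q,c)$ to obtain a subfamily $\widehat{\cal T}\subseteq{\cal T}$ that $(2q)$-represents ${\cal T}$ (Corollary~\ref{cor:treealg}), and then, for each $T\in\widehat{\cal T}$, deletes the nodes of $T$ from $G$ and checks via a maximum matching computation in the underlying undirected graph whether $G-V(T)$ contains $q$ node-disjoint paths on 2 nodes (i.e., a matching of size $q$). First I would argue soundness: if \alg{T\&PAlg} accepts, it has found some $T\in\widehat{\cal T}\subseteq{\cal T}$, hence an out-tree rooted at $r$ with exactly $k-q$ internal nodes and $\ell-q$ leaves, together with a matching of size $q$ in $G-V(T)$; contracting each matched edge into a directed path on 2 nodes (orienting arbitrarily among the available arcs, which exist since we work in the underlying undirected graph — here one should double-check the orientation issue, but a matched edge $\{a,b\}$ with either arc present suffices to form the required $P_2$) yields exactly the structure demanded by {\sc $(k,\ell,q)$-T\&P}. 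For completeness, suppose the instance is a yes-instance, witnessed by an out-tree $T^*$ (with $k-q$ internal nodes, $\ell-q$ leaves, so $|V(T^*)| = x+y = k+\ell-2q$ nodes) and $q$ disjoint 2-node paths avoiding $T^*$; let $Y$ be the set of $2q$ nodes covered by these paths. Then $T^*\in{\cal T}$ and $Y\cap V(T^*)=\emptyset$ with $|Y|=2q$, so by the $(2q)$-representation property there is $\widehat T\in\widehat{\cal T}$ with $V(\widehat T)\cap Y=\emptyset$; thus the $q$ paths survive in $G-V(\widehat T)$, the matching computation on iteration $\widehat T$ finds a matching of size $\geq q$, and \alg{T\&PAlg} accepts.

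Next I would bound the running time. Step 1 costs the time in Corollary~\ref{cor:treealg}, namely $O^*\!\big(\sum_{i=0}^{k+\ell-2q}\frac{(c(k+\ell))^{2k+2\ell-i}}{i^i(c(k+\ell)-i)^{2k+2\ell-2i}}2^{\epsilon(k+\ell)}\big)$, and also upper-bounds $|\widehat{\cal T}|$ up to the hidden polynomial/$2^{\epsilon(\cdot)}$ factors. Steps 2–4 iterate over $\widehat{\cal T}$ and perform one polynomial-time maximum matching computation per tree, so their total cost is $|\widehat{\cal T}|$ times a polynomial, which is absorbed into the same expression (the $O^*$ hides polynomial factors, and $|\widehat{\cal T}|$ is already counted in the Step 1 bound). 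The only remaining point is to choose $\epsilon$: taking $\epsilon = 10^{-10}$ gives the factor $2^{(k+\ell)/10^{10}}$ stated in the lemma, and the sum in the exponent of the displayed bound in Lemma~\ref{lemma:tapc} matches that of Corollary~\ref{cor:treealg} term by term. Hence \alg{T\&PAlg} runs within the claimed bound.

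I expect the main obstacle to be the careful handling of two matching-related subtleties rather than anything deep: first, confirming that ``$q$ node-disjoint directed paths on 2 nodes in $G-V(\widehat T)$'' is genuinely equivalent to ``a matching of size $q$ in the underlying undirected graph of $G-V(\widehat T)$'' — this needs the observation that {\sc $(k,\ell,q)$-T\&P} only asks for paths on 2 nodes, for which a single arc in either direction is enough, so no orientation obstruction arises; and second, verifying that the representation guarantee is applied with the correct parameter, i.e. that the $2q$ nodes blocked by the $q$ paths are exactly the ``$Y$'' of Definition~\ref{def:repfam} with $|Y|\leq k-p$ where here $p = x+y = k+\ell-2q$ and $k_{\mathrm{rep}} = x+y+z = k+\ell$, so that $k_{\mathrm{rep}}-p = z = 2q$ matches. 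Both are bookkeeping, so the proof should go through cleanly once these identifications are nailed down.
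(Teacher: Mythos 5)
Your proposal is correct and follows essentially the same route as the paper: compute $\widehat{\cal T}$ via \alg{TreeAlg1}, use the $(2q)$-representation guarantee (with $p=k+\ell-2q$ and representation parameter $k+\ell$) to find a tree in $\widehat{\cal T}$ avoiding the $2q$ nodes of the $q$ paths, and detect those paths by a maximum matching in the underlying undirected graph, with the running time absorbed into the Corollary~\ref{cor:treealg} bound (your choice $\epsilon=10^{-10}$ matches the stated $2^{(k+\ell)/10^{10}}$ factor). The extra care you take with the orientation of matched edges and with $|\widehat{\cal T}|$ being dominated by the Step~1 bound is just a more explicit rendering of what the paper leaves implicit.
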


\begin{proof}
By the pseudocode and Corollary \ref{cor:treealg}, \alg{T\&PAlg} runs in the desired time. Moreover, if it accepts, the input is clearly a yes-instance. Now, suppose that the input is a yes-instance, and let $T$ and $P_1,P_2,\ldots,P_q$ be a solution for it. By Corollary \ref{cor:treealg}, $\widehat{\cal T}$ $(2q)$-represents ${\cal T}$. Therefore, there exists $X\in\widehat{\cal T}$ that does not contain any node that belongs to $P_1,P_2,\ldots,P_q$. Thus, the underlying undirected graph of $G$ from which we remove the nodes in $X$ and the adjacent edges contains a maximum matching of size at least $q$. By the pseudocode, this implies that \alg{T\&PAlg} accepts.\qed
\end{proof}

\begin{algorithm}[h]
\caption{\alg{T\&PAlg}($G=(V,E),r,k,\ell,q,c$)}
\begin{algorithmic}[1]
\STATE $\widehat{{\cal T}}\Leftarrow$ \alg{TreeAlg1}$(G,r,k,\ell,q,c)$.
\FORALL{$X\in \widehat{{\cal T}}$}
	\STATE compute in polynomial time a maximum matching $M$ in the underlying undirected graph of the subgraph of $G$ induced by $V\setminus X$ (see \cite{edmondsmatch}). 
	\STATE{\bf if} $|M|\geq q$ {\bf then} accept. {\bf end if}
\ENDFOR
\STATE reject.
\end{algorithmic}
\end{algorithm}

{\noindent Upper bounds for $\displaystyle{O^*(\sum_{\ell=1}^k\sum_{q=\max\{0,2\ell-k\}}^{\ell}\sum_{i=0}^{(k+\ell-2q)}\left(\frac{(c(k+\ell))^{(2k+2\ell-i)} \cdot 2^{\frac{(k+\ell)}{10^{10}}}}{i^i(c(k+\ell)-i)^{2k+2\ell-2i}}\right))}$, given different parameters $1\leq c$, appear separately in Appendix \ref{section:kiobdettime}. By choosing $c=1.497$, we get the bound $O^*(5.139^k)$. Thus, by Corollary \ref{cor:reductionproof}, we have that Theorem \ref{theorem:kiobDet} (in Section \ref{section:strategies}) is correct.}

\subsection{A Randomized Algorithm for $k$-IOB}\label{section:kiobrand}

The papers \cite{thesis11,ipec13} give a (randomized) narrow sieves-based procedure, \alg{TreeAlg2}, which satisfies the following lemma:

\begin{lemma}\label{lemma:randkiobknown}
\alg{TreeAlg2} solves {\sc $(k,k,\ell,\ell)$-IOT} in time $O^*(2^{k+\ell})$.
\end{lemma}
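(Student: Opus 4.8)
The plan is to realize \alg{TreeAlg2} as an instantiation of the narrow sieves framework of \cite{integerweights,bjo10} for out-trees, in the style of the $O^*(2^k)$-time algorithm for {\sc $k$-Path}. Since a solution to {\sc $(k,k,\ell,\ell)$-IOT} is an out-tree rooted at $r$ with exactly $k$ internal nodes and $\ell$ leaves, it has exactly $m:=k+\ell$ nodes. First I would model candidate solutions by \emph{branching walks} from $r$: recursively, a branching walk is a node together with an ordered multiset of branching walks hanging off its out-neighbours, and its ``size'' is the total number of node-slots. For each $v\in V$ introduce a commuting variable $x_v$, and fix a label universe $\Lambda$ of $m$ labels partitioned as $\Lambda_{\mathrm{int}}\cup\Lambda_{\mathrm{leaf}}$ with $|\Lambda_{\mathrm{int}}|=k$ and $|\Lambda_{\mathrm{leaf}}|=\ell$, with a variable $y_\lambda$ for each $\lambda\in\Lambda$. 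A \emph{labeled} branching walk assigns a label to each node-slot, using a label of $\Lambda_{\mathrm{leaf}}$ exactly at a slot with no children and a label of $\Lambda_{\mathrm{int}}$ otherwise; its monomial is the product of the $x_v$'s of the visited nodes times the product of the $y_\lambda$'s of the used labels. Let $P$ be the sum over $\mathbb{F}_2$ of the monomials of all labeled branching walks from $r$ of size $m$.

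Correctness is the standard sieve cancellation carried out in characteristic $2$. A labeled branching walk whose labeling is not a bijection onto $\Lambda$, or which visits some node twice, is killed by a fixed-point-free involution pairing it with a distinct labeled branching walk of the same monomial (swap the two slots carrying a repeated label, respectively a repeated node, together with their attached sub-walks), so its contribution cancels; what survives is supported on monomials $\prod_{v\in N}x_v\cdot\prod_{\lambda\in\Lambda}y_\lambda$ with $|N|=m$, and the leaf/internal split of $\Lambda$ forces that such a monomial occurs exactly when $G[N]$ contains an out-tree rooted at $r$ with $k$ internal nodes and $\ell$ leaves. As in \cite{bjo10}, one must check that for a fixed feasible $N$ the number of surviving (tree shape, bijective labeling) pairs is odd, so that $P\not\equiv 0$ precisely for yes-instances; the two separate label blocks make this bookkeeping go through exactly as for paths. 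I would then evaluate $P$ over an extension field $\mathbb{F}_{2^b}$ with $b=\Theta(\log(|V|+m))$ at a uniformly random point: by the Schwartz--Zippel lemma a yes-instance gives a nonzero value with probability at least $1/2$ while a no-instance always gives $0$, and the success probability is boosted by independent repetition.

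For the running time I would use the sieve over subsets of $\Lambda$: writing $P=\sum_{S\subseteq\Lambda}P_S$ over $\mathbb{F}_2$, where $P_S$ is the analogous sum restricted to labeled branching walks avoiding every label in $S$, each $P_S$ is evaluated in polynomial time. Indeed, once the set of used nodes is no longer tracked (only the formal $x_v$'s), the sum over branching walks of size $s$ rooted at $u$ satisfies a polynomial-size recursion in $s$ (the value at $u$ of size $s$ is $x_u$ times a bounded symmetric function, respecting the $S$-restriction and the leaf/internal distinction, of the values of strictly smaller sub-walks at the out-neighbours of $u$), which stabilises after $m$ rounds and is computed by straightforward dynamic programming over $s=0,1,\ldots,m$. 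Summing the $2^{|\Lambda|}=2^{k+\ell}$ terms $P_S$ yields total time $O^*(2^{k+\ell})$. The main obstacle is the parity statement of the second paragraph: one must verify carefully that genuine out-trees with the prescribed internal and leaf counts contribute an odd number of surviving labeled branching walks, so that $P$ does not accidentally vanish, correctly accounting for the ordering of children inherent in branching walks and for the two label blocks; this tuning of the construction is precisely what is carried out in \cite{thesis11,ipec13}.
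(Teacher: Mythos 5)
The paper offers no proof of this lemma at all: \alg{TreeAlg2} is imported as a black box from \cite{thesis11,ipec13}, and Lemma~\ref{lemma:randkiobknown} merely records what those works supply (Section~\ref{section:techniques} states explicitly that the narrow-sieves algorithm is used as a black box). So your proposal can only be judged as a would-be self-contained proof, and there it has a genuine gap at exactly the point you yourself flag. The decisive step of any narrow-sieves argument is that for a yes-instance the surviving terms do not cancel over $\mathbb{F}_2$, i.e., that for a fixed feasible node set the number of surviving (branching-walk shape, bijective labeling) pairs is odd; you do not establish this, you defer it to \cite{thesis11,ipec13}. For trees this parity question is genuinely delicate, unlike for paths: the number of branching walks generating a fixed out-tree depends on the orderings of children and on the tree's internal symmetries, and making the count odd requires tuning the construction (e.g., restricting to canonically ordered walks, or designing the monomials so that the bijective labeling breaks all child symmetries). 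Since this is the heart of the lemma, deferring it to the very papers the lemma cites means your write-up is an outline of the known proof rather than an independent one.

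Two further details are asserted rather than verified and would need care. First, non-bijective labelings are eliminated by your sieve $\sum_{S\subseteq\Lambda}P_S$ (inclusion--exclusion over avoided labels), not by an involution, so the involution only has to handle repeated nodes; as you describe it, swapping the labels of two slots carrying the same node is legal only when the two labels lie in compatible blocks (a label from $\Lambda_{\mathrm{leaf}}$ cannot move to an internal slot), while swapping the attached sub-walks changes which occurrence is a leaf, so one must fix a canonical choice of repeated pair and check that the resulting map is well defined, monomial-preserving, fixed-point-free, and an involution in the presence of the $\Lambda_{\mathrm{int}}/\Lambda_{\mathrm{leaf}}$ split. Second, the evaluation part (each $P_S$ by a size-$(k+\ell)$ dynamic program over out-neighbours, $2^{k+\ell}$ terms, random evaluation over $\mathbb{F}_{2^b}$ with Schwartz--Zippel) is the standard template and is fine, but it is the only part of the argument that is actually complete as written.
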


{\noindent Note that $O^*(\sum_{\ell=1}^{0.8545k}2^{k+\ell}) =O^*(3.617^k)$. Also, we show (in Appendix \ref{section:kiobdettime}) that $\displaystyle{O^*(\!\!\sum_{\ell=0.8545k}^k\sum_{q=\max\{0,2\ell-k\}}^{\ell}\!\sum_{i=0}^{(k+\ell-2q)}\!\!\left(\!\frac{(1.765(k\!+\!\ell))^{(2k+2\ell-i)}\cdot2^{\frac{(k+\ell)}{10^{10}}}}{i^i(1.765(k\!+\!\ell)-i)^{2k+2\ell-2i}}\!\right))} = O^*(3.617^k)$. Thus, by Lemmas \ref{lemma:reductionproof}, \ref{lemma:tapc} (setting $c=1.765$) and \ref{lemma:randkiobknown}, we have that Theorem \ref{theorem:kiobRand} (in Section \ref{section:strategies}) is correct.}

\subsection{The Running Time of \alg{T\&PAlg}}\label{section:kiobdettime}

First, note that $(*)=\displaystyle{O^*(\sum_{\ell=\ell^*}^k\sum_{q=\max\{0,2\ell-k\}}^{\ell}\!\sum_{i=0}^{(k+\ell-2q)}\!\!\left(\!\frac{(c(k\!+\!\ell))^{(2k+2\ell-i)}}{i^i(c(k\!+\!\ell)\!-\!i)^{(2k+2\ell-2i)}}2^{\frac{(k+\ell)}{10^{10}}}\!\right))}$ is bounded by:\\ $\displaystyle{O^*(4^{\frac{k}{10^{10}}}\sum_{\ell=\ell^*}^k\sum_{i=0}^{\min\{k+\ell,3(k-\ell)\}}\left(\frac{(c(k+\ell))^{(2k+2\ell-i)}}{i^i(c(k+\ell)-i)^{(2k+2\ell-2i)}}\right))}$, which we can further bound by:\\$\displaystyle{O^*(4^{\frac{k}{10^{10}}}\max_{\frac{\ell^*}{k}\leq\beta\leq 1}\left(\max_{0\leq i\leq \min\{(1+\beta)k,3(1-\beta)k\}}\left(\frac{(c(1+\beta)k)^{(2(1+\beta)k-i)}}{i^i(c(1+\beta)k-i)^{(2(1+\beta)k-2i)}}\right)\right))}$.

Let $\alpha_{c}$ be the value $\alpha$ that maximizes $\displaystyle{\max_{0\leq\alpha\leq1}\{\frac{c^{2-\alpha}}{\alpha^{\alpha}(c-\alpha)^{2-2\alpha}}\}}$. If $\alpha_c(1\!+\!\beta)k\!\leq\! 3(1\!-\!\beta)k$, then the maximum of $\displaystyle{\max_{0\leq i\leq \min\{(1+\beta)k,3(1-\beta)k\}}\!\left(\!\frac{(c(1\!+\!\beta)k)^{(2(1+\beta)k-i)}}{i^i(c(1\!+\!\beta)k\!-\!i)^{(2(1+\beta)k-2i)}}\!\right)}$ is obtained at $i\!=\!\alpha_c(1\!+\!\beta)k$, and else it is obtained at $i\!=\!3(1\!-\!\beta)k$. Thus, we can further bound (*) by $O^*$ of $4^{\frac{k}{10^{10}}}$ times the following~expression:

\smallskip

\[\begin{array}{ll}
(\max\{&\displaystyle{ \max_{\frac{\ell^*}{k}\leq\beta\leq\frac{3-\alpha_c}{3+\alpha_c}}\!\!\left(\frac{c^{2-\alpha_{c}}}{\alpha_{c}^{\alpha_{c}}(c-\alpha_{c})^{2-2\alpha_{c}}}\right)^{1+\beta}\!\!\!\!\!\!,}\\

&\displaystyle{\max_{\max\{\frac{\ell^*}{k},\frac{3-\alpha_c}{3+\alpha_c}\}<\beta\leq 1} \frac{(c(1+\beta))^{5\beta-1}}{(3(1-\beta))^{3(1-\beta)}(c(1+\beta)-3(1-\beta))^{4(2\beta-1)}}\}})^k.
\end{array}\]

\medskip

{\noindent The first part of the expression is relevant only if $\frac{\ell^*}{k}\!\leq\!\frac{3-\alpha_c}{3+\alpha_c}$, else we regard~it~as~0.~Let $\beta_c$ be the value $\beta$ that maximizes $\displaystyle{\max_{\frac{3-\alpha_c}{3+\alpha_c}<\beta\leq 1} \frac{(c(1+\beta))^{5\beta-1}}{(3(1\!-\!\beta))^{3(1\!-\!\beta)}(c(1\!+\!\beta)-3(1\!-\!\beta))^{4(2\beta\!-\!1)}}}$.}

Overall, we get that (*) is bounded by $O^*$ of $4^{\frac{k}{10^{10}}}$ times the following:

\begin{itemize}
\item If $\displaystyle{\beta_c\leq \frac{\ell^*}{k}}$: $(\displaystyle{ \frac{(c(1+\frac{\ell^*}{k}))^{5\frac{\ell^*}{k}-1}}{(3(1-\frac{\ell^*}{k}))^{3(1-\frac{\ell^*}{k})}(c(1+\frac{\ell^*}{k})-3(1\!-\!\frac{\ell^*}{k}))^{4(2\frac{\ell^*}{k}-1)}}})^k$.

\item Else if $\displaystyle{\frac{3-\alpha_c}{3+\alpha_c}\leq\frac{\ell^*}{k}}\leq \beta_c$: $(\displaystyle{ \frac{(c(1+\beta_c))^{5\beta_c-1}}{(3(1\!-\!\beta_c))^{3(1\!-\!\beta_c)}(c(1\!+\!\beta_c)-3(1\!-\!\beta_c))^{4(2\beta_c\!-\!1)}}})^k$.

\item Else: $(\displaystyle{\max\{\left(\frac{c^{2-\alpha_{c}}}{\alpha_{c}^{\alpha_{c}}(c-\alpha_{c})^{2-2\alpha_{c}}}\right)^{\frac{6}{3+\alpha_c}}\!\!\!,\  \frac{(c(1+\beta_c))^{5\beta_c-1}}{(3(1\!-\!\beta_c))^{3(1\!-\!\beta_c)}(c(1\!+\!\beta_c)-3(1\!-\!\beta_c))^{4(2\beta_c\!-\!1)}}\}})^k$.
\end{itemize}

{\noindent Table \ref{tab:alphabeta} (below) presents approximate values of $\alpha_c$ and $\beta_c$, given different choices of $c$. Then, Table \ref{tab:kiobdettime} presents bounds for (*), given different choices of $c$, where $\ell^*=1/k$. In particular, by choosing $c=1.497$, we get the bound $O^*(5.139^k)$ (used in Section \ref{section:kiobdet}). Next, Table \ref{tab:forrand} presents other bounds for (*), corresponding to different choices of $c$ and $\gamma$, where $\ell^*=\gamma k$. In particular, by choosing $c=1.765$ and $\gamma=0.8545$, we get the bound $O^*(3.617^k)$ (used in Appendix \ref{section:kiobrand}).}

\begin{table}
\centering
\begin{tabular}{|c|c|c|c|c|c|}
	\hline
	$c$ & $\alpha_c$ & $\left(\!\frac{c^{2-\alpha_{c}}}{\alpha_{c}^{\alpha_{c}}(c-\alpha_{c})^{2-2\alpha_{c}}}\!\right)^{\frac{6}{3+\alpha_c}}\!\!\!\cdot 4^{\frac{1}{10^{10}}}$ & $\frac{3-\alpha_c}{3+\alpha_c}$ & $\beta_c$ &  $\frac{(c(1+\beta_c))^{5\beta_c-1}\cdot4^{\frac{1}{10^{10}}}}{(3(1\!-\!\beta_c))^{3(1\!-\!\beta_c)}(c(1+\beta_c)-3(1\!-\!\beta_c))^{4(2\beta_c\!-\!1)}}$ \\\hline
	1       					     & 0.55013    & $\leq5.873$  & 0.69008 & 0.71350 & $\leq5.9441$    \\\hline	
	1.4	                   & 0.54908    & $\leq5.094$  & 0.69058 & 0.71582 & $\leq5.1552$    \\\hline		
	1.45	                 & 0.55302    & $\leq5.080$ & 0.68870 & 0.71441 & $\leq5.1424$    \\\hline		
	1.495	                 & 0.55692    & $\leq5.075$  & 0.68685 & 0.71299 & $\leq5.13864$    \\\hline	
	1.496	                 & 0.55701    & $\leq5.075$  & 0.68681 & 0.71296 & $\leq5.13864$    \\\hline	
	{\bf1.497}	                 & {\bf0.55710}    & $\bf\leq5.075$  & {\bf0.68677} & {\bf0.71293} & $\bf\leq5.13863$    \\\hline	
	1.498	                 & 0.55719    & $\leq5.075$  & 0.68672 & 0.71289 & $\leq5.13863$    \\\hline	
	1.499	                 & 0.55729    & $\leq5.075$  & 0.68669 & 0.71286 & $\leq5.13864$    \\\hline		
	1.5	                   & 0.55737    & $\leq5.075$  & 0.68664 & 0.71283 & $\leq5.13865$    \\\hline		
\end{tabular}\medskip
\caption{Approximate values of $\alpha_c$ and $\beta_c$, corresponding to different choices of $c$.}
\label{tab:alphabeta}
\smallskip

\begin{tabular}{|c|c|c|c|c|c|c|c|c|c|}
	\hline
	$c$   & 1      & 1.4    & 1.45   & 1.495   & 1.496    & {\bf1.497}   & 1.498   & 1.499   & 1.5 \\\hline		
	Bound & 5.9441 & 5.1552 & 5.1424 & 5.13864 & 5.13864  & {\bf5.13863} & 5.13863 & 5.13864 & 5.13865 \\\hline	
\end{tabular}\medskip
\caption{Upper bounds for (*), corresponding to different choices of $c$, where $\ell^*=1/k$. An entry that stores a constant $a$ corresponds to the bound $O^*(a^k)$.}
\label{tab:kiobdettime}
\smallskip

\begin{tabular}{|c|c|c|c|c|c|c|c|c|c|}
	\hline
	$\gamma \setminus c$ & 1.763       & 1.764       & {\bf1.765}       & 1.766       \\\hline		
	0.8544               & 3.617665566 & 3.617665007 & 3.617665035      & 3.617665648 \\\hline
	{\bf0.8545}          & 3.615894763 & 3.615894103 & {\bf3.615894029} & 3.615894539 \\\hline		
\end{tabular}\medskip
\caption{Upper bounds for (*), corresponding to different choices of $c$ and $\ell^*$, where $\ell^*=\gamma k$. An entry that stores a constant $a$ corresponds to the bound $O^*(a^k)$.}
\label{tab:forrand}
\end{table}

\section{Solving Weighted $k$-Path}\label{section:standard}

In this section, we demonstrate a strategy for speeding-up standard representative sets-based algorithms, following the explanation in Section \ref{section:strategies}. The strategy is demonstrated using the {\sc Weighted $k$-Path} problem.

\subsection{Intuition}\label{sec:kPathIntuition}

In this section, we attempt to give the intuition that guided us through the development of our algorithm for {\sc Weighted $k$-Path}. Formal definitions and proofs are given in the four following sections. The explanations in this section are far from formal, but the intuition described here significantly helps to understand the four following sections.

First, let us consider some solution, which is a path $P=(V_P,E_P)$ (of minimal weight) on $k$ nodes. Now, suppose that we have a set $E_1\subseteq V$ (recall that $V$ is the node-set of $G$) such that when we look at $P$, starting from its first node (i.e., the node that does not have an ingoing neighbor in $P$), we count exactly $(i-1)$ nodes that belong to $E_2=V\setminus E_1$, then one node that belongs to $E_1$, then $(i-1)$ nodes that belong to $E_2$, and so on. That is, only and every $i^{th}$ node that we encounter belongs to $E_1$. For the sake of clarity, the manner in which we obtain $E_1$ is discussed at a later point (in this section). Moreover, for the sake of clarity,\footnote{The assumptions that we make for the sake of clarity of this section are not made in the following sections.} we suppose that $k/i\in\mathbb{N}$ and $i=4$ (as we will explain later, we actually need a larger $i$---that is, we need that $E_1\cap V_P$ will be significantly smaller than $E_2\cap V_P$ so that a certain partitioning step will be efficient). We shall call $E_1$ the blue part of the universe (which is $V$), $E_1\cap V_P$ the blue part of the path, $E_2$ the red part of the universe, and $E_2\cap V_P$ the red part of the path. An illustration of the solution is given in Fig.~\ref{fig:kPathInt1} below.

\begin{figure}[ht!]
\centering
\includegraphics[scale=0.65]{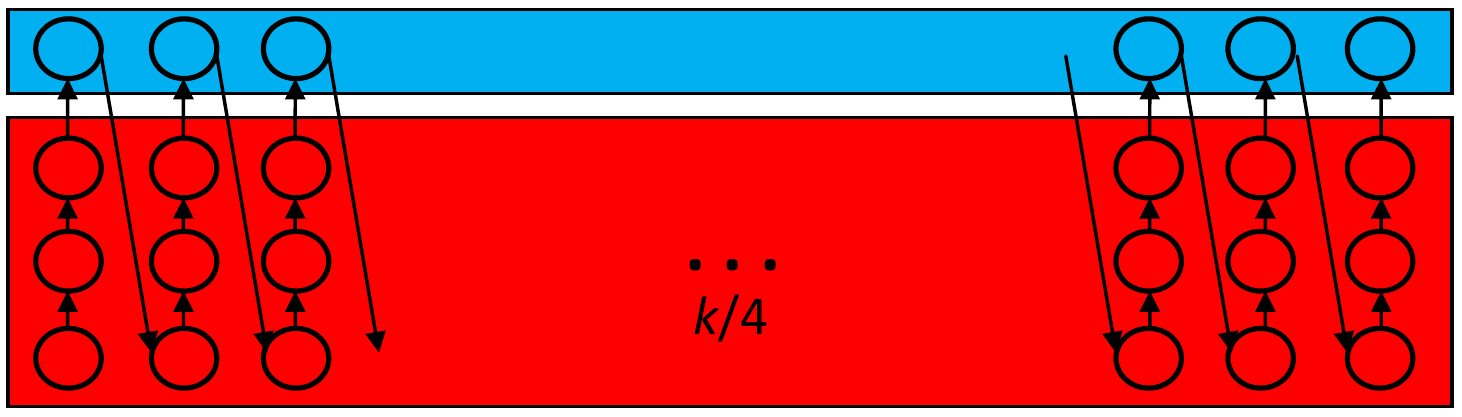}
\caption{A detailed illustration of a solution.}
\label{fig:kPathInt1}
\end{figure}

Since the nodes and edges in the illustration will be distracting, we omit them and illustrate the solution as shown in Fig.~\ref{fig:kPathInt2}.

\begin{figure}[ht!]
\centering
\includegraphics[scale=0.65]{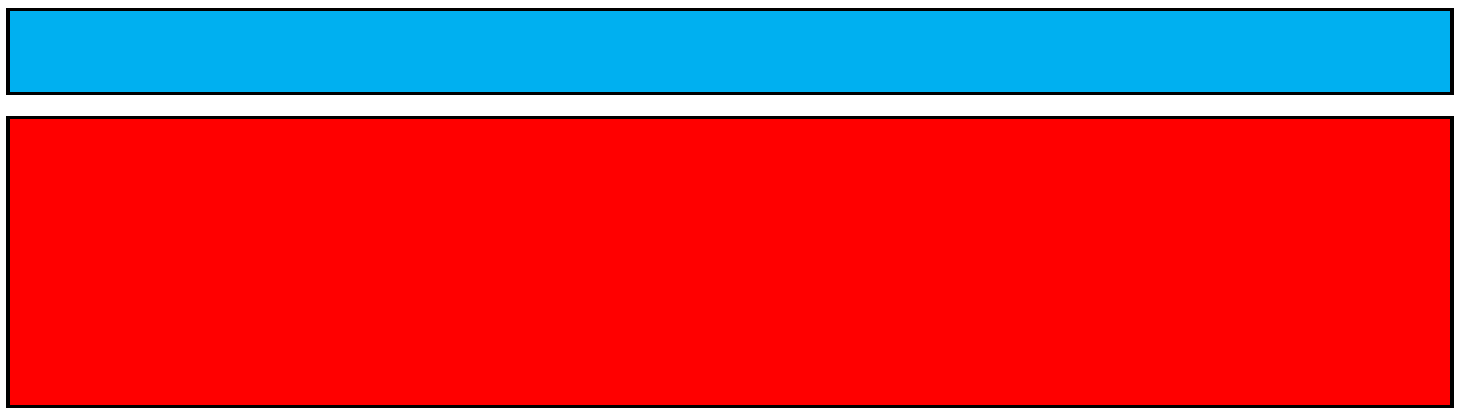}
\caption{A simplified illustration of a solution.}
\label{fig:kPathInt2}
\end{figure}

Suppose we are trying to find $P$ by executing a simple dynamic programming-based procedure, in which we embed (standard) computations of representative sets. That is, at each stage, for every node $v\in V$, we have a family $\cal S$ of partial solutions such that each of them is a path that ends at $v$, and all of these paths have the same size (between 1 and $k$, depending on the stage). We decrease the size of the family $\cal S$ by computing a family that represents it (recall, from Section \ref{section:techniques}, that this action does not prevent us from finding a solution---if there was a partial solution in $\cal S$ that could have been extended to a solution, there is also such a partial solution in the representative family). To advance to the next stage, we simply try to extend each of the remaining partial solutions, which is a path, by using every possible neighbor of its last node (that does not belong to the path, as the solution is a {\em simple} path). For the sake of clarity, suppose that we progress, at each stage, by extending the partial solutions by four nodes rather than one node. In the algorithm, this will not be possible, since $i$ (which here we assume to be four) will not be a constant\footnote{The number of options to elongate a path by using $i$ distinct nodes is potentially ${|V|\choose i}$, which does not result in a time of the form $O^*(f(k))$; therefore, we need to add the nodes one-by-one---thus, we have shorter stages, where each of them is immediately followed by computations of representative families.}---this just adds technical details that are distracting for the purpose of this section.

By the above procedure, we can illustrate a partial solution as shown in Fig.~\ref{fig:kPathInt3}. The colored part represents the partial solution, and the white part represents the location of nodes that we need to insert in order to (potentially) complete the partial solution to a solution (such as $P$).

\begin{figure}[ht!]
\centering
\includegraphics[scale=0.65]{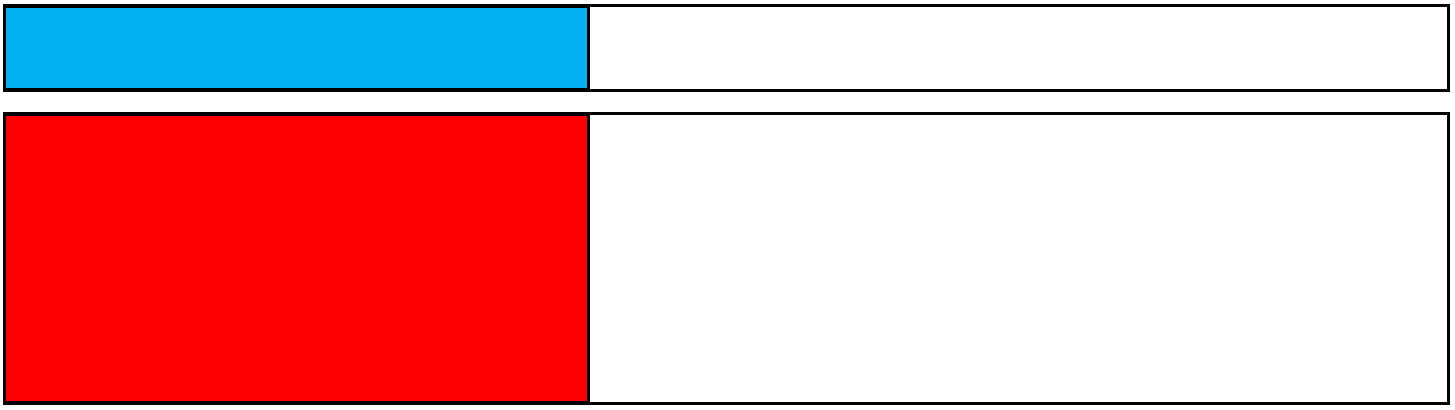}
\caption{An illustration of a partial solution.}
\label{fig:kPathInt3}
\end{figure}

Now, we consider a partition of the blue part of the universe into two sets, $L$ and $R$. For the sake of clarity, the manner in which we obtain this partition is discussed at a later point (in this section). We color $L$ in dark blue, and $R$ in light blue. We assume that $L$ ($R$) should be considered only in the first (second) half of the execution of our procedure. More precisely, we assume that if there is a solution, then there is also a solution, say $P$, such that $L$ captures the first $k/8$ elements of the blue part of $P$ and $R$ captures the remaining $k/8$ elements of the blue part of $P$ (see Fig.~\ref{fig:kPathInt4}). 

\begin{figure}[ht!]
\centering
\includegraphics[scale=0.65]{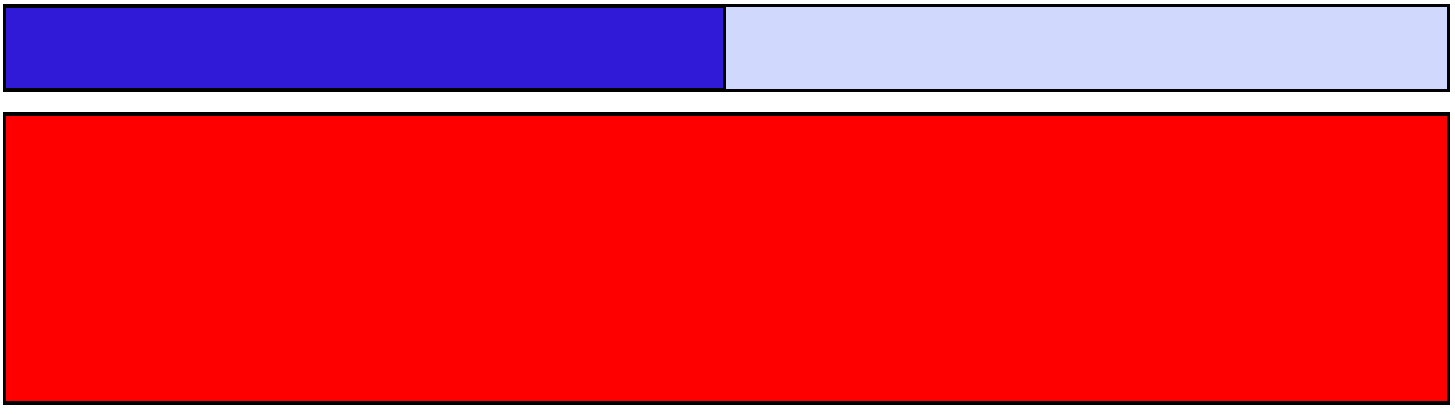}
\caption{An illustration of a solution, where the blue part is partitioned into dark and light blue parts.}
\label{fig:kPathInt4}
\end{figure}

Thus, when we are considering a partial solution in the first half of the execution, it should only contain {\em dark} blue and red elements. Moreover, we need to ensure that we have a partial solution that does not contain a certain set (using which it might be completed to a solution) of only {\em dark} blue and red elements---we do not need to ensure that it does not contain a certain set of light blue, dark blue and red elements, since we simply do not use any light blue element up to this point (see Fig.~\ref{fig:kPathInt5}). 

\begin{figure}[ht!]
\centering
\includegraphics[scale=0.65]{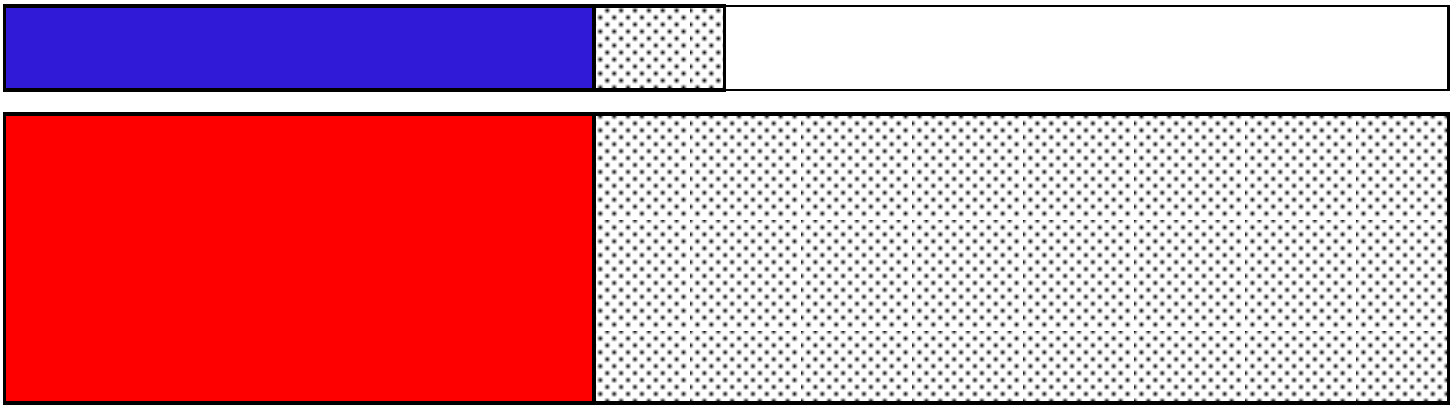}
\caption{An illustration of a partial solution; in the first half of the execution, we ensure that we have such a partial solution that does not contain the dotted parts.}
\label{fig:kPathInt5}
\end{figure}

When we reach the second half of the execution, we attempt to add to our partial solutions only {\em light} blue and red elements. Thus, we can ignore all of the dark blue elements in our partial solutions (we will not encounter them again), and we need to ensure that we have a partial solution that does not contain a certain set of only {\em light} blue and red elements (see Fig.~\ref{fig:kPathInt6}).

\begin{figure}[ht!]
\centering
\includegraphics[scale=0.65]{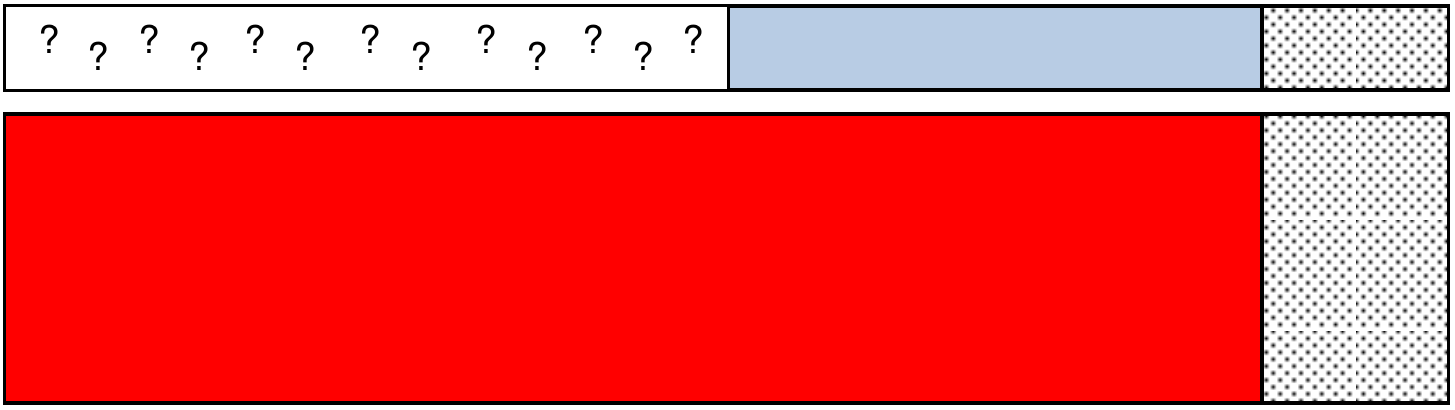}
\caption{An illustration of a partial solution; in the second half of the execution, we ignore dark blue elements, and therefore, we replaced the part containing them with question marks.}
\label{fig:kPathInt6}
\end{figure}

Therefore, we have shown that indeed $L$ ($R$) should be considered only in the first (second) half of the execution. Thus, in the first half of the execution, we can compute representative families faster, since there are less elements in the sets to which we need to ensure separateness, while in the second half of the execution, we can also compute representative families faster, since we have smaller partial solutions. However, this is not good enough---the time that we save in the computations of representative families does not justify the time that we need to obtain $L$ and $R$. Fortunately, we can save a lot more time by using {\em generalized} representative families, relying on the computation we developed in Section \ref{section:disrep}. The worst time necessary when computing a representative family for some family $\cal S$ (by \cite{productFam,esarepresentative}) occurs when each set in $\cal S$ contains slightly more than $k/2$ elements (recall that $k$ is the size of the complete solution, which in our case, is the number of nodes in the solution). We benefit from the usage of generalized representative sets, since the worst running times in the context of $L$ and $R$, actually occur at a point that is good with respect to the red part, and the worst running time in the context of the red part, actually occurs at a point that is good with respect to $L$ and $R$. More precisely, when we are looking at a partial solution that contains slightly more than $50\%$, say $55\%$, of the total number of dark blue elements in a solution, it only contains $(55/2)\%=27.5\%$ of the total number of red elements in a solution; also, when we are looking at a partial solution that contains $55\%$ of the total number of light blue elements in a solution, it already contains $50 + (55/2)\%=77.5\%$ of the total number of red elements in a solution; finally, when we are looking at a partial solution that contains $55\%$ of the total number of red elements in a solution, it already contains $100\%$ of the total number of dark blue elements in a solution, and only $2\cdot(55-50)\%=10\%$ of the total number of light blue elements in a solution. Observe that our computation of generalized representative sets is significantly faster under such distortion (see Theorem \ref{theorem:disrep}).

However, letting $L$ and $R$ have the same size still does not result in a significant enough gain from our usage of generalized representative sets. Recall that the worst time necessary when computing a representative family for some family $\cal S$ occurs when each set in $\cal S$ contains slightly {\em more} than $k/2$ elements. In our computation, the red part will be significantly larger than the blue part (otherwise computing $L$ and $R$ requires too much time); therefore, it seems reasonable that the separation between $L$ and $R$ should take place at the point that corresponds to the worst time of computing a representative family with respect to the red part---then, this worst time is not multiplied by extra time in the context of the blue part (since we will have partial solutions that contain all the necessary dark blue elements and thus we ignore these elements, and not even one light blue element). A corresponding illustration is given in Fig.~\ref{fig:kPathInt7}.

\begin{figure}[ht!]
\centering
\includegraphics[scale=0.65]{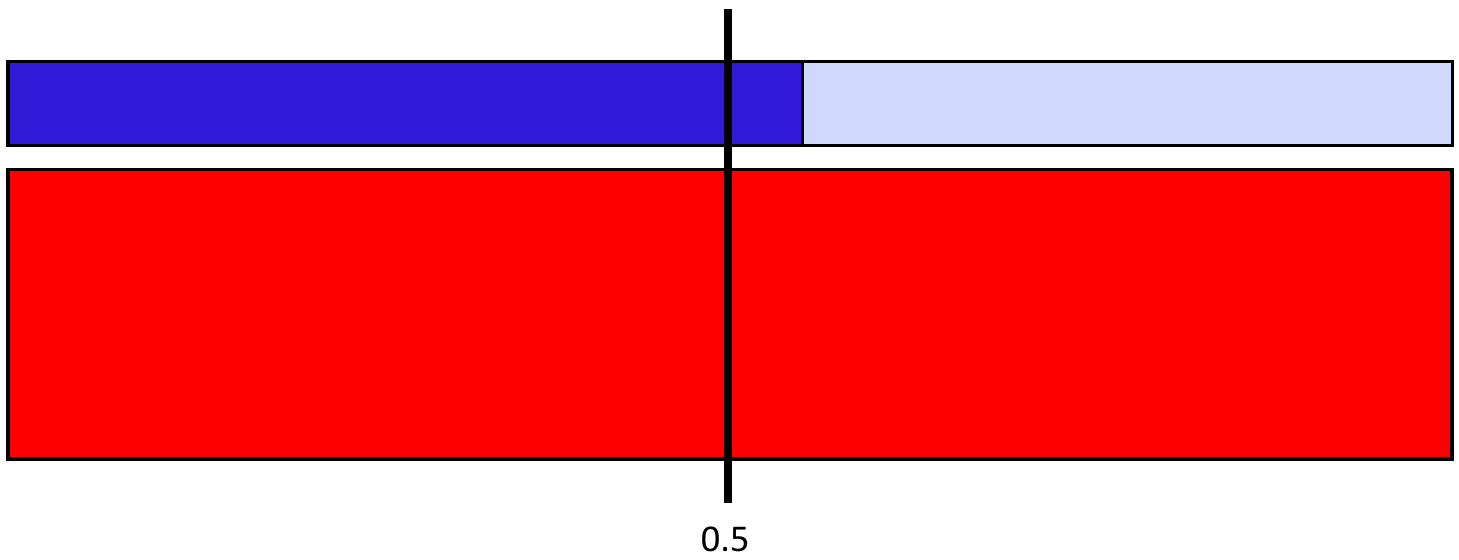}
\caption{An illustration of a solution; the number of dark blue elements that it contains is larger than the number of light blue elements that it contains.}
\label{fig:kPathInt7}
\end{figure}

In our algorithm, the choice is slightly more complicated technically---indeed, the best choice does not exactly correspond to the worst point mentioned above, although it is quite close to it; there are also some complications, in this context, that are caused by the fact that each part among $L$, $R$ and $E_2$ will have its own, different tradeoff parameter $c$ (see Theorem \ref{theorem:disrep}),\footnote{In fact, the red part will have two tradeoff parameters, one to be used in the first half of the execution, and one in the second; this causes another minor technical complication when transferring (at roughly the middle of the execution) between these parameters---to perform this efficiently, we actually consider a phase where we iterate over many intermediate parameters.} and that the shift also effects the time required to compute the sets $L$ and $R$ (though due to the fact that the shift is small, the latter effect is almost negligible). We will have a parameter, $\delta$, that controls the necessary shift. For the sake of clarity, we next ignore the shift (i.e., we suppose that $|L|=|R|$).

Now, we turn to explain how to compute the dark and light blue parts, assuming that we have the blue part. Only afterwards we shall explain how to obtain the blue part, since its computation introduces several complications, on which we will focus separately in the remaining part of this section. The dark and light blue parts can be easily computed using a standard step of divide-and-color (see Section \ref{section:techniques}). That is, we shall consider many options of a pair of a dark and light blue parts, and ensure that at least one of them will be good (i.e., captures the blue elements as described in Fig.~\ref{fig:kPathInt4}). More precisely, assuming that we have some arbitrary association between indices and elements, we compute a $(|V|,|L\cup R|,|R|)$-universal set (see Theorem \ref{theorem:splitter}); for each function, we let $L$ ($R$) contain the elements corresponding to indices to which the function assigns '0' ('1'). To ensure that the size of the universal set is not too large (which results in a slow running time), $|L\cup R|$ should be significantly smaller than $k$; we will have a parameter, $\gamma$, that controls the exact size.

The blue part that we compute can actually be very different from the nicely organized blue part that is described in Fig.~\ref{fig:kPathInt1} (since obtaining such a part by using, e.g., a universal set, is very inefficient). We simply require that the blue part should contain exactly $|L\cup R|$ elements from some solution (if one exists). This can be easily accomplished in polynomial time. Indeed, we just need to define some arbitrary order on $V$, say $v_1,v_2,\ldots,v_{|V|}$, and examine each node $v_i$, letting the blue part contain all the elements that are greater or equal to $v_i$. In this manner we will encounter at least one blue part that has the above mentioned desired property. Next, assume that we are currently considering such a blue part, $E_1$.

As mentioned above, the blue part, $E_1$, is not necessarily as nicely balanced along the solution as described in Fig.~\ref{fig:kPathInt1}, but it can be congested at certain points, and sparse in others, along the solution (since we simply defined a set that contains a desired number of elements from a solution, regardless of their location in the path that is a solution). This is very problematic in terms of the gain from using {\em generalized} representative sets that we described above. For example, looking at an extreme case (for the sake of clarity), the solution can be colored as illustrated in Fig.~\ref{fig:kPathInt8} below. Then, the set of all the most problematic points of the computation of representative sets in the context of the red part contains all the points between the two vertical red lines, including the points corresponding to the worst computation times with respect to the dark blue part (indicated by a dark blue line) and the light blue part (indicated by a light blue line). That is, we do not ensure, as before, that any point in the execution that is problematic with respect to the red part is actually good with respect to both the dark and light blue parts.

\begin{figure}[ht!]
\centering
\includegraphics[scale=0.65]{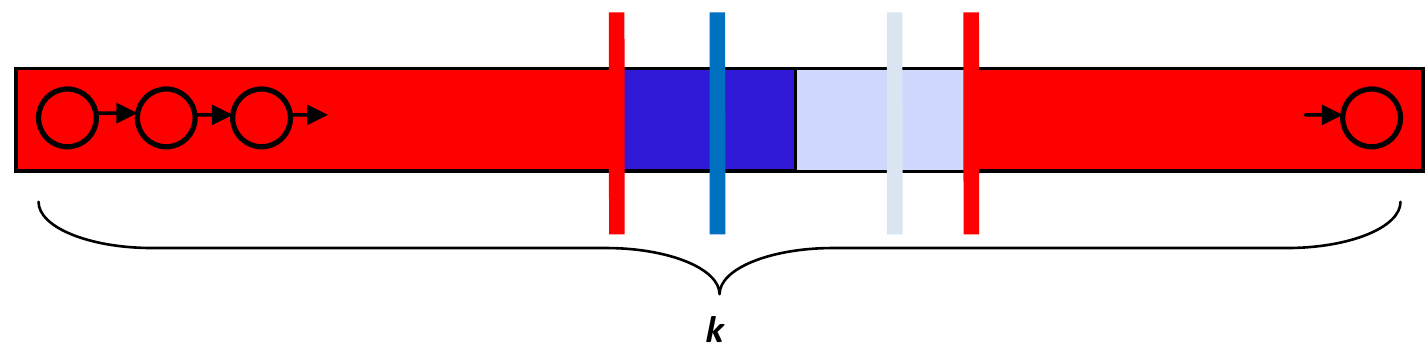}
\caption{An illustration of a solution whose blue part is congested roughly (slightly after) at its middle.}
\label{fig:kPathInt8}
\end{figure}

We do not need to ensure that the blue part will be exactly as nicely ordered as described in Fig.~\ref{fig:kPathInt1}, but we still need to ensure that congestions of the dark blue part will be as close as possible to the beginning of the solution, and congestions of the light blue part will be as close as possible to the end of the solution---thus, we ensure that we gain from our computation of {\em generalized} representative sets. Observe that the more the dark (light) blue part is congested at the beginning (end) of the solution, we will actually gain more from our computation of generalized representative sets, but we will only be able to ensure (using the method we describe next) that at worst, the blue part is {\em approximately} balanced along the solution. To present the method in which we reorder the solution, consider the arbitrary solution illustrated in Fig.~\ref{fig:kPathInt9}.

\begin{figure}[ht!]
\centering
\includegraphics[scale=0.65]{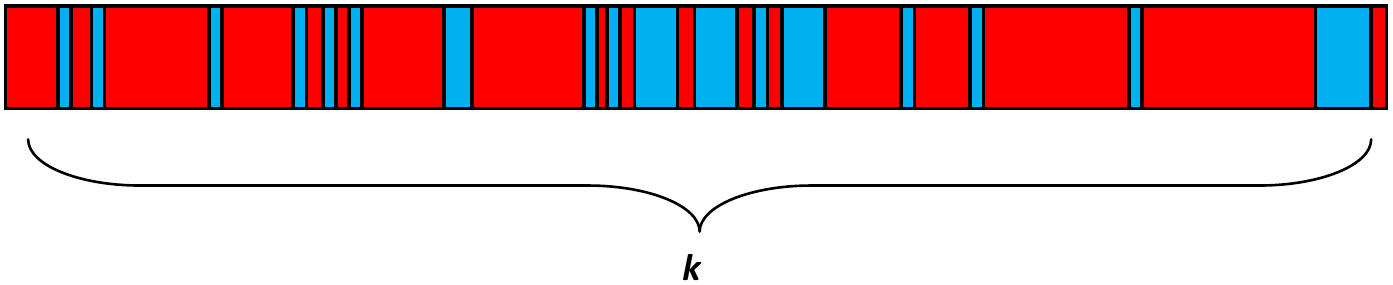}
\caption{An arbitrary, unorganized solution.}
\label{fig:kPathInt9}
\end{figure}

Now, we want to reorganize it, and therefore we cut it into a {\em constant} number of very small pieces of the same size (see Fig.~\ref{fig:kPathInt10}), apart from one piece, which may have a different size, but for the sake of clarity, we will assume in this section that such a piece does not exist. In the algorithm, we will have a parameter, $\epsilon$, that controls the size of these pieces.

\begin{figure}[ht!]
\centering
\includegraphics[scale=0.65]{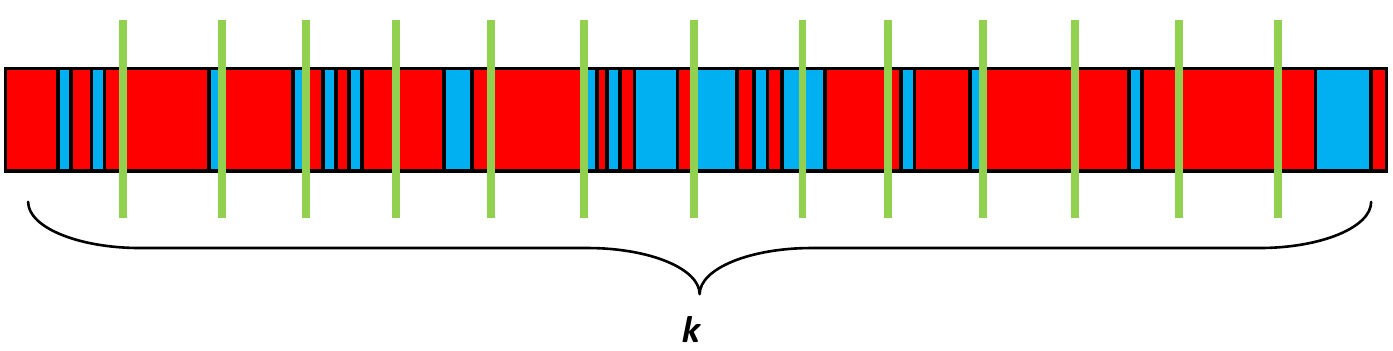}
\caption{A solution, cut into small pieces (by the green vertical lines).}
\label{fig:kPathInt10}
\end{figure}

We now reorder the pieces, so that the pieces that contain more blue elements will be located closer to the beginning and the end (see Fig.~\ref{fig:kPathInt11}). The solution is then ordered as we desired in order to benefit from the usage of generalized representative sets---of course it is no longer a directed path, but a collection of small directed paths (which is not a problem, since they can be reorganized, according to their original order, to obtain a directed path). Using the divide-and-color step described earlier, the dark blue part should appear only in ``early'' pieces, and the light blue part should appear only in the latter pieces; there will be only one piece that may contain both dark and light blue elements.

\begin{figure}[ht!]
\centering
\includegraphics[scale=0.9]{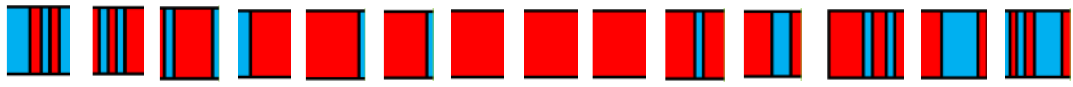}
\caption{A solution, cut into small pieces which were reordered.}
\label{fig:kPathInt11}
\end{figure}

Clearly, since we do not know the solution in advance, we cannot explicitly cut it into small pieces (and order them). Moreover, we cannot explicitly partition the universe $V$ into the set of elements that should be considered in the first piece, the set of elements that should be considered in the second piece, and so on, since this will be extremely inefficient. However, we can {\em implicitly} cut the universe as follows. We simply ``guess'' which are the nodes that are the first and the last in every piece, according to the desired order of the pieces (more precisely, we iterate over every option of choosing these nodes---there is a polynomial number of options since there is a constant number of pieces). We will have functions, $\ell_1,\ell_2,r_1$ and $r_2$, which will indicate (via their images) which are the nodes that we guessed; the functions $\ell_1$ and $\ell_2$ will belong to the first half of the execution, and for an index $j$, $\ell_1$ ($\ell_2$) assigns the first (last) node in the path that we are currently seeking; the functions $r_1$ and $r_2$ are similar, relating to the second half of the execution. Apart from these functions, we will have two nodes, $v^{\ell}$ and $v^r$, that are the first and last nodes of the middle piece, respectively (handling of the middle piece is slightly more technical than the other pieces). To avoid confusion in the following section, we note that each node (besides two nodes) should appear exactly twice, both as a node that should start a piece and as a node that should end a piece (to ensure that the pieces together create a path)---that is, the pieces should be disjoint excluding the nodes connecting them. Also, observe that the node that ends piece $j$ does not necessarily begins piece $j+1$ (e.g., to see this, compare Figures \ref{fig:kPathInt10} and \ref{fig:kPathInt11}---the pieces were reordered). The dynamic programming-based procedure that relies on these functions is technically involved, in particular, since we do not know exactly how many blue elements belong to every piece (though we can upper and lower bound their number\footnote{For example, we can lower bound the number of dark blue elements in the first piece (which should not contain any light blue element), since we can assume that this piece contains more dark blue elements than any other piece of the first ``almost half'' of the execution (the pieces of the other ``almost half'' should not contain dark blue elements).}---which is necessary). However, once the intuition in this section is clear, the technical details of the dynamic programming (in Appendix \ref{section:proofkcwp}) are maybe tedious, but straightforward (except perhaps noticing that one should gradually transfer between the tradeoff parameters of the red part when considering the middle piece).

\subsection{A Subcase of Weighted $k$-Path: Definition}\label{section:disrepkpath}

We next consider {\sc Cut Weighted $k$-Path ($k$-CWP)}, the subcase of {\sc Weighted $k$-Path} that we solve using a generalized representative sets-based procedure. Since the definition of {\sc $k$-CWP} is slightly technical, we first given a brief remider (from Appendix \ref{sec:kPathIntuition}) of relevant explanations of the main idea. Roughly speaking, we define this problem to be {\sc Weighted $k$-Path} where we are given two disjoint sets of nodes, $L$ and $R$, and functions that cut the solution, a ``light'' path on $k$ nodes (if one exists), in a manner that will allow us to consider $L$ only in the first half (plus a chosen small fraction) of the execution of the following procedure that solves it, and $R$ only in the second half (minus the fraction) of its execution. It is also important that these functions ``spread'' the nodes of $L$ (resp. $R$) in a certain manner, which is at worst approximately balanced, among the paths considered in the first (resp. second) half of the execution. Approximate balance actually distorts the balance in the computed partial solutions: partial solutions computed in the first (resp. second) half of the execution will contain a much smaller (resp. larger) fraction of the nodes from $V\setminus (L\cup R)$ belonging to the solution than of the nodes from $L$ (resp. $R$) belonging to the solution, especially when we are close to the middle of the execution. This distortion allows us to benefit (in terms of running time) from the fact that we compute {\em generalized} representative sets.

Fix $0<\epsilon,\delta,\gamma<0.1$ (the exact values are determined later), such that $\frac{1}{\epsilon},m,\widetilde{m}\in\mathbb{N}$, where $m=\frac{1}{2}(\frac{1}{\epsilon}-1)$ and $\widetilde{m}=\delta(\frac{1}{\epsilon}-1)$. Also denote $\widetilde{k}=k-1$. Informally, using the notions described in the previous section, $\frac{1}{\epsilon}$ will be the number of pieces, $\gamma$ will control the size of the blue part of the path (i.e., $|(L\cup R)\cap V_P|$, where $V_P$ is the node-set of a solution (see Appendix \ref{sec:kPathIntuition}), which will be roughly $\gamma k$), $\delta$ will control the shift (i.e., the difference between $|L\cap V_P|$ and $|R\cap V_P|$, which will be roughly $2\delta\gamma k$), $(m+\widetilde{m})$ will be the number of pieces to be considered at the beginning of the execution (in the context of $L$), and $(m-\widetilde{m})$ will be the number of pieces to be considered at the end of the execution (in the context of $R$).\footnote{Observe that $2m$ is the total number of pieces {\em excluding} the ``middle'' piece (i.e., the piece where both $L$ and $R$ are relevant).}

\myparagraph{Input} Formally, the input for {\sc $k$-CWP} consists of a directed graph $G=(V,E)$, a weight function $w: E\rightarrow \mathbb{R}$, a weight $W\in\mathbb{R}$ and a parameter $k\in\mathbb{N}$, along with two disjoint subsets $L,R\subseteq V$, four {\em injective} functions $\ell_1,\ell_2: \{1,2,\ldots,m+\widetilde{m}\}\rightarrow V\setminus (L\cup R)$ and $r_1,r_2: \{1,2,\ldots,m-\widetilde{m}\}\rightarrow V\setminus (L\cup R)$, and distinct $v^{\ell},v^r\in V\setminus(L\cup R)$. The functions should satisfy the following ``function~conditions'':
\begin{enumerate}
\item $(\image(\ell_1)\cap\image(r_1))=(\image(\ell_2)\cap\image(r_2))=\emptyset$, $v^\ell\notin\image(\ell_1)\cup\image(r_1)$ and $v^r\notin\image(\ell_2)\cup\image(r_2)$.
\item $|(\image(\ell_1)\cup\image(r_1)\cup\{v^\ell\})\setminus(\image(\ell_2)\cup\image(r_2)\cup\{v^r\})|=|(\image(\ell_2)\cup\image(r_2)\cup\{v^r\})\setminus(\image(\ell_1)\cup\image(r_1)\cup\{v^\ell\})|=1$.
\end{enumerate}

Recall, from the previous section, that the functions $\ell_1,\ell_2,r_1$ and $r_2$ indicate (via their images) which are the nodes that begin and end each piece (whose solution is a small directed path). The functions $\ell_1$ and $\ell_2$ concern the first half of the execution, and for an index $j$, $\ell_1$ ($\ell_2$) assigns the first (last) node in the small path that we are currently seeking; the functions $r_1$ and $r_2$ are similar, concerning the second half of the execution. The nodes $v^{\ell}$ and $v^r$ specify the first and last nodes of the ``middle'' piece, respectively. Informally, the first condition ensures that a node can neither begin nor end two pieces; the second condition ensures that each node that ends a piece should also start a piece---except one node, since overall we want to construct a simple path and not a simple cycle---and vice versa.

\myparagraph{Objective} We need to decide if $G$ has $\frac{1}{\epsilon}$ simple internally node-disjoint directed paths, $P_1,P_2,\ldots,P_{\frac{1}{\epsilon}}$, whose total weight is at most $W$, and whose number of nodes from $L$ and $R$ is almost the same---$\lfloor(\frac{1}{2}+\delta)\gamma k\rfloor$ from $L$ and $\lfloor(\frac{1}{2}-\delta)\gamma k\rfloor$ from $R$,\footnote{For intution why we take more nodes from $L$ than from $R$, note that the worst performance of the {\sc Weighted $k$-Path} algorithm in \cite{esarepresentative} is slightly {\em after} passing half of its execution (i.e., after computing partial solutions that are paths on $\lceil\frac{k}{2}\rceil$ nodes). This issue was also discussed in the previous section.} such that the following ``solution conditions'' are satisfied:\footnote{In these conditions, for the sake of clarity, we include $v^\ell$ and $v^r$ when referring to ``the images of the input functions''.}
\begin{enumerate}
\item $\forall i\in\{1,2,\ldots,m+\widetilde{m}\}:$ The first and last nodes of $P_i$ are $\ell_1(i)$ and $\ell_2(i)$, respectively, and it contains exactly $\lfloor\epsilon \widetilde{k}\rfloor-1$ internal nodes, where none of them belongs to $R$ or the images of the input functions. Moreover, the total number of nodes from $L$ that are contained in the paths $P_1,\ldots,P_i$ is at least $\frac{i}{m+\widetilde{m}}(\lfloor(\frac{1}{2}+\delta)\gamma k\rfloor-(k-2m\lfloor\epsilon \widetilde{k}\rfloor-2))$.

\item The first and last nodes of $P_{m+\widetilde{m}+1}$ are $v^\ell$ and $v^r$, respectively, and it contains exactly $k-2m\lfloor\epsilon \widetilde{k}\rfloor-2$ internal nodes, where none of them belongs to the images of the input functions.

\item $\forall i\in\{1,2,\ldots,m-\widetilde{m}\}:$ The first and last nodes of $P_{m+\widetilde{m}+1+i}$ are $r_1(i)$ and $r_2(i)$, respectively, and it contains exactly $\lfloor\epsilon \widetilde{k}\rfloor-1$ internal nodes, where none of them belongs to $L$ or the images of the input functions. Moreover, the total number of nodes from $R$ that are contained in the paths $P_{m+\widetilde{m}+1},\ldots,P_i$ is at most $\frac{i}{m-\widetilde{m}}(\lfloor(\frac{1}{2}-\delta)\gamma k\rfloor-(k-2m\lfloor\epsilon \widetilde{k}\rfloor-2))+(k-2m\lfloor\epsilon \widetilde{k}\rfloor-2)$.
\end{enumerate}

Informally, these conditions ensure that the functions, as well as $v^\ell$ and $v^r$, indeed indicate which are the nodes that begin and end each piece (as specified in the explanation of the definition of the input). They also ensure that the number of nodes that are internal nodes in the small paths that we seek, together with the union of $\{v^\ell,v^r\}$ and the images of the functions, is exactly $k$ (so we will overall obtain a path on $k$ nodes). Moreover, they ensure that nodes in $L$ ($R$) tend to be congested at paths of smaller (larger) indices, which we will seek at the beginning (end) of the execution.

\myparagraph{Lemma} Following the explanations given in the previous section, Appendix \ref{section:proofkcwp} shows that {\sc $k$-CWP} can indeed be efficiently solved using a dynamic programming-based procedure embedded with computations of generalized representative sets. More precisely, we prove the following lemma:

\begin{lemma}\label{lemma:kcwpprocor}
For any fixed $c_1,c_2,c_\ell,c_r\!\geq\! 1$  and $0\!<\!\lambda\!<\!1$ s.t.~$c_1\!\geq\! c_2$, there~is~a~small $\epsilon\!=\!\epsilon(\lambda,\delta,\gamma,c_1,c_2)$ s.t.~{\sc $k$-CWP} is solvable in deterministic time $O^*(X\cdot 2^{\lambda k})$~for~$X\!=$
$\max\left\{ \displaystyle{\max_{0\leq i\leq (\frac{1}{2}+\delta)\gamma k}\ \max_{0\leq j\leq \frac{i}{\gamma k}(k-\gamma k)}\!\!Q_\ell Q_1},\ 
\displaystyle{\max_{0\leq i\leq (\frac{1}{2}-\delta)\gamma k}\ \max_{\frac{i+(\frac{1}{2}+\delta)\gamma k}{\gamma k}(k-\gamma k)\leq j\leq k-\gamma k}\!\!\!\!\!\!Q_r Q_2}\right\}$,

\smallskip
{\noindent where $\displaystyle{Q_\ell = \frac{(c_\ell(\frac{1}{2}+\delta)\gamma k)^{2(\frac{1}{2}+\delta)\gamma k - i}}{i^i(c_\ell(\frac{1}{2}+\delta)\gamma k - i)^{2(\frac{1}{2}+\delta)\gamma k - 2i}}}$,
$Q_1 = \displaystyle{\frac{(c_1(k-\gamma k))^{2(k-\gamma k)-j}}{j^j(c_1(k-\gamma k)-j)^{2(k-\gamma k)-2j}}}$,}

\smallskip
{\noindent $Q_r = \displaystyle{\frac{(c_r(\frac{1}{2}-\delta)\gamma k)^{2(\frac{1}{2}-\delta)\gamma k - i}}{i^i(c_r(\frac{1}{2}-\delta)\gamma k - i)^{2(\frac{1}{2}-\delta)\gamma k - 2i}}}$
and $Q_2 = \displaystyle{\frac{(c_2(k-\gamma k))^{2(k-\gamma k)-j}}{j^j(c_2(k-\gamma k)-j)^{2(k-\gamma k)-2j}}}$.}
\end{lemma}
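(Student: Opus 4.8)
The plan is to design a dynamic programming procedure that builds the $\frac{1}{\epsilon}$ small paths one node at a time, processing the pieces in the prescribed order $P_1,P_2,\ldots,P_{1/\epsilon}$, and to interleave it with calls to \alg{GenRepAlg} (Theorem~\ref{theorem:disrep}) applied with $t=2$, where $E_1$ is the relevant blue part ($L$ in the first half, $R$ in the second half) and $E_2=V\setminus(L\cup R)$ is the red part. First I would fix the data structures: for each node $v$, for each ``configuration'' recording how many blue and red nodes of the (partial) solution have been placed so far, and which piece we are currently inside, we keep a family $\cal S$ of partial solutions each of which is a collection of internally-disjoint directed paths ending at $v$ and consistent with the input functions $\ell_1,\ell_2,r_1,r_2,v^\ell,v^r$. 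After each single-node extension I would call \alg{GenRepAlg} on $\cal S$ with the appropriate $p_1,p_2$ (the current counts of blue and red nodes, minus the boundary nodes pinned by the functions) and the appropriate $k_1,k_2$ (the global target counts $\lfloor(\tfrac12\pm\delta)\gamma k\rfloor$ and $k-\gamma k$, up to the fixed overhead $2m\lfloor\epsilon\widetilde k\rfloor+2$); the tradeoff parameters are $c_\ell$ (resp. $c_r$) for the blue universe in the first (resp. second) half and $c_1$ (resp. $c_2$) for the red universe. Correctness follows exactly as in the proof of Theorem~\ref{theorem:disrep}: Definition~\ref{def:disrep} guarantees that whenever a surviving partial solution can be completed to a valid global solution by adding a disjoint set $Y$ with $|Y\cap E_i|\le k_i-p_i$, the representative family retains a partial solution of no larger weight that is also completable, so no solution of minimum weight is lost.

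Next I would verify that the ``solution conditions'' in the definition of {\sc $k$-CWP} are exactly the invariants that make the separateness bounds $|Y\cap E_i|\le k_i-p_i$ hold at every step. Condition~1 (and the monotone lower bound on the number of $L$-nodes seen through $P_i$) forces, at any point in the first half, that the number of $L$-nodes still to be added is at most $k_1-p_1$; symmetrically condition~3 bounds the remaining $R$-nodes in the second half. Condition~2 fixes the middle piece's internal count so that the global node count is exactly $k$. Crucially, in the second half all of $L$ has already been committed and will never be touched again, so we may simply \emph{drop} the $L$-elements from the stored partial solutions — this is what shrinks $p$ and hence the representative-family size in the second half, and it is why $E_1$ can be taken to be $R$ alone there. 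The ranges of the indices $i,j$ in the statement of the lemma are precisely the ranges of $(p_1^{\text{blue}},p_2^{\text{red}})$ that can occur: the constraint $j\le \frac{i}{\gamma k}(k-\gamma k)$ in the first-half max (resp. $j\ge\frac{i+(\frac12+\delta)\gamma k}{\gamma k}(k-\gamma k)$ in the second) is exactly the ``approximate balance'' built into conditions~1 and~3, transferred to a relation between blue-progress and red-progress.

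For the running time I would bound the number of DP cells by $O^*(1)$ (polynomially many nodes, piece-indices, and counts, since $1/\epsilon$ is a constant and the counts are bounded by $k$), so the total time is $O^*(1)$ times the maximum, over all reachable configurations, of $(|{\cal S}|\cdot$ (cost of one \alg{GenRepAlg} call on $\cal S))$. By Theorem~\ref{theorem:disrep} with $t=2$, a call in the first half costs $O^*(|{\cal S}|\cdot Q_\ell\cdot Q_1\cdot 2^{o(k)})$ where $Q_\ell$ is the $\frac{(c_\ell k_1)^{k_1}}{p_1^{p_1}(c_\ell k_1-p_1)^{k_1-p_1}}$-type factor with $k_1=(\tfrac12+\delta)\gamma k$, $p_1=i$, and $Q_1$ the analogous red factor with $k_2=k-\gamma k$, $p_2=j$; since the surviving $|{\cal S}|$ from the previous step is itself bounded by the product of the previous $Q_\ell Q_1$, a standard telescoping argument (identical to the running-time analysis in \cite{esarepresentative,productFam}) shows the whole first half is dominated by $\max_{i,j} Q_\ell Q_1$. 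The second half is symmetric with $Q_r Q_2$ and $k_1=(\tfrac12-\delta)\gamma k$. The only genuinely non-routine point — and the one I expect to be the main obstacle — is the handling of the middle piece $P_{m+\widetilde m+1}$: there the ``blue universe'' switches from $L$ to $R$ and the red tradeoff parameter must migrate from $c_1$ to $c_2$. I would handle it by inserting a short phase that iterates over a constant-length chain of intermediate tradeoff parameters interpolating between $c_1$ and $c_2$ and re-runs \alg{GenRepAlg} at each; because the middle piece has only $k-2m\lfloor\epsilon\widetilde k\rfloor-2 = O(\epsilon k)+o(k)$ internal nodes, the extra cost incurred here is only $2^{O(\epsilon k)}$, which is absorbed into the $2^{\lambda k}$ slack for a sufficiently small $\epsilon=\epsilon(\lambda,\delta,\gamma,c_1,c_2)$. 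All the accumulated $2^{o(k)}\log|V|$ factors, the $\frac1\epsilon$-fold products, and the middle-piece overhead together contribute at most $2^{\lambda k}$, giving the claimed bound $O^*(X\cdot 2^{\lambda k})$.
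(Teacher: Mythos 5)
Your overall architecture (DP over the pieces in the prescribed order, a call to \alg{GenRepAlg} after every single-node extension, tradeoff parameters $c_\ell,c_1$ in the first half and $c_r,c_2$ in the second, dropping the already-committed $L$-nodes once the second half begins, and absorbing all $\frac{1}{\epsilon}$-dependent overhead into the $2^{\lambda k}$ slack by taking $\epsilon$ small) is the same as the paper's, and for the first and second halves your $t=2$ instantiation is in effect equivalent to what the paper does (the paper formally uses three universes but assigns tradeoff parameter $1$ and exponent $k_i-p_i=0$ or $p_i=0$ to the inactive blue part, so its contribution is trivial). The genuine gap is in the middle piece, which you yourself flag as the main obstacle but then resolve incorrectly. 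In the middle piece the internal nodes may come from $L$, from $R$, and from the red part simultaneously: the stored partial solutions already contain nodes of all three types, and they must remain extendable by \emph{future} nodes of all three types (the rest of the middle piece may still use $L$-nodes, and the second half will use $R$-nodes). With $t=2$ and $E_1$ equal to a single blue set, the stored sets are not even subsets of $E_1\cup E_2$ as Definition~\ref{def:disrep} requires, and, more importantly, the representative computation gives no guarantee of disjointness from the future nodes of the omitted blue color, so a surviving partial solution may collide with $R$-nodes (or $L$-nodes) that the completion needs; correctness breaks exactly there. You also cannot simply skip the representative computations across the middle piece, since extending node-by-node without pruning multiplies the family size by $|V|$ per node, i.e.\ an $n^{O(\epsilon k)}$ factor rather than $2^{O(\epsilon k)}$.

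The paper resolves this with a genuinely three-universe phase (its matrix N): during the middle piece the families are $(k_1-i,k_2-j,k_3-s)$-representative over $E_1=L$, $E_2=R$, $E_3=V\setminus(L\cup R\cup IMG)$, computed with parameters $(c_\ell,c_r,c_1)$; because the middle piece contains only $O(\epsilon k)$ internal nodes, the extra $c_r$-factor ranges over at most $O(\epsilon k)$ elements of $R$ and costs only a $2^{O(\epsilon k)}$ multiplicative overhead, which is what gets absorbed into $2^{\lambda k}$. Only \emph{after} the middle piece, at the first entries of the second-half matrix, does the constant-length chain of intermediate parameters interpolating from $c_1$ to $c_2$ appear, and its sole purpose is to migrate the red-part tradeoff parameter; it has nothing to do with the hand-over from $L$ to $R$. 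Your proposal conflates these two transitions, so as written the middle piece is not handled; replacing your $t=2$ middle phase by the paper's $t=3$ phase (and keeping your $c_1\to c_2$ interpolation where the paper puts it) would close the gap, after which your telescoping running-time argument yields the four expressions of Lemma~\ref{lemma:kcwp} and hence the bound claimed in Lemma~\ref{lemma:kcwpprocor}.
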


\subsection{A Subcase of Weighted $k$-Path: Computation}\label{section:proofkcwp}

In this section we prove the following lemma, which implies the correctness of Lemma \ref{lemma:kcwpprocor}:

\begin{lemma}\label{lemma:kcwp}
Let $x=(\lfloor(\frac{1}{2}+\delta)\gamma k\rfloor+\lfloor(\frac{1}{2}-\delta)\gamma k\rfloor)$. Then, for any fixed $c_1,c_2,c_\ell,c_r\geq 1$ such that $c_1\geq c_2$, {\sc $k$-CWP} can be solved in deterministic time $O^*(X\cdot 2^{o(k)})$, where $X$ is bounded by the maximum among the following four expressions:\footnote{We give each expression a short name that roughly describes its relevance.}
\begin{enumerate}
\item \underline{``First half ($+\delta$) of the procedure''}:\\
\smallskip
$\displaystyle{\max_{i=1}^{m+\widetilde{m}}
\ \ \max_{j=\frac{i-1}{m+\widetilde{m}}(\lfloor(\frac{1}{2}+\delta)\gamma k\rfloor-(k-2m\lfloor\epsilon \widetilde{k}\rfloor-2))}^{\lfloor(\frac{1}{2}+\delta)\gamma k\rfloor}
\ \ \max_{s=1+(i-1)(\lfloor\epsilon \widetilde{k}\rfloor-1)-j}^{i(\lfloor\epsilon \widetilde{k}\rfloor-1)-j}}$

\medskip
$\displaystyle{\left( \frac{(c_\ell\lfloor(\frac{1}{2}+\delta)\gamma k\rfloor)^{2\lfloor(\frac{1}{2}+\delta)\gamma k\rfloor-j}}{j^j(c_\ell\lfloor(\frac{1}{2}+\delta)\gamma k\rfloor-j)^{2\lfloor(\frac{1}{2}+\delta)\gamma k\rfloor-2j}}
\cdot
\frac{(c_1(\widetilde{k}-\frac{1}{\epsilon}-x))^{2(\widetilde{k}-\frac{1}{\epsilon}-x)-s}}{s^s(c_1(\widetilde{k}-\frac{1}{\epsilon}-x)-s)^{2(\widetilde{k}-\frac{1}{\epsilon}-x)-2s}} \right)}$

\bigskip
\medskip
\item \underline{``Transition (phase 1) from $L$ to $R$''}:\\
\smallskip
$\displaystyle{\max_{i=\lfloor(\frac{1}{2}+\delta)\gamma k\rfloor-(k-2m\lfloor\epsilon \widetilde{k}\rfloor-2)}^{\lfloor(\frac{1}{2}+\delta)\gamma k\rfloor}
\ \max_{j=0}^{k-2m\lfloor\epsilon \widetilde{k}\rfloor-2}
\ \ \ \max_{s=1+(m+\widetilde{m})(\lfloor\epsilon \widetilde{k}\rfloor-1)-i-j}^{(\widetilde{k}-\frac{1}{\epsilon}\lfloor\epsilon \widetilde{k}\rfloor)+(m+\widetilde{m}+1)(\lfloor\epsilon \widetilde{k}\rfloor-1)-\lfloor(\frac{1}{2}+\delta)\gamma k\rfloor-j}}$

\medskip
$\displaystyle{\left(\!\!
\frac{(c_\ell\lfloor\!(\frac{1}{2}\!+\!\delta)\gamma k\!\rfloor)^{2\lfloor\!(\frac{1}{2}\!+\!\delta)\gamma k\!\rfloor\!-\!i}}{i^i(c_\ell\lfloor\!(\frac{1}{2}\!+\!\delta)\gamma k\!\rfloor\!-\!i)^{2\lfloor\!(\frac{1}{2}\!+\!\delta)\gamma k\!\rfloor\!-\!2i}}
\!\cdot\!
\frac{(c_r\lfloor\!(\frac{1}{2}\!-\!\delta)\gamma k\!\rfloor)^{2\lfloor\!(\frac{1}{2}\!-\!\delta)\gamma k\!\rfloor\!-\!j}}{j^j(c_r\lfloor\!(\frac{1}{2}\!-\!\delta)\gamma k\!\rfloor\!-\!j)^{2\lfloor\!(\frac{1}{2}\!-\!\delta)\gamma k\!\rfloor\!-\!2j}}
\!\cdot\!
\frac{(c_1(\widetilde{k}\!\!-\!\!\frac{1}{\epsilon}\!\!-\!\!x))^{2(\widetilde{k}\!-\!\frac{1}{\epsilon}\!-\!x)\!-\!s}}{s^s(c_1(\widetilde{k}\!\!-\!\!\frac{1}{\epsilon}\!\!-\!\!x)\!\!-\!\!s)^{2(\widetilde{k}\!-\!\frac{1}{\epsilon}\!-\!x)\!-\!2s}} \!\!\right)}$

\medskip
\bigskip
\item \underline{``Transition (phase 2) from $c_1$ to $c_2$''}:\footnote{Assume WLOG that $\frac{c_1-c_2}{\epsilon}\in\mathbb{N}$.}\\\smallskip
$\displaystyle{\max_{c\in\{c_1-\epsilon,c_1-2\epsilon,\ldots,c_2\}}
\ \ \ \max_{j=0}^{\widetilde{k}-2m\lfloor\epsilon \widetilde{k}\rfloor}
\max_{s=1+(\widetilde{k}-\frac{1}{\epsilon}\lfloor\epsilon \widetilde{k}\rfloor)+(m+\widetilde{m}+1)(\lfloor\epsilon\widetilde{k}\rfloor-1)-\lfloor(\frac{1}{2}+\delta)\gamma k\rfloor-j}}$

\medskip
$\displaystyle{\left(\! \frac{(c_r\lfloor\!(\frac{1}{2}\!-\!\delta)\gamma k\!\rfloor)^{2\lfloor\!(\frac{1}{2}\!-\!\delta)\gamma k\!\rfloor\!-\!j}}{j^j(c_r\lfloor\!(\frac{1}{2}\!-\!\delta)\gamma k\!\rfloor\!-\!j)^{2\lfloor\!(\frac{1}{2}\!-\!\delta)\gamma k\!\rfloor\!-\!2j}}
\cdot
\frac{((c\!+\!\epsilon)(\widetilde{k}\!\!-\!\!\frac{1}{\epsilon}\!\!-\!\!x))^{\widetilde{k}\!-\!\frac{1}{\epsilon}\!-\!x}}{s^s((c\!+\!\epsilon)(\widetilde{k}\!\!-\!\!\frac{1}{\epsilon}\!\!-\!\!x)\!-\!s)^{\widetilde{k}\!-\!\frac{1}{\epsilon}\!-\!x\!-\!s}}
\cdot
\frac{(c(\widetilde{k}\!\!-\!\!\frac{1}{\epsilon}\!\!-\!\!x))^{\widetilde{k}\!-\!\frac{1}{\epsilon}\!-\!x\!-\!s}}{(c(\widetilde{k}\!\!-\!\!\frac{1}{\epsilon}\!\!-\!\!x)\!\!-\!\!s)^{\widetilde{k}\!-\!\frac{1}{\epsilon}\!-\!x\!-\!s}} \!\!\right)}$

\medskip
\bigskip
\item \underline{``Second half ($-\delta$) of the procedure''}:\\
\smallskip
$\displaystyle{\max_{i=1}^{m\!-\!\widetilde{m}}
\ \max_{j=0}^{\frac{i}{m\!-\!\widetilde{m}}(\lfloor(\frac{1}{2}\!-\!\delta)\gamma k\rfloor\!-\!(k\!-\!2m\lfloor\epsilon \widetilde{k}\rfloor\!-\!2))\!+\!(k\!-\!2m\lfloor\epsilon \widetilde{k}\rfloor\!-\!2)}
\ \max_{s=1\!+\!(\widetilde{k}\!-\!\frac{1}{\epsilon}\lfloor\epsilon \widetilde{k}\rfloor)\!+\!(m\!+\!\widetilde{m}\!+\!i)(\lfloor\epsilon \widetilde{k}\rfloor)\!-\!\lfloor(\frac{1}{2}\!+\!\delta)\gamma k\rfloor\!-\!j}^{(\widetilde{k}\!-\!\frac{1}{\epsilon}\lfloor\epsilon \widetilde{k}\rfloor)\!+\!(m\!+\!\widetilde{m}\!+\!i\!+\!1)(\lfloor\epsilon \widetilde{k}\rfloor\!-\!1)\!-\!\lfloor(\frac{1}{2}\!+\!\delta)\gamma k\rfloor\!-\!j}}$

\medskip
$\displaystyle{\left(\! \frac{(c_r\lfloor(\frac{1}{2}-\delta)\gamma k\rfloor)^{2\lfloor(\frac{1}{2}-\delta)\gamma k\rfloor-j}}{j^j(c_r\lfloor(\frac{1}{2}-\delta)\gamma k\rfloor-j)^{2\lfloor(\frac{1}{2}-\delta)\gamma k\rfloor-2j}}
\cdot
\frac{(c_2(\widetilde{k}-\frac{1}{\epsilon}-x))^{2(\widetilde{k}-\frac{1}{\epsilon}-x)-s}}{s^s(c_2(\widetilde{k}-\frac{1}{\epsilon}-x)-s)^{2(\widetilde{k}-\frac{1}{\epsilon}-x)-2s}} \!\right)}$
\end{enumerate}
\end{lemma}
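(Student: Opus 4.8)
The plan is to solve {\sc $k$-CWP} by a dynamic programming procedure that builds the $\frac{1}{\epsilon}$ internally node-disjoint small paths one internal node at a time, in the prescribed order, and to keep the tables small by replacing each stored family, after every extension, with a \emph{generalized} representative family computed via Theorem~\ref{theorem:disrep}. A DP state records the index of the piece currently under construction, the accumulated number of nodes used so far from $L$ (during the first half) and from the ``free'' part $E_2$ of $V\setminus(L\cup R)$ (from these two numbers and the piece index the position within the current piece is determined), and the node at which the current partial path ends; for each state we store a weighted family $\cal S$ of vertex sets, each being the node-set of the collection of (parts of) paths built up to that state. Including the endpoint in the state makes all sets of a given $\cal S$ interchangeable for the purpose of extension. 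Extending a state appends one admissible node: while building the $i$-th $L$-piece we forbid the nodes of $R$ and the images of the input functions (symmetrically for the $R$-pieces), we force $\ell_1(i),\ell_2(i)$ to be the first and last nodes of $P_i$, and at each piece boundary we discard states violating the ``solution conditions'' (the lower bounds on the accumulated $L$-count, resp.\ the upper bounds on the $R$-count). The nodes $v^\ell,v^r$ and the function images are kept outside the representative-set universe and enforced by the deterministic disjointness checks above; since for a fixed state all stored sets have the prescribed intersection sizes with $L$ and with $E_2$, Theorem~\ref{theorem:disrep} applies with $\cal S$ as required.

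The reduction step then invokes Theorem~\ref{theorem:disrep} with $t=2$, universes $E_1=L$ and $E_2$, the current $L$-count $j$ as $p_1$ and the current free-$E_2$-count $s$ as $p_2$, capacities $k_1=\lfloor(\frac12+\delta)\gamma k\rfloor$ and $k_2=\widetilde k-\frac1\epsilon-x$ (the numbers of $L$- and free-$E_2$-nodes of a solution), and tradeoff parameters $c_\ell,c_1$. Correctness is inherited from Theorem~\ref{theorem:disrep}: if a stored set $X$ extends to a solution by adding a node-set $Y$, then $Y\cap X=\emptyset$, $|Y\cap L|\le k_1-p_1$ and $|Y\cap E_2|\le k_2-p_2$, so the returned family contains $\widehat X$ with $w(\widehat X)\le w(X)$ disjoint from $Y\cap(L\cup E_2)$; moreover every first-half partial solution (hence $\widehat X$) contains \emph{no} $R$-node, so $\widehat X\cap (Y\cap R)=\emptyset$ as well, and $\widehat X$ really extends to a solution. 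By Theorem~\ref{theorem:disrep} the surviving family then has size $O^*(Q_\ell\cdot Q_1\cdot 2^{o(k)})$, and maximising over the admissible ranges of the piece index $i$, of $j$ and of $s$ — these are exactly fixed by the ``solution conditions'' and by the relation position $=j+s-(i-1)(\lfloor\epsilon\widetilde k\rfloor-1)$ — gives the first displayed expression. The fourth expression is obtained symmetrically in the second half, where the crucial point is that once all $L$-nodes have been placed they are inert (no later piece adds an $L$-node), so they may be projected out of the stored sets; the computation there therefore runs with universes $R$ and $E_2$, capacities $\lfloor(\frac12-\delta)\gamma k\rfloor$ and $\widetilde k-\frac1\epsilon-x$, and parameters $c_r,c_2$, with no $L$-factor surviving.

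The middle piece $P_{m+\widetilde m+1}$, whose internal nodes may lie in $L\cup R$, is handled with $t=3$: universes $L,R,E_2$, parameters $c_\ell,c_r,c_1$, producing the ``Transition (phase~1)'' bound $Q_\ell Q_r Q_1$. The step I expect to be the main obstacle is the passage from parameter $c_1$ to $c_2$ on the $E_2$-part: by the end of the middle piece we want the $E_2$-part governed by the smaller $c_2$ (this is precisely what legitimises the $c_2$-flavoured fourth expression rather than a $c_1$-flavoured one), but parameters cannot be switched for free. I would follow the standard manipulation of \cite{productFam,esarepresentative}: iterate $c=c_1-\epsilon,\,c_1-2\epsilon,\,\ldots,\,c_2$, and for each value re-represent the $E_2$-part with parameter $c$ starting from the family already represented with parameter $c+\epsilon$; the cost of one such step is the size of the $(c+\epsilon)$-family times the query time of the $c$-separator, and multiplying by the $R$-part of the table yields exactly the third displayed expression (its two $E_2$-fractions are, respectively, the size of the family read from the $(c+\epsilon)$-separator and the query time of the $c$-separator). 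Verifying that the per-universe capacity bounds invoked for Theorem~\ref{theorem:disrep} are never violated at the piece boundaries, and pinning down the ranges so that they coincide with the ones under the four $\max$'s, is where the remaining bookkeeping lies.

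Finally, the total running time is the maximum over the four phases of the family sizes just bounded, times the polynomially many DP states and the $2^{o(k)}$ of Theorem~\ref{theorem:disrep} (the $\log|V|$ factors and the floors are absorbed here, and the $\frac1\epsilon$-fold sums over pieces and the $\frac{c_1-c_2}{\epsilon}$ re-representation steps contribute only a further $2^{o(k)}$-type factor, since $\frac1\epsilon$ is a large constant). This yields the stated bound $O^*(X\cdot 2^{o(k)})$, from which Lemma~\ref{lemma:kcwpprocor} follows by observing that as $\epsilon\to0$ the four finite expressions converge to the two continuous maxima appearing there, and by choosing $\epsilon=\epsilon(\lambda,\delta,\gamma,c_1,c_2)$ small enough that all the $2^{o(k)}$ slack is at most $2^{\lambda k}$.
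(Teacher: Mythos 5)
Your proposal is correct and follows essentially the same route as the paper: a dynamic program indexed by (piece, count of $L$- resp.\ $R$-nodes, count of free nodes, last node), with each stored family replaced after every extension by a generalized representative family via Theorem~\ref{theorem:disrep}, a three-universe computation for the middle piece, and the $c_1\to c_2$ switch realized by iterated re-representations at the boundary to the second half, which is exactly how the paper's matrices M, N, K and their ``After'' steps produce the four displayed expressions. Your small deviations (using $t=2$ in the two halves and arguing disjointness from $R$ resp.\ projecting out the spent $L$-nodes, where the paper keeps all three universes with a trivial parameter/capacity for the inactive one) are equivalent and do not change the argument.
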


\medskip
\begin{proof}
Let $IMG$ denote the union of the images of $\ell_1,\ell_2,r_1$ and $r_2$. When we next refer to (generalized) representative families (see Definition \ref{def:disrep}), suppose that $E_1=L$, $E_2=R$, $E_3=V\setminus(L\cup R\cup IMG)$, $k_1=\lfloor(\frac{1}{2}+\delta)\gamma k\rfloor$, $k_2=\lfloor(\frac{1}{2}-\delta)\gamma k\rfloor$ and $k_3=\widetilde{k}-\frac{1}{\epsilon}-x$.

We now present a standard dynamic programming-based procedure to prove the lemma, in which we embed representative sets computations (after each computation of a family of partial solutions, we compute a family that represents it). To this end, we use the following three matrices (a formal definition of the partial solutions that they store is given below):
\begin{enumerate}
\item M has an entry $[i,j,s,v]$ for all
$i\in\{1,\ldots,m\!+\!\widetilde{m}\}$,
$j\in\{\frac{i-1}{m+\widetilde{m}}(\lfloor(\frac{1}{2}\!+\!\delta)\gamma k\rfloor\!-\!(k\!-\!2m\lfloor\epsilon \widetilde{k}\rfloor\!-\!2)),\ldots,\lfloor(\frac{1}{2}\!+\!\delta)\gamma k\rfloor\}$,
$s\in\{1\!+\!(i\!-\!1)(\lfloor\epsilon \widetilde{k}\rfloor\!-\!1)\!-\!j,\ldots,i(\lfloor\epsilon \widetilde{k}\rfloor\!-\!1)\!-\!j\}$,
and $v\in V\setminus (R\cup IMG)$
such that [$(j\!+\!s=(i\!-\!1)(\lfloor\epsilon\widetilde{k}\rfloor\!-\!1)\!+\!1) \rightarrow (\{\ell_1(i),v\}\in E)$] and [$(j\!+\!s=i(\lfloor\epsilon\widetilde{k}\rfloor\!-\!1)) \rightarrow (\{v,\ell_2(i)\}\in E)$].

\bigskip
Informally: M concerns the first (approximately) half of the execution. The parameter $i$ specifies the number of the piece that we are currently handling; the parameters $j$ and $s$ concern the number of nodes that we have used (so far) from $L$ and $V\setminus(L\cup R\cup IMG)$, respectively; the parameter $v$ is the last node that we added to the currently constructed path.

\bigskip
\item N has an entry $[i,j,s,v]$ for all
$i\in\{\lfloor(\frac{1}{2}+\delta)\gamma k\rfloor\!-\!(k\!-\!2m\lfloor\epsilon \widetilde{k}\rfloor\!-\!2),\ldots,\lfloor(\frac{1}{2}\!+\!\delta)\gamma k\rfloor\}$,
$j\in\{0,\ldots,k\!-\!2m\lfloor\epsilon \widetilde{k}\rfloor\!-\!2\}$,
$s\in\{1\!+\!(m\!+\!\widetilde{m})(\lfloor\epsilon\widetilde{k}\rfloor\!-\!1)\!-\!i\!-\!j,\ldots,(\widetilde{k}\!-\!\frac{1}{\epsilon}\lfloor\epsilon \widetilde{k}\rfloor)\!+\!(m\!+\!\widetilde{m}\!+\!1)(\lfloor\epsilon \widetilde{k}\rfloor\!-\!1)\!-\!\lfloor(\frac{1}{2}\!+\!\delta)\gamma k\rfloor\!-\!j\}$,
and $v\in V\setminus IMG$
such that [$(i\!+\!j\!+\!s=(m\!+\!\widetilde{m})(\lfloor\epsilon\widetilde{k}\rfloor\!-\!1)\!+\!1) \rightarrow (\{v^\ell,v\}\in E)$] and [$(i\!+\!j+\!\!s=(m\!+\!\widetilde{m}\!+\!1)(\lfloor\epsilon\widetilde{k}\rfloor\!-\!1)) \rightarrow (\{v,v^r\}\in E)$].

\bigskip
Informally: N concerns the (approximately) middle piece. The parameters $i$, $j$ and $s$ concern the number of nodes that we have used (so far) from $L$,$R$ and $V\setminus(L\cup R\cup IMG)$, respectively; the parameter $v$ is the last node that we added to the currently constructed path.

\bigskip
\item K has an entry $[i,j,s,v]$ for all
$i\in\{1,\ldots,m\!-\!\widetilde{m}\}$,
$j\in\{0,\ldots,\frac{i}{m\!-\!\widetilde{m}}(\lfloor(\frac{1}{2}\!-\!\delta)\gamma k\rfloor\!-\!(k\!-\!2m\lfloor\epsilon \widetilde{k}\rfloor\!-\!2))\!+\!(k\!-\!2m\lfloor\epsilon \widetilde{k}\rfloor\!-\!2)\}$,
$s\in\{1\!+\!(\widetilde{k}\!-\!\frac{1}{\epsilon}\lfloor\epsilon \widetilde{k}\rfloor)\!+\!(m\!+\!\widetilde{m}\!+\!i)(\lfloor\epsilon \widetilde{k}\rfloor\!-\!1)\!-\!\lfloor(\frac{1}{2}+\delta)\gamma k\rfloor\!-\!j,\ldots,(\widetilde{k}\!-\!\frac{1}{\epsilon}\lfloor\epsilon \widetilde{k}\rfloor)\!+\!(m\!+\!\widetilde{m}\!+\!i\!+\!1)(\lfloor\epsilon \widetilde{k}\rfloor\!-\!1)\!-\!\lfloor(\frac{1}{2}+\delta)\gamma k\rfloor\!-\!j\}$,
and $v\in V\setminus (L\cup IMG)$
such that [$(\lfloor(\frac{1}{2}\!+\!\delta)\gamma k\rfloor\!+\!j\!+\!s=(i\!-\!1)(\lfloor\epsilon\widetilde{k}\rfloor\!-\!1)\!+\!1) \rightarrow (\{r_1(i),v\}\in E)$] and [$(\lfloor(\frac{1}{2}\!+\!\delta)\gamma k\rfloor\!+\!j\!+\!s=i(\lfloor\epsilon\widetilde{k}\rfloor\!-\!1)) \rightarrow (\{v,r_2(i)\}\in E)$].

\bigskip
Informally: K concerns the second (approximately) half of the execution. The parameter $i$ specifies the number of the piece that we are currently handling; the parameters $j$ and $s$ concern the number of nodes that we have used (so far) from $R$ and $V\setminus(L\cup R\cup IMG)$, respectively; the parameter $v$ is the last node that we added to the currently constructed path.
\end{enumerate}

\smallskip
{\noindent We next assume that a reference to an undefined entry returns $\emptyset$. The other entries will store the following families of partial solutions, where we assume that we track the weights of the sets of paths corresponding to the partial solutions:\footnote{When computing an entry and obtaining the same partial solution from several different sets of paths, we store the minimal weight.}}
\begin{enumerate}
\item M$[i,j,s,v]$: A family that min $(k_1-j,k_2,k_3-s)$-represents the family that contains any union of node-sets of $i$ simple internally node-disjoint directed paths $P_1,P_2,\ldots,P_i$, excluding the nodes in $IMG$, which satisfy the following conditions:
	\begin{itemize}
	\item $P_1,P_2,\ldots P_{i-1}$ satisfy ``solution condition 1''.
	\item The first and last nodes of $P_i$ are $\ell_1(i)$ and $v$, respectively, and it does not contain internal nodes from $(R\cup IMG)$. Moreover, the total number of nodes from $L$ and $V\setminus(L\cup IMG)$ that are contained in the paths $P_1,\ldots,P_i$ are exactly $j$ and $s$, respectively.
	\end{itemize}

\bigskip
\item N$[i,j,s,v]$: A family that min $(k_1-i,k_2-j,k_3-s)$-represents the family that contains any union of node-sets of $\widetilde{i}=(m+\widetilde{m}+1)$ simple internally node-disjoint directed paths $P_1,P_2,\ldots,P_{\widetilde{i}}$, excluding the nodes in $IMG$, which satisfy the following conditions:
	\begin{itemize}
	\item $P_1,P_2,\ldots P_{\widetilde{i}-1}$ satisfy ``solution condition 1''.
	\item The first and last nodes of $P_{\widetilde{i}}$ are $v^\ell$ and $v$, respectively, and it does not contain internal nodes from $IMG$. Moreover, the total number of nodes from $L$, $R$ and $V\setminus(L\cup R\cup IMG)$ that are contained in the paths $P_1,\ldots,P_{\widetilde{i}}$ are exactly $i$, $j$ and $s$, respectively.
	\end{itemize}

\bigskip
\item K$[i,j,s,v]$: A family that min $(0,k_2-j,k_3-s)$-represents the family that contains any union of node-sets of $\widetilde{i}=m+\widetilde{m}+1+i$ simple internally node-disjoint directed paths $P_1,P_2,\ldots,P_{\widetilde{i}}$, excluding the nodes in $IMG$, which satisfy the following conditions:
	\begin{itemize}
	\item $P_1,P_2,\ldots P_{\widetilde{i}-1}$ satisfy ``solution conditions 1--3''.
	\item The first and last nodes of $P_i$ are $r_1(i)$ and $v$, respectively, and it does not contain internal nodes from $(L\cup IMG)$. Moreover, the total number of nodes from $R$ and $V\setminus(L\cup R\cup IMG)$ that are contained in the paths $P_1,\ldots,P_{\widetilde{i}}$ are exactly $j$ and $s$, respectively.
	\end{itemize}
\end{enumerate}

\bigskip
{\noindent The entries are computed in the following order:}
\begin{enumerate}
\item For $i=1,\ldots,m\!+\!\widetilde{m}$:
	\begin{itemize}
	\item For $j=\frac{i-1}{m+\widetilde{m}}(\lfloor(\frac{1}{2}\!+\!\delta)\gamma k\rfloor\!-\!(k\!-\!2m\lfloor\epsilon \widetilde{k}\rfloor\!-\!2)),\ldots,\lfloor(\frac{1}{2}\!+\!\delta)\gamma k\rfloor$:
		\begin{itemize}
			\item For $s=1\!+\!(i\!-\!1)(\lfloor\epsilon \widetilde{k}\rfloor\!-\!1)\!-\!j,\ldots,i(\lfloor\epsilon \widetilde{k}\rfloor\!-\!1)\!-\!j$:
				\begin{itemize}
					\item Compute all entries of the form M$[i,j,s,v]$.
				\end{itemize}
		\end{itemize}
	\end{itemize}

\smallskip
\item For $i=\lfloor(\frac{1}{2}\!+\!\delta)\gamma k\rfloor\!-\!(k\!-\!2m\lfloor\epsilon \widetilde{k}\rfloor\!-\!2),\ldots,\lfloor(\frac{1}{2}\!+\!\delta)\gamma k\rfloor$:
	\begin{itemize}
	\item For $j=0,\ldots,k\!-\!2m\lfloor\epsilon \widetilde{k}\rfloor\!-\!2$:
		\begin{itemize}
			\item For $s=1\!+\!(m\!+\!\widetilde{m})(\lfloor\epsilon\widetilde{k}\rfloor\!-\!1)\!-\!i\!-\!j,\ldots,(\widetilde{k}\!-\!\frac{1}{\epsilon}\lfloor\epsilon \widetilde{k}\rfloor)\!+\!(m\!+\!\widetilde{m}\!+\!1)(\lfloor\epsilon \widetilde{k}\rfloor\!-\!1)\!-\!\lfloor(\frac{1}{2}\!+\!\delta)\gamma k\rfloor\!-\!j$:
				\begin{itemize}
					\item Compute all entries of the form K$[i,j,s,v]$.
				\end{itemize}
		\end{itemize}
	\end{itemize}

\smallskip	
\item For $i=1,\ldots,m\!-\!\widetilde{m}$:
	\begin{itemize}
	\item For $j=0,\ldots,\frac{i}{m\!-\!\widetilde{m}}(\lfloor(\frac{1}{2}\!-\!\delta)\gamma k\rfloor\!-\!(k\!-\!2m\lfloor\epsilon \widetilde{k}\rfloor\!-\!2))\!+\!(k\!-\!2m\lfloor\epsilon \widetilde{k}\rfloor\!-\!2)$:
		\begin{itemize}
			\item For $s=1\!+\!(\widetilde{k}\!-\!\frac{1}{\epsilon}\lfloor\epsilon \widetilde{k}\rfloor)\!+\!(m\!+\!\widetilde{m}\!+\!i)(\lfloor\epsilon \widetilde{k}\rfloor\!-\!1)\!-\!\lfloor(\frac{1}{2}+\delta)\gamma k\rfloor\!-\!j,\ldots,(\widetilde{k}\!-\!\frac{1}{\epsilon}\lfloor\epsilon k\rfloor)\!+\!(m\!+\!\widetilde{m}\!+\!i\!+\!1)(\lfloor\epsilon \widetilde{k}\rfloor\!-\!1)\!-\!\lfloor(\frac{1}{2}+\delta)\gamma k\rfloor\!-\!j$:
				\begin{itemize}
					\item Compute all entries of the form N$[i,j,s,v]$.
				\end{itemize}
		\end{itemize}
	\end{itemize}
\end{enumerate}

\bigskip
{\noindent We now give the recursive formulas using which the entries are computed.\footnote{Entries whose computation is not specified hold $\emptyset$.} }

\bigskip
\underline{The matrix M}:
	\begin{enumerate}
	\item If $j+s=1$:
		\begin{enumerate}
		\item If $j=1$ and $v\in L$: M$[i,j,s,v] = \{\{v\}\}$.
		\item Else if $s=1$ and $v\notin L$: M$[i,j,s,v] = \{\{v\}\}$.
		\end{enumerate}

	\item Else:
		\begin{enumerate}
		\item If $v\in L$: M$[i,j,s,v] =$\\$\displaystyle{\{\{v\}\cup A: A\in\!\bigcup_{u\in V}\!\mathrm{M}[i\!-\!1,j\!-\!1,s,u]\} \cup \{\{v\}\cup A: A\in\!\!\!\!\!\bigcup_{(u,v)\in E}\!\!\!\!\!\mathrm{M}[i,j\!-\!1,s,u]\}}$.
				
		\item Else: M$[i,j,s,v] =$\\$\displaystyle{\{\{v\}\cup A: A\in\!\bigcup_{u\in V}\!\mathrm{M}[i\!-\!1,j,s\!-\!1,u]\} \cup \{\{v\}\cup A: A\in\!\!\!\!\!\bigcup_{(u,v)\in E}\!\!\!\!\!\mathrm{M}[i,j,s\!-\!1,u]\}}$.
	\end{enumerate}
	\end{enumerate}

\vspace{-1.5em}	
	\begin{itemize}
	\item After 2: Replace the result by a family that min $(k_1-j,k_2,k_3-s)$-represents it, computed using $c_\ell$, 1 and $c_1$, corresponding to $E_1$, $E_2$ and $E_3$, respectively.
	\end{itemize}

\smallskip
\underline{The matrix N}:
	\begin{enumerate}
	\item If $i+j+s=1+(m+\widetilde{m})(\lfloor\epsilon\widetilde{k}\rfloor-1)$:
		\begin{enumerate}
		\item If $j=0$ and $v\in L$: N$\displaystyle{[i,j,s,v] = \{\{v\}\cup A: A\in\!\bigcup_{u\in V}\!\mathrm{M}[(m+\widetilde{m}),j\!-\!1,s,u]\}}$.
		\item Else if $j=1$ and $v\in R$: N$\displaystyle{[i,j,s,v] = \{\{v\}\cup A: A\in\!\bigcup_{u\in V}\!\mathrm{M}[(m+\widetilde{m}),j,s,u]\}}$.
		\item Else if $j=0$ and $v\in V\setminus(L\cup R\cup IMG)$:\\N$\displaystyle{[i,j,s,v] = \{\{v\}\cup A: A\in\!\bigcup_{u\in V}\!\mathrm{M}[(m+\widetilde{m}),j,s\!-\!1,u]\}}$.
		\end{enumerate}
	
	\item Else:
		\begin{enumerate}
		\item If $v\in L$: N$\displaystyle{[i,j,s,v] = \{\{v\}\cup A: A\in\!\bigcup_{(u,v)\in V}\!\mathrm{N}[i\!-\!1,j,s,u]\}}$.
		\item Else if $v\in R$: N$\displaystyle{[i,j,s,v] = \{\{v\}\cup A: A\in\!\bigcup_{(u,v)\in V}\!\mathrm{N}[i,j\!-\!1,s,u]\}}$.
		\item Else: N$\displaystyle{[i,j,s,v] = \{\{v\}\cup A: A\in\!\bigcup_{(u,v)\in V}\!\mathrm{N}[i,j,s\!-\!1,u]\}}$.
		\end{enumerate}
	\end{enumerate}

\vspace{-1.5em}
	\begin{itemize}
	\item After 1 and 2: Replace the result by a family that min $(k_1-i,k_2-j,k_3-s)$-represents it, computed using $c_\ell$, $c_r$ and $c_1$, corresponding to $E_1$, $E_2$ and $E_3$, respectively.
	\end{itemize}

\smallskip	
\underline{The matrix K}:
	\begin{enumerate}
	\item If $j+s=1+(\widetilde{k}-\frac{1}{\epsilon}\lfloor\epsilon \widetilde{k}\rfloor)+(m+\widetilde{m}+1)(\lfloor\epsilon\widetilde{k}\rfloor-1)-(\lfloor(\frac{1}{2}+\delta)\gamma k\rfloor$:
		\begin{enumerate}
		\item If $v\in R$: K$\displaystyle{[i,j,s,v] = \{\{v\}\cup A: A\in\!\bigcup_{u\in V}\!\mathrm{N}[(\lfloor(\frac{1}{2}\!+\!\delta)\gamma k\rfloor,j\!-\!1,s,u]\}}$.
		\item Else: K$\displaystyle{[i,j,s,v] = \{\{v\}\cup A: A\in\!\bigcup_{u\in V}\!\mathrm{N}[(\lfloor(\frac{1}{2}\!+\!\delta)\gamma k\rfloor,j,s\!-\!1,u]\}}$.
		\end{enumerate}
	
	\item Else:
		\begin{enumerate}
		\item If $v\in R$: K$[i,j,s,v] =$\\$\displaystyle{\{\{v\}\cup A: A\in\!\bigcup_{u\in V}\!\mathrm{K}[i\!-\!1,j\!-\!1,s,u]\} \cup \{\{v\}\cup A: A\in\!\!\!\!\!\bigcup_{(u,v)\in E}\!\!\!\!\!\mathrm{K}[i,j\!-\!1,s,u]\}}$.
		\item Else: K$[i,j,s,v] =$\\$\displaystyle{\{\{v\}\cup A: A\in\!\bigcup_{u\in V}\!\mathrm{K}[i\!-\!1,j,s\!-\!1,u]\} \cup \{\{v\}\cup A: A\in\!\!\!\!\!\bigcup_{(u,v)\in E}\!\!\!\!\!\mathrm{K}[i,j,s\!-\!1,u]\}}$.
		\end{enumerate}
	\end{enumerate}	

\vspace{-1.5em}
	\begin{itemize}
	\item After 1: Perform the following computation.
		\begin{enumerate}
		\item Initialize $A$ to be the result.
		\item For $c=c_1-\epsilon,c_1-2\epsilon,\ldots,c_2$:
			\begin{itemize}
			\item Replace $A$ by a family that $(k_1,k_2-j,k_3-s)$-represents it, computed using 1, $c_r$ and $c$, corresponding to $E_1$, $E_2$ and $E_3$, respectively.
			\end{itemize}
		\item Replace the original result by $A$.
		\end{enumerate}
	\item After 2: Replace the result by a family that min $(k_1,k_2-j,k_3-s)$-represents it, computed using 1, $c_r$ and $c_2$, corresponding to $E_1$, $E_2$ and $E_3$, respectively.
	\end{itemize}

{\noindent Finally, we return yes iff at least one entry of the form K$[\frac{1}{\epsilon},k_2,k_3,v]$ contains a set of weight at most $(W-w((v,r_2(\frac{1}{\epsilon}))))$.}

By Theorem \ref{theorem:disrep}, up to a factor of $2^{o(k)}$, the running time required to compute M is bounded by $O^*$ of the first expression in the lemma, the running time required to compute N is bounded by $O^*$ of the second expression, the running time required to compute the entries of K in which $i=1$ and $j+s=1+(\widetilde{k}-\frac{1}{\epsilon}\lfloor\epsilon \widetilde{k}\rfloor)+(m+\widetilde{m}+1)(\lfloor\epsilon\widetilde{k}\rfloor-1)-(\lfloor(\frac{1}{2}+\delta)\gamma k\rfloor$ is bounded by $O^*$ of the third expression, and the running time required to compute the other entries of K is bounded by $O^*$ of the fourth expression.\qed
\end{proof}

\subsection{A Deterministic Algorithm for Weighted $k$-Path}\label{section:kpathdet}

{\noindent Fix $0\!<\!\delta,\gamma\!<\!0.1$ and $c_1,c_2,c_\ell,c_r\!\geq\! 1$ such that~$c_1\!\geq\! c_2$, and let $\epsilon\!=\!\epsilon(\frac{1}{10^{11}},\delta,\gamma,c_1,c_2)$.

\begin{algorithm}
\caption{\alg{PathAlg}($G=(V,E),w,W,k$)}
\begin{algorithmic}[1]
\STATE let $v_1,v_2,\ldots,v_n$ be an arbitrary ordering of the nodes in $V$.
\STATE get an $(n,
\lfloor(\frac{1}{2}+\delta)\gamma k\rfloor+\lfloor(\frac{1}{2}-\delta)\gamma k\rfloor,
\lfloor(\frac{1}{2}-\delta)\gamma k\rfloor)$-universal set ${\cal F}$ by using Theorem~\ref{theorem:splitter}.

\FORALL{$f\in {\cal F}$}
	\FORALL{$U\subseteq V$ such that $|U|=\frac{1}{\epsilon}+1$}\label{step:firstcut0}
	\FOR{$i=1,2,\ldots,n$}\label{step:firstcut}
		\STATE\label{step:divandcol} let $L=\{v_j\!\in\! V\setminus U: j\!\geq\! i, f(v_j)\!=\!0\}$ and $R=\{v_j\!\in\! V\setminus U: j\!\geq\! i, f(v_j)\!=\!1\}$.
		\FORALL{$\ell_1,\ell_2\!: \{1,2,\ldots,m\!+\!\widetilde{m}\}\rightarrow U$, $\ r_1,r_2\!: \{1,2,\ldots,m\!-\!\widetilde{m}\}\rightarrow U$,  $\ v^{\ell},v^r\!\!\in\!U$}\label{step:functions}
			\IF{$(G,w,W,k,L,R,\ell_1,\ell_2,r_1,r_2,v^\ell,v^r)$ is an input for {\sc $k$-CWP}}\label{step:legalcheck}
				\STATE\label{step:runkcwpro} {\bf if} the procedure of Lemma \ref{lemma:kcwpprocor} accepts it {\bf then} accept. {\bf end if}
			\ENDIF
		\ENDFOR
	\ENDFOR
	\ENDFOR
\ENDFOR

\STATE reject.
\end{algorithmic}
\end{algorithm}

We now give \alg{PathAlg}, a deterministic algorithm for {\sc Weighted $k$-Path} (follow the pseudocode below). \alg{PathAlg} first orders the nodes in $V$ (Step 1), and computes a universal set, on whose functions it iterates, that will be soon relevant to the computation of $L$ and $R$ (Steps 2--3). It performs an exhaustive search (in Step 4) as a part of the balanced cutting (of the universe) technique, attempting to capture (in a ``small'' set $U$) the nodes that should connect that small directed paths that are part of a solution to (a soon to be constructed instance of) {\sc $k$-CWP}. Then, it performs (in Step \ref{step:firstcut}) another exhaustive search (that is also a part of the balanced cutting technique), iterating over every node $v_i\!\in\! V$ (Step \ref{step:firstcut}), to capture exactly $(\lfloor(\frac{1}{2}\!+\!\delta)\gamma k\rfloor\!+\!\lfloor(\frac{1}{2}\!-\!\delta)\gamma k\rfloor)$ of the nodes on a path that is a solution (if one exists) in the set $\{v_i,v_{i+1},\!...,v_n\}\setminus U$ (informally, we compute ``blue part of the universe''). It can then (in Step \ref{step:divandcol}) obtain the sets $L$ and $R$ by using a divide-and-color step---informally, it inserts all the ``blue elements'' to whose indices the currently examined function assigns '0' ('1') to $L$ ($R$). Afterwards it further cuts the universe (which is $V$) by exhaustively iterating over every option to choose four functions $\ell_1,\ell_2,r_1,r_2$ that are legal according to Step \ref{step:legalcheck}, finishing the phase of balanced cutting in Strategy~III (Section \ref{section:strategies}). At this point, one should observe that \alg{PathAlg} does not explicitly cut the universe, obtaining several disjoint universes, which will be extremely inefficient. In Steps \ref{step:firstcut0} and \ref{step:functions} it only considers every option to choose which nodes should cut the universe in a balanced manner (those are the nodes in the images of $\ell_1,\ell_2,r_1$ and $r_2$), which is all the information necessary as input for {\sc $k$-CWP}. There are $|V|^{O(\frac{1}{\epsilon})}\!=\!O^*(1)$ such options. Overall, \alg{PathAlg} obtains a set of inputs for {\sc $k$-CWP}, accepting (in Step \ref{step:runkcwpro}) {\em iff} at least one of them is a yes-instance.}

In the following result, $X$ is bounded as in Lemma \ref{lemma:kcwpprocor}.

\begin{lemma}\label{lemma:pathalgcor}
\alg{PathAlg} solves {\sc Weighted $k$-Path} in time $O^*(X\cdot {\gamma k\choose (\frac{1}{2}-\delta)\gamma k} 2^{\frac{1}{10^{10}}k})$.
\end{lemma}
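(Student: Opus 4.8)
The plan is to prove the two parts of the statement --- the running time of \alg{PathAlg} and its correctness --- separately; the running time and the ``accepting $\Rightarrow$ yes-instance'' direction are routine, while the ``yes-instance $\Rightarrow$ accepting'' direction carries essentially all of the work. For the running time I would just multiply out the loops. Writing $x=\lfloor(\frac12+\delta)\gamma k\rfloor+\lfloor(\frac12-\delta)\gamma k\rfloor\ (\le\gamma k)$, Theorem~\ref{theorem:splitter} makes Step~2 build a set ${\cal F}$ of size $O^*({x\choose\lfloor(\frac12-\delta)\gamma k\rfloor}2^{o(k)})$, and since $\lfloor(\frac12-\delta)\gamma k\rfloor\le\frac x2$ for $k$ large and ${n\choose t}$ increases in $n$, this is at most ${\gamma k\choose(\frac12-\delta)\gamma k}2^{o(k)}$. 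The loops of Steps~\ref{step:firstcut0}, \ref{step:firstcut} and \ref{step:functions} together contribute only $|V|^{O(1/\epsilon)}=O^*(1)$ iterations (as $\frac1\epsilon$ is a fixed constant), the legality test of Step~\ref{step:legalcheck} is polynomial, and each call to the procedure of Lemma~\ref{lemma:kcwpprocor} in Step~\ref{step:runkcwpro}, run with $\lambda=\frac1{10^{11}}$, costs $O^*(X\cdot 2^{k/10^{11}})$. Multiplying and absorbing $2^{o(k)}\cdot 2^{k/10^{11}}\le 2^{k/10^{10}}$ then yields the claimed bound.

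For ``accepting $\Rightarrow$ yes-instance'': an accepting run accepts in Step~\ref{step:runkcwpro}, so by Lemma~\ref{lemma:kcwpprocor} one of the constructed {\sc $k$-CWP} instances has $\frac1\epsilon$ internally node-disjoint directed paths $P_1,\dots,P_{1/\epsilon}$ of total weight $\le W$ meeting the solution conditions. I would argue that the function conditions force the multiset of first nodes and the multiset of last nodes of the $P_i$ to coincide up to one element each, so the pieces chain uniquely into a single directed path $Q$; a short count using the prescribed internal-node numbers gives $2m(\lfloor\epsilon\widetilde k\rfloor-1)+(k-2m\lfloor\epsilon\widetilde k\rfloor-2)+(\frac1\epsilon+1)=k$ nodes for $Q$, the edge set of $Q$ is the disjoint union of those of the $P_i$ so $w(Q)\le W$, and the pieces' internal nodes avoid the function images and $\{v^\ell,v^r\}$ so $Q$ is simple; hence $(G,w,W,k)$ is a yes-instance.

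For ``yes-instance $\Rightarrow$ accepting'', fix a minimum-weight simple directed $k$-path $P=u_1\to\cdots\to u_k$ with $w(P)\le W$; I would exhibit a choice of $f,U,i,\ell_1,\ell_2,r_1,r_2,v^\ell,v^r$ for which the constructed {\sc $k$-CWP} instance is a yes-instance witnessed by $P$ cut into the prescribed pieces and reindexed, so that Lemma~\ref{lemma:kcwpprocor} forces acceptance. Concretely: pick cut positions $1=a_0<\dots<a_{1/\epsilon}=k$ with $2m$ of the gaps equal to $\lfloor\epsilon\widetilde k\rfloor$ and one (placed on a blue-dense segment of $P$, see below) equal to $k-2m\lfloor\epsilon\widetilde k\rfloor-1$; set $U=\{u_{a_0},\dots,u_{a_{1/\epsilon}}\}$ (a set of $\frac1\epsilon+1$ nodes, hence among the sets tried in Step~\ref{step:firstcut0}), and let $\pi_1,\dots,\pi_{1/\epsilon}$ be the interiors, one ``big'' piece $\pi_{\mathrm{big}}$ of size $\mu:=k-2m\lfloor\epsilon\widetilde k\rfloor-2$ and the rest of size $\lfloor\epsilon\widetilde k\rfloor-1$. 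Choose $i$ so that the ``blue part'' $B=\{v_j\in V_P\setminus U: j\ge i\}$ of $P$ has exactly $x$ nodes (possible since $|B|$ rises by steps of $0$ or $1$ from $0$ to $k-\frac1\epsilon-1\ge x$). Reorder the pieces into the $\frac1\epsilon$ solution slots so that $\pi_{\mathrm{big}}$ occupies slot $m+\widetilde m+1$, a well-chosen subgroup of $m+\widetilde m$ non-big pieces occupies slots $1,\dots,m+\widetilde m$ in decreasing order of $|B\cap\pi|$, and the remaining $m-\widetilde m$ non-big pieces occupy slots $m+\widetilde m+2,\dots,\frac1\epsilon$ in increasing order of $|B\cap\pi|$; read $\ell_1,\ell_2,r_1,r_2,v^\ell,v^r$ off as the first and last nodes of the pieces in the slots (they lie in $U$ and satisfy the function conditions). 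Finally use the $(|V|,x,\lfloor(\frac12-\delta)\gamma k\rfloor)$-universal property of ${\cal F}$ to get an $f$ assigning $0$ to the $k_1:=\lfloor(\frac12+\delta)\gamma k\rfloor$ nodes of $B$ destined for $L$ and $1$ to the $k_2:=\lfloor(\frac12-\delta)\gamma k\rfloor$ destined for $R$, so that $L\cap V_P$ and $R\cap V_P$ are as intended.

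The substantive step --- and the one I expect to be the main obstacle --- is to make the ``well-chosen subgroup'' and the blue-dense placement of $\pi_{\mathrm{big}}$ exist, and to verify that the resulting configuration satisfies solution conditions~1--3, i.e., the prefix-count inequalities (the rigorous form of the ``at worst approximately balanced'' claim of Appendix~\ref{sec:kPathIntuition}). The plan is: (i) using that each piece meets $B$ in at most $\mu+O(1)$ nodes while $\pi_{\mathrm{big}}$ can be placed to meet $B$ in at least $\max_{\pi\ \mathrm{non\text{-}big}}|B\cap\pi|$ nodes, and that the $m+\widetilde m$ largest non-big intersections sum to at least $\frac{m+\widetilde m}{2m}$ of the non-big total, which exceeds $k_1$ by a $\Theta(k)$ margin dominating all floor losses, find via a one-swap-at-a-time density argument an $(m+\widetilde m)$-subset $S$ of the non-big pieces with $\sum_{\pi\in S}|B\cap\pi|\in[\,k_1-|B\cap\pi_{\mathrm{big}}|,\,k_1\,]$ and let $A=S\cup\{\pi_{\mathrm{big}}\}$; (ii) split the $|B\cap\pi_{\mathrm{big}}|$ blue nodes of $\pi_{\mathrm{big}}$ between $L$ and $R$ in the amounts $k_1-\sum_{\pi\in S}|B\cap\pi|\ge0$ and $\sum_{\pi\in S}|B\cap\pi|-k_1\in[0,\mu]$, and send all other blue nodes of $A$ (resp.\ of its complement) to $L$ (resp.\ $R$), which makes the $L$- and $R$-counts come out to exactly $k_1$ and $k_2$; (iii) observe that, with the front (resp.\ back) pieces sorted in decreasing (resp.\ increasing) order of $|B\cap\pi|$ and the middle slot carrying at most $\mu$ nodes of each colour, the prefix sums automatically clear the thresholds $\frac{i}{m+\widetilde m}(k_1-\mu)$ of condition~1 and $\frac{i}{m-\widetilde m}(k_2-\mu)+\mu$ of condition~3 (condition~2 being immediate from $|\pi_{\mathrm{big}}|=\mu$). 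Throughout, the floor functions and the ``$-1$'' in $\widetilde k=k-1$ cause only $O(1)$-per-piece, hence $O(1/\epsilon)=O(1)$ total, discrepancies, swamped by the $\Theta(k)$ slack of (i); carrying out this bookkeeping cleanly, rather than any conceptual difficulty, is the crux.
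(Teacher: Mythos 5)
Your overall architecture is the same as the paper's: the running-time accounting, the ``accepting $\Rightarrow$ yes'' chaining of the $\frac{1}{\epsilon}$ pieces, and, for the hard direction, cutting $P$ into $2m$ pieces of $\lfloor\epsilon\widetilde{k}\rfloor+1$ nodes plus one slightly larger middle piece, choosing the index $i$ so that exactly $x=\lfloor(\frac{1}{2}+\delta)\gamma k\rfloor+\lfloor(\frac{1}{2}-\delta)\gamma k\rfloor$ path nodes are ``blue'', an exchange argument to split the non-big pieces into an $(m+\widetilde{m})$-group and an $(m-\widetilde{m})$-group with blue sums at most $k_1$ and $k_2$, topping up with a split of the middle piece's blue nodes to reach the exact counts, sorting the front pieces by decreasing and the back pieces by increasing blue count, and invoking the $(n,x,\lfloor(\frac{1}{2}-\delta)\gamma k\rfloor)$-universal set to realize $L^*\subseteq L$, $R^*\subseteq R$. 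In fact your item (iii) spells out the prefix-sum verification of solution conditions 1 and 3 that the paper leaves implicit, and that part is correct.

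The genuine gap is the existence of the ``blue-dense placement'' of $\pi_{\mathrm{big}}$, which is exactly the hinge of step (i). In your construction $U$ consists only of the $\frac{1}{\epsilon}+1$ endpoints of the one cutting you commit to, and $B$ is defined only afterwards (via $U$ and the choice of $i$); so the assertion that ``$\pi_{\mathrm{big}}$ can be placed to meet $B$ in at least $\max_{\pi\ \mathrm{non\text{-}big}}|B\cap\pi|$ nodes'' is circular: every candidate placement induces its own $U$, hence its own $i$ and its own $B$, and you never argue that some placement majorizes the non-big pieces \emph{with respect to the $B$ it itself induces}. This majorization is not a cosmetic bookkeeping point: it is what bounds each single swap by $b=|B\cap\pi_{\mathrm{big}}|$, so that the exchange walk cannot jump over the window $[k_1-b,\,k_1]$, and hitting that window exactly is indispensable because {\sc $k$-CWP} requires \emph{exactly} $\lfloor(\frac{1}{2}+\delta)\gamma k\rfloor$ solution nodes from $L$ and $\lfloor(\frac{1}{2}-\delta)\gamma k\rfloor$ from $R$ (and the universal set is queried with exactly those cardinalities) --- unlike the prefix inequalities, there is no $\Theta(k)$ slack here to absorb even an $O(1)$ miss. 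The paper breaks the circularity by a device absent from your proposal: it puts into $U$ the first and last nodes of \emph{all} $\frac{1}{\epsilon}$ candidate middle subpaths (the $[(j-1)\lfloor\epsilon\widetilde{k}\rfloor+1]$-st and $[(j-1)\lfloor\epsilon\widetilde{k}\rfloor+k-2m\lfloor\epsilon\widetilde{k}\rfloor]$-th nodes of $P$ for every $j$, plus the $k$-th), so that $U$, $i$ and hence $B$ are fixed \emph{before} the middle piece is chosen, and only then selects the candidate subpath containing the maximum number of blue nodes; the inequality $|B\cap\pi|\leq|B\cap\pi_{\mathrm{big}}|$ then holds by construction, and the rest of your steps (i)--(iii) go through. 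To repair your proof you either need this placement-independent $U$ (at the cost of $|U|=O(\frac{1}{\epsilon})$ rather than exactly $\frac{1}{\epsilon}+1$, a harmless adjustment) or some other argument producing a placement that is maximal for its own blue set; as written, step (i) does not establish it.
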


\begin{proof} 
Recall that $m=\frac{1}{2}(\frac{1}{\epsilon}-1),\widetilde{m}=\delta(\frac{1}{\epsilon}-1)$ and $\widetilde{k}=k-1$.

\myparagraph{First Direction} For the easier direction, suppose that \alg{PathAlg} accepts, and let $L$, $R$, $\ell_1,\ell_2,r_1$ and $r_2$ be the parameters corresponding to the iteration in which it accepts. We get that $(G,w,W,k,L,R,\ell_1,\ell_2,r_1,r_2)$ is a yes-instance of {\sc $k$-CWP}, and can thus denote by $P_1,P_2,\ldots,P_{\frac{1}{\epsilon}}$ the paths that form a solution to this instance. Since ``solutions conditions'' 1--3 are satisfied, the paths are internally node-disjoint, and they contain (together) exactly $k-\frac{1}{\epsilon}-1$ internal nodes.
Now, by ``function condition'' 2, for each path $P_i$ (except one) that ends at some node, there is a path $P_j$ that starts at this node, and vice versa. Therefore, we can connect the paths, and obtain a directed graph whose underlying undirected graph is a tree.
Moreover, by ``function condition'' 1, by the choice (in a legal input for {\sc $k$-CWP}) of the functions (in particular, the functions are injective), as well as $v^\ell$ and $v^r$, and by the ``solution conditions'' (which, in particular, imply that the paths do not contain internal nodes that belong to the images of the functions), we get that the above mentioned tree must be a directed simple path. Since the number of nodes that connect (and belong to) the paths is $\frac{1}{\epsilon}-1$, and two paths have (each) a distinct node that is not an internal node, we have thus constructed a directed simple path on $k$ nodes. It weight is at most $W$, as it contains the same edges as $P_1,P_2,\ldots,P_{\frac{1}{\epsilon}}$, whose total weight is at most~$W$.


\myparagraph{Second Direction} Now, suppose that the input is a yes-instance of {\sc Weighted $k$-Path}. Thus, we can denote by $P$ a path that is a solution to this instance. Let $U$ be the set of the $[(j-1)\lfloor\epsilon \widetilde{k}\rfloor+1]^{st}$, $k^{st}$  and $[(j-1)\lfloor\epsilon \widetilde{k}\rfloor+k-2m\lfloor\epsilon \widetilde{k}\rfloor]^{st}$ nodes on $P$, for all $j\in\{1,2,\ldots,\frac{1}{\epsilon}\}$. Let $v_i$ denote a node in $V$ such that $P$ contains exactly $(\lfloor(\frac{1}{2}+\delta)\gamma k\rfloor+\lfloor(\frac{1}{2}-\delta)\gamma k\rfloor)$ nodes from $V^*_i$, which denotes the set of nodes in $V$ that are not ordered before $v_i$ and do not belong to $U$. Let ${\cal P}_{m+\widetilde{m}+1}$ denote the set of each subpath of $P$ that starts at its $[1+(j-1)\lfloor\epsilon \widetilde{k}\rfloor]^{st}$ node, for any $j\in\{1,2,\ldots, \frac{1}{\epsilon}\}$, and contains exactly $k-2m\lfloor\epsilon \widetilde{k}\rfloor$ nodes. Let $P_{m+\widetilde{m}+1}$ be a subpath in ${\cal P}_{m+\widetilde{m}+1}$ that contains the maximum number of nodes from $V^*_i$ among the subpaths in ${\cal P}_{m+\widetilde{m}+1}$. Denote the node-set of $P_{m+\widetilde{m}+1}$ by $V_{m+\widetilde{m}+1}$. Let $v^\ell$ and $v^r$ denote the first and last nodes of $P_{m+\widetilde{m}+1}$, respectively. Remove the internal nodes of $P_{m+\widetilde{m}+1}$ from $P$. Let $P'$ and $P''$ be the two resulting subpaths, and denote their number of nodes by $k'$ and $k''$, respectively. If $k'>1$, let $P'_j$, for all $j\in\{1,2,\ldots,\frac{k'-1}{\lfloor\epsilon \widetilde{k}\rfloor}\}$, denote the subpath of $P'$ on $\lfloor\epsilon \widetilde{k}\rfloor+1$ nodes that starts at the $[(j-1)\lfloor\epsilon \widetilde{k}\rfloor+1]^{st}$ node of $P'$. Moreover, if $k''>1$, let $P''_j$, for all $j\in\{1,2,\ldots,\frac{k''-1}{\lfloor\epsilon \widetilde{k}\rfloor}\}$, denote the subpath of $P''$ on $\lfloor\epsilon \widetilde{k}\rfloor+1$ nodes that starts at the $[(j-1)\lfloor\epsilon \widetilde{k}\rfloor+1]^{st}$ node of $P''$. Now, denote the paths of the forms $P'_j$ and $P''_j$ arbitrarily as $\widetilde{P}_1,\widetilde{P}_2,\ldots,\widetilde{P}_{2m}$. Moreover, denote their node-sets as $\widetilde{V}_1,\widetilde{V}_2,\ldots,\widetilde{V}_{2m}$, respectively. By our choice of $U$, the first and last nodes of each of these paths belong to $U$.

Let $PRT$ be the set of all partitions of $\{\widetilde{P}_1,\widetilde{P}_2,\ldots,\widetilde{P}_{2m}\}$ into two sets, $\widetilde{\cal P}_L$ of size $(m+\widetilde{m})$ and $\widetilde{\cal P}_R$ of size $(m-\widetilde{m})$.
By our definition of $V^*_i$, we have that $\displaystyle{\sum_{j=1}^{2m}|\widetilde{V}_j\cap V^*_i|+|V_{m+\widetilde{m}+1}\cap V^*_i|=(\lfloor(\frac{1}{2}+\delta)\gamma k\rfloor+\lfloor(\frac{1}{2}-\delta)\gamma k\rfloor)}$. Furthermore, by our choice of $P_{m+\widetilde{m}+1}$, we have that $|\widetilde{V}_j\cap V^*_i|\leq |V_{m+\widetilde{m}+1}\cap V^*_i|$, for all $j\in\{1,2,\ldots,2m\}$.
Therefore, there is a partition $({\cal P}_L,{\cal P}_R)$ in $PRT$ such that $\displaystyle{\sum_{\widetilde{P}_j\in{\cal P}_L}|\widetilde{V}_j\cap V^*_i|\leq \lfloor(\frac{1}{2}+\delta)\gamma k\rfloor}$ and $\displaystyle{\sum_{\widetilde{P}_j\in{\cal P}_R}|\widetilde{V}_j\cap V^*_i|\leq \lfloor(\frac{1}{2}-\delta)\gamma k\rfloor}$. Next consider such a partition.

Denote by $P_1,P_2,\ldots,P_{m+\widetilde{m}}$ an order of the paths in ${\cal P}_L$ such that $|V_{j-1}\cap V^*_i|\geq |V_j\cap V^*_i|$, for all $j\in\{2,3,\ldots,m+\widetilde{m}\}$, where $V_s$ is the node-set of $P_s$, for all $s\in\{1,2,\ldots,m+\widetilde{m}\}$. Furthermore, denote by $P_{m+\widetilde{m}+2},P_{m+\widetilde{m}+3},\ldots,P_{2m+1}$ an order of the paths in ${\cal P}_R$ such that $|V_{j-1}\cap V^*_i|\leq |V_j\cap V^*_i|$, for all $j\in\{m+\widetilde{m}+3,m+\widetilde{m}+4,\ldots,2m+1\}$, where $V_s$ is the node-set of $P_s$, for all $s\in\{m+\widetilde{m}+2,m+\widetilde{m}+3,\ldots,2m+1\}$.
We define $\ell_1,\ell_2: \{1,2,\ldots,m+\widetilde{m}\}\rightarrow U$ by letting $\ell_1(j)$ and $\ell_2(j)$ be the first and last nodes of $P_j$, respectively, for all $j\in\{1,2,\ldots,m+\widetilde{m}\}$. Similarly, we define $r_1,r_2: \{1,2,\ldots,m-\widetilde{m}\}\rightarrow U$ by letting $r_1(j)$ and $r_2(j)$ be the first and last nodes of $P_{m+\widetilde{m}+1+j}$, respectively, for all $j\in\{1,2,\ldots,m-\widetilde{m}\}$.

By our choice of $({\cal P}_L,{\cal P}_R)$, we can select a set of nodes $L_{m+\widetilde{m}+1}$ from $V_{m+\widetilde{m}+1}\cap V^*_i$ such that for the sets $R_{m+\widetilde{m}+1}=(V_{m+\widetilde{m}+1}\cap V^*_i)\setminus L_{m+\widetilde{m}+1}$, $\ L^*=L_{m+\widetilde{m}+1}\cup$ $\displaystyle{ (\bigcup_{j\in\{1,2,\ldots,m+\widetilde{m}\}}(V_j\cap V^*_i))}$ and $\displaystyle{R^*=R_{m+\widetilde{m}+1}\cup (\!\!\!\!\bigcup_{j\in\{m+\widetilde{m}+2,m+\widetilde{m}+3,\ldots,2m+1\}}\!\!\!\!\!\!\!(V_j\cap V^*_i))}$, we get that $|L^*|=\lfloor(\frac{1}{2}+\delta)\gamma k\rfloor$ and $|R^*|=\lfloor(\frac{1}{2}-\delta)\gamma k\rfloor$.
Let ${\cal F}$ be the $(n,
\lfloor(\frac{1}{2}+\delta)\gamma k\rfloor+\lfloor(\frac{1}{2}-\delta)\gamma k\rfloor,
\lfloor(\frac{1}{2}-\delta)\gamma k\rfloor)$-universal set computed by using the algorithm in Theorem \ref{theorem:splitter}. By its definition, there exists $f\in{\cal F}$ such that $L^*\subseteq L=\{v_j\in V\setminus U: j\geq i, f(v_j)=0\}$ and $R^*\subseteq R=\{v_j\in V\setminus U: j\geq i, f(v_j)=1\}$.

Note that there exists an execution of Step \ref{step:runkcwpro} where \alg{PathAlg} calls the procedure of Lemma \ref{lemma:kcwpprocor} with the input $(G,w,W,k,L,R,\ell_1,\ell_2,r_1,r_2,v^\ell,v^r)$. Since this is a yes-instance of {\sc $k$-CWP} (the paths $P_1,P_2,\ldots,P_{2m+1}$ form a solution for this instance), we get that \alg{PathAlg} accepts.\qed
\end{proof}

{\noindent Upper bounds for the running time in Lemma \ref{lemma:pathalgcor}, considering different parameters $\gamma,c,c_\ell$ and $c_r$, are given in Appendix \ref{section:kpathdetapp}. By choosing $\delta = 0.046,\gamma=0.084,c_1=1.504,c_2=1.398,c_\ell=1.092$ and $c_r=1.876$, we conclude that Theorem \ref{theorem:kPathRand} (in Section \ref{section:strategies}) is correct.}

\subsection{The Running Time of \alg{PathAlg}}\label{section:kpathdetapp}

Consider the expression $X$ in the running time given in Lemma \ref{lemma:pathalgcor}. First, note that it is bounded by the maximum of the following two expressions:

\[1.\displaystyle{\max_{0\leq\alpha\leq1}\ \max_{0\leq j\leq (\frac{1}{2}+\delta)\alpha(k-\gamma k)}
\left[
\left(\frac{c_\ell^{2 - \alpha}}{\alpha^\alpha(c_\ell - \alpha)^{2 - 2\alpha}}\right)^{(\frac{1}{2}+\delta)\gamma k}
\!\!\!\cdot
\!\frac{(c_1(k-\gamma k))^{2(k-\gamma k)-j}}{j^j(c_1(k-\gamma k)-j)^{2(k-\gamma k)-2j}}
\right]\!.}\]

\[2.\displaystyle{\max_{0\leq\alpha\leq1}\ \max_{(\frac{1}{2}\!+\!\delta\!+\!\alpha(\frac{1}{2}\!-\!\delta))(k\!-\!\gamma k)\leq j\leq k\!-\!\gamma k}
\!\left[\!
\left(\!\frac{c_r^{2 - \alpha}}{\alpha^\alpha(c_r \!-\! \alpha)^{2 \!-\! 2\alpha}}\!\right)^{(\frac{1}{2}\!-\!\delta)\gamma k}
\!\!\!\!\!\!\cdot
\!\frac{(c_2(k-\gamma k))^{2(k-\gamma k)-j}}{j^j(c_2(k\!-\!\gamma k)\!-\!j)^{2(k\!-\!\gamma k)\!-\!2j}}
\!\right]\!.}\]

{\noindent We can further bound this expression by $Y^k$, where $Y$ is bounded by the maximum of the following two expressions:}

\begin{enumerate}
\item $\displaystyle{\max_{0\leq\alpha\leq1}\ \max_{0\leq\beta\leq\alpha(\frac{1}{2}+\delta)}
\left[
\left(\frac{c_\ell^{2 - \alpha}}{\alpha^\alpha(c_\ell - \alpha)^{2 - 2\alpha}}\right)^{(\frac{1}{2}+\delta)\gamma}
\cdot
\left(\frac{c_1^{2-\beta}}{\beta^\beta(c_1-\beta)^{2-2\beta}}\right)^{(1-\gamma)}
\right]\!.}$

\item $\displaystyle{\max_{0\leq\alpha\leq1}\ \max_{(\frac{1}{2}+\delta+\alpha(\frac{1}{2}-\delta))\leq\beta\leq1}
\left[
\left(\frac{c_r^{2 - \alpha}}{\alpha^\alpha(c_r - \alpha)^{2 - 2\alpha}}\right)^{(\frac{1}{2}-\delta)\gamma}
\!\!\!\cdot
\!\left(\frac{c_2^{2-\beta}}{\beta^\beta(c_2-\beta)^{2-2\beta}}\right)^{(1-\gamma)}
\right]\!.}$
\end{enumerate}

{\noindent Let $\alpha^\ell_c$ be the $\alpha$ that maximizes $\displaystyle{\left(\frac{c_\ell^{2 - \alpha}}{\alpha^\alpha(c_\ell - \alpha)^{2 - 2\alpha}}\right)}$, $\alpha^r_c$ be the $\alpha$ that maximizes $\displaystyle{\left(\frac{c_r^{2 - \alpha}}{\alpha^\alpha(c_r - \alpha)^{2 - 2\alpha}}\right)}$, $\beta^1_c$ be the $\beta$ that maximizes $\displaystyle{\left(\frac{c_1^{2-\beta}}{\beta^\beta(c_1-\beta)^{2-2\beta}}\right)}$, and $\beta^2_c$ be the $\beta$ that maximizes $\displaystyle{\left(\frac{c_2^{2-\beta}}{\beta^\beta(c_2-\beta)^{2-2\beta}}\right)}$.
For any choice of values for $c_1,c_2,c_\ell$ and $c_r$ we should try below, we have that $(\frac{1}{2}+\delta)<\beta^1_c$ and $\beta^2_c<(\frac{1}{2}+\delta+\alpha^r_c(\frac{1}{2}-\delta))$. Denote $\alpha'=\max\{0,\frac{\beta^2_c-1/2-\delta}{1/2-\delta}\}$. Therefore, we can further bound $Y$ by the maximum of the following two expressions:}

\medskip
{\noindent $\mathrm{1.} \displaystyle{\max_{\alpha^\ell_c\leq\alpha\leq1}}$ of}
\vspace{-0.4em}
\[\displaystyle{\!\!
\left(\!\frac{c_\ell^{2 - \alpha}}{\alpha^\alpha(c_\ell - \alpha)^{2 - 2\alpha}}\!\right)^{\!(\frac{1}{2}+\delta)\gamma}
\!\!\cdot\!
\left(\!\frac{c_1^{2-\alpha(\frac{1}{2}+\delta)}}{(\alpha(\frac{1}{2}+\delta))^{\alpha(\frac{1}{2}+\delta)}(c_1-\alpha(\frac{1}{2}+\delta))^{2-2\alpha(\frac{1}{2}+\delta)}}\!\right)^{\!(1-\gamma)}}\]

{\noindent $\mathrm{2.} \displaystyle{\max_{\alpha'\leq\alpha\leq\alpha^r_c}}$ of}
\vspace{-0.4em}
\[\displaystyle{\!\!\!
\left(\!\frac{c_r^{2 - \alpha}}{\alpha^\alpha(c_r \!-\! \alpha)^{2 \!-\! 2\alpha}}\!\right)^{\!(\frac{1}{2}\!-\!\delta)\gamma}
\!\!\!\!\cdot\!
\left(\!\frac{c_2^{2-(\frac{1}{2}+\delta+\alpha(\frac{1}{2}-\delta))}}{(\frac{1}{2}\!+\!\delta\!+\!\alpha(\frac{1}{2}\!-\!\delta))^{(\frac{1}{2}\!+\!\delta\!+\!\alpha(\frac{1}{2}\!-\!\delta))}(c_2\!-\!(\frac{1}{2}\!+\!\delta+\alpha(\frac{1}{2}\!-\!\delta)))^{2\!-\!2(\frac{1}{2}\!+\!\delta\!+\!\alpha(\frac{1}{2}\!-\!\delta))}}\!\right)^{\!(1\!-\!\gamma)}}\]

\smallskip
{\noindent Denote the bounds in the first and second items by $Y_1$ and $Y_2$, respectively. Let, $Z_1=Y_1^k\cdot {\gamma k\choose (\frac{1}{2}-\delta)\gamma k} 2^{\frac{1}{10^{10}}k}$, $Z_2=Y_2^k\cdot {\gamma k\choose (\frac{1}{2}-\delta)\gamma k} 2^{\frac{1}{10^{10}}k}$ and $Z=\max\{Z_1,Z_2\}$. Note that running time of \alg{PathAlg} is bounded by $O^*(Z)$. Table \ref{tab:kpath}, given below, presents bounds for $Z$, corresponding to different choices of $\delta,\gamma,c_1,c_2,c_\ell$ and $c_r$. In particular, by choosing $\delta = 0.046,\gamma=0.084,c_1=1.504,c_2=1.398,c_\ell=1.092$ and $c_r=1.876$, we get the bound $2.59606$. In this case the maximum of $Z_1$ is obtained at $\alpha\cong0.908105$, where it is almost equal to 2.59606, and the maximum of $Z_2$ is obtained at $\alpha\cong0.123734$, where it is also almost equal to 2.59606.}

\begin{table}
[!ht]
\centering
\begin{tabular}{|c|c|c|c|}
	\hline
	$(\delta,\gamma,c_1,c_2,c_\ell,c_r)$    & $Z$             & $Z_1$     & $Z_2$     \\\hline\hline		
	$\bf (0.046,0.084,1.504,1.398,1.092,1.876)$ & {\bf 2.5960542} & {\bf 2.5960542} & {\bf 2.5960425} \\\hline
	$(0.045,0.084,1.504,1.398,1.092,1.876)$ &       2.5965734 & 2.5953152 & 2.5965734 \\\hline
	$(0.047,0.084,1.504,1.398,1.092,1.876)$ &       2.5967889 & 2.5967889 & 2.5955049 \\\hline
	$(0.046,0.083,1.504,1.398,1.092,1.876)$ &       2.5960903 & 2.5960421 & 2.5960903 \\\hline
	$(0.046,0.085,1.504,1.398,1.092,1.876)$ &       2.5960711 & 2.5960711 & 2.5959989 \\\hline
	$(0.046,0.084,1.503,1.398,1.092,1.876)$ &       2.5960547 & 2.5960547 & 2.5960425 \\\hline
	$(0.046,0.084,1.505,1.398,1.092,1.876)$ &       2.5960545 & 2.5960545 & 2.5960425 \\\hline
	$(0.046,0.084,1.504,1.397,1.092,1.876)$ &       2.5960542 & 2.5960542 & 2.5960430 \\\hline
	$(0.046,0.084,1.504,1.399,1.092,1.876)$ &       2.5960542 & 2.5960542 & 2.5960434 \\\hline
	$(0.046,0.084,1.504,1.398,1.091,1.876)$ &       2.5960544 & 2.5960544 & 2.5960425 \\\hline
	$(0.046,0.084,1.504,1.398,1.093,1.876)$ &       2.5960545 & 2.5960545 & 2.5960425 \\\hline
	$(0.046,0.084,1.504,1.398,1.092,1.875)$ &       2.5960542 & 2.5960542 & 2.5960425 \\\hline
	$(0.046,0.084,1.504,1.398,1.092,1.877)$ &       2.5960542 & 2.5960542 & 2.5960425 \\\hline	
\end{tabular}\medskip
\caption{Upper bounds for $Z$, $Z_1$ and $Z_2$, corresponding to different choices of $\delta,\gamma,c_1,c_2,c_\ell$ and $c_r$.}
\label{tab:kpath}
\end{table}

\section{Solving the $(3,k)$-WSP Problem}\label{section:3kwsp}

In this section we develop a deterministic FPT algorithm for {\sc $(3,k)$-WSP}, following the fourth strategy in Section~\ref{section:strategies}. We first solve a subcase of {\sc $(3,k)$-WSP}, and then show how to translate {\sc $(3,k)$-WSP} to this subcase by using unbalanced cutting of the universe. 

\subsection{Intuition}\label{sec:wspIntuition}

In this section, we attempt to describe the intuition that guided us through the development of the algorithm.

\myparagraph{Previous Algorithm}
We first give a brief overview of the algorithm for {\sc $(3,k)$-WSP} of \cite{corrmatchpack} since our algorithm builds upon it, and it will allow us to explain the intuition behind our strategy. To this end, we need the following observation of \cite{predivandcol}, assuming an arbitrary order on the elements in $U$:

\begin{obs}\label{obs:wspobs}
Let ${\cal S}'\subseteq{\cal S}$, and denote $S_\mathrm{min} = \{u\in U: \exists S\in{\cal S}'$ in which $u$ is the smallest element$\}$. Then, any $S\in {\cal S}$ whose smallest element is larger than $\mathrm{max}(S_\mathrm{min})$ does not contain any element from $S_\mathrm{min}$.
\end{obs}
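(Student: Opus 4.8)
This observation (Observation~\ref{obs:wspobs}) is elementary, and the plan is to prove it directly from the definitions. Let me set up notation carefully: we have an arbitrary but fixed linear order on $U$, a subfamily ${\cal S}'\subseteq{\cal S}$, and $S_\mathrm{min}$ is defined as the set of all elements $u\in U$ such that $u$ is the minimum (with respect to the fixed order) of some set $S\in{\cal S}'$. We want to show that if $S\in{\cal S}$ has its smallest element strictly larger than $\max(S_\mathrm{min})$, then $S\cap S_\mathrm{min}=\emptyset$.

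The proof is a one-line argument by contradiction (or even directly): suppose $S\in{\cal S}$ and $\min(S)>\max(S_\mathrm{min})$, and suppose toward a contradiction that there is some element $u\in S\cap S_\mathrm{min}$. Since $u\in S$, we have $u\geq\min(S)$ by definition of the minimum of $S$. On the other hand, since $u\in S_\mathrm{min}$, we have $u\leq\max(S_\mathrm{min})$ by definition of the maximum of $S_\mathrm{min}$. Chaining these, $\min(S)\leq u\leq\max(S_\mathrm{min})<\min(S)$, which is a contradiction. Hence no such $u$ exists, i.e., $S\cap S_\mathrm{min}=\emptyset$.

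There is essentially no obstacle here; the only things to be slightly careful about are (i) that $S_\mathrm{min}$ is nonempty whenever ${\cal S}'$ is nonempty and every set in ${\cal S}$ is itself nonempty (which holds since sets in ${\cal S}$ have size $3$), so that $\max(S_\mathrm{min})$ is well-defined — and if ${\cal S}'=\emptyset$ then $S_\mathrm{min}=\emptyset$ and the hypothesis ``$\min(S)>\max(S_\mathrm{min})$'' is vacuous, making the statement trivially true; and (ii) being explicit that ``smallest element'' and ``$\max$'' are taken with respect to the same fixed order on $U$. I would state the proof in two or three sentences exactly as above, perhaps prefacing it with a remark that the claim is immediate from the definitions of $\min$ and $\max$. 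No earlier result from the excerpt is needed.
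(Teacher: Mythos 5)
Your proof is correct: the chain $\min(S)\leq u\leq\max(S_\mathrm{min})<\min(S)$ is exactly the immediate argument intended, and indeed the paper states this observation (citing prior work) without any proof because it follows directly from the definitions. Your added remarks about the empty/vacuous case and the fixed order are fine but not needed.
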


The algorithm of \cite{corrmatchpack} iterates over $U$ in an ascending order, such that when it reaches an element $u\in U$, it has already computed representative families of families of partial solutions (each partial solution corresponds to a family of disjoint 3-sets from ${\cal S}$) that include only sets from ${\cal S}$ whose smallest elements are smaller than $u$. Then, it tries to extend the partial solutions by adding sets whose smallest element is $u$; this is followed by computations of representative families to reduce the size of the resulting families. By Observation \ref{obs:wspobs}, the elements in $U$ that are the smallest elements of 3-sets in the partial solutions do not appear in any 3-set whose smallest element is at least $u$. This allows the algorithm to delete the smallest elements of 3-sets after adding them to partial solutions (since it will not encounter them again), which results in faster computations of representative families (since we are handling partial solutions of smaller sets) that overall improve the running time of the algorithm.

\medskip
\myparagraph{Our Improvement}
We introduce a technique, unbalanced cutting of the universe,\footnote{The name unbalanced cutting intends to imply that in this technique, we repeatedly attempt to cut a small piece from the universe that contains as many elements as possible from a certain set, which concerns the solution and is therefore unknown in advance. This will allow us to delete elements from partial solutions in a manner that distorts the balance between the number of small and large elements  they~contain.} that allows us to ensure that we delete a set of elements that is larger than the set that contains only each element that is the smallest in a 3-set that belongs to a partial solution. Clearly, in the previous algorithm, we can delete from every 3-set not only its smallest element (which is smaller than $u$), but also any other element that that is smaller than $u$. Therefore, it is apparent that we need to find some special method of ordering the elements (in a preprocessing phase) such that when we reach an element $u$ (in the iteration) we can only focus on partial solutions $\cal P$ such that the union of the 3-sets in $\cal P$ contains significantly more than $|{\cal P}|/3$ elements that are smaller than $u$. To this end, we {\em explicitly} cut the universe into a {\em constant} number of small parts, order them, and using this order, define an order on the entire universe. Then, we will show that we have a constant number of stages in the algorithm, where at each stage we are ``given'' an element such that we can delete all the elements that are smaller than it. A lower bound on the number of elements that we should delete will be computed using a certain recursive formula---if a partial solution does not contain enough elements that we can delete (according to the formula), we will remove it (we will be able to show that this removal does not affect the correctness of the algorithm).

Now, we give an intuitive, informal explanation of the manner in which unbalanced cutting works, and why it is correct. Let $u_1,u_2,\ldots,u_n$ be an arbitrary order of the elements in $U$. We will attempt to cut $\frac{1}{\epsilon}$ ($\epsilon$ will be a parameter to be determined in our algorithm) pieces from the universe in a certain order, and thus obtain $\frac{1}{\epsilon}+1$ small universes (the last universe will simply contain all the elements that we did not insert to one of the universes that we cut from $U$). Now, the elements will be reordered as follows: the smallest will be those in the first piece that we cut, afterwards will be those in the second piece that we cut, and so on. The order between the elements in a certain piece will be defined by the arbitrary order that we chose (i.e., $u_1,u_2,\ldots,u_n$).

We will actually consider all the options to cut the universe (by an exhaustive search), and thus we need to ensure that there are not too many such options (we will get a polynomial number of options), and that we will consider at least one ``good'' option (we will discuss the meaning of ``good'' later). Therefore, we suggest that an option should correspond to a result of the following process. First, one guesses two elements, $\ell_1<r_1$, and let the first piece, $U^1$, contain them and all the elements between them. Now, we remove the first piece from $U$. Considering the part that is left from $U$, again, one guesses two elements, $\ell_2<r_2$, and let the second piece, $U^2$, contain them and all the elements between them (that were not already inserted to $U^1$). Then, we also remove the second piece from $U$. The process continues in this manner, until we have cut $\frac{1}{\epsilon}$ pieces. Clearly, the number of options to cut the universe is smaller than $|U|^{\frac{2}{\epsilon}}=O^*(1)$.

The stages in the algorithm correspond to the small universes---we will have $\frac{1}{\epsilon}+1$ stages, such that at stage $2\leq i\leq\frac{1}{\epsilon}+1$, the element that we are ``given'' is the largest element in the piece $U^{i-1}$.\footnote{At the first stage, we can assume that the element that we are given is simply the smallest element in the universe, since we cannot compute a useful element for the first stage; this is not important, since the pieces are {\em very} small, and the individual effect of a piece in this context is negligible.} We still iterate over $U$ in an ascending order (but now this is performed according to the new order---corresponding to the currently examined option that cut the universe), such that at stage $i$, when inserting a 3-set to partial solutions, we can remove its smallest element. However, at stage $i$, we can also remove from our partial solutions all the elements that are smaller than the given element; there should be enough such elements at each partial solution (according to a recursive formula that we discuss later), since if there are not, we simply discard the partial solution. Moreover, overall at stage $i$ (that is not the last stage), we are supposed to insert exactly $\lfloor\epsilon k\rfloor$ 3-sets to each of our partial solutions (the 3-sets are of course inserted one-by-one, in order to embed a computation of a representative family ``as soon as possible''). We are also not supposed to insert 3-sets that contain elements that are smaller than the element that we were given at the previous stage (otherwise we could not have deleted elements that were smaller than it).

Now, we need to argue that we do consider a ``good'' option---this means that there is an option such that that by following the dynamic programming-based approach described above, we can construct a solution. Here we will not discuss this in detail, but only give some informal explanations that allow to intuitively understand the general approach that our proof will employ.\footnote{Exact, formal definitions and proofs are given in the following sections} To this end, consider some solution ${\cal S}'$.
Let $U^1$ be a set of consecutive elements in $U$ that captures elements from exactly $\lfloor\epsilon k\rfloor$ 3-sets in ${\cal S}'$ (i.e., there are exactly $\lfloor\epsilon k\rfloor$ 3-sets in ${\cal S}'$ that have, each, at least one element contained in $U^1$). Let ${\cal X}$ be the family of $\lfloor\epsilon k\rfloor$ 3-sets from ${\cal S}'$ that we used (i.e., that have elements that belong to $U^1$). Moreover, let $Y$ be the set of elements in the 3-sets in $\cal X$ that are not smaller than the largest element in $U^1$; also, $Y$ should not contain any element that is the smallest in a 3-set in $\cal X$. We remove the elements of $U^1$ from $U$.\footnote{This is done since the pieces should be disjoint; also, note that in the remaining universe, the element ordered exactly before the smallest element of $U^1$ and the element ordered exactly after its largest element are considered to be consecutive.} We need to define $U^2$ so that again, it captures $\lfloor\epsilon k\rfloor$ new 3-sets in the manner described above (more precisely, it should capture elements from exactly $\lfloor\epsilon k\rfloor$ 3-sets in ${\cal S}'\setminus{\cal X}$). However, we also want to ensure that $U^2$ captures as many elements as possible from $Y$; thus, when we will be given its largest element (when the second stage, corresponding to $U^2$, {\em ends}), we will be able to delete many elements from partial solutions (those are the elements it captures from $Y$). At worst, in the second stage, we could not remove any extra element from the sets in ${\cal S}'$ by using the element that was given to us (i.e., the largest element of $U^1$), and therefore $|Y|=2\lfloor\epsilon k\rfloor$. There are $3(k-\lfloor\epsilon k\rfloor)$ elements in the 3-sets in ${\cal S}'\setminus{\cal X}$, and therefore we have at most $y=3(k-\lfloor\epsilon k\rfloor)/\lfloor\epsilon k\rfloor$ candidates for $U^2$. Thus, by the pigeonhole principle, we have an option that captures at least $|Y|/y$ elements from $Y$. This will be the lower bound for the number of extra elements that we expect to delete up to the point when we move from the second stage to the third stage. The above process (of defining $U^2$) continues in a similar manner towards defining $U^3$, and so on.\footnote{In the algorithm, the elements that we are ``given'' (i.e., the largest elements in the pieces that we cut), will be specified using a function $f$.}

Finally, we would like to mention that one gets a recursive formula that indicates the number of extra elements that should be deleted up to a certain stage, since we need to consider the maximum number of elements (to be deleted) that we can capture in a piece $U^i$ and which were not already captured in one of the {\em previous} pieces. That is, when considering the minimum number of extra elements that should be deleted up to a certain stage $i$, we need to consider a lower bound on the number of extra elements that should have been already deleted up to the previous stage, $i-1$.

\subsection{A Subcase of $(3,k)$-WSP}\label{section:subcasewsp}

We now consider {\sc Cut $(3,k)$-WSP ($(3,k)$-CWSP)}, the subcase of {\sc $(3,k)$-WSP} that we solve using a representative sets-based procedure. Recall from the previous section that informally, in defining this subcase, the main idea is to introduce a function that cuts a solution in a manner that allows us, while executing a procedure that builds upon the algorithm of \cite{corrmatchpack}, to delete more elements than just those that are the smallest in the inserted sets. More precisely, in the algorithm, we will have a fixed, though large, number of locations where we are ``given'' an element that indicates that from now on, we can delete all elements that are smaller than it.

\subsubsection{Definition}\hspace{1em}$~$\\

{\noindent Fix $0<\epsilon<0.1$, whose value is determined later, such that $\frac{1}{\epsilon}\in\mathbb{N}$. Recall that informally, $\frac{1}{\epsilon}$ is the number of pieces that we cut (and thus we obtain $\frac{1}{\epsilon}+1$ universes).}

\myparagraph{Input} Formally, the input for {\sc $(3,k)$-CWSP} consists of an {\em ordered} universe $U=\{u_1,u_2,\ldots,u_n\}$, where $u_{i-1}<u_i$ for all $i\in\{2,3,\ldots,n\}$. We are also given a family $\cal S$ of subsets of size 3 of $U$, a weight function $w: {\cal S}\rightarrow \mathbb{R}$, a weight $W\in\mathbb{R}$ and a parameter $k\in\mathbb{N}$, along with a non-decreasing function $f: \{1,2,\ldots,\frac{1}{\epsilon}\}\rightarrow U$. 

\myparagraph{Objective} We need to decide if there is an {\em ordered} subfamily ${\cal S}'\subseteq {\cal S}$ of $k$ disjoint sets, denoted accordingly as ${\cal S}'=\{S_1,S_2,\ldots,S_k\}$, whose total weight is at least $W$, such that the following ``solution conditions'' are satisfied:
\begin{enumerate}
\item $\forall i\in\{2,3,\ldots,k\}$: $\min(S_{i-1})<\min(S_i)$ (i.e., the smallest element in $S_{i-1}$ is smaller than the smallest one in $S_i$).

\item $\forall i\in\{1,2,\ldots,\frac{1}{\epsilon}\}$: There are at least
$R(i)$
elements in $\displaystyle{\bigcup_{j=1}^{i\lfloor\epsilon k\rfloor}(S_j\setminus\{\min(S_j)\})}$ that are smaller or equal to $f(i)$, where $R$ is defined according to the following recursion.
	\bigskip
	\begin{itemize}
	\item $R(0)=R(1)=0$.\footnote{We define $R(0)$ since it will simplify the proof of our algorithm (note that we refer to $R(0)$ in the running time in Lemma \ref{lemma:cwsprun}).}
	\item For $j=2,\ldots,\frac{1}{\epsilon}$: $\displaystyle{R(j) = R(j-1) +
	\left\lceil\frac{2(j-1)\lfloor\epsilon k\rfloor
	-R(j-1)}
	{\lceil 3(k-(j-1)\lfloor\epsilon k\rfloor)/\lfloor \epsilon k\rfloor\rceil}\right\rceil}$.
	\end{itemize}

\item $\forall i\in\{1,2,\ldots,\frac{1}{\epsilon}\}$: All the elements in $\displaystyle{\bigcup_{j=1+i\lfloor\epsilon k\rfloor}^{k}S_j}$ are larger than $f(i)$.
\end{enumerate}

\myparagraph{More Intuition} Roughly speaking, in the algorithm, we shall attempt to construct ${\cal S}'$ by finding its sets according to their order (i.e., we first find $S_1$, then $S_2$, and so on). Now, following the intuition described in the previous section, the first condition is meant to ensure that once we insert a set $S\in {\cal S}$ to a partial solution, we can remove its smallest element (since next we will only attempt to insert sets that contain only larger elements); the third condition is meant to ensure that once we finish stage $i$ (and move to stage $i+1$), we can remove from our partial solutions all of the elements that are at most $f(i)$; the second condition is meant to ensure that there are enough extra elements that we can remove (by relying on $f(i)$).

\subsubsection{Solving the Subcase}\hspace{1em}$~$\\

{\noindent We next show that {\sc $(3,k)$-CWSP} can indeed be efficiently solved using representative sets, proving the following lemma:}

\begin{lemma}\label{lemma:cwsprun}
For any fixed $c\geq 1$ and $0<\epsilon<1$, {\sc $(3,k)$-CWSP} can be solved in deterministic time $Y=\displaystyle{O^*(2^{o(k)}\cdot
\max_{i=1}^{\frac{1}{\epsilon}}\max_{j=1+(i-1)\lfloor\epsilon k\rfloor}^{i\lfloor\epsilon k\rfloor}X)}$, where

\begin{itemize}
\item $X=\displaystyle{\frac{(c(3k\!-\!(j\!-\!1)-R(i\!-\!i')))^{3k-(j-1)-R(i-i')}}
{(2(j\!-\!1)\!-\!R(i\!-\!i'))
^{2(j-1)-R(i-i')}
\cdot\widetilde{X}}
\ \cdot}$
\[\displaystyle{(\frac{c(3k-j-R(i-1))}{c(3k-j-R(i-1))-(2j-R(i-1))})^{(3k-j-R(i-1))-(2j-R(i-1))},}\]

\item $\widetilde{X}=(c(3k-(j-1)-R(i-i'))-(2(j-1)-R(i-i')))
^{(3k-(j-1)-R(i-i'))-(2(j-1)-R(i-i'))}$.

\bigskip
\item $i'\!\in\!\{1,2\}$ such that $[i'\!=\!2\leftrightarrow j\!=\!1\!+\!(i\!-\!1)\lfloor\epsilon k\rfloor]$.
\end{itemize}
\end{lemma}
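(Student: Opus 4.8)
The plan is to mirror the proof of Lemma~\ref{lemma:kcwp}: I would solve {\sc $(3,k)$-CWSP} by a dynamic programming-based procedure that builds a candidate solution ${\cal S}'=\{S_1,\ldots,S_k\}$ by inserting its $3$-sets in the order forced by ``solution condition~1'', each $3$-set being inserted one element at a time, and that, after every insertion, replaces the current family of partial solutions by a max-representative subfamily computed through Theorem~\ref{theorem:esarep} (we want weight \emph{at least} $W$, hence max-representation). The genuinely new ingredient, relative to Lemma~\ref{lemma:kcwp}, is the deletion schedule. The procedure runs in $\frac{1}{\epsilon}$ stages, stage~$i$ being responsible for $S_{1+(i-1)\lfloor\epsilon k\rfloor},\ldots,S_{i\lfloor\epsilon k\rfloor}$ (with a short final sub-phase inserting $S_{1+\frac{1}{\epsilon}\lfloor\epsilon k\rfloor},\ldots,S_k$ when $\frac{1}{\epsilon}\lfloor\epsilon k\rfloor<k$). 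Within a stage, the minimum of each $3$-set is deleted as soon as that $3$-set is completed (the deletion already used by the algorithm of \cite{corrmatchpack}, justified by Observation~\ref{obs:wspobs}). At the start of stage~$i\ge 2$ the procedure is ``given'' $f(i-1)$ and deletes from every surviving partial solution a set of $R(i-1)-R(i-2)$ elements that are $\le f(i-1)$ and are not the minimum of any inserted $3$-set, discarding a partial solution unless it contains that many such elements (for $i=1$ nothing is deleted, consistently with $R(0)=R(1)=0$).

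First I would fix the table: an entry for every stage~$i$, every number~$j$ of completed $3$-sets (ranging as in the statement), every partially built $3$-set $T$ of size $0$, $1$ or $2$, and every ``current element'' $v\in U$, storing a subfamily that max $(\cdot)$-represents — in the sense of Definition~\ref{def:repfam}, over the universe obtained from $U$ by removing everything deleted so far — the family of all sets $\big(\bigcup_{l=1}^{j}S_l\big)\cup T$ (minus deleted elements) with $S_1,\ldots,S_j,T$ disjoint, $\min(S_1)<\cdots<\min(S_j)$, the $3$-sets respecting the parts of ``solution conditions~2 and~3'' that refer only to indices at most~$i$, and no inserted element lying in an already-deleted range. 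The representation parameter is $\kappa'-\pi'$ with $\pi'$ the current partial-solution size and $\kappa'$ the size a full solution occupies in the current universe; one checks that after inserting $j$ $3$-sets during stage~$i$ (when $R(i-1)$ extra elements have already been deleted) $\pi'=2j-R(i-1)$, $\kappa'=3k-j-R(i-1)$ and $\kappa'-\pi'=3(k-j)$, whereas just before inserting the first $3$-set of stage~$i$ only $R(i-2)$ extra elements have been deleted — which is precisely what the index $i'$ in the statement records. The recursive formulas are then the obvious ones (base cases create singletons; a generic step appends an admissible element to $T$, and, when $|T|$ reaches~$3$, sets $S_{j+1}=T$ and deletes $\min(S_{j+1})$; a stage boundary performs the $f(i-1)$-deletion described above; after each such step the family is reduced via Theorem~\ref{theorem:esarep} with the universe, $\pi'$ and $\kappa'$ just described), and the answer is \emph{yes} iff some final entry with $j=k$ holds a set of tracked weight at least~$W$.

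Correctness splits into the two standard directions. ``Only if'' is immediate: an accepting run exhibits $k$ disjoint $3$-sets of tracked weight $\ge W$ satisfying all three solution conditions by construction. For ``if'', given a solution ${\cal S}'$ I would run the DP ``along'' it: let $P_j$ be the union of the element-sets of $S_1,\ldots,S_j$ with the elements the DP has deleted up to that point removed. ``Solution conditions~2 and~3'' guarantee that these deletions are \emph{legal} — condition~2 supplies at least $R(i-1)$ non-minimum elements $\le f(i-1)$ that may be deleted, and condition~3 guarantees that every element $\le f(i-1)$ is absent from $S_l$ for all $l>(i-1)\lfloor\epsilon k\rfloor$, so deleting such elements never destroys disjointness with the still-to-be-inserted part of ${\cal S}'$ — hence $P_j$ lies in the $j$-th pre-reduction family, and by Definition~\ref{def:repfam} together with induction on $j$ the reduced family always contains some $\widehat P_j$ disjoint from all elements of ${\cal S}'$ not yet inserted, of tracked weight at least $w(S_1)+\cdots+w(S_j)$; taking $j=k$ gives acceptance. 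I expect exactly this bookkeeping to be the main obstacle: one must verify that the recursively defined counts $R(i)$ never exceed what ``solution condition~2'' makes available (an induction on $i$ using the pigeonhole-style estimate sketched in Appendix~\ref{sec:wspIntuition}), and, on the running-time side, that the universe and partial-solution sizes at each step are exactly the ones above so that Theorem~\ref{theorem:esarep} yields the claimed $X$.

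Finally, for the running time, the table has a polynomial number of entries ($\frac{1}{\epsilon}=O(1)$ stages, $O(k)$ values of $j$, $O(|U|^2)$ choices of $T$, and $|U|$ current elements), and, up to $2^{o(k)}$ and polynomial factors, computing one entry costs the size of the family being extended and reduced times the per-set processing cost. By the size guarantee of Theorem~\ref{theorem:esarep}, the family reduced when inserting the $j$-th $3$-set in stage~$i$ has size at most the first factor of $X$ (this is where the index $i'$ enters, since that family has $j-1$ completed $3$-sets and, depending on whether $j$ is the first set of stage~$i$, $R(i-2)$ or $R(i-1)$ extra deletions), and by the time guarantee of Theorem~\ref{theorem:esarep} the processing cost is at most the second factor of $X$ (the query factor $(c\kappa'/(c\kappa'-\pi'))^{\kappa'-\pi'}$ evaluated at the $j$-set parameters, with $R(i-1)$ extra deletions); their product is $X$, and maximizing over $i$ and $j$ and absorbing the polynomial overhead into $2^{o(k)}$ gives the claimed bound $Y$.
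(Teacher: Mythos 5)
Your overall skeleton (DP over the sets of ${\cal S}$ inserted in order of increasing minima, representative-family reductions via Theorem~\ref{theorem:esarep} after every step, deletion of minima plus extra deletions driven by $f$, and the size/time accounting with $p'=2j-R(i-1)$, $k'=3k-j-R(i-1)$, $k'-p'=3(k-j)$, which is where $i'$ comes from) matches the paper's proof. But there is a genuine gap in how you implement the extra deletions. You delete a \emph{fixed quota} of $R(i-1)-R(i-2)$ arbitrarily chosen elements $\leq f(i-1)$ from each stored set and \emph{discard} any stored set that lacks that many, while your table is indexed only by $(i,j,T,v)$. This breaks the completeness induction: Definition~\ref{def:repfam} guarantees that the reduced family contains some $\widehat{P}_j$ disjoint from the not-yet-inserted part of the solution with at least the same weight, but it gives no control whatsoever over how many elements $\leq f(i)$ (or $\leq f(\ell)$ for any future threshold $\ell$) the representative $\widehat{P}_j$ contains. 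At the next stage boundary your procedure may therefore discard $\widehat{P}_j$ for lack of deletable elements, even though the true prefix $P_j$ would have survived, and nothing guarantees that any other surviving set is disjoint from the remainder of the solution. Your ``if'' direction induction, as well as the claim that each entry represents the family you describe, silently assumes that representatives inherit the availability of small elements, which they need not.

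The paper's proof avoids exactly this by adding the counters $s_1,\ldots,s_{\frac{1}{\epsilon}}$ (the \emph{exact} number of non-minimum elements $\leq f(\ell)$ for every threshold $\ell$, past and future) to the DP index, deleting \emph{all} elements $\leq f(i-1)$ rather than a quota, and enforcing the availability requirement $s_\ell\geq R(\ell)$ as a condition on which entries exist. Since every set stored in a single entry then has the identical count profile with respect to every $f(\ell)$, a representative automatically behaves exactly like the set it replaces at all future boundaries, the deletion amount per entry is uniform (so set sizes stay uniform, with $s_{i-1}=R(i-1)$ being the worst case that yields $X$), and the whole ``discard'' issue becomes an index-level condition instead of a per-set check. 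The number of entries remains polynomial because $\frac{1}{\epsilon}=O(1)$. Without this state augmentation (or some equivalent mechanism ensuring representatives carry the same deletable-element counts), your procedure's correctness does not follow; the rest of your accounting, including the role of $i'$ and the two factors of $X$, is consistent with the paper once this is fixed.
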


\begin{proof}
When we next refer to representative families, suppose that $E=U$. We now present a standard dynamic programming-based procedure to prove the lemma, in which we embed computations of representative sets. To this end, we use a matrix M that has an entry $[i,j,s_1,s_2,\ldots,s_{\frac{1}{\epsilon}},m]$ for all the parameters that satisfy the following conditions:
\begin{enumerate}
\item $i\in\{1,2,\ldots,\frac{1}{\epsilon}+1\}$.
\item $j\in \{1+(i-1)\lfloor\epsilon k\rfloor,\ldots,J\}$ where $(i\leq\frac{1}{\epsilon}\rightarrow J=i\lfloor\epsilon k\rfloor)\wedge(i=\frac{1}{\epsilon}+1\rightarrow J\!=\!k)$.
\item $\forall\ell\in\{1,\ldots,\frac{1}{\epsilon}\}: s_\ell\in\{0,1,\ldots,2j\}$. 
\item $\forall\ell\in\{1,\ldots,i-1\}: s_\ell\geq R(\ell)$.
\item $m\in\{j,\ldots,n\}$ such that ($i>1\rightarrow u_m>f(i-1)$).
\end{enumerate}

Informally, following the notions described in Appendix \ref{sec:wspIntuition}, $i$ specifies the current stage in the algorithm, $j$ specifies the number of 3-sets that are related to each partial solution in the family stored in the entry, each $s_\ell$ specifies the number of extra elements that should be deleted once we finish stage $\ell$ by using the element that we are ``given'' at that point (i.e., by using $f(\ell)$),\footnote{In the fourth condition, we do not include $\ell=i$, since the stage $i$ is not yet finished---only when it is finished, we need to ensure that we can delete at least $R(i)$ elements.} and $m$ specifies the smallest element in the 3-set that was the last one inserted to the partial solutions in the family stored in the entry.

We next assume that a reference to an undefined entry returns $\emptyset$. The other entries will store the following families of partial solutions, where we assume that we track the weights of the partial solutions:\footnote{When computing an entry and obtaining the same partial solution from several different families of sets, we store the maximal weight.}
\begin{itemize}
\item M$[i,j,s_1,\ldots,s_{\frac{1}{\epsilon}},m]$: A family that max $3(k-j)$-represents the family $F$ defined as follows. Let $\widetilde{F}$ be the family of every ordered disjoint sets $S_1,S_2,\ldots,S_j$ such that:
	\begin{enumerate}
	\medskip
	\item $\forall \ell\in\{2,3,\ldots,j\}$: $\min(S_{\ell-1})<\min(S_\ell)$.
	\medskip
	\item $\min(S_j)=u_m$.
	\item $\forall \ell\in\{1,2,\ldots,i-1\}$: There are at least $R(\ell)$
elements in $\displaystyle{\bigcup_{p=1}^{\ell\lfloor\epsilon k\rfloor}(S_p\!\setminus\!\{\min(S_p)\})}$ that are at most $f(\ell)$.\footnote{Observe that this condition is not covered by the next one, since $\ell\lfloor\epsilon k\rfloor\neq j$ (those are the maximum values for $p$ in these conditions).}
	\item $\forall \ell\in\{1,2,\ldots,\frac{1}{\epsilon}\}$: There are exactly $s_\ell$ elements in $\displaystyle{\bigcup_{p=1}^{j}(S_p\setminus\{\min(S_p)\})}$ that are at most $f(\ell)$.
	\item $\forall \ell\in\{1,2,\ldots,i-1\}$: All the elements in $\displaystyle{\bigcup_{p=1+\ell\lfloor\epsilon k\rfloor}^{j}\!\!\!S_p}$ are larger than~$f(\ell)$.
	\end{enumerate}

\smallskip	
Define $f(0)$ as a value smaller than $u_1$. Then, for each such sets $S_1,S_2,\ldots,S_j$ in $\widetilde{F}$, $F$ contains $\displaystyle{(\bigcup_{\ell=1}^j(S_\ell\setminus\{\min(S_\ell)\}))\setminus\{u_\ell\in U: u_\ell\leq f(i\!-\!1)\}}$.
\end{itemize}

\medskip
{\noindent The entries are computed in the following order:}
\begin{itemize}
\item For $i=1,\ldots,\frac{1}{\epsilon}+1$:
	\begin{itemize}
	\item For $j=1+(i-1)\lfloor\epsilon k\rfloor,\ldots,J$:
		\begin{itemize}
				\item Compute all entries of the form M$[i,j,s_1,\ldots,s_{\frac{1}{\epsilon}},m]$.
		\end{itemize}
	\end{itemize}
\end{itemize}

\smallskip
\medskip
{\noindent We now give the recursive formulas using which the entries are computed.}
\begin{enumerate}
\item If $j\!=\!1$: M$[1,1,s_1,\ldots,s_{\frac{1}{\epsilon}},m] = \{S\!\in\!{\cal S}: \min(S)\!=\!u_m, [\forall\ell\in\{1,\ldots,\frac{1}{\epsilon}\}: |\{u_p\!\in\! S\!\setminus\!\{u_m\}: u_p\!<\!f(\ell)\}|\!=\!s_\ell]\}$.

\bigskip
\item Else: M$[i,j,s_1,\ldots,s_{\frac{1}{\epsilon}},m]=$\\
$\displaystyle{\{S\!\cup\! A: S\!\cap\! A\!=\!\emptyset, S\!\in\!{\cal S}, \min(S)\!=\!u_m, A\!\in\!\!\!\bigcup_{m'=1}^{m-1}\bigcup_{i'=0}^1\mathrm{M}[i\!-\!i',j\!-\!1,s_1',\!...,s_{\frac{1}{\epsilon}}',m']\}}$,\\
where [$\forall\ell\in\{1,\ldots,\frac{1}{\epsilon}\}: s_\ell' = s_\ell-|\{u_p\in S\setminus\{u_m\}: u_p\leq f(\ell)\}|$].
\end{enumerate}

\begin{itemize}
\item After 2: 
	\begin{enumerate}
	\item If $i>1$: Remove from each set in M$[i,j,s_1,\ldots,s_{\frac{1}{\epsilon}},m]$ the elements that are at most $f(i-1)$.
	\item Replace the result by a family that max $3(k-j)$-represents it.
	\end{enumerate}
\end{itemize}

{\noindent Finally, we return yes {\em iff} at least one entry of the form M$[i,k,s_1,\ldots,s_{\frac{1}{\epsilon}},m]$ contains a solution of weight at least $W$. Observe that Theorem \ref{theorem:esarep} ensures that the procedure can be performed in the desired time. Indeed, there is a polynomial number of entries (since $\frac{1}{\epsilon}=O(1)$), and it straightforward to verify that the time required to compute entry M$[i,j,s_1,\ldots,s_{\frac{1}{\epsilon}},m]$ is bounded by $O^*(X\cdot 2^{o(k)})$.}\qed
\end{proof}

\subsubsection{A More Careful Running Time Analysis}\hspace{1em}$~$\\

{\noindent We proceed to analyze the bound, denoted $Y$, for the running time in Lemma \ref{lemma:cwsprun}. First, define $T$ according to the following recursion:}
\begin{itemize}
\item $T(0)=T(1)=0$.
\item For $i=2,\ldots,\frac{1}{\epsilon}$: $\displaystyle{T(i) = T(i-1) +
	\epsilon \left(\frac{2(j-1)\epsilon
	-T(i-1)}
	{3(1-(i-1)\epsilon)}\right)}$.
\end{itemize}

\smallskip
{\noindent For the sake of readability of the formulas below, denote $\widehat{T}(i)=T(i-1)$. Thus, we get that $Y=O^*(2^{o(k)}\cdot \max_{i=1}^{\frac{1}{\epsilon}} X')$, where $X'$ is equal to}
\[\displaystyle{
\max_{(i-1)\epsilon k\leq j\leq i\epsilon k}\!\!
\left(\!\!
\frac{(c(3k-j-\widehat{T}(i)k))^{2(3k-j-\widehat{T}(i)k)-(2j-\widehat{T}(i)k)}}{(2j\!-\!\widehat{T}(i)k)^{2j-\widehat{T}(i)k}(c(3k\!-\!j\!-\!\widehat{T}(i))k)\!-\!(2j\!-\!\widehat{T}(i)k))^{2(3k-j-\widehat{T}(i)k)-2(2j-\widehat{T}(i)k)}}\!\!\right)
}\!.\]

\smallskip
{\noindent Now, we can further bound $X'$ by the following expression:}

\[\displaystyle{\max_{(i-1)\leq\alpha\leq i}
\left(\!
\frac{(c(3-\alpha\epsilon-\widehat{T}(i)))^{2(3-\alpha\epsilon-\widehat{T}(i))-(2\alpha\epsilon-\widehat{T}(i))}}{(2\alpha\epsilon-\widehat{T}(i))^{2\alpha\epsilon-\widehat{T}(i)}(c(3-\alpha\epsilon-\widehat{T}(i))-(2\alpha\epsilon-\widehat{T}(i)))^{2(3-\alpha\epsilon-\widehat{T}(i))-2(2\alpha\epsilon-\widehat{T}(i))}}
\!\right)^k
}\!\!\!.\]

{\noindent Therefore, $Y$ is bounded by $O^*$ of:}

\[\displaystyle{2^{o(k)}\cdot
\max_{i=1}^{\frac{1}{\epsilon}}\max_{(i-1)\leq\alpha\leq i}
\left(
\frac{(c(3-\alpha\epsilon-\widehat{T}(i)))^{6-4\alpha\epsilon-\widehat{T}(i)}}{(2\alpha\epsilon-\widehat{T}(i))^{2\alpha\epsilon-\widehat{T}(i)}(c(3-\alpha\epsilon-\widehat{T}(i))-2\alpha\epsilon+\widehat{T}(i))^{6-6\alpha\epsilon}}
\right)^k
}\!\!\!.\]

{\noindent To allow us to compute a bound efficiently, we choose $\epsilon=10^{-5}$. Choosing a smaller $\epsilon$ results in a better bound, but the improvement is negligible. Table \ref{tab:cwsp}, given below, presents bounds for $Y$, corresponding to different choices of $c$. In particular, by choosing $c=1.591$, we get the bound $O^*(8.097^k)$.}

\begin{table}
[!ht]
\centering
\begin{tabular}{|c|c|c|c|}
	\hline
	$c$         & $Y$                   & $i$         & $T(i-1)$       \\\hline\hline	
	$1.59$     &     $O^*(8.096400^k)$ & 54511       & 0.1476545       \\\hline		
	$\bf 1.591$ & $\bf O^*(8.096396^k)$ & {\bf 54515} & {\bf 0.1476821} \\\hline
	$1.592$     &     $O^*(8.096397^k)$ & 54518       & 0.1477028       \\\hline		
\end{tabular}\medskip
\caption{Upper bounds for $Y$, corresponding to different choices of $c$, where $\epsilon=10^{-5}$. The second and third entries specify approximate values for $i$ and $T(i-1)$ that correspond to the maximum.}
\label{tab:cwsp}
\end{table}

{\noindent We thus obtain the following corollary:}

\begin{cor}\label{cor:cwsp}
For $c=1.591$ and $\epsilon=10^{-5}$, {\sc $(3,k)$-CWSP} can be solved in deterministic time $O^*(8.097^k)$.
\end{cor}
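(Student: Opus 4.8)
The plan is to obtain the corollary directly from Lemma~\ref{lemma:cwsprun} by substituting the stated values $c=1.591$ and $\epsilon=10^{-5}$ and then bounding the resulting numerical quantity. Recall that Lemma~\ref{lemma:cwsprun} already gives, for every fixed $c\geq 1$ and $0<\epsilon<1$, a deterministic running time $Y=O^*\bigl(2^{o(k)}\cdot\max_{i}\max_{j}X\bigr)$, where $X$ is the explicit product displayed there, written in terms of $k$, $j$, $i$ and the discrete recursion $R$. So no new algorithm has to be designed; the work is purely to estimate $\max_{i}\max_{j}X$ for the chosen constants.

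First I would pass from the discrete recursion $R$ to its continuous, scaled analogue. Write $j=\alpha\epsilon k$ with $\alpha$ ranging over an interval of unit length containing $i$, and approximate $R(i)$ by $\widehat{T}(i)\,k=T(i-1)\,k$, where $T$ is the one-line recursion defined in the ``More Careful Running Time Analysis'' paragraph and $\widehat{T}(i)=T(i-1)$. The discretization errors here --- the floors $\lfloor\epsilon k\rfloor$, the ceilings inside the definition of $R$, and the fact that $X$ involves $R(i-i')$ for $i'\in\{1,2\}$, which gets collapsed into the single quantity $\widehat{T}(i)$ --- must be checked to cost only a $2^{o(k)}$ factor together with a slack in the base of the exponent that is $O(\epsilon)$ per unit, hence negligible for $\epsilon=10^{-5}$ (this is exactly the slack the paper refers to when it says a smaller $\epsilon$ would improve the bound only negligibly). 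Every exponent appearing in $X$ is affine in $j$ and in the relevant values of $R$, hence of the form $k$ times an affine function of $\alpha$, $\epsilon$ and $\widehat{T}(i)$; substituting and simplifying turns the bound on $Y$ into $O^*\bigl(2^{o(k)}\cdot G(c,\epsilon)^k\bigr)$, where
\[
G(c,\epsilon)=\max_{1\leq i\leq 1/\epsilon}\ \max_{i-1\leq\alpha\leq i}\ \frac{\bigl(c(3-\alpha\epsilon-\widehat{T}(i))\bigr)^{6-4\alpha\epsilon-\widehat{T}(i)}}{\bigl(2\alpha\epsilon-\widehat{T}(i)\bigr)^{2\alpha\epsilon-\widehat{T}(i)}\,\bigl(c(3-\alpha\epsilon-\widehat{T}(i))-2\alpha\epsilon+\widehat{T}(i)\bigr)^{6-6\alpha\epsilon}},
\]
which is exactly the closed form displayed at the end of the excerpt's running-time analysis.

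The last step is numerical: fix $\epsilon=10^{-5}$, iterate the recursion for $T$ through the $10^5$ steps $T(1),T(2),\ldots,T(1/\epsilon)$, and for $c=1.591$ maximize $G$ over the $10^5$ choices of $i$ and, for each, over $\alpha\in[i-1,i]$ --- the inner maximization is of a smooth one-variable function, handled by elementary calculus or a sufficiently fine grid. The maximum turns out to be attained at an interior index $i\approx 54515$, where $\widehat{T}(i)=T(i-1)\approx 0.1476821$, and its value is at most $8.096396<8.097$, as recorded in Table~\ref{tab:cwsp}. Hence $Y=O^*(8.097^k)$, which is the claim.

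I expect the main obstacle to be the second half of the previous paragraph rather than anything conceptual: making fully rigorous that replacing $R$ by $Tk$ --- with all the floors and ceilings, and with $R(i-1)$ versus $R(i-2)$ both represented by $\widehat{T}(i)$ --- costs only a $2^{o(k)}$ factor together with a base-slack that shrinks with $\epsilon$, and then trusting the numerical optimization, in particular confirming that the maximum of $G(1.591,10^{-5})$ over $i\in\{1,\dots,10^5\}$ is the interior value identified in the table and that no boundary choice of $i$ (near $1$, or near $1/\epsilon$) beats it. Once these routine but slightly delicate points are settled the corollary is immediate; combined with the reduction from {\sc $(3,k)$-WSP} to {\sc $(3,k)$-CWSP} by unbalanced cutting developed in the rest of this section, it will yield the $O^*(8.097^k)$ algorithm for {\sc $(3,k)$-WSP} announced in Theorem~\ref{theorem:WSP}.
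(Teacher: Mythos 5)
Your proposal is correct and follows essentially the same route as the paper: substitute the fixed constants into the bound of Lemma~\ref{lemma:cwsprun}, replace the discrete recursion $R$ by its scaled continuous analogue $T$ with $\widehat{T}(i)=T(i-1)$, bound the resulting quantity by the closed-form expression maximized over $i$ and $\alpha\in[i-1,i]$, and evaluate it numerically for $c=1.591$, $\epsilon=10^{-5}$, obtaining the maximum near $i\approx 54515$ with value below $8.097$ (as in Table~\ref{tab:cwsp}). The discretization issues you flag are handled just as informally in the paper, so there is no substantive difference between the two arguments.
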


\subsection{A Deterministic Algorithm for $(3,k)$-WSP}\label{section:wspdet}

Fix $c=1.591$ and $\epsilon=10^{-5}$. We now present the pseudocode of \alg{WSPAlg}, a deterministic algorithm for {\sc $(3,k)$-WSP}, followed by informal explanations.

\begin{algorithm}[!ht]
\caption{\alg{WSPAlg}($U,{\cal S},w,W,k$)}
\begin{algorithmic}[1]
\STATE\label{step:wspalgorder} order the universe $U$ arbitrarily as $U=\{u_1,u_2,\ldots,u_n\}$.

\FORALL{distinct $\ell_1,\ell_2,\ldots,\ell_{\frac{1}{\epsilon}},r_1,r_2,\ldots,r_{\frac{1}{\epsilon}}\in U$ s.t. $\ell_i<r_i$ for all $i\in\{1,2,\ldots,\frac{1}{\epsilon}\}$}\label{step:orderlr}
	\FOR{$i=1,2,\ldots,\frac{1}{\epsilon}$}
		\STATE\label{step:orderUi} define $U^i=\{{\ell_i},{\ell_i+1},\ldots,{r_i}\}\setminus \bigcup_{j=1}^{i-1}U^j$.
	\ENDFOR\label{step:orderUi2}
	\STATE\label{step:orderUlast2} $U^{\frac{1}{\epsilon}+1} = U\setminus \bigcup_{j=1}^{\frac{1}{\epsilon}}U^j$.
	\STATE\label{step:orderU} let $U'=\{u'_1,u'_2,\ldots,u'_n\}$ be an ordered copy of $U$ such that the elements in $U^1$ appear first, then those in $U^2$, and so on, where the order between the elements in each $U^i$ is preserved according to their order in $U$.
	\STATE\label{step:orderf} define $f:\{1,2,\ldots,\frac{1}{\epsilon}\}\rightarrow U'$ such that $f(i)$ is the largest element in $U'$ that belongs to $U^i$.
	\STATE\label{step:wspaccept} {\bf if} the procedure of Corollary \ref{cor:cwsp} accepts $(U',{\cal S},w,W,k,f)$ {\bf then} accept. {\bf end if}
\ENDFOR


\STATE reject.
\end{algorithmic}
\end{algorithm}

Algorithm \alg{WSPAlg} first orders the elements in $U$ (Step 1). Then, it exhaustively examines every option to {\em explicitly} cut $\frac{1}{\epsilon}$ pieces from the universe, such that a piece consists of elements that are consecutively ordered in $U$ when considering only those left after removing the elements that belong to pieces we have already defined (see Steps \ref{step:orderlr}--\ref{step:orderUi2}). More precisely, a piece $U^i$ is defined to contain all the elements between $\ell_i$ and $r_i$ that are not contained in any piece $U^j$ for $j<i$ (Step \ref{step:orderUi}). By the outer loop (Step \ref{step:orderlr}), we consider all possible choices for such values $\ell_i$ and $r_i$. Note that there are only $O(|U|^{O(\frac{1}{\epsilon})})=O^*(1)$ such options.
Another piece, $U^{\frac{1}{\epsilon}+1}$, simply consists of the elements in $U$ that were not chosen for the previous pieces (Step \ref{step:orderUlast2}).
The order between the pieces is defined according to the order in which they were defined, which, in turn, defines a reordering of the entire universe $U$ (in Step \ref{step:orderU}). The function $f$ is also defined according to the order between the pieces (see Step \ref{step:orderf}), assigning, in an ascending order, the last element of each piece. Overall, \alg{WSPAlg} uses unbalanced cutting of the universe to obtain a set of inputs for {\sc $(3,k)$-CWSP}, and accepts (in Step \ref{step:wspaccept}) {\em iff} at least one of them is a yes-instance.

We next consider the correctness and running time of \alg{WSPAlg}. We thus prove the correctness of Theorem \ref{theorem:WSP} (in Section \ref{section:strategies}).

\begin{theorem}\label{theorem:wspalgcor}
\alg{WSPAlg} solves {\sc $(3,k)$-WSP} in time $O^*(8.097^k)$.
\end{theorem}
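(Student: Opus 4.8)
The plan is to prove two claims: that \alg{WSPAlg} runs in time $O^*(8.097^k)$, and that it accepts $(U,{\cal S},w,W,k)$ exactly when this is a yes-instance of {\sc $(3,k)$-WSP}. The running time is the easy part. Since $\epsilon=10^{-5}$ is a constant, the loop of Step \ref{step:orderlr} ranges over $|U|^{O(1/\epsilon)}=O^*(1)$ tuples $(\ell_1,r_1,\ldots,\ell_{1/\epsilon},r_{1/\epsilon})$; for each such tuple, Steps \ref{step:orderUi}--\ref{step:orderf} build $U^1,\ldots,U^{1/\epsilon+1}$, the reordered universe $U'$ and the function $f$ in polynomial time, and Step \ref{step:wspaccept} invokes the procedure of Corollary \ref{cor:cwsp}, which solves {\sc $(3,k)$-CWSP} in deterministic time $O^*(8.097^k)$. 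Multiplying, \alg{WSPAlg} runs in time $O^*(8.097^k)$.

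One direction of correctness is immediate: if \alg{WSPAlg} accepts, then in some iteration the procedure of Corollary \ref{cor:cwsp} accepts the constructed input $(U',{\cal S},w,W,k,f)$, so this is a yes-instance of {\sc $(3,k)$-CWSP}; by the definition of {\sc $(3,k)$-CWSP} (Appendix \ref{section:subcasewsp}) there is a subfamily ${\cal S}'\subseteq{\cal S}$ of $k$ pairwise disjoint sets of total weight at least $W$, and since ${\cal S}$ is unchanged this witnesses that $(U,{\cal S},w,W,k)$ is a yes-instance of {\sc $(3,k)$-WSP}.

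The substantive direction is the converse. Given a solution ${\cal S}'=\{T_1,\ldots,T_k\}$ ($k$ pairwise disjoint $3$-sets of total weight at least $W$), I would exhibit a choice of $\ell_1,r_1,\ldots,\ell_{1/\epsilon},r_{1/\epsilon}$ for which the induced {\sc $(3,k)$-CWSP} instance is a yes-instance, so that the procedure of Corollary \ref{cor:cwsp} accepts in that iteration. Following Appendix \ref{sec:wspIntuition}, I would build the pieces greedily and inductively. Suppose $U^1,\ldots,U^{i-1}$ are fixed; call a set of ${\cal S}'$ \emph{uncaptured} if none of its elements lies in $U^1\cup\cdots\cup U^{i-1}$, and call an element \emph{pending} if it is a non-minimum element of a captured set that still lies in $U\setminus(U^1\cup\cdots\cup U^{i-1})$. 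Sliding a window over the elements of $U\setminus(U^1\cup\cdots\cup U^{i-1})$ in their $U$-order yields a family of at most $\lceil 3(k-(i-1)\lfloor\epsilon k\rfloor)/\lfloor\epsilon k\rfloor\rceil$ disjoint candidate windows, each capturing exactly $\lfloor\epsilon k\rfloor$ of the uncaptured sets; by the pigeonhole principle one of them contains at least $\lceil (2(i-1)\lfloor\epsilon k\rfloor - R(i-1))/\lceil 3(k-(i-1)\lfloor\epsilon k\rfloor)/\lfloor\epsilon k\rfloor\rceil \rceil$ pending elements, using that the number of pending elements is at most $2(i-1)\lfloor\epsilon k\rfloor - R(i-1)$ by the inductive invariant stated below. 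Take $U^i$ to be such a window, and set $\ell_i,r_i$ to be its smallest and largest elements in the $U$-order, so that Step \ref{step:orderUi} recreates $U^i$ exactly; then $U'$ and $f$ follow from Steps \ref{step:orderU}--\ref{step:orderf}.

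It remains to verify that $(U',{\cal S},w,W,k,f)$ built from this choice is a yes-instance of {\sc $(3,k)$-CWSP}, taking as the ordered solution the sets $S_1,\ldots,S_k$ obtained by sorting $T_1,\ldots,T_k$ by increasing $U'$-minimum. Solution condition~1 holds by the sorting. Because each $U^i$ captures exactly $\lfloor\epsilon k\rfloor$ new sets, whose $U'$-minima lie in $U^i$, the sets $S_1,\ldots,S_{i\lfloor\epsilon k\rfloor}$ are precisely those captured by $U^1,\ldots,U^i$, and every later set avoids $U^1\cup\cdots\cup U^i$; since $f(i)=\max_{U'}U^i$, all elements of $S_{1+i\lfloor\epsilon k\rfloor},\ldots,S_k$ exceed $f(i)$, which gives solution condition~3. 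Solution condition~2 is the heart of the proof: by induction on $i$, the number of non-minimum elements of $S_1,\ldots,S_{i\lfloor\epsilon k\rfloor}$ that are at most $f(i)$ equals $R(i-1)$ plus the number of pending elements caught by $U^i$, which by the pigeonhole bound is at least $R(i-1)+\lceil (2(i-1)\lfloor\epsilon k\rfloor - R(i-1))/\lceil 3(k-(i-1)\lfloor\epsilon k\rfloor)/\lfloor\epsilon k\rfloor\rceil \rceil = R(i)$, matching the recursion in the definition of {\sc $(3,k)$-CWSP}. Since \alg{WSPAlg} enumerates this $(\ell_i,r_i)$-choice and Corollary \ref{cor:cwsp} then accepts, \alg{WSPAlg} accepts. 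I expect the main obstacle to be precisely this last verification: showing that at each stage the greedy window can be chosen so as to simultaneously capture exactly $\lfloor\epsilon k\rfloor$ fresh sets, lie among at most $\lceil 3(k-(i-1)\lfloor\epsilon k\rfloor)/\lfloor\epsilon k\rfloor\rceil$ disjoint alternatives covering all pending elements, and match the ceilings in $R$ exactly, together with the bookkeeping of which elements have left the universe; everything else is routine. Combined with Corollary \ref{cor:cwsp}, this establishes Theorem \ref{theorem:WSP}.
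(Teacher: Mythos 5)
Your overall route is the paper's: enumerate the cut points $\ell_i,r_i$, reorder the universe, reduce to {\sc $(3,k)$-CWSP} via Corollary \ref{cor:cwsp}, and handle the running time and the easy direction exactly as the paper does. The gap is in the inductive construction of the pieces $U^i$ for the converse direction, i.e., precisely the step you flag as "the main obstacle". You slice the remaining universe into consecutive windows each of which "captures exactly $\lfloor\epsilon k\rfloor$ of the uncaptured sets". Two things go wrong with a slicing driven by counting captured \emph{sets}. First, solution condition 3 (and your claim that sorting by $U'$-minimum makes $S_1,\ldots,S_{i\lfloor\epsilon k\rfloor}$ exactly the sets captured so far) needs that \emph{no} set left uncaptured after stage $i$ has an element in $U^i$; under your greedy "fresh capture" bookkeeping the chosen window may contain elements of uncaptured sets that were assigned to other, unchosen candidate windows, and those sets then appear later in the order yet contain elements $\leq f(i)$, so condition 3 fails. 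Second, the pigeonhole needs the candidate windows to cover every pending element and to number at most $\lceil 3(k-(i-1)\lfloor\epsilon k\rfloor)/\lfloor\epsilon k\rfloor\rceil$; windows cut by set-capture counts neither automatically cover the tail of the remaining universe nor come with that count. The device you are missing (and which the paper uses) is to cut the leftover universe into blocks containing exactly $\lfloor\epsilon k\rfloor$ \emph{elements} of the uncaptured sets (the last block possibly short): these blocks partition the whole remaining universe, so all pending elements are covered, their number is exactly $\lceil 3(k-(i-1)\lfloor\epsilon k\rfloor)/\lfloor\epsilon k\rfloor\rceil$, and each block meets at most $\lfloor\epsilon k\rfloor$ uncaptured sets; the block maximizing caught pending elements is then \emph{extended} (staying consecutive) until it meets exactly $\lfloor\epsilon k\rfloor$ sets, which secures condition 3 without losing caught elements.

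A smaller slip: your pigeonhole statement, that some window catches at least $\lceil(2(i-1)\lfloor\epsilon k\rfloor-R(i-1))/\lceil 3(k-(i-1)\lfloor\epsilon k\rfloor)/\lfloor\epsilon k\rfloor\rceil\rceil$ pending elements "using that the number of pending elements is at most $2(i-1)\lfloor\epsilon k\rfloor-R(i-1)$", is the wrong direction and is false whenever strictly more than $R(i-1)$ elements were already caught (then there are fewer pending elements, and no single window need catch that much). What one actually proves is the cumulative bound $|P_i|\geq|P_{i-1}|+\lceil(2(i-1)\lfloor\epsilon k\rfloor-|P_{i-1}|)/x\rceil$ with $x=\lceil 3(k-(i-1)\lfloor\epsilon k\rfloor)/\lfloor\epsilon k\rfloor\rceil$, and then $|P_i|\geq R(i)$ follows from $|P_{i-1}|\geq R(i-1)$ because $t\mapsto t+\lceil(C-t)/x\rceil$ is non-decreasing in $t$ for $x\geq 1$. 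With the element-count slicing, the extension step, and this bookkeeping, your argument becomes the paper's proof.
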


\begin{proof}
The running time of the algorithm follows immediately from the pseudocode and Corollary \ref{cor:cwsp}.

For the easier direction, note that for any ordering of the universe $U$ as $U'$, and for any function $f:\{1,2,\ldots,\frac{1}{\epsilon}\}$, a solution to the instance $(U',{\cal S},w,W,k,f)$ of {\sc $(3,k)$-CWSP} is also a solution to the instance $(U,{\cal S},w,W,k)$ of {\sc $(3,k)$-WSP}. Therefore, if the algorithm accepts, the input is a yes-instance of {\sc $(3,k)$-WSP}.

Now, suppose that the input is a yes-instance of {\sc $(3,k)$-WSP}, and let ${\cal S}'$ be a corresponding solution. Let $U=\{u_1,u_2,\ldots,u_n\}$ be the order chosen by \alg{WSPAlg} in Step \ref{step:wspalgorder}. We define $U^1,U^2,\ldots,U^{\frac{1}{\epsilon}}$, $f$ and an order $S_1,S_2,\ldots,S_k$ of the sets in ${\cal S}'$ as follows.

\begin{enumerate}
\item Initialize:
	\begin{enumerate}
	\smallskip
	\item Let $S_1,S_2,\ldots,S_{\lfloor\epsilon k\rfloor}$ be the sets in ${\cal S}'$ satisfying [$\min(S_1)\!<\!\min(S_2)\!<\!\ldots\!<\!\min(S_{\lfloor\epsilon k\rfloor})$] and [$\min(S_{\lfloor\epsilon k\rfloor})\!<\!\min(S)$] for all $S\!\in\!{\cal S}'\setminus\{S_1,S_2,\ldots,S_{\lfloor\epsilon k\rfloor}\}$.
	
	\smallskip
	\item Let $U^1=\{u_1, u_2, \ldots, \min(S_{\lfloor\epsilon k\rfloor})\}$ .
	
	\smallskip
	\item Let $f(1)=\min(S_{\lfloor\epsilon k\rfloor})$. Note that all the elements in $(\bigcup{\cal S}')\setminus(\bigcup_{j=1}^{\lfloor\epsilon k\rfloor}S_j)$ are larger than $f(1)$.
	
	\smallskip
	\item Let $U^1_{ord}=\{u'_1,u'_2,\ldots,u'_n\}$ be an ordered copy of $U$ such that the elements in $U^1$ appear first, where the internal order between the elements in $U^1$, as well as the internal order between the elements in $U\setminus U^1$, are preserved according to their order in $U$.
	
	\smallskip
	\item Let $P_1=\emptyset$.
	\end{enumerate}

\medskip	
\item For $i=2,3,\ldots,\frac{1}{\epsilon}$:
	\begin{enumerate}
	\smallskip
	\item Let $A' = \bigcup_{j=1}^{(i-1)\lfloor\epsilon k\rfloor}S_j$, and $A'_{min} = \bigcup_{j=1}^{(i-1)\lfloor\epsilon k\rfloor}\{\min(S_j)\}$ where $\min$ is computed according to the ordered universe $U^{i-1}_{ord}$. 
		
	\smallskip
	\item Let $B'=(\bigcup{\cal S}')\setminus(\bigcup_{j=1}^{(i-1)\lfloor\epsilon k\rfloor}S_j)$.
	
	\smallskip
	\item Let $B^1,B^2,\ldots,B^x$, where $x=\lceil\frac{|B'|}{\lfloor\epsilon k\rfloor}\rceil$, be a partition of $U^{i-1}_{ord}\setminus(\bigcup_{j=1}^{i-1}U^j)$ (into $x$ disjoint universes) such that each $B^i$ is a set of elements that are consecutively ordered in $U^{i-1}_{ord}$ and contains exactly $\lfloor\epsilon k\rfloor$ elements from $B'$, except $B^x$, if $(|B'|\!\ \mathrm{mod}\!\ \lfloor\epsilon k\rfloor)\neq 0$, which contains exactly $(|B'|\!\ \mathrm{mod}\!\ \lfloor\epsilon k\rfloor)$ elements from $B'$.
	
	\smallskip
	\item Let $U^i$ be a universe that contains the maximum number of elements from $A'\setminus (A'_{min}\cup P_{i-1})$ among the universes $B^1,B^2,\ldots,B^x$. If $U^i$ contains elements from less than $\lfloor\epsilon k\rfloor$ sets in $({\cal S}'\setminus\{S_1,S_2,\ldots,S_{(i-1)\lfloor\epsilon k\rfloor}\}$, add elements from $U^{i-1}_{ord}\setminus(\bigcup_{j=1}^{i-1}U^j)$ to $U^i$ such that its elements remain consecutively ordered in $U^{i-1}_{ord}$ and it contains elements from exactly $\lfloor\epsilon k\rfloor$ sets in $({\cal S}'\setminus\{S_1,\ldots,S_{(i-1)\lfloor\epsilon k\rfloor}\}$.
	
	\smallskip
	\item Let $f(i)$ be the largest element in $U^i$.

	\smallskip
	\item Let $U^i_{ord}=\{u'_1,u'_2,\ldots,u'_n\}$ be an ordered copy of $U$ such that the elements in $\bigcup_{j=1}^{i-1}U^j$ appear first, followed by those in $U^i$, where the internal orders between the elements in $\bigcup_{j=1}^{i-1}U^j$, $U^i$ and $U\setminus (\bigcup_{j=1}^{i}U^j)$ are preserved according to their order in $U^{i-1}_{ord}$.
	
	\smallskip
	\item Let $S_{(i-1)\lfloor\epsilon k\rfloor+1},S_{(i-1)\lfloor\epsilon k\rfloor+2},\ldots,S_{i\lfloor\epsilon k\rfloor}$ be the sets in ${\cal S}'\setminus\{S_1,S_2,\ldots,$ $S_{(i-1)\lfloor\epsilon k\rfloor}\}$ that contain elements from $U^i$, such that $\min(S_{j-1})<\min(S_j)$ (according to $U^i_{ord}$).
	
	\smallskip
	\item Note that all the elements in $(\bigcup{\cal S}')\setminus(\bigcup_{j=1}^{i\lfloor\epsilon k\rfloor}S_j)$ are larger than $f(i)$ (according to $U^i_{ord}$).
	
	\smallskip
	\item Let $P_i$ be the set of elements in $\bigcup_{j=1}^{i\lfloor\epsilon k\rfloor}(S_j\setminus\{\min(S_j)\})$ that are smaller or equal to $f(i)$ (according to $U^i_{ord}$). We have that:\\ $\displaystyle{|P_i|\geq |P_{i-1}| \!+\! \left\lceil\frac{|A'\setminus (A'_{min}\cup P_{i-1})|}{x}\right\rceil \geq R(i\!-\!1) \!+\! \left\lceil\frac{2(i\!-\!1)\lfloor\epsilon k\rfloor-R(i\!-\!1)}{\lceil|B'|/\lfloor\epsilon k\rfloor\rceil}\right\rceil}$ $= \displaystyle{R(i\!-\!1) + \left\lceil\frac{2(i-1)\lfloor\epsilon k\rfloor-R(i\!-\!1)}{\lceil3(k-(i-1)\lfloor\epsilon k\rfloor)/\lfloor\epsilon k\rfloor\rceil}\right\rceil}$.
	\end{enumerate}


\smallskip
\item Let $U'=U^{\frac{1}{\epsilon}}_{ord}$.

\smallskip
\item Let $S_{1+\frac{1}{\epsilon}\lfloor\epsilon k\rfloor},\ldots,S_k$ be the sets in  ${\cal S}'\setminus\{S_1,\ldots,S_{\frac{1}{\epsilon}\lfloor\epsilon k\rfloor}\}$, such that $\min(S_{j-1})$ $<\min(S_j)$ where $\min$ and $<$ are computed according to the ordered universe $U'$.

\end{enumerate}

{\noindent We have thus defined an instance $(U',{\cal S},w,W,k,f)$ of {\sc $(3,k)$-CWSP} that is examined by \alg{WSPAlg} in Step \ref{step:wspaccept}. Since this is a yes-instance (by the above arguments, $S_1,S_2,\ldots,S_k$ is a solution to this instance), the algorithm accepts.}\qed
\end{proof}

\section{Solving the $P_2$-Packing Problem}\label{section:p2pack}

We say that a set of $t$ (node-)disjoint simple paths, each on 3 nodes, is a $t$-packing. We first note that the papers \cite{p2packdet,p2packrand} develop an algorithm that given a bipartite graph $H=(A,B,E)$, decides in polynomial time if $G$ contains a $k$-packing such that the end-nodes of its paths belong to $B$. Then, given a graph $G=(V,E)$, they solve {\sc $P_2$-Packing} by examining $8^{k+o(k)}$ options to deterministically partition $V$ into the node-sets $A$ and $B$, and accepting {\em iff} at least one of the resulting bipartite graphs $H=(A,B,\{\{a,b\}\in E: a\in A, b\in B\})$ has a $k$-packing such that the end-nodes of its paths belong to $B$. In the randomized version, they examine $6.75^k$ such options. The goal is to examine an option that captures the end-nodes of a $k$-packing in $G$ (if one exists) in $B$, and its other nodes in $A$. We observe that this can be done by using a $(|V|,3k,k)$-universal set (see Definition \ref{theorem:splitter}). By Theorem \ref{theorem:splitter}, this results in an $O^*(6.75^{k+o(k)})$-time deterministic algorithm for {\sc $P_2$-Packing}.

In the rest of this section, we develop an alternative deterministic FPT algorithm for {\sc $P_2$-Packing}, demonstrating the fifth strategy in Section~\ref{section:strategies}. This result is the least interesting one in the paper, but it shows a different approach, combined with iterative compression, in which one can apply our unbalanced cutting of the universe technique.
We will focus on the following variant of {\sc $P_2$-Packing}:

\myparagraph{Iterative Compression $P_2$-Packing ($P_2$-ICP)} Given an undirected graph $G=(V,E)$ and a parameter $k\in\mathbb{N}$, along with a $(k-1)$-packing in $G$, denoted $S_1=(V_1,E_1),S_2=(V_2,E_2),\ldots,S_{k-1}=(V_{k-1},E_{k-1})$, decide if $G$ has a $k$-packing.

\smallskip
Clearly, if we can solve this variant in time $T$, we can solve {\sc $P_2$-Packing} in time $O^*(T)$ by using the algorithm for {\sc $P_2$-ICP} $k$ times, starting with an empty solution, and then, for $i=1,2,\ldots,k$, constructing a solution that contains exactly $i$ paths. It will be useful to focus on {\sc $P_2$-ICP}, since we can thus use the following result of \cite{p2packraible}:\footnote{We note that our approach does not lead to an algorithm for {\sc Weighted $P_2$-Packing} since Theorem \ref{theorem:p2improve} does not concern weighted graphs.}

\begin{theorem}\label{theorem:p2improve}
Let $G$ be a graph that has a $k$-packing. For any $(k\!-\!1)$-packing in $G$, denoted $S_1=(V_1,E_1),S_2\!=(V_2,E_2),\ldots,S_{k-1}=(V_{k-1},E_{k-1})$, $G$ has a $k$-packing that contains at least $\lceil2.5(k-1)\rceil$ nodes from $\bigcup_{i=1}^{k-1}V_i$.
\end{theorem}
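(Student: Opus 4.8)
Let $G$ be a graph that has a $k$-packing. For any $(k-1)$-packing in $G$, denoted $S_1=(V_1,E_1),\ldots,S_{k-1}=(V_{k-1},E_{k-1})$, $G$ has a $k$-packing that contains at least $\lceil 2.5(k-1)\rceil$ nodes from $\bigcup_{i=1}^{k-1}V_i$.

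**Proof plan.**

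The plan is to start from an arbitrary $k$-packing $\mathcal{Q}$ and the given $(k-1)$-packing $\mathcal{P}=\{S_1,\ldots,S_{k-1}\}$, and to modify $\mathcal{Q}$ step by step so that it reuses more and more vertices of $U=\bigcup_{i=1}^{k-1}V_i$, the vertex set covered by $\mathcal{P}$ (which has size exactly $3(k-1)$), without losing the property of being a $k$-packing. The quantity to drive up is $|V(\mathcal{Q})\cap U|$, and the target $\lceil 2.5(k-1)\rceil$ is $2.5(k-1)$ rounded up, i.e.\ roughly $\tfrac56$ of $|U|$. First I would set up an exchange argument: call a path $S_i\in\mathcal{P}$ \emph{bad} (with respect to the current $\mathcal{Q}$) if $\mathcal{Q}$ uses at most two of the three vertices of $S_i$; the claim will follow once we show that $\mathcal{Q}$ can be chosen so that the number of bad paths is at most $\tfrac12(k-1)$, since each good path contributes $3$ vertices to $V(\mathcal{Q})\cap U$ and each bad path contributes at least $1$ — wait, that only gives $3\cdot\tfrac12(k-1)+1\cdot\tfrac12(k-1)=2(k-1)$. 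So the bookkeeping must be sharper: I would instead classify each $S_i$ by how many of its vertices lie in $V(\mathcal{Q})$, getting classes $c_0,c_1,c_2,c_3$ with $\sum c_j=k-1$ and $|V(\mathcal{Q})\cap U|=\sum_j j\,c_j$, and argue that an optimal (vertex-reuse-maximizing) $k$-packing forces $2c_0+c_1 \le \tfrac12(k-1)$ or some similar linear inequality, which combined with $\sum c_j = k-1$ yields $\sum j c_j \ge 2.5(k-1)$.

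The engine behind the inequality should be the following local-replacement lemma, which I expect to be the technical heart: if $S_i$ is a path in $\mathcal{P}$ with at most one vertex used by the current $k$-packing $\mathcal{Q}$ (a "class $c_0$ or $c_1$" path), then either we can swap a path of $\mathcal{Q}$ out and a sub-structure built from $S_i$ in, increasing $|V(\mathcal{Q})\cap U|$, or we can charge the deficiency of $S_i$ to distinct paths of $\mathcal{Q}$ that are "blocking" it — and crucially the blocking paths charged to $S_i$ each use enough vertices of $U$ elsewhere to compensate. Concretely, if $S_i=x\!-\!y\!-\!z$ with, say, only $y$ possibly in $V(\mathcal{Q})$, then $x$ and $z$ are each covered by some path $Q_x,Q_z\in\mathcal{Q}$; if $Q_x=Q_z$ is a single path $Q$ through both $x$ and $z$, then $Q$ has a third vertex, and replacing $Q$ by the edge $x\!-\!y$ (if $y$ is free) or by an edge of $S_i$ is delicate and may not be a net gain, so I would analyze the adjacency structure of $Q$ and $S_i$ carefully. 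I'd organize this as: take $\mathcal{Q}$ minimizing the number of class-$0$ paths, then among those minimizing the number of class-$1$ paths; then show no further local exchange is possible, and from non-exchangeability extract the inequality. This is the step I expect to be the main obstacle — getting the exchange cases to close so that the constant comes out to exactly $2.5$ rather than $2$ requires exploiting that each path has $3$ vertices and that $\mathcal{P}$ and $\mathcal{Q}$ have almost the same cardinality, i.e.\ a counting argument on the bipartite "uses" relation between $\mathcal{P}$-paths and $\mathcal{Q}$-paths where the degree constraints (each $\mathcal{Q}$-path touches at most $3$ vertices total, hence at most $3$ of the disjoint $\mathcal{P}$-paths) do the work.

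An alternative, possibly cleaner route I would also try: build an auxiliary bipartite multigraph $H$ with one side the $k-1$ paths of $\mathcal{P}$ and the other side the $k$ paths of $\mathcal{Q}$, putting an edge for each vertex of $U$ that lies in both the corresponding $\mathcal{P}$-path and $\mathcal{Q}$-path; then $|V(\mathcal{Q})\cap U|=|E(H)|$, each $\mathcal{P}$-vertex has degree $\le 1$ into $H$ so $|E(H)|\le 3(k-1)$, and we want $|E(H)|\ge 2.5(k-1)$. The task becomes: show that if $\mathcal{Q}$ is chosen to maximize $|E(H)|$, then at most $\tfrac12(k-1)$ of the $\mathcal{P}$-paths contribute fewer than $3$ edges — and a $\mathcal{P}$-path $S_i$ contributing $\le 2$ edges has a vertex $v\notin V(\mathcal{Q})$, so $v$ is "wasted," and I would argue each wasted vertex can be blamed on a collision with a $\mathcal{Q}$-path using non-$U$ vertices, of which there are only $3k - |V(\mathcal{Q})\cap U| = 3k - |E(H)|$ many; setting up $3k - |E(H)| \ge (\text{number of wasted vertices}) \ge $ (a constant times the number of deficient $\mathcal{P}$-paths) and combining with $|E(H)| = 3(k-1) - (\text{wasted } U\text{-vertices})$ should pin down the bound, modulo the ceiling and the off-by-$O(1)$ from $k$ versus $k-1$. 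Either way the write-up reduces to a finite case analysis plus one linear-programming-style inequality; I would present the exchange lemma first, then the counting, then assemble the arithmetic to reach $\lceil 2.5(k-1)\rceil$.
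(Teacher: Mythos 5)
You should first note that the paper itself contains no proof of this statement: Theorem~\ref{theorem:p2improve} is imported verbatim from \cite{p2packraible} and used as a black box inside the iterative-compression step, so there is no in-paper argument to measure your attempt against; the comparison can only be against what a complete proof would require.

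Judged on that standard, your submission is a plan rather than a proof, and the gap is exactly where you predict it: the local-replacement/exchange lemma is never established, and the bookkeeping you do commit to does not close. With $c_j$ the number of paths of the old packing having exactly $j$ vertices reused, the target $\sum_j j\,c_j\geq 2.5(k-1)$ is equivalent to $3c_0+2c_1+c_2\leq \frac{1}{2}(k-1)$; the inequality you propose to extract from non-exchangeability, $2c_0+c_1\leq\frac{1}{2}(k-1)$, leaves $c_2$ completely uncontrolled, so a packing satisfying it may still reuse only $2(k-1)$ vertices --- the same shortfall you noticed in your first count, now hidden one level deeper. Your alternative counting route via the bipartite multigraph $H$ has the same issue: without a proven exchange property for an optimal $\mathcal{Q}$ there is no lower bound at all (a $k$-packing can avoid $U$ entirely in a large graph), and the inequality ``wasted $U$-vertices $\leq$ constant $\times$ deficient $\mathcal{P}$-paths'' is precisely the unproven case analysis, which you yourself flag as ``delicate and may not be a net gain'' in the configuration where one $\mathcal{Q}$-path covers two endpoints of an $S_i$. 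So the heart of the theorem --- the finite case analysis showing that every path of the old packing with at most two reused vertices can either be exploited to augment the reuse or charged in a way that forces the coefficient $2.5$ rather than $2$ --- is still missing; as it stands the argument establishes nothing beyond the trivial bound, and you would need to carry out (or cite, as this paper does) the full exchange analysis of Fernau and Raible to finish.
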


Assume WLOG that $k$ is odd. We next denote $X=\bigcup_{i=1}^{k-1}V_i$ and $Y= V\setminus X$. Moreover, given $p\in\{3,4,\ldots,3k-2.5(k-1)\}$ and $q\in\{\lceil\frac{p}{3}\rceil,\lceil\frac{p}{3}\rceil+1,\ldots,p\}$, let $Sol(p,q)$ be the set of $k$-packings that contain exactly $p$ nodes from $Y$ and  exactly $q$ paths such that each of them includes at least one node from $Y$. By Theorem \ref{theorem:p2improve}, an instance of {\sc $P_2$-ICP} is a yes-instance iff $\bigcup_{p=3}^{3k-2.5(k-1)}\bigcup_{q=\lceil \frac{p}{3}\rceil}^pSol(p,q)\neq\emptyset$.

In the following two sections, we develop two deterministic procedures, \alg{ICPPro1} and \alg{ICPPro2}, for which we prove the following results:

\begin{lemma}\label{lemma:pro1}
Given $p\in\{3,4,\ldots,3k-2.5(k-1)\}$, $q\in\{\lceil\frac{p}{3}\rceil,\lceil\frac{p}{3}\rceil+1,\ldots,p\}$ and an instance of {\sc $P_2$-ICP}, \alg{ICPPro1} returns a family ${\cal F}$ of subsets of size $3q-p$ of $X$ such that [$Sol(p,q)\neq\emptyset$] iff [there exist a set $F\in{\cal F}$ and a $(k-q)$-packing in $G$ whose nodes belong to $X\setminus F$]. \alg{ICPPro1} runs in time $O^*(6.75^{k+o(k)})$.
\end{lemma}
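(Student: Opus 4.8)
The plan is to design \alg{ICPPro1} as a color-coding / divide-and-color preprocessing step that separates, inside the ``old'' vertex set $X$, the nodes of a hypothetical solution's $q$ ``$Y$-touching'' paths from the nodes of its remaining $k-q$ paths. First I would observe that a packing in $Sol(p,q)$ consists of $q$ paths each meeting $Y$ and $k-q$ paths entirely inside $X$. The $q$ $Y$-touching paths contain exactly $p$ nodes of $Y$ and hence exactly $3q-p$ nodes of $X$. So the set $F^\star$ of $X$-nodes used by the $Y$-touching paths is a subset of $X$ of size exactly $3q-p$, and the $k-q$ remaining paths form a $(k-q)$-packing in $G[X\setminus F^\star]$. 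Conversely, if for some $F\in{\cal F}$ there is a $(k-q)$-packing in $G[X\setminus F]$, we would like to glue it to $q$ paths reconstructed (in polynomial time, using the bipartite subroutine of \cite{p2packdet,p2packrand} between $X$-side ``slots'' and $Y$) to obtain a $k$-packing, i.e.\ an element of $Sol(p,q)$. Thus the real content is to produce a family ${\cal F}\subseteq\binom{X}{3q-p}$ that is guaranteed to contain such an $F^\star$ whenever $Sol(p,q)\neq\emptyset$, while being small enough and computable fast enough to meet the $O^*(6.75^{k+o(k)})$ bound.

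The key steps, in order, are: (i) reduce to detecting a suitable $X$/$Y$-colouring — for the ``only if'' direction, given a packing $\sigma\in Sol(p,q)$, it uses $3k-p$ nodes of $X$ altogether, of which $3q-p$ lie on the $Y$-touching paths and $3(k-q)$ lie on the internal paths; we need to two-colour $X$ so that these $3q-p$ nodes get one colour and the $3(k-q)$ get the other; (ii) realise this colouring by a universal set — using Theorem \ref{theorem:splitter}, build an $(|X|,3k-p,3q-p)$-universal set ${\cal G}$; for each $g\in{\cal G}$ let $F_g=\{x\in X: g(x)=1\}$ restricted appropriately, and set ${\cal F}=\{F_g : g\in{\cal G}\}$ (after discarding sets of the wrong size, or, more carefully, iterating over subsets of the $1$-block of $g$ of size $3q-p$, which only costs a polynomial factor since $3q-p\le p=O(k)$ and the relevant bound is still governed by the binomial); (iii) verify the size/time bound — $|{\cal F}|$ and the construction time are $O^*\!\left(\binom{3k-p}{3q-p}2^{o(k)}\right)$, and one checks, over the ranges $p\le 3k-2.5(k-1)=0.5k+2.5$ and $q\le p$, that $\binom{3k-p}{3q-p}\le 6.75^{k+o(k)}$, with the extremal case essentially $p\approx 0.5k$, $q\approx p/3$, giving $\binom{2.5k}{0.5k}$-type quantities bounded by $6.75^k$ up to subexponential factors; (iv) verify correctness of the ``iff'' — for the forward direction take $F^\star$ as above and pick the universal-set function $g$ that agrees with the indicator of $F^\star$ on the $3k-p$ relevant $X$-nodes, so $F^\star\in{\cal F}$ and the internal paths witness the required $(k-q)$-packing in $G[X\setminus F^\star]$; for the backward direction take $F$ and the $(k-q)$-packing in $G[X\setminus F]$, and complete $F$ to $q$ $Y$-touching paths using the polynomial-time bipartite packing routine, noting that $|F|=3q-p$ forces exactly $p$ nodes from $Y$ and exactly $q$ paths each hitting $Y$, hence the result lies in $Sol(p,q)$.

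The main obstacle I expect is step (iii) together with making step (ii) tight: one must be careful that the ``$X$-side'' of the $Y$-touching paths is counted correctly (each such path has $1$ or $2$ nodes in $X$, summing to $3q-p$ over the $q$ paths, and $p$ nodes in $Y$ summing correctly), and that the universal set is built over the right ground size ($|X|$, or $3k-p$ after a standard reduction to the relevant coordinates) so that its cardinality is $\binom{3k-p}{3q-p}2^{o(k)}$ rather than something larger; then the genuinely delicate part is the numerical claim $\max_{p,q}\binom{3k-p}{3q-p}=6.75^{k+o(k)}$, which I would handle by writing $p=\alpha k$, $q=\beta k$ with $3\le p$, $\lceil p/3\rceil\le q\le p$, $\alpha\le 0.5+o(1)$, and maximising the entropy expression $(3-\alpha)H\!\big(\tfrac{3\beta-\alpha}{3-\alpha}\big)$ over the feasible region, checking the maximum is $\log_2 6.75$; this is the one place where a short but non-routine optimisation is needed, and it should be deferred to a computation (as the paper does elsewhere) rather than carried out inline. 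Finally, the backward direction's reconstruction of $q$ $Y$-touching paths from $F$ must be justified by explicitly invoking the polynomial-time bipartite $k'$-packing subroutine of \cite{p2packdet,p2packrand} with $A$-side a copy of $F$ (or its incident $X$-slots) and $B$-side $Y$, which is legitimate since that routine exactly decides existence of a packing with prescribed end-node side.
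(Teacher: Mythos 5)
There is a genuine gap, and it sits exactly where the real content of the lemma lies: the backward direction of the ``iff''. Your family ${\cal F}$ is produced by a universal set over $X$, so an arbitrary $F\in{\cal F}$ carries no guarantee that it can be completed to $q$ paths that each meet $Y$ and use exactly $p$ nodes of $Y$; yet the lemma (and the way Lemma \ref{lemma:pro2} consumes ${\cal F}$) requires that the mere existence of some $F\in{\cal F}$ together with a $(k-q)$-packing in $X\setminus F$ already implies $Sol(p,q)\neq\emptyset$. Your fix --- ``complete $F$ in polynomial time using the bipartite subroutine of \cite{p2packdet,p2packrand}'' --- does not work: that subroutine only decides packings in a bipartite graph whose paths have both end-nodes in $B$ and the middle node in $A$, whereas the $Y$-touching paths of a packing in $Sol(p,q)$ may have two adjacent $X$-nodes and a single $Y$-endpoint, a $Y$-middle with two $X$-endpoints, two $Y$-nodes and one $X$-node, or even all three nodes in $Y$; the needed edges inside $X$ and inside $Y$ are simply absent from the auxiliary bipartite graph. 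Nor can the completability of $F$ be waved off as an easy filter: for $q=\lceil p/3\rceil$ with $3q=p$ one has $F=\emptyset$, and deciding whether $\emptyset$ is ``completable'' is precisely deciding whether $G[Y]$ contains a $q$-packing, i.e.\ an instance of {\sc $P_2$-Packing} itself. (Two smaller slips: enumerating all size-$(3q-p)$ subsets of the $1$-block of a universal-set function is an exponential, not polynomial, overhead, and your extremal case for the binomial is off --- at $q\approx p/3$ the quantity $\binom{3k-p}{3q-p}$ is tiny; the maximum is attained near $q\approx p$ --- though the $6.75^{k+o(k)}$ bound itself does hold.)

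The paper avoids all of this by making completability part of the definition of ${\cal F}$: \alg{ICPPro1} is a dynamic program that builds the $q$ $Y$-touching paths one by one, with table entries indexed by $(p',q',m)$ \emph{and the exact set} $X'\subseteq X$ of size $3q'-p'$ of $X$-nodes used so far, storing $(p-p')$-representative families (over the universe $Y$) of the corresponding sets of non-minimal $Y$-nodes; ${\cal F}$ is then the collection of those $X'$ for which a final entry $\mathrm{M}[p,q,m,X']$ is nonempty. Membership in ${\cal F}$ thus certifies the existence of a $q$-packing with $X$-footprint exactly $F$, exactly $p$ nodes of $Y$, and every path meeting $Y$, so the backward direction is immediate by gluing, and the forward direction follows from representativeness over $Y$; the $6.75^{k+o(k)}$ bound comes from $\binom{3(k-1)}{3q'-p'}$ times the representative-family factor with $c=1$. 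If you want to salvage your route, you would have to replace the polynomial-time ``reconstruction'' by an algorithm of this kind anyway, i.e.\ the universal-set preprocessing does not remove the need for the representative-sets dynamic program that constitutes the paper's proof.
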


\begin{lemma}\label{lemma:pro2}
Given $p\in\{3,4,\ldots,3k-2.5(k-1)\}$, $q\in\{\lceil\frac{p}{3}\rceil,\lceil\frac{p}{3}\rceil+1,\ldots,p\}$ and an instance of {\sc $P_2$-ICP} along with a family ${\cal F}$ of subsets of size $3q-p$ of $X$, \alg{ICPPro2} decides if there exist a set $F\in{\cal F}$ and a $(k-q)$-packing in $G$ whose nodes belong to $X\setminus F$. \alg{ICPPro2} runs in time $O^*(6.777^k)$.
\end{lemma}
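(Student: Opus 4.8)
The plan is to turn the problem "is there a set $F\in{\cal F}$ and a $(k-q)$-packing in $G[X\setminus F]$?" into an instance of {\sc $(3,k')$-CWSP} (with $k'=k-q$) together with a separateness requirement coming from ${\cal F}$, and then invoke the procedure of Corollary \ref{cor:cwsp}. First I would observe that a $(k-q)$-packing of $G[X\setminus F]$ is exactly a subfamily of $k-q$ disjoint $3$-sets drawn from the family ${\cal S}_X$ of all node-sets of $P_2$'s contained in $G[X]$, subject to disjointness from $F$. So \alg{ICPPro2} should run over $F\in{\cal F}$, delete $F$ from the universe $X$, and look for $k-q$ disjoint sets from ${\cal S}_X$. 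Since $|{\cal F}|$ may be large, we cannot afford to iterate over it naively; instead I would feed the separateness constraint into the representative-sets machinery, exactly as in the scheme described for Strategy I: compute a family that $(3(k-q)-3)$-represents (in the unweighted sense) the family of partial $P_2$-packings of $G[X]$, and then for each set $F\in{\cal F}$ test disjointness — by the definition of a representative family, if \emph{some} $(k-q)$-packing avoids $F$, then one of the representatives does too. This decouples $|{\cal F}|$ from the exponential part of the running time.

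The key steps, in order, would be: (1) set $k'=k-q$ and note $3q-p\le 3q$, $k-q$ ranges as $p,q$ range; bound $k-q$ in terms of $k$ using Theorem \ref{theorem:p2improve} (which forces $q\le p\le 3k-2.5(k-1)=0.5k+2.5$, hence $k-q$ is bounded away from $k$, which is what creates room for a base below $6.777$ rather than $8.097$); (2) build, via the dynamic-programming-plus-representative-sets procedure underlying Corollary \ref{cor:cwsp} (applied to the instance with universe $X$, set-family ${\cal S}_X$, all weights zero, target $k'$), a small family $\widehat{\cal P}$ of partial solutions that $3(k'-q'')$-represents the true family of $(k-q)$-packings of $G[X]$ at each stage — crucially using \emph{unbalanced cutting} on $X$ so that the partial solutions we carry are smaller than $3\cdot(\text{number of sets chosen})$; (3) for each $F\in{\cal F}$, scan $\widehat{\cal P}$ for a packing disjoint from $F$, accepting iff one is found for some $F$; (4) plug the parameters $c=1.591$ (or a value re-optimized for the shrunken $k'$) and $\epsilon=10^{-5}$ into the running-time bound of Lemma \ref{lemma:cwsprun}/Corollary \ref{cor:cwsp}, and verify that with $k'\le 0.5k+O(1)$ and the polynomial cost of scanning $\widehat{\cal P}$ against each $F$, the total is $O^*(6.777^k)$. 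Correctness follows from Theorem \ref{theorem:p2improve} together with the representative-family guarantee: a valid $(k-q)$-packing avoiding some $F\in{\cal F}$ exists iff one of the computed representatives avoids that $F$.

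The main obstacle I expect is step (4): showing that the exponent one gets from {\sc $(3,k')$-CWSP} on a universe of size $3(k-q)$, \emph{as a function of $k$ rather than $k'$}, actually comes out to $6.777^k$ and not something larger. Because $q$ itself is between $\lceil p/3\rceil$ and $p$ and $p$ itself only ranges up to $0.5k+2.5$, one must carefully track how the base $8.097^{k'}$ and the polynomial overhead combine across all admissible $(p,q)$; the worst case need not be at the extreme of the range, so this requires the same kind of max-over-parameters analysis as in Appendix \ref{section:kpathdetapp}, and possibly a re-optimization of the tradeoff parameter $c$ specifically for the reduced universe. A secondary subtlety is that ${\cal F}$ is a family of sets of size $3q-p$ (not a universal set), so the "test disjointness against every $F$" step must be shown to be polynomial per representative — this is fine since $|{\cal F}|=O^*(1)$ by Lemma \ref{lemma:pro1}'s construction (it is built from a $(|V|,3k,k)$-universal-set-style enumeration, hence of size $O^*(6.75^{k+o(k)})$), but one has to check that $6.75^{k+o(k)}$ times the per-$F$ polynomial work does not dominate $6.777^k$, which it does not.
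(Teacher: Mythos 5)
Your plan diverges from the paper's and, as written, it does not give the claimed running time; the breakdown is in how you handle ${\cal F}$. The family ${\cal F}$ returned by \alg{ICPPro1} is \emph{not} of size $O^*(1)$: it consists of subsets of $X$ of size $3q-p$, and in the worst case ($p=q\approx k/2$, so $3q-p\approx k$ and $|X|=3(k-1)$) it can have up to roughly ${3k\choose k}\approx 6.75^k$ members. Your scheme computes a representative family $\widehat{\cal P}$ of packings and then, for each $F\in{\cal F}$, scans $\widehat{\cal P}$ for a packing disjoint from $F$. For that scan to be sound, the final family must $(3q-p)$-represent the family of complete $(k-q)$-packings (and intermediate families must carry the extra $3q-p$ slack as well) --- your stated parameter $(3(k-q)-3)$ omits this, so even correctness of the disjointness test is not established as written. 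More importantly, once the parameter is fixed, $\widehat{\cal P}$ cannot be small: with $3q-p\approx k$ and packings of about $k$ surviving elements, any $(3q-p)$-representing family computed via Theorem \ref{theorem:esarep} has size around ${2k\choose k}2^{o(k)}\approx 4^k$, and no polynomial-time per-$F$ test is available, so the cost of the scanning phase is about $|{\cal F}|\cdot|\widehat{\cal P}|\approx 6.75^k\cdot 4^k$, vastly exceeding $O^*(6.777^k)$. Your assertion that ``the per-$F$ polynomial work does not dominate'' is exactly the unsupported step; nothing in the proposal decouples $|{\cal F}|$ from the exponential family it is matched against.

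The paper takes a different route that avoids this product entirely: since only the nodes of $X$ matter and $|X|=3(k-1)$, it dispenses with representative sets altogether and runs a Boolean dynamic program (the subcase {\sc $P_2$-CPro2}) whose states record the \emph{explicit} set $U'$ of used, not-yet-deleted elements; the choice of $F\in{\cal F}$ is folded into the base case of the recursion (an entry is TRUE iff some $F$ and some first $3$-set realize $U'$), so ${\cal F}$ is only ever consulted additively, never multiplied against a family of packings. The exponential cost is then the number of states, a binomial coefficient of the form ${3k-j-R(i-1)\choose (3q-p)+2j-R(i-1)}$, and unbalanced cutting (via \alg{Procedure2}, analogous to \alg{WSPAlg}) is what shrinks the deleted-element deficit $R(\cdot)$ enough to bring the worst case (attained at $p=q=\lfloor k/2\rfloor$) down to $O^*(6.777^k)$. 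If you want to salvage a representative-sets-based variant, you would need some mechanism of this kind for aggregating over ${\cal F}$ inside the dynamic program rather than testing each $F$ against a completed family.
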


Having such procedures, we can solve {\sc $P_2$-ICP} by trying, for each possible pair $(p,q)$ such that $p\in\{3,4,\ldots,3k-2.5(k-1)\}$ and $q\in\{\lceil\frac{p}{3}\rceil,\lceil\frac{p}{3}\rceil+1,\ldots,p\}$, to call procedure \alg{ICPPro1}, which returns a family $\cal F$, and then accept if \alg{ICPPro2} accepts when called with $\cal F$. If no execution of \alg{ICPPro2} accepts, we reject. We thus obtain the following result, which concludes the correctness of Theorem \ref{theorem:p2Pack} (in Section \ref{section:strategies}):

\begin{theorem}
{\sc $P_2$-ICP} can be solved in determinstic time $O^*(6.777^k)$.
\end{theorem}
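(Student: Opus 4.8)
The plan is to assemble the procedures \alg{ICPPro1} and \alg{ICPPro2} into a single algorithm for {\sc $P_2$-ICP} by looping over all admissible pairs $(p,q)$. Concretely, assuming (WLOG, as noted above) that $k$ is odd so that $2.5(k-1)$ is an integer: for every $p\in\{3,4,\ldots,3k-2.5(k-1)\}$ and every $q\in\{\lceil p/3\rceil,\ldots,p\}$, first run \alg{ICPPro1} on $(p,q)$ together with the given instance to obtain a family ${\cal F}$ of subsets of size $3q-p$ of $X$, then run \alg{ICPPro2} on $(p,q)$, the instance, and ${\cal F}$. Accept as soon as some call of \alg{ICPPro2} accepts; reject if none does.

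For correctness I would argue as follows. Since $|X|=3(k-1)$, every $k$-packing of $G$ places at least $3$ of its $3k$ nodes into $Y$; and by Theorem~\ref{theorem:p2improve}, if $G$ has a $k$-packing then it has one with at least $\lceil 2.5(k-1)\rceil$ nodes in $X$, hence with at most $3k-2.5(k-1)$ nodes in $Y$. Among the $q$ paths meeting $Y$, each contributes between $1$ and $3$ nodes of $Y$, so $\lceil p/3\rceil\le q\le p$. Therefore the instance is a yes-instance iff $Sol(p,q)\neq\emptyset$ for at least one of the admissible pairs $(p,q)$, as already recorded in the excerpt. Fix such a pair. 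By Lemma~\ref{lemma:pro1}, the returned family ${\cal F}$ satisfies: $Sol(p,q)\neq\emptyset$ iff there exist $F\in{\cal F}$ and a $(k-q)$-packing of $G$ all of whose nodes lie in $X\setminus F$. By Lemma~\ref{lemma:pro2}, \alg{ICPPro2} decides exactly this condition. Hence the algorithm accepts iff some $Sol(p,q)$ is nonempty, i.e.\ iff the instance is a yes-instance.

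For the running time, $p\le 3k-2.5(k-1)=0.5k+2.5$ and $q\le p$, so there are only $O(k^2)$ iterations. Each iteration costs $O^*(6.75^{k+o(k)})$ for \alg{ICPPro1} (Lemma~\ref{lemma:pro1}) plus $O^*(6.777^k)$ for \alg{ICPPro2} (Lemma~\ref{lemma:pro2}). Writing $6.75^{k+o(k)}=6.75^{k}\cdot 2^{o(k)}$ and noting that $(6.75/6.777)^{k}\cdot 2^{o(k)}\to 0$, the first term is itself $O^*(6.777^k)$; multiplying by the polynomially many iterations preserves the bound $O^*(6.777^k)$, which is the claimed running time. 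Via the reduction stated in the excerpt, this also re-proves the $O^*(6.777^k)$ bound of Theorem~\ref{theorem:p2Pack} for {\sc $P_2$-Packing}.

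I expect the combination above to be essentially bookkeeping; the substance lies in Lemmas~\ref{lemma:pro1} and~\ref{lemma:pro2}, and the genuine obstacle is the analysis behind \alg{ICPPro2}. There, the unbalanced cutting of the universe technique is applied so that, inside the representative-sets-based dynamic programming, one may discard from each partial solution not merely the smallest node of every inserted path but also the further nodes rendered obsolete by the ``given'' cut elements; quantifying, via a recursively defined lower bound, how many such extra nodes are guaranteed to be present at each of the constantly many stages is what controls the sizes of the computed representative families and yields the $6.777^k$ base --- slightly worse than the $6.75^{k+o(k)}$ of the plain universal-set approach used in \alg{ICPPro1}, yet still dominating it in the final bound.
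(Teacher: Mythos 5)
Your proposal is correct and matches the paper's own argument: the paper likewise loops over all admissible pairs $(p,q)$, feeds the family returned by \alg{ICPPro1} into \alg{ICPPro2}, and accepts iff some call accepts, with correctness resting on Theorem~\ref{theorem:p2improve} (via the characterization through $Sol(p,q)$) and Lemmas~\ref{lemma:pro1} and~\ref{lemma:pro2}, and the running time dominated by the $O^*(6.777^k)$ cost of \alg{ICPPro2} over polynomially many iterations. Your added bookkeeping (bounding the ranges of $p$ and $q$, absorbing $6.75^{k+o(k)}$ into $O^*(6.777^k)$) is consistent with what the paper leaves implicit.
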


\subsection{The Procedure \alg{ICPPro1}: Proof of Lemma \ref{lemma:pro1}}

When we next refer to representative families, suppose that they are computed with respect to $Y$. Moreover, assume an arbitrarily order $Y = \{y_1,y_2,\ldots,y_n\}$. We present a dynamic programming-based procedure to prove the lemma, in which we embed representative sets computations. To this end, we use a matrix M that has an entry $[p',q',m,X']$ for all $p'\in\{1,2,\ldots,p\}$, $q'\in\{\lceil\frac{p'}{3}\rceil,\lceil\frac{p'}{3}\rceil+1,\ldots,\min\{p',q\}\}$, $m\in\{1,2,\ldots,n\}$ and $X'\subseteq X$ of size $3q'-p'$.

We next assume that a reference to an undefined entry returns $\emptyset$. The other entries will store the following families of partial solutions:

\begin{itemize}
\item M$[p',q',m,X']$: A family that $(p-p')$-represents the family $F$ defined as follows. Let $\widetilde{F}$ be the family of every $q'$-packing whose paths contain exactly $p'$ nodes from $Y$ and whose set of other nodes (i.e., nodes in $X$) is exactly $X'$, such that each of its paths contains a node from $Y$ that is smaller than $y_m$, except one path whose smallest node from $Y$ is $y_m$.\\For each such packing in $\widetilde{F}$, $F$ contains a set that includes the nodes of its paths that belong to $Y$, excluding the smallest node from $Y$ in the node-set of each of its paths.
\end{itemize}

\smallskip
{\noindent The entries are computed in the following order:}
\begin{itemize}
\item For $p'=1,2,\ldots,p$:
	\begin{itemize}
	\item For $q'=\lceil\frac{p'}{3}\rceil,\lceil\frac{p'}{3}\rceil+1,\ldots,\min\{p',q\}$:
		\begin{itemize}
				\item Compute all entries of the form M$[p',q',m,X']$.
		\end{itemize}
	\end{itemize}
\end{itemize}

\medskip
{\noindent We now give the recursive formulas using which the entries are computed.}

\begin{enumerate}
\item If $q'=1$: M$[p',q',m,X'] = \{Y'\subseteq Y: \min(Y')=y_m$, there exists a 1-packing in $G$ such that the node-set of its path is $X'\cup Y'\}$.
\smallskip
\item Else: M$[p',q',m,X'] = \{Y'\cup Y'': Y'\cap Y''=\emptyset, \min(Y')=y_m,$ there exists $\widetilde{X}\subseteq X'$ for which [there exists a 1-packing in $G$ such that the node-set of its path is $\widetilde{X}\cup Y'$] and [$Y''\in\bigcup_{m'=1}^{m-1}\mathrm{M}[p'-|Y'|,q'-1,m',X'\setminus\widetilde{X}]$]$\}$.
\end{enumerate}

\vspace{-0.8em}
\begin{itemize}
\item After 2: Replace the result by a family that max $(p-p')$-represents it.
\end{itemize}

{\noindent Finally, we return ${\cal F}=\{X'\subseteq X: \bigcup_{m=1}^n\mathrm{M}[p,q,m,X']\neq\emptyset\}$. Note that the size of each set in the family stored in an entry M$[p',q',m,X']$ is $(p'-q')$. Furthermore, note that, as required, [$Sol(p,q)\neq\emptyset$] iff [there exist a set $F\in{\cal F}$ and a $(k-q)$-packing in $G$ whose nodes belong to $X\setminus F$].}

\bigskip

{\noindent By Theorem \ref{theorem:esarep}, the running time is bounded by $O^*$ of $2^{o(k)}$, multiplied by the following expression:}
\[\displaystyle{\max_{p'=1}^p\max_{q'=\lceil\frac{p'}{3}\rceil}^{\min\{p',q\}}{3(k-1) \choose 3q'-p'}\frac{(c(p-q'))^{2(p-q')-(p'-q')}}{(p'-q')^{p'-q'}(c(p-q')-(p'-q'))^{2(p-q')-2(p'-q')}}}\]

{\noindent We can bound the above expression by $O^*$ of:}

\[\displaystyle{\max_{p'=1}^{\frac{k}{2}}\max_{\frac{p'}{3}\leq q'\leq p'}\frac{(3k)^{3k}}{(3q'-p')^{3q'-p'}(3k-3q'+p')^{3k-3q'+p'}}\cdot\frac{(c(\frac{k}{2}-q'))^{k-p'-q'}}{(p'-q')^{p'-q'}(c(\frac{k}{2}-q')-(p'-q'))^{k-2p'}}}\]

{\noindent Now, we bound this expression by $O^*$ of:}

\[\displaystyle{
\max_{0\leq\alpha\leq 1}\max_{\frac{\alpha \frac{k}{2}}{3}\leq q'\leq \alpha \frac{k}{2}}
\frac{(3k)^{3k}}{(3q'\!-\!\alpha \frac{k}{2})^{3q'\!-\!\alpha \frac{k}{2}}(3k\!-\!3q'\!+\!\alpha \frac{k}{2})^{3k\!-\!3q'\!+\!\alpha \frac{k}{2}}}
\!\cdot\!
\frac{(c(\frac{k}{2}\!-\!q'))^{k\!-\!\alpha \frac{k}{2}\!-\!q'}}{(\alpha \frac{k}{2}\!-\!q')^{\alpha \frac{k}{2}\!-\!q'}(c(\frac{k}{2}\!-\!q')\!-\!(\alpha \frac{k}{2}\!-\!q'))^{k\!-\!\alpha k}}}\]

{\noindent This expression is further bounded by $O^*$ of:}

\[\displaystyle{
\max_{0\leq\alpha\leq 1}\max_{\frac{1}{3}\leq \beta\leq 1}
\left(
\frac{3^3}{(\frac{3\beta\alpha}{2}\!-\!\frac{\alpha}{2})^{\frac{3\beta\alpha}{2}\!-\!\frac{\alpha}{2}}(3\!-\!\frac{3\beta\alpha}{2}\!+\! \frac{\alpha}{2})^{3\!-\!\frac{3\beta\alpha}{2}\!+\!\frac{\alpha}{2}}}
\!\cdot\!
\frac{(c(\frac{1}{2}\!-\!\frac{\beta\alpha}{2}))^{1-\frac{\alpha}{2}-\frac{\beta\alpha}{2}}}{(\frac{\alpha}{2}\!-\!\frac{\beta\alpha}{2})^{\frac{\alpha}{2}\!-\!\frac{\beta\alpha}{2}}(c(\frac{1}{2}\!-\!\frac{\beta\alpha}{2})\!-\!( \frac{\alpha}{2}\!-\!\frac{\beta\alpha}{2}))^{1\!-\!\alpha}}
\right)^k}\]

{\noindent Which is bounded by $O^*$ of:}

\[\displaystyle{
\max_{0\leq\alpha\leq 1}\max_{\frac{1}{3}\leq \beta\leq 1}
\left(
\frac{6^6}{(3\beta\alpha\!-\!\alpha)^{3\beta\alpha\!-\!\alpha}(6\!-\!3\beta\alpha\!+\!\alpha)^{6\!-\!3\beta\alpha\!+\!\alpha}}
\!\cdot\!
\frac{(c(1\!-\!\beta\alpha))^{2\!-\!\alpha\!-\!\beta\alpha}}{(\alpha\!-\!\beta\alpha)^{\alpha\!-\!\beta\alpha}(c(1\!-\!\beta\alpha)\!-\!(\alpha\!-\!\beta\alpha))^{2\!-\!2\alpha}}
\right)^{\frac{k}{2}}}\]

\smallskip
{\noindent For $c=1$, the maximum is obtained when $\alpha=\beta=1$, which results in the bound $O^*(6.75^{k+o(k)})$.}\qed

\subsection{The Procedure \alg{ICPPro2}: Proof of Lemma \ref{lemma:pro2}}

First note that the nodes in $Y$ are not relevant to this section---we can focus on the subgraph of $G$ induced by $X$. Furthermore, we need only know the set of 1-packings in this subgraphs, which can be given as a family of sets on 3 nodes. Thus, to prove the lemma, we will solve following problem in time $O^*(6.777^k)$:

\myparagraph{$P_2$-Pro2} Given a universe $U$ of size $3k$, a family $\cal S$ of subsets of size 3 of $U$, $p\in\{1,2,\ldots,\lfloor\frac{k}{2}\rfloor\}$, $q\in\{\lceil\frac{p}{3}\rceil,\lceil\frac{p}{3}\rceil+1,\ldots,p\}$ and a family ${\cal F}$ of subsets of size $3q-p$ of $U$, decide if there exist a set $F\in{\cal F}$ and a subfamily ${\cal S}'\subseteq {\cal S}$ of $(k-q)$ disjoint sets that do not contain any element from $F$.

\smallskip
We next solve a subcase of {\sc $P_2$-Pro2}, and then show how to translate {\sc $P_2$-Pro2} to this subcase by using unbalanced cutting of the universe. To this end, we follow the ideas introduced in Appendix \ref{section:3kwsp}, where the main differences result from the fact that now, since the universe is of the small size $3k$, we do not need to perform any computation of a representative family.

\subsubsection{Solving a Subcase of $P_2$-Pro2}\label{section:subcaseficp}$\ $\\

{\noindent In this section we solve the subcase {\sc Cut $P_2$-Pro2 ($P_2$-CPro2)} of {\sc $P_2$-Pro2}, defined as follows. Fix $0<\epsilon<0.1$, whose value is determined later, such that $\frac{1}{\epsilon}\in\mathbb{N}$. The input for {\sc $P_2$-CPro2} consists of an {\em ordered} universe $U=\{u_1,u_2,\ldots,u_{3k}\}$, where $u_{i-1}<u_i$ for all $i\in\{2,3,\ldots,3k\}$. We are also given a family $\cal S$ of subsets of size 3 of $U$, $p\in\{1,2,\ldots,\lfloor\frac{k}{2}\rfloor\}$, $q\in\{\lceil\frac{p}{3}\rceil,\lceil\frac{p}{3}\rceil+1,\ldots,p\}$ and a family ${\cal F}$ of subsets of size $3q-p$ of $U$, along with a non-decreasing function $f: \{1,2,\ldots,\frac{1}{\epsilon}\}\rightarrow U$.}

We need to decide if there is a set $F\in{\cal F}$ and an {\em ordered} subfamily ${\cal S}'\subseteq {\cal S}$ of $(k-q)$ disjoint sets, denoted accordingly as ${\cal S}'=\{S_1,S_2,\ldots,S_{k-q}\}$, such that $F\cap(\bigcup{\cal S}')=\emptyset$ and the following ``solution conditions'' are satisfied:
\begin{enumerate}
\item $\forall i\in\{2,3,\ldots,k-q\}$: $\min(S_{i-1})<\min(S_i)$.

\item $\forall i\in\{1,2,\ldots,\frac{1}{\epsilon}\}$: At least
$R(i)$
elements in $\displaystyle{F\cup(\bigcup_{j=1}^{i\lfloor\epsilon (k-q)\rfloor}(S_j\setminus\{\min(S_j)\}))}$ are smaller or equal to $f(i)$, where $R$ is defined according to the following recursion.
	\bigskip
	\begin{itemize}
	\item $R(0)=0$.
	\item For $j=1,\ldots,\frac{1}{\epsilon}$:\\$\displaystyle{R(j) = R(j-1) +
	\left\lceil\frac{(3q-p)+2(j-1)\lfloor\epsilon (k-q)\rfloor
	-R(j-1)}
	{\lceil 3((k-q)-(j-1)\lfloor\epsilon (k-q)\rfloor)/\lfloor \epsilon (k-q)\rfloor\rceil}\right\rceil}$.
	\end{itemize}

\item $\forall i\in\{1,2,\ldots,\frac{1}{\epsilon}\}$: All the elements in $\displaystyle{\bigcup_{j=1+i\lfloor\epsilon (k-q)\rfloor}^{(k-q)}S_j}$ are larger than $f(i)$.
\end{enumerate}

{\noindent We next show that {\sc $P_2$-CPro2} can be efficiently solved using dynamic programming, proving the following lemma:}

\begin{lemma}\label{lemma:cpro2run}
For any fixed $0\!<\!\epsilon\!<\!1$, {\sc $P_2$-CPro2} is solvable in deterministic~time\\
$Y=\displaystyle{O^*(
\max_{i=1}^{\frac{1}{\epsilon}}\max_{j=1+(i-1)\lfloor\epsilon (k-q)\rfloor}^{i\lfloor\epsilon (k-q)\rfloor}
{3k-j-R(i-1) \choose (3q-p)+2j-R(i-1)}  
)}$.
\end{lemma}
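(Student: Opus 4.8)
The plan is to mimic the dynamic-programming scheme used in Lemma \ref{lemma:cwsprun}, but with two simplifications that are specific to {\sc $P_2$-CPro2}: first, the universe has size exactly $3k$, so no computation of a representative family is ever needed---every family we build is a family of subsets of a fixed $3k$-element ground set, hence of polynomial size; second, we must additionally track which candidate set $F\in{\cal F}$ we are committing to, which we do by carrying $F$ (or rather, a suitable succinct summary of $F$) inside the state. First I would set up a matrix $\mathrm{M}$ with an entry $[i,j,s_1,\ldots,s_{\frac{1}{\epsilon}},m,F]$, where $i$ is the current stage, $j$ the number of sets $S_1,\ldots,S_j$ chosen so far, each $s_\ell$ records how many of the elements of $F\cup\bigcup_{p=1}^{j}(S_p\setminus\{\min(S_p)\})$ are $\le f(\ell)$, $m$ is the index of $\min(S_j)$ in the ordered universe, and $F$ ranges over ${\cal F}$. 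The entry stores the family of all ``traces'' $\bigl(\bigcup_{\ell=1}^{j}(S_\ell\setminus\{\min(S_\ell)\})\bigr)\setminus\{u\le f(i-1)\}$, together with the boolean ``$F\cap\bigcup S_\ell=\emptyset$ so far,'' over all ordered disjoint $S_1,\ldots,S_j$ satisfying the analogues of solution conditions 1--3 restricted to the first $j$ sets. As in Lemma \ref{lemma:cwsprun}, I would compute entries in increasing order of $(i,j)$, with the base case $j=1$ picking a single set $S\in{\cal S}$ with $\min(S)=u_m$ disjoint from $F$, and the recursive step extending by one set $S$, updating the $s_\ell$ counters by the number of elements of $S\setminus\{\min(S)\}$ that are $\le f(\ell)$, and---when we cross from stage $i-1$ to stage $i$---deleting from every stored set all elements $\le f(i-1)$ (this is where solution condition 3 guarantees we will never see those elements again, so the deletion is lossless).

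The correctness argument is the standard one: a straightforward induction on $(i,j)$ shows that $\mathrm{M}[i,j,s_1,\ldots,s_{\frac{1}{\epsilon}},m,F]$ stores exactly the claimed family of traces, the key point being that a set $S_{j+1}$ with $\min(S_{j+1})=u_{m'}>u_m$ can be appended to a partial solution iff $S_{j+1}$ is disjoint from both $F$ and from the currently stored trace---and since, by solution condition 3, every element of $S_{j+1}$ that was deleted in an earlier stage was smaller than some $f(\ell)<\min(S_{j+1})$, disjointness from the deleted part is automatic. Thus no information is lost by the stage-boundary deletions, exactly as in the intuition of Appendix \ref{sec:wspIntuition}. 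At the end we return ``yes'' iff some entry $\mathrm{M}[\frac{1}{\epsilon},\,k-q,\,s_1,\ldots,s_{\frac{1}{\epsilon}},m,F]$ with its ``disjoint-from-$F$'' flag still true is nonempty and the corresponding $s_{\frac{1}{\epsilon}}$ values meet the $R(\cdot)$ thresholds; unwinding the recursion then exhibits the required ${\cal S}'$ and $F$, and conversely any valid $(F,{\cal S}')$ produces a surviving entry.

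For the running time, I would bound the cost of computing a single entry with $j$ sets chosen so far. The stored sets all lie in the $3k$-element universe; after removing one $\min$ per set we have sets of size roughly $2j$, and after the stage-$i$ deletion of the $R(i-1)$ elements that are $\le f(i-1)$, together with the constraint $F\cap(\bigcup S_\ell)=\emptyset$ forcing a further $|F|=3q-p$ elements to be excluded, the sets we manipulate live inside a universe of size $3k-j-R(i-1)$ and have size $(3q-p)+2j-R(i-1)$ (the $3q-p$ coming from the elements of $F$ we must keep track of as forbidden). Hence the number of such sets---and thus the dominant term in the work for that entry---is at most $\binom{3k-j-R(i-1)}{(3q-p)+2j-R(i-1)}$, and since $\frac{1}{\epsilon}=O(1)$ there are only polynomially many entries and polynomially many $F$, giving the stated bound $Y=O^*\!\bigl(\max_{i}\max_{j}\binom{3k-j-R(i-1)}{(3q-p)+2j-R(i-1)}\bigr)$ with $j$ ranging over $\{1+(i-1)\lfloor\epsilon(k-q)\rfloor,\ldots,i\lfloor\epsilon(k-q)\rfloor\}$.

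I expect the main obstacle to be bookkeeping rather than anything conceptual: getting the indexing of the $s_\ell$ counters, the $R$-recursion, and the stage-boundary deletions to line up so that the invariant stated for the entries is literally true, and in particular verifying that the universe available at stage $i$ really has shrunk to $3k-j-R(i-1)$ elements (one must check that solution conditions 2 and 3, instantiated with the recursion for $R$, force precisely $R(i-1)$ deletions by the time we are at stage $i$, not more and not fewer in a way that would change the binomial). Once that invariant is pinned down, both correctness and the running-time count are routine; carrying $F$ through the state is the only genuinely new ingredient compared with Lemma \ref{lemma:cwsprun}, and it is harmless because ${\cal F}$ is given explicitly and $|{\cal F}|$ is polynomial in the input size.
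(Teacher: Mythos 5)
Your overall skeleton---a stage-indexed dynamic program with the counters $s_\ell$, the min-ordering of the chosen $3$-sets, lossless deletion of all elements $\leq f(i-1)$ at stage boundaries, and no representative-set computations because the ground set has only $3k$ elements---matches the paper's proof. The genuine gap is in the one place you yourself single out as the new ingredient: the treatment of $F$. You carry $F\in{\cal F}$ as an explicit coordinate of the DP state (and store, per entry, a family of traces), so the table size, and hence the running time, acquires a \emph{multiplicative} factor of $|{\cal F}|$. In the instances this lemma is applied to, ${\cal F}$ is the output of \alg{ICPPro1} and can have size up to ${3(k-1) \choose 3q-p}$, i.e.\ roughly $6.75^{k}$ in the dominant regime $p=q=\lfloor k/2\rfloor$; a multiplicative $|{\cal F}|$ therefore destroys the bound that Lemma~\ref{lemma:pro2} and Theorem~\ref{theorem:p2Pack} need, where after composition only factors polynomial in $n$ and $k$ may be hidden. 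Your remark that this is harmless because ``$|{\cal F}|$ is polynomial in the input size'' only works under a reading of $O^*$ (over the {\sc $P_2$-CPro2} input, which contains ${\cal F}$) that would make the lemma useless downstream. Relatedly, your count of the traces is inconsistent with your own state: if $F$ is a separate coordinate, the stored sets have size about $2j$ minus the number of deleted trace elements, not $(3q-p)+2j-R(i-1)$, and the quantity $R(i-1)$ bounds deletions from $F\cup\bigcup_{\ell}(S_\ell\setminus\{\min(S_\ell)\})$, not from the trace alone; so the binomial you quote bounds the per-$(i,j,F)$ count at best, and does not absorb the extra $|{\cal F}|$ factor.

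The paper's proof avoids all of this by making the DP forget which $F$ was chosen: the state coordinate is the single set $U'$ of not-yet-deleted elements of $F\cup\bigcup_{\ell\leq j}(S_\ell\setminus\{\min(S_\ell)\})$, so $|U'|=(3q-p)+2j-s_{i-1}$ with $s_{i-1}\geq R(i-1)$; the entries are booleans whose semantics existentially quantify over $F$; and ${\cal F}$ is scanned only once, in the base case $j=1$. Disjointness of a newly added $S$ (all of whose elements are at least $\min(S)=u_m>f(i-1)$) from the already-deleted elements and from earlier minima is automatic, and disjointness from the surviving elements of $F$ and of earlier sets is enforced by requiring $S$ to avoid $U'$. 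Consequently each entry is computable in polynomial time, the number of entries is $O^*(\max_{i,j}{3k-j-R(i-1) \choose (3q-p)+2j-R(i-1)})$ (the binomial is maximized at $s_{i-1}=R(i-1)$), and $|{\cal F}|$ enters only additively. (The paper's second matrix N just implements the stage-boundary deletion; your in-place description of that step is equivalent.) Without this folding of $F$ into the tracked set, your algorithm does not achieve the bound in the form the paper needs.
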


\begin{proof}
We now present a dynamic programming-based procedure to prove the lemma. To this end, we use the following two matrices:
\begin{enumerate}
\item M that has an entry $[i,j,s_1,s_2,\ldots,s_{\frac{1}{\epsilon}},m,U']$ for all $i\in\{1,2,\ldots,\frac{1}{\epsilon}+1\}$, $j\in \{1+(i-1)\lfloor\epsilon (k-q)\rfloor,\ldots,J\})$ where [$(i\leq\frac{1}{\epsilon}\rightarrow J=i\lfloor\epsilon (k-q)\rfloor)\wedge(i=\frac{1}{\epsilon}+1\rightarrow J=k-q)$],
[$\forall\ell\in\{1,\ldots,\frac{1}{\epsilon}\}: s_\ell\in\{0,1,\ldots,(3q-p)+2j\}$] such that [$\forall\ell\in\{1,\ldots,i-1\}: s_\ell\geq R(\ell)$],
%
$m\in\{j,\ldots,3k\}$ such that ($i>1\rightarrow u_m>f(i-1)$), and $U'\subseteq \{u_\ell\in U: \ell>j+s_{i-1}\}$ of size $(3q-p)+2j-s_{i-1}$, where $s_0=0$.

\item N has an entry $[i,s_1,s_2,\ldots,s_{\frac{1}{\epsilon}},m,U']$ for all $i\in\{2,3,\ldots,\frac{1}{\epsilon}\}$,
%
[$\forall\ell\in\{1,\ldots,\frac{1}{\epsilon}\}: s_\ell\in\{0,1,\ldots,(3q-p)+2(i\lfloor\epsilon(k-q)\rfloor)\}$] such that [$\forall\ell\in\{1,\ldots,i-1\}: s_\ell\geq R(\ell)$],
%
$m\in\{(i\lfloor\epsilon(k-q)\rfloor),\ldots,3k\}$ such that ($i>1\rightarrow u_m>f(i-1)$), and $U'\subseteq \{u_\ell\in U: \ell>(i\lfloor\epsilon(k-q)\rfloor)+s_{i-1}\}$ of size $(3q-p)+2j-s_{i-1}$.
\end{enumerate}

{\noindent We next assume that a reference to an undefined entry returns FALSE. The other entries will store the following boolean values:}
\begin{itemize}
\item The entry M$[i,j,s_1,\ldots,s_{\frac{1}{\epsilon}},m,U']$ holds TRUE iff there is a set $F\in{\cal F}$ and an {\em ordered} subfamily ${\cal S}'\subseteq {\cal S}$ of $j$ disjoint sets, denoted accordingly as ${\cal S}'=\{S_1,S_2,\ldots,S_j\}$, such that $F\cap(\bigcup{\cal S}')=\emptyset$ and the following ``solution conditions'' are satisfied:
\begin{enumerate}
\smallskip
\item $\forall \ell\in\{2,\ldots,j\}$: $\min(S_{\ell-1})<\min(S_\ell)$.
\smallskip
\item $\min(S_j)=u_m$.

\item $\forall \ell\!\in\!\{1,\ldots,i\!-\!1\}$: At least $R(\ell)$
elements in $\displaystyle{F\cup(\!\!\!\!\bigcup_{p=1}^{\ell\lfloor\epsilon (k-q)\rfloor}\!\!\!\!(S_p\!\setminus\!\{\min(S_p)\}))}$ are smaller or equal to~$f(\ell)$.

\item $\forall \ell\in\{1,\ldots,\frac{1}{\epsilon}\}$: Exactly $s_\ell$ elements in $\displaystyle{F\cup(\bigcup_{p=1}^{j}(S_p\setminus\{\min(S_p)\}))}$ are smaller or equal to $f(\ell)$.

\item Define $f(0)$ as a value smaller than $u_1$. Then, $U'$ is the set elements in $\displaystyle{F\cup(\bigcup_{p=1}^{j}(S_p\!\setminus\!\{\min(S_p)\}))}$ that are larger than $f(i-1)$.

\item $\forall \ell\in\{1,\ldots,i\!-\!1\}$: All the elements in $\displaystyle{\bigcup_{p=1+\ell\lfloor\epsilon (k-q)\rfloor}^{j}\!\!\!S_p}$ are larger than~$f(\ell)$.
\end{enumerate}

\item The entry N$[i,s_1,\ldots,s_{\frac{1}{\epsilon}},m,U']$ holds TRUE iff at least one entry M$[i,i\lfloor\epsilon(k-q)\rfloor,s_1,\ldots,s_{\frac{1}{\epsilon}},m,U'']$, where $U' = U''\setminus\{u\in U: f(i-1)<u\leq f(i)\}$, holds TRUE.
\end{itemize}

\medskip
{\noindent Initialize the entries of N to FALSE. Then, the entries are computed in the following order:}
\begin{enumerate}
\item For $i=1,\ldots,\frac{1}{\epsilon}+1$:
	\begin{enumerate}
	\item For $j=1+(i-1)\lfloor\epsilon k\rfloor,\ldots,J$:
		\begin{enumerate}
				\item Compute all entries of the form M$[i,j,s_1,\ldots,s_{\frac{1}{\epsilon}},m,U']$.
		\end{enumerate}
		\item If $i\leq\frac{1}{\epsilon}$:
			\begin{enumerate}		
			\item For each entry of the form M$[i,i\lfloor\epsilon (k-q)\rfloor,s_1,\ldots,s_{\frac{1}{\epsilon}},m,U']$ that holds TRUE:
				\begin{enumerate}
				\item Assign TRUE to N$[i,s_1,\ldots,s_{\frac{1}{\epsilon}},m,U'\setminus\{u\in U: f(i-1)<u\leq f(i)\}]$.
				\end{enumerate}
			\end{enumerate}
	\end{enumerate}
\end{enumerate}

\smallskip
\medskip
{\noindent We now give the recursive formulas using which the entries of M are computed.}
\begin{enumerate}
\item If $j=1$: M$[1,1,s_1,\ldots,s_{\frac{1}{\epsilon}},m,U']$ holds TRUE {\em iff} there exist disjoint $F\!\in\!{\cal F}$ and $S\!\in\!{\cal S}$ such that $\min(S)\!=\!u_m$, $[\forall\ell\!\in\!\{1,\ldots,\frac{1}{\epsilon}\}\!: |\{u_p\!\in\! F\!\cup\!S\!\setminus\!\{u_m\}: u_p\!\leq\!f(\ell)\}|\!=\!s_\ell]$, and $U'\!=\!F\!\cup\! (S\!\setminus\!\{u_m\})$.

\smallskip
\medskip
\item Else if $j=1+(i-1)\lfloor\epsilon(k-q)\rfloor$: M$[i,j,s_1,\ldots,s_{\frac{1}{\epsilon}},m,U']$ holds TRUE {\em iff} there~exist~disjoint sets $S$ and $A$ such that $S\in{\cal S}$, $\min(S)=u_m$, $\displaystyle{\mathrm{TRUE}\in}$ $\displaystyle{\bigcup_{m'=1}^{m-1}\{\mathrm{N}[i-1,s_1',\ldots,s_{\frac{1}{\epsilon}}',m',A]\}}$, where [$\forall\ell\in\{1,\ldots,\frac{1}{\epsilon}\}: s_\ell' = s_\ell-|\{u_p\in S\setminus\{u_m\}: u_p\leq f(\ell)\}|$], and $U'=A\cup(S\!\setminus\!\{u_m\})$.

\smallskip
\medskip
\item Else: M$[i,j,s_1,\ldots,s_{\frac{1}{\epsilon}},m,U']$ holds TRUE {\em iff} there exist disjoint sets $S$ and $A$ such that $S\in{\cal S}$, $\min(S)=u_m$, $\displaystyle{\mathrm{TRUE}\in\bigcup_{m'=1}^{m-1}\{\mathrm{M}[i,j-1,s_1',\ldots,s_{\frac{1}{\epsilon}}',m',A]\}}$, where [$\forall\ell\in\{1,\ldots,\frac{1}{\epsilon}\}: s_\ell' = s_\ell-|\{u_p\in S\setminus\{u_m\}: u_p\leq f(\ell)\}|$], and $U'=A\cup(S\!\setminus\!\{u_m\})$.
\end{enumerate}

\medskip
{\noindent Finally, we return yes {\em iff} at least one entry of the form M$[i,(k-q),s_1,\ldots,s_{\frac{1}{\epsilon}},m,U']$ holds TRUE.}

\bigskip
{\noindent The matrices M and N contain $\displaystyle{O^*(
\max_{i=1}^{\frac{1}{\epsilon}}\max_{j=1+(i-1)\lfloor\epsilon (k-q)\rfloor}^{i\lfloor\epsilon (k-q)\rfloor}
{3k\!-\!j\!-\!R(i\!-\!1) \choose (3q\!-\!p)\!+\!2j\!-\!R(i\!-\!1)}  
)}$ entries, where each entry of M can be computed in polynomial time, and the total time for computing all of the entries of N is bounded by $O^*$ of the number of entries of M. Thus, the entire procedure can be performed in the desired time.}\qed
\end{proof}

\bigskip
{\noindent We proceed to analyze the bound, denoted $Y$, for the running time in Lemma \ref{lemma:cpro2run}. First, define $T$ according to the following recursion:}
\begin{itemize}
\item $T(0)=0$.
\item For $j=1,\ldots,\frac{1}{\epsilon}$: $\displaystyle{T(j) = T(j-1) +
	\epsilon \left(\frac{\frac{3q-p}{k-q} + 2(j-1)\epsilon
	-T(j-1)}
	{3(1-(j-1)\epsilon)}\right)}$.
\end{itemize}

\smallskip
{\noindent Thus, we get that:}
\[Y=\displaystyle{O^*(
\max_{i=1}^{\frac{1}{\epsilon}}\max_{(i-1)\epsilon(k-q)\leq j\leq i\epsilon(k-q)}
{3k-j-T(i-1)(k-q) \choose (3q-p)+2j-T(i-1)(k-q)}  
)}.\]

{\noindent The worst case is obtained for instances where $q$ is maximal, i.e., when $p=q=\lfloor\frac{k}{2}\rfloor$. Therefore, the running time is bounded by:}

\[
\begin{array}{l}
\bigskip
\displaystyle{O^*(
\max_{i=1}^{\frac{1}{\epsilon}}\max_{(i-1)\epsilon\frac{k}{2}\leq j\leq i\epsilon\frac{k}{2}}
{3k-j-T(i-1)\frac{k}{2} \choose k+2j-T(i-1)\frac{k}{2}}  
)}\\

\bigskip
=\displaystyle{O^*(
\max_{i=1}^{\frac{1}{\epsilon}}\max_{(i-1)\leq\alpha\leq i}
{3k-\alpha\epsilon\frac{k}{2}-T(i-1)\frac{k}{2} \choose k+\alpha\epsilon k-T(i-1)\frac{k}{2}}  
)}\\

\bigskip
=\displaystyle{O^*(
\max_{i=1}^{\frac{1}{\epsilon}}\max_{(i-1)\leq\alpha\leq i}
\frac{(3k-\alpha\epsilon\frac{k}{2}-T(i-1)\frac{k}{2})^{3k-\alpha\epsilon\frac{k}{2}-T(i-1)\frac{k}{2}}}
{(k+\alpha\epsilon k-T(i-1)\frac{k}{2})^{k+\alpha\epsilon k-T(i-1)\frac{k}{2}}
(2k-3\alpha\epsilon\frac{k}{2})^{2k-3\alpha\epsilon\frac{k}{2}}} 
)}\\

\bigskip
=\displaystyle{O^*(
\max_{i=1}^{\frac{1}{\epsilon}}\max_{(i-1)\leq\alpha\leq i}
\left(
\frac{(3-\frac{\alpha\epsilon}{2}-\frac{T(i-1)}{2})^{3-\frac{\alpha\epsilon}{2}-\frac{T(i-1)}{2}}}
{(1+\alpha\epsilon-\frac{T(i-1)}{2})^{1+\alpha\epsilon-\frac{T(i-1)}{2}}
(2-\frac{3\alpha\epsilon}{2})^{2-\frac{3\alpha\epsilon}{2}}}
\right)^k
)}\\

=\displaystyle{O^*(
\max_{i=1}^{\frac{1}{\epsilon}}\max_{(i-1)\leq\alpha\leq i}
\left(
\frac{(6-\alpha\epsilon-T(i-1))^{6-\alpha\epsilon-T(i-1)}}
{(2+2\alpha\epsilon-T(i-1))^{2+2\alpha\epsilon-T(i-1)}
(4-3\alpha\epsilon)^{4-3\alpha\epsilon}}
\right)^{\frac{k}{2}}
)}.
\end{array}
\]

\smallskip
{\noindent Denote the expression above by $Z$. To allow us to compute a bound efficiently, we choose $\epsilon=10^{-5}$. Choosing a smaller $\epsilon$ results in a better bound, but the improvement is negligible. We thus get that the maximum is obtained at $i=\alpha=6377$, where $T(i-1)\cong0.04485$, which results in the bound $O^*(6.77682^k)$. We have proved the following corollary:}

\begin{cor}\label{cor:cpro2}
For $\epsilon=10^{-5}$, {\sc $P_2$-CPro2} can be solved in deterministic time $O^*(6.777^k)$.
\end{cor}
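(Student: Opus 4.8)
The plan is to derive the corollary from Lemma~\ref{lemma:cpro2run} together with the running-time reduction carried out immediately above it, so that the only genuinely new work is a one-dimensional numerical optimization. By Lemma~\ref{lemma:cpro2run}, {\sc $P_2$-CPro2} is solvable in time $O^*(\max_i\max_j{3k-j-R(i-1)\choose(3q-p)+2j-R(i-1)})$, and the chain of estimates preceding the corollary rewrites this bound, once $\epsilon=10^{-5}$ is fixed, as $O^*(Z)$ for the explicit closed form $Z$ displayed just above. Hence three things remain: (i) argue that the maximum over all admissible inputs is attained at $p=q=\lfloor k/2\rfloor$; (ii) justify the passage from the discrete recursion $R(\cdot)$ to the continuous recursion $T(\cdot)$ and the Stirling estimate of the binomial coefficient; and (iii) optimize $Z$ numerically.

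For (i), the quantity to bound is ${N\choose M}$ with $N=3k-j-R(i-1)$ and $M=(3q-p)+2j-R(i-1)$. For a fixed $p$, increasing $q$ towards $p$ raises $M$ (through the term $3q-p$; the $R$-values shift only by the $O(1)$-perturbations of the recursion) while $N$ changes only through the shrinking range of $j$, and since on the dominant entries $M<N/2$, this only enlarges ${N\choose M}$; thus $q=p$ is worst. For $q=p$ one then has $M=2p+2j-R(i-1)$, and increasing $p$ towards $\lfloor k/2\rfloor$ pushes $M$ toward the balanced value $N/2$; using that $T(i-1)$ stays small (its maximum is a constant well below $1$, so $R(i-1)$ is comparatively tiny relative to $k$), one checks that the balanced regime $M\approx N/2$ first becomes reachable for an admissible $j$ exactly at $p=\lfloor k/2\rfloor$. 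I would write out this elementary monotonicity calculation in full.

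For (ii), since the recursion has only $1/\epsilon=O(1)$ steps, each changing its value by the $O(1)$ rounding, we get $R(i-1)=T(i-1)(k-q)+O(1)$, and on the dominant entries $N$, $M$ and $N-M$ are all $\Theta(k)$, so Stirling's formula gives ${N\choose M}=2^{o(k)}\cdot Z^{k/2}$ with $Z$ precisely as displayed before the corollary; this is the computation already carried out there, and here I would only verify that the floor-induced and Stirling-induced errors are uniformly $2^{o(k)}$ over the whole range of $i$ and $j$, which needs the three base quantities $6-\alpha\epsilon-T(i-1)$, $2+2\alpha\epsilon-T(i-1)$ and $4-3\alpha\epsilon$ to stay bounded away from $0$. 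Finally, for (iii), I would tabulate $T(0),T(1),\ldots,T(1/\epsilon)$ for $\epsilon=10^{-5}$ and, for each $i$, maximize the smooth unimodal function $Z$ over $\alpha\in[i-1,i]$ by solving $\partial_\alpha\log Z=0$; a scan over $i$ locates the global maximum at $i=\alpha=6377$, where $T(i-1)\approx0.04485$ and $Z^{1/2}\approx6.77682<6.777$, which gives the claimed $O^*(6.777^k)$ (a smaller $\epsilon$ improving the constant only negligibly).

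The step I expect to be the real obstacle is the worst-case reduction of (i): one has to show rigorously that no pair $(p,q)$ with $p<\lfloor k/2\rfloor$ or $q<p$ can beat $p=q=\lfloor k/2\rfloor$, which means understanding how the recursion $R$ --- whose defining fraction has $\lceil 3((k-q)-(j-1)\lfloor\epsilon(k-q)\rfloor)/\lfloor\epsilon(k-q)\rfloor\rceil$ in the denominator --- reacts to changes in $q$, and then combining that with the monotonicity of ${N\choose M}$ in $M$ on the correct side of $N/2$. Closely related, and also somewhat fiddly, is pinning down that all the floor terms and Stirling error terms are absorbed into $2^{o(k)}$ uniformly; the numerical optimization itself is routine once $T$ is tabulated.
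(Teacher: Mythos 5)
Your plan follows the paper's own derivation essentially verbatim: the corollary is obtained exactly by combining Lemma~\ref{lemma:cpro2run} with the chain of estimates preceding it (replacing the discrete recursion $R$ by $T(i-1)(k-q)$ up to lower-order rounding, the Stirling-type rewriting of the binomial, and the numerical maximization at $\epsilon=10^{-5}$, which indeed peaks at $i=\alpha=6377$ with $T(i-1)\approx 0.04485$, yielding $O^*(6.77682^k)\subseteq O^*(6.777^k)$). The only difference is that the paper simply asserts that the worst case is $p=q=\lfloor k/2\rfloor$ rather than carrying out the monotonicity argument you sketch in (i), so your proposal is, if anything, more detailed on that point.
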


\subsubsection{A Deterministic Procedure for $P_2$-Pro2}\label{section:ficp}$\ $\\

{\noindent Fix $\epsilon=10^{-5}$. The pseudocode of \alg{Procedure2}, a deterministic procedure that solves {\sc $P_2$-Pro2}, is given below. It uses unbalanced cutting of the universe in the same manner as \alg{WSPAlg} (see Appendix \ref{section:wspdet}), obtaining a set of inputs for {\sc $P_2$-CPro2} and accepting {\em iff} at least one of them is a yes-instance.}

\begin{algorithm}[!ht]
\caption{\alg{Procedure2}($U,k,{\cal S},p,q,{\cal F}$)}
\begin{algorithmic}[1]
\STATE\label{step:pro2algorder} order the universe $U$ arbitrarily as $U=\{u_1,u_2,\ldots,u_{3k}\}$.

\FORALL{distinct $\ell_1,\ell_2,\ldots,\ell_{\frac{1}{\epsilon}},r_1,r_2,\ldots,r_{\frac{1}{\epsilon}}\in U$ s.t. $\ell_i<r_i$ for all $i\in\{1,2,\ldots,\frac{1}{\epsilon}\}$}\label{step:pro2orderlr}
	\FOR{$i=1,2,\ldots,\frac{1}{\epsilon}$}
		\STATE\label{step:pro2orderUi} define $U^i=\{{\ell_i},{\ell_i+1},\ldots,{r_i}\}\setminus \bigcup_{j=1}^{i-1}U^j$.
	\ENDFOR
	\STATE\label{step:pro2orderUlast2} $U^{\frac{1}{\epsilon}+1} = U\setminus \bigcup_{j=1}^{\frac{1}{\epsilon}}U^j$.
	\STATE\label{step:pro2orderU} let $U'=\{u'_1,u'_2,\ldots,u'_n\}$ be an ordered copy of $U$ such that the elements in $U^1$ appear first, then those in $U^2$, and so on, where the order between the elements in each $U^i$ is preserved according to their order in $U$.
	\STATE\label{step:pro2orderf} define $f:\{1,2,\ldots,\frac{1}{\epsilon}\}\rightarrow U'$ such that $f(i)$ is the largest element in $U'$ that belongs to $U^i$.
	\STATE\label{step:pro2accept} {\bf if} the procedure of Corollary \ref{cor:cpro2} accepts $(U',k,{\cal S},p,q,{\cal F},f)$ {\bf then} accept. {\bf end if}
\ENDFOR
\STATE reject.
\end{algorithmic}
\end{algorithm}

We next consider the correctness and running time of \alg{Procedure2}.

\begin{lemma}\label{lemma:pro2algcor}
\alg{Procedure2} solves {\sc $P_2$-Pro2} in time $O^*(6.777^k)$.
\end{lemma}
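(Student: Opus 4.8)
<br>

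The statement to prove is Lemma~\ref{lemma:pro2algcor}: \alg{Procedure2} solves {\sc $P_2$-Pro2} in time $O^*(6.777^k)$. This mirrors almost exactly the proof of Theorem~\ref{theorem:wspalgcor} for \alg{WSPAlg}, so the plan is to reuse that argument verbatim with the obvious substitutions. First I would dispense with the running time: by the pseudocode there are only $O(|U|^{O(1/\epsilon)}) = O^*(1)$ choices of the cut points $\ell_1,\dots,\ell_{1/\epsilon},r_1,\dots,r_{1/\epsilon}$, and for each we build $U^1,\dots,U^{1/\epsilon+1}$, the reordered universe $U'$, and the non-decreasing function $f$ in polynomial time, then invoke the procedure of Corollary~\ref{cor:cpro2}, which runs in $O^*(6.777^k)$. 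Hence the total time is $O^*(6.777^k)$.

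For correctness, the easy direction is that any $F \in {\cal F}$ together with an ordered subfamily ${\cal S}' = \{S_1,\dots,S_{k-q}\}$ witnessing a yes-instance of {\sc $P_2$-CPro2} for input $(U',k,{\cal S},p,q,{\cal F},f)$ is in particular a set $F\in{\cal F}$ and a $(k-q)$-size disjoint subfamily of ${\cal S}$ avoiding $F$, i.e.\ a witness for the original {\sc $P_2$-Pro2} instance (the solution conditions only constrain the ordering and $f$, they do not weaken the combinatorial requirement). So if \alg{Procedure2} accepts, the input is a yes-instance.

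For the hard direction, suppose $(U,k,{\cal S},p,q,{\cal F})$ is a yes-instance, witnessed by $F$ and a subfamily $\{S_1,\dots,S_{k-q}\}$. I would then replay the inductive construction from the proof of Theorem~\ref{theorem:wspalgcor}, step by step: order the witness sets so their minima increase, peel off the first $\lfloor\epsilon(k-q)\rfloor$ of them to define $U^1$ (all elements of $U$ up through $\min$ of the last of these), set $f(1)$ to that element, and iterate. At stage $i$, among the $x = \lceil |B'|/\lfloor\epsilon(k-q)\rfloor\rceil$ candidate blocks $B^1,\dots,B^x$ each capturing $\lfloor\epsilon(k-q)\rfloor$ fresh sets, pick the one $U^i$ capturing the most elements of $A'\setminus(A'_{min}\cup P_{i-1})$, enlarge it if necessary, set $f(i)$ to its largest element, and re-order. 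The one wrinkle relative to the {\sc $(3,k)$-WSP} case is that the ``extra deletable'' set now also absorbs the $(3q-p)$ elements of $F$ — that is exactly why the recursion for $R$ in {\sc $P_2$-CPro2} starts with a $(3q-p)$ term in its numerator rather than beginning at $R(1)=0$; the pigeonhole estimate $|P_i| \ge |P_{i-1}| + \lceil |A'\setminus(A'_{min}\cup P_{i-1})|/x\rceil \ge R(i-1) + \lceil((3q-p)+2(i-1)\lfloor\epsilon(k-q)\rfloor - R(i-1))/\lceil 3((k-q)-(i-1)\lfloor\epsilon(k-q)\rfloor)/\lfloor\epsilon(k-q)\rfloor\rceil\rceil = R(i)$ then goes through identically. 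This shows that for the choice of $\ell_i,r_i$ realizing the constructed $U^i$, the instance $(U',k,{\cal S},p,q,{\cal F},f)$ of {\sc $P_2$-CPro2} is a yes-instance, so the procedure of Corollary~\ref{cor:cpro2} accepts it and \alg{Procedure2} accepts.

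The main obstacle is bookkeeping, not ideas: one has to check carefully that each $U^i$ really can be chosen as a block of elements that are consecutive in the \emph{current} reordered universe $U^{i-1}_{ord}$ (so that a legal pair $(\ell_i,r_i)$ in the \emph{original} order $U$ produces exactly it after removing earlier pieces), and that the $\min$'s of the witness sets, computed relative to the evolving order, interleave correctly with the $f(i)$'s so that all three ``solution conditions'' of {\sc $P_2$-CPro2} hold simultaneously. All of this is routine given the matching argument already written out for \alg{WSPAlg}, so I would present it by analogy, citing the proof of Theorem~\ref{theorem:wspalgcor} and only spelling out the $(3q-p)$ modification to the $R$-recursion.
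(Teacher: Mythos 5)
Your overall plan is the paper's plan: the running-time bound, the easy direction, and the stage-$i$ block/pigeonhole construction with $F$ folded into $A'$ are exactly how the paper argues. But your stage-$1$ initialization is copied from \alg{WSPAlg} and it fails here, precisely at the point you wave off as "the one wrinkle''. In {\sc $(3,k)$-CWSP} one has $R(1)=0$, so taking $U^1$ to be the prefix of $U$ up to $\min(S_{\lfloor\epsilon k\rfloor})$ and $f(1)=\min(S_{\lfloor\epsilon k\rfloor})$ is harmless. In {\sc $P_2$-CPro2} the recursion starts at $j=1$ with the $(3q-p)$ term, so $R(1)=\lceil(3q-p)/\lceil 3(k-q)/\lfloor\epsilon(k-q)\rfloor\rceil\rceil>0$ whenever $3q-p>0$ (and in the worst case $p=q$ it is $\Theta(\epsilon k)$): already at stage $1$, solution condition~2 demands that $R(1)$ elements of $F\cup\bigl(\bigcup_{j\le\lfloor\epsilon(k-q)\rfloor}(S_j\setminus\{\min(S_j)\})\bigr)$ lie at or below $f(1)$. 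With your prefix choice of $U^1$ there is no control over $F$ at all---every element of $F$ (and every non-minimum element of $S_1,\ldots,S_{\lfloor\epsilon(k-q)\rfloor}$) may exceed $f(1)$---so condition~2 can fail at $i=1$. Moreover your inductive pigeonhole chain needs $|P_{i-1}|\ge R(i-1)$ as its base (the map $t\mapsto t+\lceil(C-t)/x\rceil$ is non-decreasing, so a deficit at stage $1$ is never recovered), so the claim that the estimate "goes through identically'' does not hold under your initialization.

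The repair is exactly what the paper does: there is no special prefix stage. The loop runs from $i=1$ with $P_0=\emptyset$, $A'=F$ and $A'_{min}=\emptyset$; $U^1$ is chosen, among the blocks $B^1,\ldots,B^x$, as one capturing the most elements of $F$ (enlarged if needed so that it meets exactly $\lfloor\epsilon(k-q)\rfloor$ witness sets), $f(1)$ is its \emph{largest} element, and $S_1,\ldots,S_{\lfloor\epsilon(k-q)\rfloor}$ are then defined as the witness sets hit by $U^1$, ordered by their minima in the reordered universe---rather than fixing the order of all witness sets by their minima in the original order up front. With that change (and with $P_i$ understood to count the captured elements of $F$ as well, matching condition~2), the rest of your argument coincides with the paper's proof.
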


\begin{proof}
The running time of the algorithm follows immediately from the pseudocode and Corollary \ref{cor:cpro2}.

For the easier direction, note that for any ordering of the universe $U$ as $U'$, and for any function $f:\{1,2,\ldots,\frac{1}{\epsilon}\}$, a solution to the instance $(U',k,{\cal S},p,q,{\cal F},f)$ of {\sc $P_2$-CPro2} is also a solution to the instance $(U,k,{\cal S},p,q,{\cal F})$ of {\sc $P_2$-Pro2}. Therefore, if the algorithm accepts, the input is a yes-instance of {\sc $P_2$-Pro2}.

Now, suppose that the input is a yes-instance of {\sc $P_2$-Pro2}, and let $F\in{\cal F}$ and ${\cal S}'\subseteq{\cal S}$ be a corresponding solution. Let $U=\{u_1,u_2,\ldots,u_{3k}\}$ be the order chosen by \alg{Procedure2} in Step \ref{step:pro2algorder}. We define $U^1,U^2,\ldots,U^{\frac{1}{\epsilon}}$, $f$ and an order $S_1,S_2,\ldots,S_k$ of the sets in ${\cal S}'$ as follows.

\begin{enumerate}
\item Initialize:
	\begin{enumerate}
	\smallskip
	\item Let $U^0_{ord}=\{u'_1,u'_2,\ldots,u'_{3k}\}$ be an ordered copy of $U$.
	
	\smallskip
	\item Let $P_0=\emptyset$.	
	\end{enumerate}
	
\medskip	
\item For $i=1,2,\ldots,\frac{1}{\epsilon}$:
	\begin{enumerate}
	\smallskip
	\item Let $A' = F\cup (\bigcup_{j=1}^{(i-1)\lfloor\epsilon (k-q)\rfloor}S_j)$, and $A'_{min} = \bigcup_{j=1}^{(i-1)\lfloor\epsilon (k-q)\rfloor}\{\min(S_j)\}$ where $\min$ is computed according to the ordered universe $U^{i-1}_{ord}$. 
	
	\smallskip
	\item Let $B'=(\bigcup{\cal S}')\setminus(\bigcup_{j=1}^{(i-1)\lfloor\epsilon (k-q)\rfloor}S_j)$.	
	
	\smallskip
	\item Let $B^1,B^2,\ldots,B^x$, where $x=\lceil\frac{|B'|}{\lfloor\epsilon (k-q)\rfloor}\rceil$, be a partition of $U^{i-1}_{ord}\setminus(\bigcup_{j=1}^{i-1}U^j)$ (into $x$ disjoint universes) such that each $B^i$ is a set of elements that are consecutively ordered in $U^{i-1}_{ord}$ and contains exactly $\lfloor\epsilon (k-q)\rfloor$ elements from $B'$, except $B^x$, if $(|B'|\!\ \mathrm{mod}\!\ \lfloor\epsilon (k-q)\rfloor)\neq 0$, which contains exactly $(|B'|\!\ \mathrm{mod}\!\ \lfloor\epsilon (k-q)\rfloor)$ elements from $B'$.
	
	\smallskip
	\item Let $U^i$ be a universe that contains the maximum number of elements~from $A'\setminus (A'_{min}\cup P_{i-1})$ among the universes $B^1,B^2,\ldots,B^x$. If $U^i$ contains elements from less than $\lfloor\epsilon (k-q)\rfloor$ sets in $({\cal S}'\setminus\{S_1,S_2,\ldots,S_{(i-1)\lfloor\epsilon (k-q)\rfloor}\}$, add elements from $U^{i-1}_{ord}\setminus(\bigcup_{j=1}^{i-1}U^j)$ to $U^i$ such that its elements remain consecutively ordered in $U^{i-1}_{ord}$ and it contains elements from exactly $\lfloor\epsilon (k-q)\rfloor$ sets in $({\cal S}'\setminus\{S_1,\ldots,S_{(i-1)\lfloor\epsilon (k-q)\rfloor}\}$.
	
	\smallskip
	\item Let $f(i)$ be the largest element in $U^i$.
	
	\smallskip
	\item Let $U^i_{ord}=\{u'_1,u'_2,\ldots,u'_{3k}\}$ be an ordered copy of $U$ such that the elements in $\bigcup_{j=1}^{i-1}U^j$ appear first, followed by those in $U^i$, where the internal orders between the elements in $\bigcup_{j=1}^{i-1}U^j$, $U^i$ and $U\setminus (\bigcup_{j=1}^{i}U^j)$ are preserved according to their order in $U^{i-1}_{ord}$.
	
	\smallskip
	\item Let $S_{(i-1)\lfloor\epsilon (k-q)\rfloor+1},S_{(i-1)\lfloor\epsilon (k-q)\rfloor+2},\ldots,S_{i\lfloor\epsilon (k-q)\rfloor}$ be the sets in ${\cal S}'\setminus\{S_1,S_2,\ldots,S_{(i-1)\lfloor\epsilon (k-q)\rfloor}\}$ that contain elements from $U^i$, such that $\min(S_{j-1})<\min(S_j)$ (according to $U^i_{ord}$).
	
	\smallskip
	\item Note that all the elements in $(\bigcup{\cal S}')\setminus(\bigcup_{j=1}^{i\lfloor\epsilon (k-q)\rfloor}S_j)$ are larger than $f(i)$ (according to $U^i_{ord}$).
	
	\smallskip
	\item Let $P_i$ be the set of elements in $\bigcup_{j=1}^{i\lfloor\epsilon (k-q)\rfloor}(S_j\setminus\{\min(S_j)\})$ that are smaller or equal to $f(i)$ (according to $U^i_{ord}$). Then, we have that:\\ 
	$\displaystyle{|P_i|\geq |P_{i-1}| + \left\lceil\frac{|A'\setminus (A'_{min}\cup P_{i-1})|}{x}\right\rceil 
	}$\\
	\medskip
	$\geq\displaystyle{R(i\!-\!1) \!+\! \left\lceil\!\frac{(3q\!-\!p) \!+\! 2(i\!-\!1)\lfloor\!\epsilon (k\!-\!q)\!\rfloor-R(i\!-\!1)}{\lceil|B'|/\lfloor\!\epsilon (k\!-\!q)\!\rfloor\rceil}\!\right\rceil} 
	$\\
	$=\displaystyle{R(i\!-\!1) \!+\! \left\lceil\!\frac{(3q\!-\!p) \!+\! 2(i\!-\!1)\lfloor\!\epsilon (k\!-\!q)\!\rfloor-R(i\!-\!1)}{\lceil3((k\!-\!q)\!-\!(i\!-\!1)\lfloor\!\epsilon (k\!-\!q)\!\rfloor)/\lfloor\!\epsilon (k\!-\!q)\!\rfloor\rceil}\!\right\rceil}$.		
	\end{enumerate}	


\smallskip
\item Let $U'=U^{\frac{1}{\epsilon}}_{ord}$.

\smallskip
\item Let $S_{1+\frac{1}{\epsilon}\lfloor\epsilon (k-q)\rfloor},\ldots,S_{k-q}$ be the sets in  ${\cal S}'\setminus\{S_1,\ldots,S_{\frac{1}{\epsilon}\lfloor\epsilon (k-q)\rfloor}\}$, such that $\min(S_{j-1})<\min(S_j)$ where $\min$ and $<$ are computed according to the ordered universe $U'$.

\end{enumerate}

{\noindent We have thus defined an instance $(U',k,{\cal S},p,q,{\cal F},f)$ of {\sc $P_2$-CPro2} that is examined by \alg{Procedure2} in Step \ref{step:pro2accept}. Since this is a yes-instance (by the above arguments, $S_1,S_2,\ldots,S_{k-q}$ is a solution to this instance), the procedure accepts.}\qed
\end{proof}

\end{document}